\newtheorem{dref}{Definition}[section] \newtheorem{lemma}[dref]{Lemma}
\newtheorem{theo}[dref]{Theorem} \newtheorem{prop}[dref]{Proposition}
\newtheorem{remark}[dref]{Remark}
\newenvironment{proof}{\par\noindent{{\bf Proof.}}}{\hfill$\Box$
\medskip}
\newenvironment{proofof}{\par\noindent{{\bf Proof} of }}{\hfill$\Box$
\medskip}
\title{Adiabatic evolution and shape resonances
\author{Michael Hitrik \\\small Department of Mathematics\\
 \small  University of California \\\small Los Angeles, CA
  90095-1555, USA\\\footnotesize hitrik@math.ucla.edu \and
  Andrea Mantile\\
  \small Laboratoire de Math\'ematiques -
  FR3399 CNRS\\ \small Universit\'e de Reims
  \\ \small Moulin de la Housse - BP 1039\\
\small 51687 Reims, France\\
  \footnotesize andrea.mantile@univ-reims.fr \and Johannes
  Sj\"ostrand\\ \small IMB {\footnotesize - UMR5584 CNRS},
  Universit\'e de Bourgogne\\
  \small 9, avenue Alain Savary - BP 47870\\
  \small 21078 Dijon cedex, France\\ \footnotesize
  johannes.sjostrand@u-bourgogne.fr }} \date{}
\begin{document}
\maketitle
\begin{abstract}  Motivated by a problem of one mode approximation for
  a non-linear evolution with charge accumulation in potential wells,
  we consider a general linear adiabatic evolution problem for a semi-classical
  Schr\"odinger operator with a time dependent potential with a
  well in an island. In particular, we show that we can
  choose the adiabatic parameter $\varepsilon $ with $\ln\varepsilon
  \asymp -1/h$, where $h$ denotes the semi-classical parameter,
 and get adiabatic approximations of exact solutions over a
  time interval of length $\varepsilon ^{-N}$ with an error ${\cal
    O}(\varepsilon ^N)$. Here $N>0$ is arbitrary. \footnote{While
    deciding the general strategy through joint discussions, the
    coauthors have invested various amounts of time in the actual
    elaboration. The main authors
    of the different sections are in indicated by their initials as follows:\\
Section \ref{int}: MH, AM, JS,\\
Sections \ref{1eig}, \ref{fad}: AM, JS,\\
Sections \ref{opesc}, \ref{nore}, \ref{sbd}, \ref{far}: JS,\\
Section \ref{rest}: MH, JS,\\
Section \ref{bics}: MH, AM, JS.
}

  \medskip\centerline{\bf R\'esum\'e} Motiv\'es par un probl\`eme
  d'approximation \`a un mode pour une \'evolution avec accumulation de
  charge dans des puits de potentiel, nous consid\'erons un probl\`eme
  d'\'evolution lin\'eaire pour un op\'erateur de Schr\"odinger avec un
  potentiel d\'ependant du temps avec un puits dans une {\^\i}le. En
  particular, nous montrons que nous pouvons choisir le param\`etre
  adiabatique $\varepsilon $ avec $\ln\varepsilon \asymp -1/h$, o\`u
  $h$ d\'esigne param\`etre semi-classique, et obtenir des
  approximations adiabatiques de solutions exactes sur des intervalles
  de temps de longueur $\varepsilon ^{-N}$ avec une erreur
  ${\cal O}(\varepsilon ^N)$. Ici $N>0$ est arbitraire.
\end{abstract}
\medskip

\tableofcontents

\section{Introduction and main results}\label{int}
\setcounter{equation}{0}

Our work is connected with the modelling
of the axial transport through resonant tunneling structures like highly doped
p-n semiconductor heterojunctions (Esaki diodes), multiple barriers or quantum
wells diodes. The scattering of charge carriers in such devices has been
described using non-linear Schr\"{o}dinger-Poisson Hamiltonians with quantum
wells in a 1D semiclassical island (see \cite{JoPrSj95}). The quantum wells
regime is defined as a perturbation of the semiclassical Laplacian
$-h^{2}\partial_{x}^{2}$ by the superposition of a potential barrier plus an
attractive term, with support of size $h$, modelling one or more \emph{quantum
wells}. In the simplest setting of a single well separating two linear
barriers, the linear part of the potential has the shape in Figure \ref{fig1}.
\begin{figure}
   \caption{\label{fig1} The potential with no charge accumulation}\begin{center}
   \includegraphics[width=12cm]{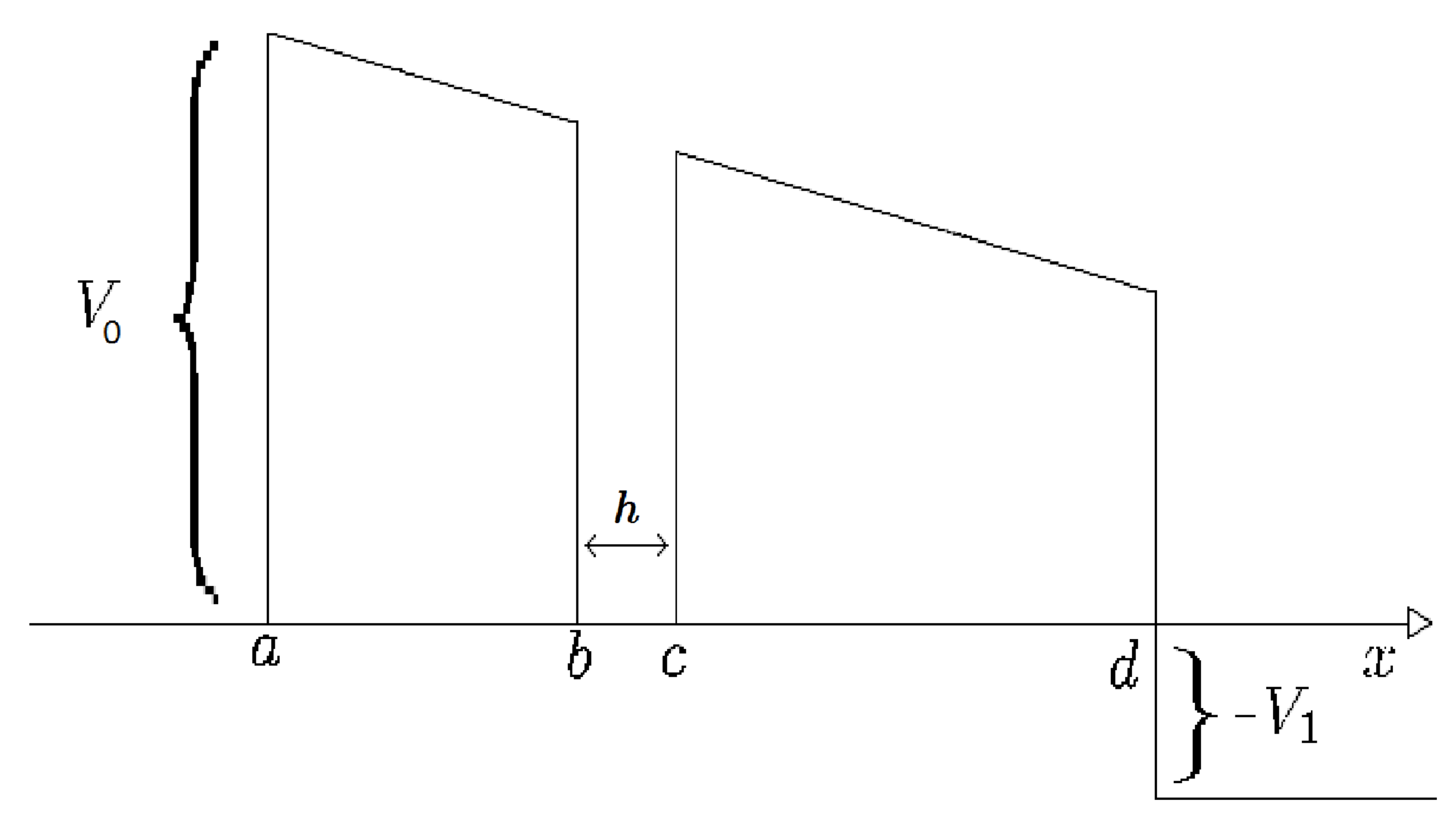}
\end{center}
\end{figure}
In connection with the modelling of a mesoscopic semi-conductor device, this
scheme represents \textquotedblleft metallic conductors\textquotedblright\ at
$]-\infty,a]$ and $[d,+\infty\lbrack$ while the double barrier describes the
interaction of charge carriers in a semiconductor junction. Here $a<b<d$ are
fixed, $c=b+h$ with $h\rightarrow0$, while $V_{1}$ defines an exterior voltage
applied between the two infinite conductors. In this framework, $h$
corresponds to a rescaled Fermi length fixing the quantum scale of the system
(see for instance \cite{BoNiPa06}) and, coherently with the features of the
physical model, is assumed to be small. The shape resonances (i.e. those with
energies below $V_{0}$) define the Fermi levels of the junction and the
corresponding resonant states describe (in the one-particle approximation) the
concentration of charges in the depletion region. In particular, the exterior
potential bias $-V_{1}$ is introduced in order to select only incoming waves
with positive momentum as contributions to the charging process (about this point
the reader may refer to the analysis developed in \cite{BoNiPa08}).

In linear models, the small-$h$ asymptotic behaviour of the shape resonances
generated by quantum wells has been understood in the work of
B. Helffer and J. Sj\"ostrand \cite{HeSj86}; for operators of the form%
\begin{equation}
H^{h}:=-h^{2}\partial_{x}^{2}+V^{h} \label{H_lin}%
\end{equation}
with $V^{h}$ fulfilling the scaling of Figure \ref{fig1} (i.e.: a semiclassical barrier
supported on $\left[  a,d\right]  $ plus quantum wells) and suitable
regularity assumptions, the approach of \cite{HeSj86} allows to localize the
shape resonances w.r.t. the spectrum of a corresponding Dirichlet operator%
\begin{equation}
H_{D}^{h}:=-h^{2}\triangle_{\left(  a,d\right)  }^{D}+1_{\left(  a,b\right)
}V^{h} ,
\end{equation}
where $\triangle_{\left(  a,d\right)  }^{D}$ denotes the Dirichlet Laplacian
on the barrier interval. In particular, very accurate Agmon-type estimates
show that to each $\lambda\in\sigma_{p}\left(  H_{D}^{h}\right)  \cap\left(
0,V_{0}\right)  $ corresponds a unique resonance of $H^{h}$, $E_{\mathrm{res}}
^{h}=E^{h}-i\Gamma^{h}$, and the estimates%
\begin{equation}
\left\vert \lambda-E^{h}\right\vert +\Gamma^{h}\lesssim e^{-S_{0}/h}\,,
\label{semiclassical_est1}%
\end{equation}
hold with $\Gamma^{h}>0$ fixing the imaginary part of $E_{\mathrm{res}}^{h}$ and
$S_{0}>0$ depending on the Lithner-Agmon distance separating the well from the
boundary of the barrier. In the time evolution problems, the imaginary part of
resonances fixes the lifetime of the corresponding \emph{quasiresonant
states}, which are $L^{2}$-functions defined by a cut-off of the resonant
states outside the interaction region (see the definition in \cite{GeSi92}).
This general idea has been investigated in \cite{Sk89} in the framework of
$3D$ Schr\"{o}dinger operators with exponentially decaying potentials (see
also \cite{SoWe98} and \cite{NaStZw03}); for operators exhibiting the scaling
introduced above, a precise exponential decay estimate has been provided in
\cite[Th. 4.3]{GeSi92}. Let $u_{E_{\mathrm{res}}^{h}}$ denote the resonant state of
$H^{h}$ for the resonance $E_{\mathrm{res}}^{h}$ (i.e.: a solution of $\left(
H^{h}-E_{\mathrm{res}}^{h}\right)  u_{E_{\mathrm{res}}^{h}}=0$); under the assumption
$\Gamma^{h}\gtrsim e^{-2S_{0}/h}$ (which holds for a large class of models
including the case of sharp barriers (see \cite{BoNiPa09})) we have%
\begin{equation}
e^{-it H^{h}}1_{\left(  a,d\right)  }u_{E_{\mathrm{res}}^{h}}=e^{-itE_{\mathrm{res}}^{h}
}1_{\left(  a,d\right)  }u_{E_{\mathrm{res}}^{h}}+R^{h}\left(  t\right)  \,,
\end{equation}
where $R^{h}\left(  t\right)  =\mathcal{O}\left(  e^{-S_{0}/h}\right)  $, in
the $L^{2}$-norm sense, on the time scale: $t\lesssim1/\Gamma^{h}$. Then the
estimate (\ref{semiclassical_est1}) implies%
\begin{equation}
\left\vert e^{-it H^{h}}1_{\left(  a,d\right)  }u_{E_{\mathrm{res}}^{h}}\right\vert
\approx e^{-t\Gamma^{h}}\left\vert 1_{\left(  a,d\right)  }u_{E_{\mathrm{res}}^{h}}\right\vert
\,. \label{Skib_est1}
\end{equation}
The comparison between the shape resonance problem for $H^{h}$ and the
eigenvalue equation for the corresponding truncated Dirichlet model $H_{D}%
^{h}$ also shows that the quasiresonant states are mainly supported near the
wells (see \cite{HeSj86}). Hence, according to the above relation, the time
evolution preserves this concentration of $L^{2}$-mass on the time scale
$1/\Gamma^{h}$ which is exponentially large w.r.t. $h$.

In the non-linear modelling, the repulsive effect due to the concentration of
charges in the depletion region is taken into account by a Poisson potential
term depending on the charge density. The corresponding non-linear steady
state problem%
\begin{equation}%
\begin{array}
[c]{ccc}%
\left(  H^{h}+V_{NL}^{h}-E\right)  u=0,\, &  & \partial_{x}^{2}V_{NL}%
^{h}=\left\vert u\right\vert ^{2}\,,
\end{array}
\end{equation}
has been investigated in \cite{BoNiPa08}-\cite{BoNiPa09} under
far-from-equilibrium assumptions; in this case, following the scaling
introduced above, the underlying linear model $H^{h}$ is defined by using an
array of quantum wells of the form%
\begin{equation}
W^{h}=-\sum_{n=1}^{N}w_{n}\left(  \left(  x-x_{n}\right)  1/h\right)  \,,\quad
w_{n}\in\mathcal{C}^{0}\left(  \mathbb{R},\mathbb{R}_{+}\right)
\,,\quad\text{supp }w_{n}=\left[  -d,d\right]  \,.
\end{equation}
An accurate microlocal analysis of the tunnel effect as $h\rightarrow0$ then
shows that the estimates (\ref{semiclassical_est1}) still hold in the
stationary nonlinear framework and determine the limit occupation number of
resonant states. This analysis leads to a simplified equation for the Poisson
problem where the limit charge density is described by a superposition of
delta-shaped distributions centered in the points $\left\{  x_{n}\right\}  $.
In the time-dependent case, the non-linear evolution equation reads as%
\begin{equation}%
\begin{array}
[c]{ccc}%
i\partial_{t}u=\left(  H^{h}+V_{NL}^{h}\right)  u\,, &  & \partial_{x}%
^{2}V_{NL}^{h}=\left\vert u\right\vert ^{2}\,.
\end{array}
\label{SP_eq}%
\end{equation}
When the initial state is formed by a superposition of incoming waves with
energies close to the Fermi level ($E_{F}$ the resonant energy), this
interacts with resonant states which, as the estimates (\ref{Skib_est1})
suggest, are expected to evolve in time according to a quasi-stationary
dynamics. In this picture, $u$ behave as a metastable state and the charge
density $\left\vert u\right\vert ^{2}$ remains concentrated in a neighbourhood
of the wells for a large range of time fixed by the imaginary part of
the (nonlinear) resonances. Depending on the position of the wells, this
possibly induces a local charging process; then, the nonlinear coupling in
(\ref{SP_eq}) generates a positive response (depending on the charge in the
wells) which modifies the potential profile and reduces the tunnelling rate.%

\begin{figure}
   \caption{\label{fig2} The potential with charge accumulation}\begin{center}
   \includegraphics[width=12cm]{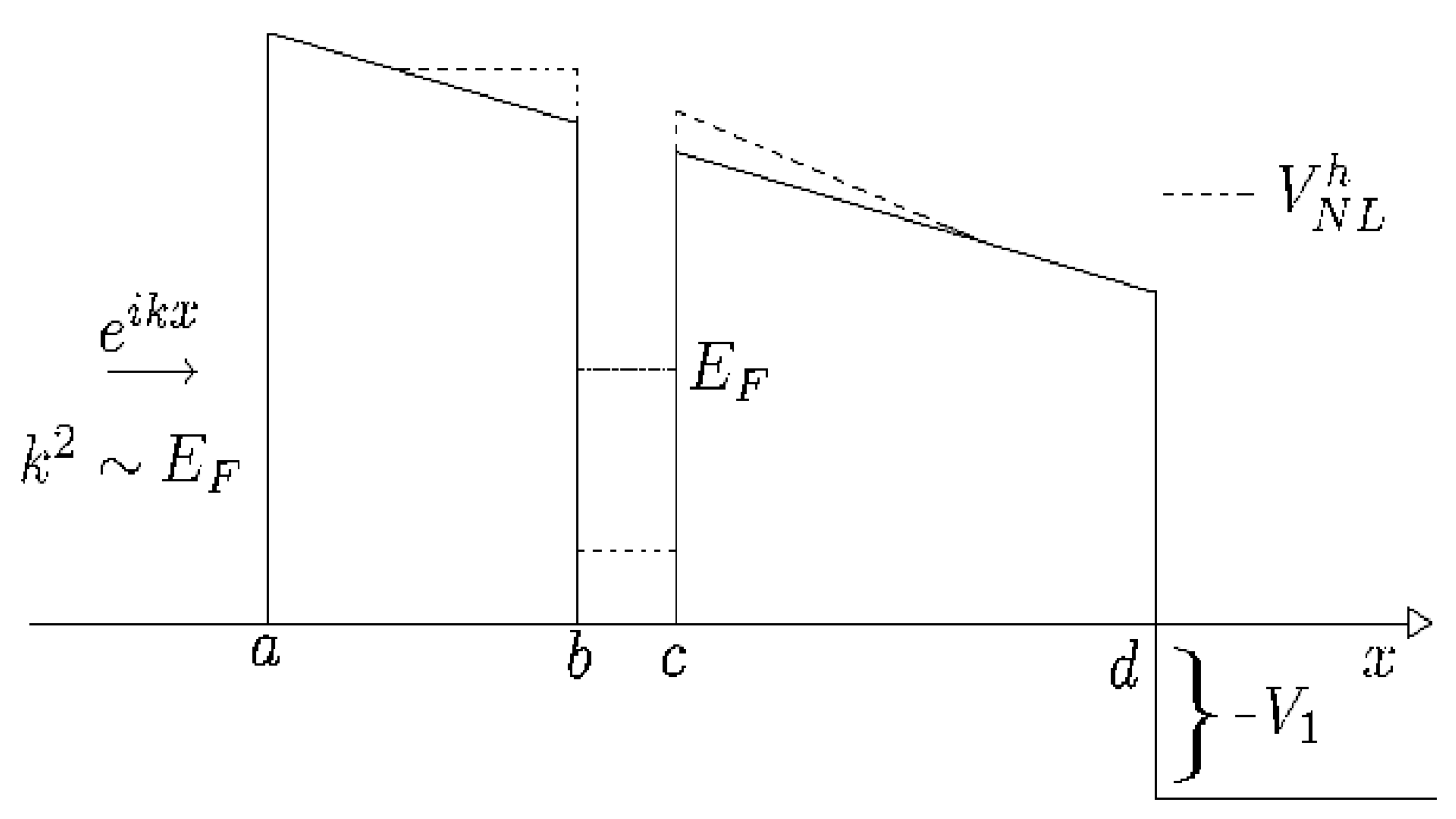}
\end{center}
\end{figure}

The above scheme outlines the behaviour of the nonlinear dynamics under
non-equilibrium initial conditions and assuming $h$ small. In particular,
$V_{NL}^{h}$ is expected to define an adiabatic process with variations in
time of size $\varepsilon=\Gamma^{h}$. The relevance of adiabatic
approximations in the small-$h$ asymptotic analysis of the nonlinear quantum
transport was pointed out in (\cite{JoPrSj95}, \cite{PrSj96}, \cite{PrSjJ97})
where this dynamics was considered within a simplified framework.

\subsection{The works of C. Presilla and J. Sj\"{o}strand}

Following the work \cite{JoPrSj95} (with G.\ Jona-Lasinio), in
\cite{PrSj96}, \cite{PrSjJ97} C.\ Presilla and J.\ Sj\"{o}strand
considered a non-linear evolution problem for a mesoscopic
semi-conductor device and did some heuristic work. It is assumed that
the incoming charged particles (entering from the left) have energies
$E\geq0$ distributed according to the density $g(E)dE$ supported on
$[0,E_{F}]$, where $E_{F}<V_{0}$ is the Fermi level. Moreover, these
particles interact only inside the device (i.e.\ in the region $]a,d[$
in Figure \ref{fig1}) through a modification of the common potential
due to charge accumulation there. After a rescaling, the model is
described by a nonlinear Hamiltonian%
\[
H_{NL}^{h}=H^{h}+s\left(  u^{h}(t,\cdot)\right)  W_{0}(x)\,,
\]
where the linear part $H^{h}$ is defined as in (\ref{H_lin}) and
Figure \ref{fig1}, while
the Poisson potential is replaced by an affine function $W_{0}(x)$, with a
fixed \textquotedblleft profile\textquotedblright\ and support in $]a,d[$,
multiplied by the charge accumulated inside the device. This is defined in
terms of the nonlinear evolution of generalized eigenfunctions $u\left(
t,x,E\right)  $ according to%
\begin{equation}
s(u^{h}(t,\cdot)):=\int%
{\displaystyle\int_{(a+b)/2}^{(c+d)/2}}
\left\vert u^{h}(t,x,E)\right\vert ^{2}g(E)\,dxdE\,,
\end{equation}
and the corresponding nonlinear evolution problem is{%
\begin{equation}
\left\{
\begin{array}
[c]{l}%
i\partial_{t}u^{h}\left(  t,x,E\right)  =\left(  H^{h}+s(u^{h}(t,\cdot
))W_{0}(x)\right)  u^{h}\left(  t,x,E\right)  \,,\\
\\
\left(  H^{h}+s(u^{h}(0,\cdot))W_{0}(x)-E\right)  u^{h}\left(  0,x,E\right)
=0\,.
\end{array}
\right.  \label{int.1}%
\end{equation}
}The heuristic analysis in \cite{PrSj96} was based on the 1-mode
approximation
\begin{equation}
u^{h}(t,x,E)\approx\mu(t,x,E)+e^{-iEt/h}z^{h}(t,E)e^{h}(x,s\left(
u^{h}(t,\cdot)\right)  )\,, \label{int.4}%
\end{equation}
where

\begin{itemize}
\item $\mu(t,x,E)$ is the solution of a linear evolution problem, obtained by
``filling'' the potential well $[b,c]$,

\item $e^{h}=e^{h}\left(  x,s\left(  u^{h}(t,\cdot)\right)  \right)  $ is a
resonant state ($\not \in L^{2}$) corresponding to a resonance $\lambda
^{h}(s\left(  u^{h}(t,\cdot)\right)  )$ in the lower half-plane.
\end{itemize}

From this the authors derived a simpler evolution equation%
\begin{equation}
\left.
\begin{array}
[c]{l}%
h\partial_{t}z^{h}(t,E)=i(E-\lambda^{h}(s))z^{h}(t,E)+\mathcal{B}%
^{h}(t,s,E)\,,\\
\\
s\left(  u^{h}(t,\cdot)\right)  \sim%
{\displaystyle\int}
\left\vert z^{h}\left(  t,E\right)  \right\vert ^{2}g\left(  E\right)  \,dE\,,
\end{array}
\right.  \label{int.5}%
\end{equation}
where $\mathcal{B}^{h}(t,x,E)$ is a \textquotedblleft driving
term\textquotedblright\ derived from $\mu$ and $e^{h}$. Then, using an
adiabatic approximation for the evolution of the nonlinear resonant state in
term of instantaneous resonances and WKB expansions, an even further
simplified differential equation for $s\left(  u^{h}(t,\cdot)\right)  $ was
obtained in the limit $h\rightarrow0$ (eq. 9.7 in \cite{PrSj96}). From this
one could describe fixed points of the vector field in (\ref{int.5}), and
hysteresis phenomena under slow variations of the exterior bias $V_{1}%
=V_{1}(t)$.

\subsection{Adiabatic evolution of resonant states}

A rigorous study of the model (\ref{SP_eq}) in the small $h$ regime is a very
vast program. In this connection, we remark that the lack of an error bound in
the adiabatic formulas used in \cite{PrSj96} prevents to control the possible
remainder terms in the asymptotic limit. Hence, {a strong adiabatic theorem
for resonant states, with adiabatic parameter $\varepsilon$ satisfying }%
\begin{equation}
{\ln\varepsilon\asymp-1/h\,,} \label{adiabatic _scaling}%
\end{equation}
{seems to be a key point of this program. The adiabatic problem for resonances
can be considered following different approaches. One consists in using the
unitary propagator associated to the physical (selfadjoint) Hamiltonian and
study the adiabatic evolution of }$L^{2}${ states spectrally localized near
the resonant energy (}or $L^{2}$ functions obtained by truncating resonant
states{). This point of view was adopted in }\cite{Pe00} where the condition
(\ref{adiabatic _scaling}), connecting the adiabatic parameter to the
resonance lifetime, was also taken into account; in the case of a single
time-dependent resonance $E_{res}\left(  t\right)  $ with: $\operatorname{Im}%
E_{res}\left(  t\right)  \sim\operatorname{Im}E_{res}\left(  0\right)
=e^{-\frac{c}{h}}$ an adiabatic formula was obtained on the specific time
range $\left(  \operatorname{Im}E_{res}\left(  0\right)  \right)  ^{-1}$.

A different approach consists in using a complex deformation to define
resonances as eigenvalues of a deformed (non-selfadjoint) Hamiltonian; in this
framework, the resonant states identify with $L^{2}$ eigenvectors and the
corresponding evolution problem is naturally formulated in terms of the deformed
operator. Then, an adiabatic approximation can be studied by adapting the
standard adiabatic theorem with gap condition to the non-selfadjoint case. (a
similar strategy was implemented in \cite{AbFr07}). It is worthwhile to
remark that this requires uniform-in-time bounds for the deformed dynamical
system (we refer to \cite{Ne93}): the lack of this condition, due to the
complex deformation, is the main obstruction to implement such an approach in
some relevant physical models (including those we are considering here). In
\cite{Jo07}, the time-adiabatic evolution in Banach spaces has been
considered, under a fixed gap condition, for semigroups exhibiting an
exponential growth in time. In this framework, which could be adapted to the
case of resonant states, the exponential growth of the dynamical system is
compensated by the small error of the adiabatic approximation under
analyticity-in-time assumptions and an adiabatic formula for the evolution of
spectral projectors is provided with an error which is small on a suitable
range of time.

More recently, an alternative approach to the adiabatic evolution of shape
resonances has been proposed \cite{FaMaNi11}. For a 1D Schr\"{o}dinger operator
describing the regime of quantum well in a semiclassical island, artificial
(non-selfadjoint) interface conditions are added at the boundary of the
potential's support. Depending on the deformation parameter $\theta$, these
may be chosen in such a way that the corresponding modified and complex
deformed operator is maximal accretive: hence, the deformed dynamical system
allows uniform-in-time estimates and the adiabatic theory for isolated
spectral sets can be developed in the deformed setting. In particular, the
authors show that artificial interface conditions introduce small
perturbations on shape resonances, preserving the relevant physical quantities
(the exponentially small scales). Using an exterior complex deformation
polynomially small w.r.t. the quantum scale $h$, an adiabatic theorem for the
resonant states associated to shape resonances is provided. In this framework,
the adiabatic parameter $\varepsilon=e^{-c/h}$ can be fixed with any $c>0$
independently of the resonance lifetime (see \cite[Th. 7.1]{FaMaNi11} for
the precise statement), while the loss due to the small spectral gap (induced
by the $h$-dependent deformation) is compensated by the small error of the
adiabatic expansion which is now given by $\varepsilon^{1-\delta}$ where
$\delta\in\left(  0,1\right)  $.

We study the adiabatic evolution problem for resonant states in connection
with models of mesoscopic transport. Our aim is to avoid the unphysical
modification of the selfadjoint operator introduced in \cite{FaMaNi11}. Our
approach consists in using adapted Hilbert spaces that contain the relevant
resonant states, with the goal of having an adiabatic approximation to all
orders in $\varepsilon$, over time intervals of length $\varepsilon^{-N}$ for
any fixed $N\geq0$. In our framework, the evolution is no longer unitary and
our first result says that we can arrange so that the generator of our
evolution has an imaginary part which is $\leq\varepsilon^{N}$ for any $N$. A
second result (for the moment limited to the case of one space dimension),
gives appropriate control on the resolvent \emph{in the same spaces}, and we
get adiabatic approximation over long time intervals for exact solutions of
linear adiabatic evolution equations. (The multidimensional case will
be attacked in a future work.)

\subsection{Aims and ideas}\label{aim_ideas}
The aim of the present paper is to establish asymptotic approximations
for solutions of adiabatic evolution equations of the form
\begin{equation}\label{int.6}
(\varepsilon D_t+P(t))u(t)=0
\end{equation}
that are valid over time intervals of length $\varepsilon ^{-N}$, with
errors ${\cal O}(\varepsilon ^N)$ for arbitrary $N>0$. Here
$P(t)=-h^2\Delta +V(t,x)$ is a self-adjoint Schr\"odinger operator
with a single well in a potential island which is assumed to generate
a shape resonance with real energy $E=E(t)$. Typically,
$\varepsilon $ should be comparable to the tunneling relaxation time
for $P(t)$ on a logarithmic scale. More precisely, we should have $\ln
(1/\varepsilon )\asymp 1/h$. The approximations should be of the form
\begin{equation}\label{int.7}
u_\mathrm{ad}(t)\sim (\nu _0(t)+\varepsilon \nu
_1(t)+...)e^{-\frac{i}{\varepsilon }\int^t \lambda (s,\varepsilon )ds}
\end{equation}
where $\nu _j(t)$ are well-behaved smooth functions of $t$ with values
in the (by assumption) common domain of the $P(t)$, and
\begin{equation}\label{int.8}
\lambda (t,\varepsilon )\sim \lambda _0(t)+\varepsilon \lambda _1(t)+...,
\end{equation}
where $\lambda _0(t)$ is a shape resonance of $P(t)$, satisfying
\begin{equation}\label{int.9}
0\le -\Im \lambda _0(t)\le e^{-2(1+o(1))S(t)/h},\ \Re \lambda _0(t)=E(t)+o(t),
\end{equation}
and $\nu _0(t)$ is a corresponding resonant state; $(P(t)-\lambda
_0(t))\nu _0(t)=0$. Here $S(t)>0$ is the Lithner-Agmon distance for $P(t)-E(t)$
from the potential well to the sea surrounding the potential island.

Since the non-linearity is concentrated to a bounded region, we
can choose an ambient Hilbert function space ${\cal H}$ quite freely
such that the space of restrictions of its elements to
some neighborhood $\Omega $ of the island is equal to $L^2(\Omega )$.

\par One such choice is ${\cal H}=L^2({\bf R}^n)$. A nice feature with
this choice is that the evolution (\ref{int.6}) is norm preserving:
$\| u(t)\|_{\cal H}=\| u(s)\|_{\cal H}$. A difficulty with this choice is
that the resonant state $\nu _0(t)$ does not belong to $L^2({\bf R}^n)$
(and $\lambda _0(t)$ does not belong to the $L^2$-spectrum of
$P(t)$), so the adiabatic expansion (\ref{int.7}) can hold only
locally in ${\bf R}^n$.

\par Rather, we choose ${\cal H}$ to be a Hilbert space that contains
the resonant state $\nu _0(t)$ and such that $\lambda _0(t)$
belongs to the ${\cal H}$-spectrum of $P(t)$.
When replacing $L^2$ with some other Hilbert space we lose (in
general) the self-adjointness of the operators $P(t)$ and the
corresponding unitarity of the evolution (\ref{int.6}). The original
non-linear problem is not time reversible, so we only wish to have a
good control of the solutions in the forward time direction. If we can
choose ${\cal H}$, depending on $\varepsilon $ but not on $t$, so that
\begin{equation}\label{int.10}
\Im P(t)\le \tau (\varepsilon ),
\end{equation}
for $P(t)$ as an unbounded operator ${\cal H}\to {\cal H}$, where
$\tau (\varepsilon )\ge 0$, then if $u(t)$ solves (\ref{int.6}) for
$t$ in some interval, we would have
\begin{equation}\label{int.11}
\| u(t)\|_{\cal H}\le e^{\tau (\varepsilon )(t-s)}   \| u(s)\|_{\cal
  H},\hbox{ for }t\ge s,
\end{equation}
so the solution will grow at most exponentially with rate $\tau
(\varepsilon )$ in the direction of increasing time. Correspondingly,
we can expect well-posedness for solutions of the initial value
problem,
\begin{equation}\label{int.12}
\begin{cases}
(\varepsilon D_t+P(t))u(t)=0,\ 0\le t\le T,\\
u(0)=u_0,
\end{cases}
\end{equation}
assuming, to fix the ideas, that $P(t)$ is defined for $t\in [0,T]$,
$T>0$. Then $\| u(t)\|_{\cal H}\le e^{\tau (\varepsilon )T}\|
u_0\|_{\cal  H}$ and in order to avoid exponential growth of the upper
bound we require
$$
\tau (\varepsilon )T\le {\cal O}(1).
$$
With $T=\varepsilon ^{-N_0}$ for some fixed $N_0>0$ we then need
$\tau (\varepsilon )\le {\cal O}(\varepsilon ^{N_0})$. Assume that we can
perform the adiabatic constructions in (\ref{int.7})--(\ref{int.9})
with $\Im \lambda (t,\varepsilon )\le \varepsilon ^{N+1}$, $N\ge N_0$, and that we have
(\ref{int.10}) with $\tau (\varepsilon )\le \varepsilon ^{N+1}$ for
$0\le t\le T$, $T\le \varepsilon ^{-N_0}$. Then taking a suitable
realization of $u_\mathrm{ad}$ we expect to have
$$
(\varepsilon D_t+P(t))u_\mathrm{ad}={\cal O}(\varepsilon ^{N+1})\hbox{
in }{\cal H},\ \| u(t)\|_{\cal  H}={\cal O}(1),
$$
for $0\le t\le T$ and using (\ref{int.11}) we would be able to solve
$$
(\varepsilon D_t+P(t))v=-(\varepsilon D_t+P(t))u_\mathrm{ad},\ \| v(0)\|_{\cal H}=0,
$$
with $\| v\|={\cal O}(\varepsilon ^{N-N_0})$. Then
$u(t)=u_\mathrm{ad}(t)+v(t)$ is an exact solution of $(\varepsilon
D_t+P(t))u=0$ on $[0,T]$ with $u-u_\mathrm{ad}={\cal O}(\varepsilon
^{N-N_0})$ in ${\cal H}$.

The easiest choice of ${\cal H}$, at first sight, would be to follow the
method of exterior complex distortions \cite{AgCo71, BaCo71, Si79,
  Hu86} in the spirit of \cite{SjZw91} so that
${\cal H}=L^2(\Gamma )$, where $\Gamma \subset {\bf C}^n$ is a totally
real manifold of real dimension $n$, obtained as a deformation of
${\bf R}^n$, coinciding with ${\bf R}^n$ along the island. We did non
succeed with this particular choice however (see further comments
below). A.\ Faraj, A.\ Mantile and F.\ Nier \cite{FaMaNi11} followed
this path. They defined an operator $\widetilde{P}(t)$ from $P(t)$
living on a distorted contour having an artificial interface condition
between real part of the contour near the island and the complex
distorted part. This way the discrete spectrum of $P(t)$ needs no
longer to consist of resonances of $P(t)$, so the exact link with the original
evolution problem is disrupted.

\par In this paper we use the spaces of \cite{HeSj86}. Such spaces
${\cal H}=H(\Lambda _{\upsilon G})$, $0\le \upsilon \ll 1$ are defined
with the help of a suitable real symbol $G(x,\xi )$ vanishing for
large $|\xi |$, and very roughly $H(\Lambda _{\upsilon G})$ is the
space of functions $u(x)$ on ${\bf R}^n$, such that
$\widetilde{u}(x,\xi )e^{-\upsilon G(x,\xi )/h}\in L^2({\bf R}^{2n})
$, where $\widetilde{u}$ denotes a suitable FBI transform. The
associated ``I-Lagrangian'' manifold $\Lambda _{\upsilon G}$ is given
by $\Im (x,\xi )=\upsilon H_G(\Re (x,\xi ))$, where
$H_G=\partial _\xi G\cdot \partial _x-\partial _xG\cdot \partial _\xi
\simeq (\partial _\xi G,-\partial _xG)$ is the Hamilton field of
$G$. $\Lambda _{\upsilon G}$ is then the natural classical phase space
associated with $H(\Lambda _{\upsilon G})$. Thus if we consider a
semi-classical Schr\"odinger operator $P=-h^2\Delta +V(x)$ with
leading symbol $p(x,\xi )=\xi ^2+V(x)$ where $V$ extends far enough in
the complex domain, the natural leading symbol of the unbounded
operator $P:{\cal H}\to {\cal H}$ is
${{p}_\vert}_{\Lambda _{\upsilon G}}$. Here, by Taylor expansion we
have
\[
\begin{split}
{{p}_\vert}_{\Lambda _{\upsilon G}}(x,\xi )&=p(\Re (x,\xi ))+i\upsilon H_G(p)(\Re
(x,\xi ))+{\cal O}(\upsilon ^2)\\
&=p(\Re (x,\xi ))-i\upsilon H_p(G)(\Re
(x,\xi ))+{\cal O}(\upsilon ^2),
\end{split}
\]
locally uniformly (and also globally after putting the right
order function into the remainder). If $H_pG\ge 0$, then (at least up
to ${\cal O}(\upsilon ^2)$) we have $\Im {{p}_\vert}_{_{\Lambda _{\upsilon G}}}\le 0$
and this is a first step towards having (\ref{int.10}) for $P$. In
order to define a resonance $\lambda_0$ for $P$ we need to choose $G$
and $\upsilon $ so that near infinity $\Im {{p}_\vert}_{\Lambda _{\upsilon G}}<\Im
\lambda _0$ when $\Re {{p}_\vert}_{\Lambda _{\upsilon G}} $ belongs to a
neighborhood of $\Re \lambda _0$. This can be obtained by choosing $G$
to be an escape function, meaning roughly that $H_pG>1/{\cal O}(1)$
near infinity on $p^{-1}(E_0)$ where $E_0$ is some fixed real energy
and we assume that $\Re \lambda _0\approx E_0$.

\par The method of (small) contour dis\-tor\-tions follows this scheme
with the symbol $G(x,\xi )$ chosen to be linear or possibly affine linear in
$\xi $. With such restrictions it is harder (maybe impossible) to find $G$
so that $H_pG\ge 0$ everywhere and $H_pG>0$ near infinity on
$p^{-1}(E_0)$.

\par For the construction of formal adiabatic solutions we will also
need a good control over $(P(t)-z)^{-1}$ for $z$ on some small closed
contour enclosing $\lambda _0(t)$.

\subsection{The main results}\label{mr} The results of this paper
concern the linear adiabatic theory for time dependent potentials with
a well in an island, in a fairly general setting. We hope that they will be useful
for non-linear problems of the type described above and also that they
are interesting in their own right. We study
\begin{itemize}
\item[1)]Semi-boundedness as in (\ref{int.10})
\item[2)]Resolvent estimates in $H(\Lambda _{\upsilon G})$-spaces
\item[3)]Adiabatic approximations over long time intervals.
\end{itemize}
Here 3) will be a fairly direct consequence of 1) and 2), using
general arguments from adiabatic constructions, that we shall review
in Section \ref{1eig}, see also Section \ref{fad}. 

\par In Sections \ref{opesc}, \ref{nore} we review some of the theory
in \cite{HeSj86}. Let $r(x),\, R(x)$ be positive smooth functions on
${\bf R}^n$ satisfying (\ref{opesc.1}):
$$
r\ge 1,\ rR\ge 1.
$$
Define $\widetilde{r}(x,\xi )=(r(x)^2+\xi ^2)^{1/2}$ as in
(\ref{opesc.2}) and the symbol spaces $S(m)$ as in (\ref{opesc.3}). We
assume (\ref{opesc.4}):
$$
m\in S(m),\ r\in S(r),\ R\in S(R).
$$
Let $1\le m_0\in S(m_0)$. We consider the formally self-adjoint
semi-classical differential operator in (\ref{opesc.5}):
$$
P=\sum_{|\alpha |\le N_0} a_\alpha (x;h)(hD)^\alpha ,\ a_\alpha \in
S(m_0r^{-|\alpha |}),
$$
where $a_\alpha $ is a finite sum in powers of $h$ as in
(\ref{opesc.6}) with leading term $a_{\alpha ,0}(x)$. The
full symbol of $P$ for the standard left quantization will also be
denoted by $P$ (see (\ref{opesc.7})) and the semi-classical principal
symbol will be denoted by $p(x,\xi )$ ((\ref{opesc.13})):
$$
p(x,\xi )=\sum_{|\alpha |\le N_0} a_{\alpha ,0}(x)\xi ^{\alpha }\in S(m)
$$
where $m=m_0(x)(\widetilde{r}(x,\xi )/r(x))^{N_0}$. We also have
$P(x,\xi ;h)\in S(m)$ (uniformly with respect to $h$). We make the
ellipticity assumption (\ref{opesc.14}), where $p_\mathrm{class}$ is
the classical (PDE) principal symbol in (\ref{opesc.12}). Then for
every fixed real $E_0$ the
energy surface $\Sigma _{E_0}=p^{-1}(E_0)$ has the property that
$$
|\xi |\le {\cal O}(r(x)),\hbox{ for }(x,\xi )\in \Sigma _{E_0}.
$$
We say (see Definition \ref{opesc1}) that the real-valued function
$G\in S(\widetilde{r}R)$ is an escape function if
\begin{equation}\label{int.3}
H_pG(\rho )\ge \frac{m(\rho )}{{\cal O}(1)}\hbox{ on }\Sigma
_{E_0}\setminus K,
\end{equation}
for some compact set $K$. We make the technical assumption
(\ref{opesc.22}) (there stated with $E_0$ replaced with $0$, a
reduction obtained by replacing $p$ with $p-E_0$):
\begin{equation}\label{opescint.22}\begin{split}
&\hbox{For every }r_0>0,\hbox{ there exists }\epsilon _0>0,\hbox{ such
  that}\\
&|p-E_0|\ge \epsilon _0m\hbox{ on }{\bf R}^{2n}\setminus \bigcup_{\rho \in
  \Sigma _0}B_{g(\rho )}(\rho ,r_0).
\end{split}\end{equation}
Here $g$ is the natural metric associated to the scales $\widetilde{r}$, $R$, see (\ref{opesc.4.5}). 

Proposition \ref{opesc2} (where again we took the case $E_0=0$)  states that if we have an escape function for
a given energy $E_0$, then we can modify it on a bounded set to get an
escape function $G$ which vanishes on any given compact set, such that
$H_pG\ge 0$.

\par The main example we have in mind is that of the Schr\"odinger
operator
\begin{equation}\label{int.14}
P=-h^2\Delta +V(x),
\end{equation}
where $V$ is real-valued and
\begin{equation}\label{int.15}
\partial _x^\alpha V=o(\langle x\rangle^{-|\alpha |}),\ |x|\to \infty .
\end{equation}
Then we take $r(x)=1$, $R(x)=\langle x\rangle$, $m=\xi ^2$ and
$P(x,\xi ;h)=p(x,\xi )=\xi ^2+V(x)$. When $E_0>0$, we have the
escape function $x\cdot \xi $ and after multiplication with a cutoff
$\chi (p(x,\xi )-E_0)$ we can also assume that $G$ has compact support
in $\xi $. In this case (\ref{opescint.22}) holds automatically.

\par Let $G=G(x,\xi )$ be real-valued and sufficiently small in
$S(\widetilde{r}R)$. Let $\Lambda _G$ be the corresponding
I-Lagrangian manifold $\Im (x,\xi )=H_G(\Re (x,\xi ))$, given in
(\ref{nore.1}), which is also symplectic for $\Re \sigma $, where
$\sigma =\sum d\xi _j\wedge dx_j$ is the complex symplectic form on
${\bf C}^{2n}$. We assume that $|\xi |\le {\cal O}(r(x))$ on the
support of $G$ and define the weight function $H$ on $\Lambda _G$ by
(\ref{nore.7}). It is also of class $S(\widetilde{r}R)$ when using the
natural parametrization of $\Lambda _G$ in (\ref{nore.1}).

Let $T$ be an FBI-transform defined as in
(\ref{nore.2})--(\ref{nore.6}) so that $T:{\cal E}'({\bf R}^n)\to
C^\infty (\Lambda _G;{\bf C}^{n+1})$. If $m$ is an order function on
$\Lambda _G$ ($m\in S(m)$), we define the Hilbert spaces $H(\Lambda
_G,m)$ as in Definition \ref{nore2} and put $H(\Lambda _G)=H(\Lambda
_G,1)$. When $G=0$ this gives $L^2({\bf R}^n)$ with equivalence of
norms. In Section \ref{pfops} we review pseudodifferential operators,
Fourier integral operators and Toeplitz operators, acting on these
spaces.

Let $r$, $R$, $m(x,\xi )=m_0(x)(\widetilde{r}(x,\xi )/r(x))^{N_0} $ be
as above and let $P$ be a formally self-adjoint $h$-differential
operator as in (\ref{opesc.5}), (\ref{opesc.6}), fulfilling (\ref{opesc.14}) as well as the technical
assumption (\ref{opesc.22}). We also make the exterior analyticity assumption
(\ref{opesc.11}). If $G\in S(\widetilde{r}R)$ with
$|\xi |\le {\cal O}(r(x))$ on $\mathrm{supp\,}G$, then (cf.\
(\ref{nore.29})) we can view $P$ as a bounded operator
$$
P:\, H(\Lambda _{\upsilon G},m)\to H(\Lambda _{\upsilon G}),
$$
for $0\le \upsilon \ll 1$, provided that the coefficients $a_{\alpha ,k}$ of $P$ are analytic
in a neighborhood of the $x$-space projection of
$\mathrm{supp\,}G$. In Section \ref{sbd} we prove a first
semiboundedness result:
\begin{theo}\label{int1}
  Under the above assumptions, assume in addition that $P$ has an
  escape function $G_0$ at energy $E_0\in {\bf R}$. Let $K\subset {\bf
    R}^n$ be compact, containing the analytic singular support of $P$,
  i.e. the smallest closed set $\widetilde{K}$ such that the
  coefficients of $P$ (more precisely all the $a_{\alpha ,k}$ in
  {\rm (\ref{opesc.6})}) are analytic in ${\bf R}^n\setminus
  \widetilde{K}$. Then we can find an escape function at energy $E_0$;
$$
G(x,\xi ;h)\sim G^0+hG^1+...\hbox{ in }S(\widetilde{r}R),
$$
supported in $|\xi |\le {\cal O}(r(x))$,
where $G^0=G_0$ near infinity on $\Sigma _{E_0}$, $\pi
_x\mathrm{supp\,}G^j$, $\pi _x\mathrm{supp\,}G$ are disjoint from a
fixed neighborhood of $K$, such that for $P$ as a closed unbounded
operator: $H(\Lambda _{\upsilon G})\to H(\Lambda _{\upsilon G})$, we have
\begin{equation}\label{int.16}
\Im (Pu|u)_{H(\Lambda _{\upsilon G})}\le \upsilon {\cal O}(h^\infty )\| u
\|^2_{H(\Lambda _{\upsilon G},m^{1/2})},
\end{equation}
for $\upsilon \ge 0$ and $h>0$ small enough. In the Schr\"odinger case
($m=\langle \xi \rangle ^2$), we can replace $H(\Lambda
_{\upsilon G},m^{1/2})$ with $H(\Lambda _{\upsilon G})$.
\end{theo}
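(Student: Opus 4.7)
The plan is a two-stage construction: first modify the given escape function $G_0$ off a neighborhood of $K$ to obtain a principal term $G^0$ with $H_p G^0 \geq 0$ globally, then correct it by subleading terms $h^j G^j$ that cancel the $h$-corrections to $\Im P$ on $\Lambda_{\upsilon G}$ one order at a time.

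For the first stage I apply Proposition \ref{opesc2} directly (reducing to the case $E_0=0$ by shifting $p$ by $E_0$) to produce $G^0\in S(\widetilde{r}R)$ which coincides with $G_0$ near infinity on $\Sigma_{E_0}$, is supported in $|\xi|\le{\cal O}(r(x))$, has $\pi_x\mathrm{supp}\,G^0$ disjoint from a fixed neighborhood of $K$, and satisfies $H_pG^0\geq 0$ throughout ${\bf R}^{2n}$. Since $G^0=G_0$ near infinity on $\Sigma_{E_0}$, $G^0$ inherits the escape estimate $H_pG^0\ge m/{\cal O}(1)$ there.

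For the second stage I compute the Weyl symbol of $(P-P^*)/(2i)$ acting on $H(\Lambda_{\upsilon G})$ using the pseudodifferential calculus of Section \ref{pfops}: it is obtained by restricting a suitable almost analytic extension of the full symbol of $P$ to $\Lambda_{\upsilon G}$. Taylor expanding in $\upsilon$ as in Section \ref{aim_ideas},
\[
\Im P\big|_{\Lambda_{\upsilon G}}=-\upsilon H_pG+\upsilon^2 R_1(\upsilon;G)+h R_2(\upsilon,h;G),
\]
where $R_2$ collects the contributions from the subprincipal pieces $a_{\alpha,k}$, $k\geq 1$, and the ${\cal O}(h)$-terms produced by the symbolic calculus on $\Lambda_{\upsilon G}$. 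The leading term $-\upsilon H_pG^0\le 0$ is already of the correct sign. Substituting $G=G^0+hG^1+h^2G^2+\dots$ and expanding, the coefficient of $h^j$ takes the form $-\upsilon H_pG^j+F_j(G^0,\dots,G^{j-1})$, where $F_j$ is analytic where $P$ is analytic and localized near a bounded region of $\Sigma_{E_0}$ (using the ellipticity assumption (\ref{opesc.22}) to absorb contributions away from the characteristic set into the $\upsilon^2$-remainder in a weighted sense). I solve the transport equation $H_pG^j=\upsilon^{-1}F_j$ along the Hamilton flow of $p$ in a neighborhood of the support of $F_j$ and cut off to keep the support disjoint from the fixed neighborhood of $K$; this yields $G^j\in S(\widetilde{r}R)$ with the prescribed support.

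Borel summation produces $G\sim G^0+hG^1+\dots$ in $S(\widetilde{r}R)$, with $\Im P|_{\Lambda_{\upsilon G}}=\upsilon\cdot{\cal O}(h^\infty)\,m$ as a symbol. The operator bound (\ref{int.16}) then follows from the continuity of the pseudodifferential calculus on weighted $H(\Lambda_{\upsilon G},m^s)$-spaces: writing $(P-P^*)/(2i)=\mathrm{Op}(q)$ with $q$ of order $\upsilon h^N$ in $S(m)$ for every $N$, the boundedness statement for Weyl-type operators between $H(\Lambda_{\upsilon G},m^{1/2})$ and its dual yields the stated weighted estimate; in the Schr\"odinger case, the absence of lower-order $h$-terms and the special form $m=\langle\xi\rangle^2$ let us replace $\|u\|_{H(\Lambda_{\upsilon G},m^{1/2})}$ by $\|u\|_{H(\Lambda_{\upsilon G})}$, since the symbol constructed is itself rapidly decaying in $\xi$ on $\mathrm{supp}\,G$. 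The main obstacle I anticipate is the transport step for $G^j$: making sure, uniformly in $j$, that the solvability along the $H_p$-flow is compatible with the support constraint near $K$ and with membership in $S(\widetilde{r}R)$, and that the $\upsilon^2$-remainder in the Taylor expansion can be absorbed in the iteration without spoiling the sign of the leading order $-\upsilon H_pG^0$.
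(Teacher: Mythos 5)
The proposal takes a genuinely different route from the paper, and that route has gaps that the paper's argument is specifically designed to avoid.

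The paper's construction (Section \ref{sbd}) does \emph{not} cancel the $h$-corrections to $\Im P|_{\Lambda_{\upsilon G}}$ by solving transport equations. Instead it builds a \emph{nested family} of escape functions: compact $H_p$-convex sets $K_0\supset K_1\supset\dots$ containing the trapped set, increasing cutoffs $\chi_0\prec\chi_1\prec\dots$, and modifications $G_j$ of the escape function as in Proposition \ref{opesc2}, arranged so that $H_pG_{j+1}>0$ on $\mathrm{supp\,}G_j$. Setting $G^0_j=\chi_j(p)G_j$ gives the key property (\ref{sbd.7}), $H_pG^0_{j+1}\ge m/{\cal O}(1)$ on $\mathrm{supp\,}G^0_j$. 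The error produced at order $h^j$ is supported in $\mathrm{supp\,}G^0_j$; it is then \emph{dominated in sign} (not cancelled) by $h^{j+1}H_pG^0_{j+1}$ after replacing $G^0_{j+1}\mapsto\alpha_{j+1}G^0_{j+1}$ with $\alpha_{j+1}$ large, cf.\ (\ref{sbd.10})--(\ref{sbd.13}). The $\upsilon^2$-corrections are absorbed into the leading term via the pointwise inequality (\ref{opesc.57}), $H_pG^0_0\ge m\|H_{G^0_0}\|_g^2/{\cal O}(1)$, which turns $-\Im p_{\upsilon,0}$ into $\upsilon(1+{\cal O}(\upsilon^2))H_pG^0_0$ as in (\ref{sbd.9}). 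Only an upper bound on $\Im(Pu|u)$ is needed, and the large negative contribution of the escape function near infinity is left untouched.

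Your scheme of solving $H_pG^j=\upsilon^{-1}F_j$ runs into several concrete problems, which you partly anticipate but do not resolve, and which I do not see how to fix along your lines.
\begin{itemize}
\item \textbf{Solvability and cutoffs.} The transport equation is integrable along trajectories only where $H_p$ is transverse to the foliation used to parametrize the flow. Near the inner boundary of $\mathrm{supp\,}G^0$, where $H_pG^0$ degenerates to zero (it must, since $G^0$ is truncated to vanish near the trapped set and near $K$), the integration time blows up. Moreover, if you cut the solution off to keep $\pi_x\mathrm{supp\,}G^j$ away from $K$, the commutator term $(H_p\chi)G^j$ re-introduces an error of the same size at the same order, so the cancellation is not closed.
\item \textbf{$\upsilon$-dependence.} The decomposition $\Im P|_{\Lambda_{\upsilon G}}=-\upsilon H_pG+\upsilon^2R_1+hR_2$ is not a clean separation: $R_2$ itself vanishes at $\upsilon=0$, and the $\upsilon^2$-piece produces, at order $h^j$, cross terms such as $H_{G^0}\cdot H_{G^j}$. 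Thus $F_j$ implicitly depends on $G^j$, the equation is not triangular in $j$ as written, and the solution $G^j$ would have to depend on $\upsilon$. The theorem asks for a fixed $G$ (independent of $\upsilon$) valid for all small $\upsilon\ge 0$; an $\upsilon$-dependent family of symbols does not directly yield that.
\item \textbf{Loss of the sign structure for higher terms.} In the paper's scheme each $G^0_j$ is itself an escape function, and this is what makes the $\upsilon^2$-errors harmless at \emph{every} $h$-order via (\ref{opesc.57}). A $G^j$ obtained as the solution of a transport equation is not of one-signed $H_pG^j$, so the quadratic-in-$\upsilon$ terms involving $G^j$ are not controlled.
\item \textbf{Symbol class.} Integrating $\upsilon^{-1}F_j$ along trajectories to infinity and landing in $S(\widetilde{r}R)$ requires integrability along the flow which is not established and is not obvious since $F_j$ is merely ${\cal O}(m)$ on $\mathrm{supp\,}G^0$.
\end{itemize}

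In short, exact cancellation is both harder and unnecessary: the theorem only requires that the \emph{positive part} of $\Im P$ on $\Lambda_{\upsilon G}$ be ${\cal O}(\upsilon h^\infty)$, and the nested-escape-function mechanism achieves this by a sign argument together with the Toeplitz representation (\ref{nore.45}) and the quadratic lower bound (\ref{opesc.57}). I would encourage you to re-read the two inequalities (\ref{sbd.7}) and (\ref{opesc.57}); they are the structural heart of the proof and replace the transport step entirely. (Also, the replacement of $m^{1/2}$ by $1$ in the Schr\"odinger case is not because the symbol decays in $\xi$, but because $m=\langle\xi\rangle^2\asymp 1$ on $\Sigma_{[-\epsilon_0,\epsilon_0]}$ and only those values enter, cf.\ Remark \ref{sbd1}.)
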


\par As we shall see, we can arrange so that
$$
\| u\|_{H(\Lambda _{\upsilon G})}=\| u\|_{L^2},\hbox{ when
}u\in L^2_{\mathrm{comp}}(K).
$$

\par The half-estimate (\ref{int.16}) is of independent interest, when compared with
various versions of G\aa{}rding's inequality, but for adiabatic theory
in connection with a potential well in an island, we would like to
replace ${\cal O}(h^N)$ for every $N>0$ with ${\cal O}(e^{-N/h})$ for
every $N>0$. This improvement will be obtained with a scaling
argument.

Keeping the above assumptions, we also assume (\ref{far.1}),
(\ref{far.2}):
$$
r=1,\ R(x)=\langle x\rangle,\ m_0(x)=1,\ m=\langle \xi \rangle^{N_0},
$$
as well as (\ref{far.3}) which states that $p(x,\xi )$ converges to a
limiting polynomial $p_\infty (\xi )$ when $x\to \infty $ in the
natural sense for the semi-norms of $S(m)$. The main fact that we
exploit is now that in a region where $x=\mu \widetilde{x}$,
$|\widetilde{x}|\asymp 1$, $P(x,hD_x;h)$ can be viewed as an
$\widetilde{h}$-differential operator $P(\mu
\widetilde{x},\widetilde{h}D_{\widetilde{x}};h)$ with
$\widetilde{h}=h/\mu $. Combination of this observation with Theorem
\ref{int1} leads to (cf.\ (\ref{far.28})):
\begin{theo}\label{int2}
We make the assumptions of Theorem {\rm \ref{int1}} and the two additional
assumptions above. Let $\pi _x:\, {\bf R}^n\times {\bf R}^n\ni (x,\xi
)\mapsto x\in {\bf R}^n$. Then uniformly for $\mu \in [1,+\infty [$, we can
find an escape function
$$
G(x,\xi ,\mu ;h)\sim G^0+hG^1+...\hbox{ in }S(\widetilde{r}R),
$$
with support in $\pi _x^{-1}(({\bf R}^n\setminus B(0,\mu ))\cap
({\bf
  R}_x^n\times B(0,r_0))$ for some fixed $r_0>0$ and with $G^0=G_0$ on
$\Sigma _{E_0}\cap \pi _x^{-1}({\bf R}^n\setminus B(0,2\mu ))$,
independent of $\mu $ for $|x|\ge 2\mu $, such that for $P$ as an
unbounded closed operator $H(\Lambda _{\upsilon G})\to H(\Lambda _{\upsilon G})$, we
have
\begin{equation}
\label{int.17}
\Im (Pu|u)_{H(\Lambda_{\upsilon G})}
\le \upsilon {\cal O}((h/\mu )^\infty )\| u\|^2_{H(\Lambda _{\upsilon G},m^{1/2})},\
0\le \upsilon \ll 1.
\end{equation}

In the Schr\"odinger operator case, we can replace $H(\Lambda
_{\upsilon G},m^{1/2})$ in {\rm (\ref{int.17})} with
$H(\Lambda_{\upsilon G})$.
\end{theo}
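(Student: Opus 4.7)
The plan is to reduce Theorem~\ref{int2} to Theorem~\ref{int1} by a dilation $x = \mu \tilde x$ in configuration space that turns the $\mu$-dependent region $\{|x| \ge \mu\}$ into a fixed region $\{|\tilde x| \ge 1\}$. Setting $\tilde h = h/\mu$, we have $hD_x = \tilde h D_{\tilde x}$, and the rescaled operator
\[
\tilde P_\mu(\tilde x, \tilde h D_{\tilde x}; h) := P(\mu \tilde x, \tilde h D_{\tilde x}; h)
\]
has semiclassical principal symbol $\tilde p_\mu(\tilde x, \tilde \xi) = p(\mu \tilde x, \tilde \xi)$. By (\ref{far.3}), $\tilde p_\mu \to p_\infty(\tilde \xi)$ in the seminorms of $S(m)$ as $\mu \to \infty$, uniformly for $|\tilde x|$ staying away from zero, so that the family $\{\tilde P_\mu\}_{\mu \ge 1}$ is uniformly controlled in the appropriate symbol class.

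The first substantial step is to produce a rescaled escape function $\tilde G(\tilde x, \tilde \xi, \mu; \tilde h)$ for $\tilde P_\mu$ at energy $E_0$, with support in $\{|\tilde x| \ge 1, |\tilde \xi| \le r_0\}$ and equal to $G_0(\mu \tilde x, \tilde \xi)$ for $|\tilde x| \ge 2$. For the limit symbol $p_\infty(\tilde \xi)$, the function $\tilde x \cdot \tilde \xi$ is a standard escape (its Hamilton derivative along $p_\infty$ is $\tilde \xi \cdot \partial_{\tilde \xi} p_\infty$, positive on $\Sigma_{E_0}$ in the regime we consider); after multiplication by a cutoff $\chi(\tilde p_\mu - E_0)$ in $\tilde \xi$ and a smooth cutoff in $\tilde x$ equal to one outside $\{|\tilde x|\le 2\}$ and vanishing on $\{|\tilde x|\le 1\}$, we obtain a zeroth-order candidate $\tilde G^0$, which on $\{|\tilde x|\ge 2\}$ we arrange to coincide with $G_0(\mu\tilde x, \tilde\xi)$. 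Invoking Proposition~\ref{opesc2} uniformly in $\mu$ then produces the full asymptotic series $\tilde G \sim \tilde G^0 + \tilde h \tilde G^1 + \ldots$ in $S(\tilde r R)$.

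Feeding $\tilde G$ into Theorem~\ref{int1} applied to $\tilde P_\mu$ in the $\tilde h$-semiclassical framework yields
\[
\Im(\tilde P_\mu \tilde u \mid \tilde u)_{H(\Lambda_{\upsilon \tilde G})} \le \upsilon \mathcal{O}(\tilde h^\infty) \| \tilde u \|^2_{H(\Lambda_{\upsilon \tilde G}, m^{1/2})}.
\]
Transferring back by the unitary dilation $u(x) = \mu^{-n/2} \tilde u(x/\mu)$, which identifies $H(\Lambda_{\upsilon \tilde G})$ with $H(\Lambda_{\upsilon G})$ after setting $G(x,\xi;h) := \tilde G(x/\mu, \xi, \mu; h/\mu)$, converts the estimate into (\ref{int.17}) with the desired $(h/\mu)^\infty$ gain. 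By construction $G$ is supported in $\pi_x^{-1}({\bf R}^n \setminus B(0,\mu)) \cap ({\bf R}_x^n \times B(0,r_0))$ and agrees with $G_0$ for $|x| \ge 2\mu$, independently of $\mu$ there.

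The main obstacle is the uniformity in $\mu \in [1,\infty)$: every ingredient in the proof of Theorem~\ref{int1} -- the escape function improvement of Proposition~\ref{opesc2}, the pseudodifferential calculus on the $H(\Lambda_{\upsilon G})$-spaces, the identification of the principal symbol of $\Im P$ and the control of the remainders -- must be carried out with constants independent of $\mu$. This is plausible because the symbol class $S(m)$ with $r = 1$, $R = \langle \tilde x \rangle$ is left invariant by the dilation (since $\langle \mu \tilde x \rangle \asymp \mu \langle \tilde x \rangle$ on $|\tilde x|\ge 1$ and only the product structure of $S(R)$ matters for the relevant estimates), and assumption (\ref{far.3}) delivers uniform $S(m)$-control on the family $\tilde p_\mu$. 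The Schr\"odinger-case simplification in (\ref{int.17}) -- replacing $H(\Lambda_{\upsilon G}, m^{1/2})$ with $H(\Lambda_{\upsilon G})$ -- is inherited directly from Theorem~\ref{int1}, since the dilation preserves the order of derivatives involved.
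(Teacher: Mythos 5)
The proposal correctly identifies the dilation $x = \mu \tilde x$, $\tilde h = h/\mu$ and the plan of applying Theorem~\ref{int1} in the $\tilde h$-framework, which is indeed the skeleton of the paper's argument in Section~\ref{far}. However, there is a genuine gap at the step ``Feeding $\tilde G$ into Theorem~\ref{int1} applied to $\tilde P_\mu$'': the rescaled operator $P_\mu(\tilde x, \tilde h D_{\tilde x}; \tilde h) = P(\mu\tilde x, \tilde h D_{\tilde x}; h)$ does \emph{not} satisfy the symbol hypotheses of Theorem~\ref{int1} uniformly in $\mu$. Writing the rescaled coefficients $a^\mu_{\alpha,k}(\tilde x) = \mu^k a_{\alpha,k}(\mu\tilde x)$, one has $a^\mu_{\alpha,k}\in S(\widehat{R}^{-k})$ with $\widehat{R}(\tilde x) = \langle\mu\tilde x\rangle/\mu$. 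This is comparable to $\langle\tilde x\rangle^{-k}$ only in the region $|\tilde x|\ge 1/{\cal O}(1)$ (cf.\ (\ref{far.7}) and the surrounding lines); near $\tilde x = 0$ one has $\widehat{R}\approx 1/\mu$, so $a^\mu_{\alpha,k}$ is of size $\mu^k$ and the needed bounds $a^\mu_{\alpha,k}\in S(\langle\tilde x\rangle^{-k})$ fail uniformly in $\mu$ (equivalently, $\widehat{R}$ violates $rR\ge 1$ near the origin). Your remark ``$\langle\mu\tilde x\rangle \asymp \mu\langle\tilde x\rangle$ on $|\tilde x|\ge 1$'' is exactly the restriction that prevents a direct application of Theorem~\ref{int1}, whose hypotheses must hold on all of ${\bf R}^{2n}$.

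The missing ingredient is the decoupling observation at the end of Section~\ref{sbd} (together with Remark~\ref{nore'3}). Since the constructed escape function $G_\mu^0$ vanishes on a neighborhood of $B(0,1/2)$, one can normalize the $H(\Lambda_{\upsilon G_\mu^0})$-norm so that it coincides with the plain $L^2$-norm on functions supported near the origin. Consequently, $\Im(P_\mu u\,|\,u)_{H(\Lambda_{\upsilon G_\mu^0})} = \Im(\widetilde{P}_\mu u\,|\,u)_{H(\Lambda_{\upsilon G_\mu^0})}$ for any formally self-adjoint $\widetilde{P}_\mu$ with $\mathrm{supp}\,(\widetilde{P}_\mu - P_\mu)$ inside a small fixed neighborhood of $B(0,1/2)$. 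Taking $\widetilde{P}_\mu = (1-\chi)P_\mu(1-\chi)$ with $\chi\in C_0^\infty(B(0,1/2))$ equal to $1$ on $B(0,1/3)$ removes the bad region; $\widetilde{P}_\mu$ \emph{does} satisfy the hypotheses of Theorem~\ref{int1} uniformly in $\mu$, yielding (\ref{far.26}) for $\widetilde{P}_\mu$, hence for $P_\mu$, and then for $P$ after undoing the dilation. Without this step your proposal does not close: Theorem~\ref{int1} simply cannot be invoked for $P_\mu$ as stated.
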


\par In this result we use a decoupling property which can be obtained with
a suitable choice of norm in $H(\Lambda _{\upsilon G})$, namely that
$$
\| u\|_{H(\Lambda _{\upsilon G},1)}=\| u\|_{L^2},\hbox{ when
}\mathrm{supp\,}u\subset B(0,\mu /2).
$$

\par We next consider resolvent estimates. For simplicity we assume right
away that $P$ is a semi-classical Schr\"odinger operator as in
(\ref{int.14}), (\ref{int.15}). We will also assume that we are in the
1-dimensional case; $n=1$, even though we now think that the higher
dimensional case is within reach. With the higher dimensional case in
mind we will formulate certain statements as if we were in that general
case, even though (for the moment) $n=1$. In order to fit with
(\ref{int.15}), we take
$$
r(x)=1,\ R(x)=\langle x\rangle,\ m=\langle \xi \rangle^2,
$$
and note that $P(x,\xi ;h)=p(x,\xi )=\xi ^2+V(x)$. We keep the
exterior analyticity assumption (\ref{opesc.11}) which takes the form
(\ref{rest.3}). Let $E>0$. Let us first consider the non-trapping
case (cf.\ Proposition \ref{rest3}).
\begin{theo}\label{int3}
Assume that the $H_p$-flow on $\Sigma _E=p^{-1}(E)$ is
non-trapping (in the sense that no maximal $H_p$ trajectory in
$p^{-1}(E)$ is contained in a compact set). Let $G$ be as in Theorem {\rm \ref{int2}}, where we choose
$\mu =h/\epsilon $ where $0<\epsilon \ll h$ is a small parameter. Let
$\vartheta >0$ be small and fix $\upsilon >0$ sufficiently small. If $\delta
_0>0$, $C>0$ are respectively small and large enough, then for $z$ in
the range {\rm (\ref{rest.45})}:
$$
\Re z\in [E-\delta _0/2,E+\delta _0/2],\ -\epsilon _\vartheta /C \le
\Im z \le 1/C,
$$
we have that $P-z:H(\Lambda _{\upsilon \epsilon G},\langle \xi \rangle ^2)\to
H(\Lambda _{\upsilon \epsilon G})$ is bijective and
$$
m_\epsilon (x;h)^{\frac{1}{2}}(z-P)^{-1}m_\epsilon
(x;h)^{\frac{1}{2}}={\cal O}(1):\, H(\Lambda _{\upsilon \epsilon G})\to
H(\Lambda _{\upsilon \epsilon G},\langle \xi \rangle^2).
$$
Here we have put
$$m_\epsilon (x,\xi ):=\frac{h}{\langle x\rangle ^{1+\vartheta
  }}+\epsilon _\vartheta ,\ \ \epsilon _\vartheta =\left(\frac{\epsilon
  }{h} \right)^{\vartheta }\epsilon .$$
\end{theo}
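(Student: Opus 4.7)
The plan is to derive the bijectivity and resolvent bound from an a priori estimate of the form
$$\|m_\epsilon^{1/2} u\|_{H(\Lambda_{\upsilon\epsilon G}, \langle \xi\rangle^2)} \le C\,\|m_\epsilon^{-1/2}(P-z)u\|_{H(\Lambda_{\upsilon\epsilon G})},$$
together with an analogous estimate for the formal adjoint $P^*$ acting on the dual weighted space (the geometry is symmetric under $G\mapsto -G$). Once this a priori estimate is in hand, bijectivity follows by a standard Fredholm/duality argument using that $P-z$ is elliptic at infinity in the $m$-scale, so the range is closed and the index is zero; the adjoint estimate rules out cokernel elements.

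The first step is to promote the dissipation provided by $G$ outside $|x|\le \mu$ to a global coercivity statement on $\Sigma_E$. Non-trapping at energy $E>0$ means every maximal $H_p$-trajectory in $p^{-1}(E)$ escapes to infinity, and in the Schr\"odinger case in one dimension one can take the classical escape function $x\xi$ and cut it off to produce an $F\in C^\infty_c(T^*{\bf R})$, independent of $\upsilon,\epsilon$, with $H_pF>0$ on $\Sigma_E\cap \pi_x^{-1}(\overline{B(0,3\mu)})$. Theorem \ref{int2} supplies the complementary escape function $G$, satisfying, after the Taylor expansion of the principal symbol of $P$ on $\Lambda_{\upsilon\epsilon G}$,
$$-\Im p\big|_{\Lambda_{\upsilon\epsilon G}} = \upsilon\epsilon\, H_pG(\Re\rho) + {\cal O}((\upsilon\epsilon)^2) \ge c\,\upsilon\epsilon>0 \quad\hbox{on }\Sigma_E\cap\{|x|\ge 2\mu\}.$$
The plan is now to run a positive commutator argument with a bounded observable $1+t\,\mathrm{Op}(F)$ for some small fixed $t>0$: writing $(P-z)u=f$ and taking the imaginary part of the pairing with $(1+t\mathrm{Op}(F))u$ in $H(\Lambda_{\upsilon\epsilon G})$ gives
$$\Im((P-z)u\,|\,(1+t\mathrm{Op}(F))u) = -\Im z\,\|u\|^2 + \tfrac{t}{2}(\tfrac{1}{ih}[P,\mathrm{Op}(F)]u|u) + \Im(Pu|u)_{\Lambda}+{\cal O}(h),$$
where the commutator contributes $th\,\int \chi_{\{|x|\le 2\mu\}}|u|^2$ (up to lower order, by sharp G\aa{}rding) and the last term absorbs as $\le -c\upsilon\epsilon \int_{|x|\ge 2\mu}|u|^2$ by Theorem \ref{int2} and the computation above.

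The third step is to identify the weight $m_\epsilon$ with the total effective dissipation rate. The point is the matching identity
$$h/\langle x\rangle^{1+\vartheta}\big|_{|x|\sim\mu}\;\sim\;h\mu^{-1-\vartheta}=\epsilon(\epsilon/h)^\vartheta=\epsilon_\vartheta,$$
so the two summands of $m_\epsilon$ coincide exactly on the transition region; inside $|x|\lesssim \mu$ the weight behaves like $h\langle x\rangle^{-1-\vartheta}$ and matches the $h$-scale commutator gain, while outside it reduces to $\epsilon_\vartheta$ and matches the escape-function gain once $\upsilon$ is chosen so that $c\upsilon\epsilon\gg \epsilon_\vartheta$, which holds because $\epsilon_\vartheta/\epsilon = (\epsilon/h)^\vartheta\to 0$. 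Hence for $\Im z\ge -\epsilon_\vartheta/C$ with $C$ large, the negative $\Im z\|u\|^2$ is absorbed pointwise by the dissipation budget encoded in $m_\epsilon$, delivering the desired weighted a priori estimate on the $L^2$-part of $\|u\|$; the $\langle \xi\rangle^2$-part then follows from elliptic regularity at high frequency since $p=\xi^2+V$ is elliptic of order $2$ outside a bounded set in $\xi$.

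The main obstacle I foresee is the synchronization of two dissipative mechanisms of very different orders: the $h$-sized gain from the Mourre commutator in the bounded region where $G\equiv 0$ versus the $\upsilon\epsilon$-sized gain from $\Lambda_{\upsilon\epsilon G}$ outside $|x|=\mu$. Since $\mu = h/\epsilon$, $\upsilon\epsilon\ll h$, and the transition region is where the escape function $G$ is turning on, one must carry the compact-region $L^2$-mass out to $|x|\ge 2\mu$ using classical propagation of singularities along the non-trapping flow, with quantitative control uniform in the small parameters. Making this propagation argument compatible with the $H(\Lambda_{\upsilon\epsilon G})$-pseudodifferential calculus and the weight $m_\epsilon$, rather than with ordinary $L^2$, is the technical core; once done, the a priori estimate yields bijectivity in the stated $z$-box and the advertised resolvent bound.
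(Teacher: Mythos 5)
Your proposal identifies the two dissipative mechanisms correctly, and the matching computation $h\mu^{-1-\vartheta}=\epsilon(\epsilon/h)^\vartheta=\epsilon_\vartheta$ is exactly the reason the weight $m_\epsilon$ takes the form it does. However, the plan has a genuine gap in the interior region, and the gap is precisely where the paper introduces its key device.

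You posit an $F\in C^\infty_c(T^*{\bf R})$, \emph{independent of} $\upsilon,\epsilon$, with $H_pF>0$ on $\Sigma_E\cap\pi_x^{-1}(\overline{B(0,3\mu)})$. This cannot exist: the region $\{|x|\le 3\mu\}$ grows without bound as $\epsilon\to 0$ (since $\mu=h/\epsilon$), while a fixed compactly supported $F$ has $H_pF\equiv 0$ outside a fixed ball. If instead you let $F$ depend on $\mu$ (e.g.\ $F=x\xi$ cut off at $|x|\sim\mu$), then $F$ has size $\asymp\mu$, and the pairing with $(1+t\,\mathrm{Op}(F))u$ is no longer a bounded perturbation of the identity for fixed $t$; in particular the error terms ${\cal O}(h)$, the contribution $-t\Im z\,\Re(\mathrm{Op}(F)u|u)$, and the passage from the pairing estimate to the weighted bound $\|m_\epsilon^{1/2}u\|\lesssim\|m_\epsilon^{-1/2}(P-z)u\|$ all require control of $m_\epsilon^{1/2}\mathrm{Op}(F)m_\epsilon^{-1/2}$, which is $\asymp\mu$ and not addressed. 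You also take for granted that $\mathrm{Op}(F)$ is (nearly) self-adjoint on $H(\Lambda_{\upsilon\epsilon G})$, which is not automatic since the inner product carries the weight $e^{-2H/h}$.

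The paper resolves exactly this problem by never leaving the contour-deformation framework: instead of a Mourre observable, it uses a second escape function $G_{\mathrm{lap}}=\chi^0(p)\,f(G)$ where $G\asymp x\xi$ and the reparametrization $f$ is chosen with $f'(t)=h/\langle t\rangle^{1+\vartheta}$ (see (\ref{rest.14})--(\ref{rest.23})). The point of Proposition \ref{rest1} is that this makes $G_{\mathrm{lap}}$ \emph{bounded}, in fact of class $S(h)$, while still having $H_pG_{\mathrm{lap}}\asymp h/\langle x\rangle^{1+\vartheta}$ on the energy shell --- so the weight $m_\epsilon$ is manufactured directly in the deformation, not recovered afterwards from a large commutator. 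One then works with the single contour $\Lambda_{\upsilon G_\epsilon}$, $G_\epsilon=G_{\mathrm{lap}}+\epsilon G_{\mathrm{sbd}}$, notes (Remark \ref{rest2}) that the resulting Hilbert norm is uniformly equivalent to that of $H(\Lambda_{\upsilon\epsilon G_{\mathrm{sbd}}})$ because $G_{\mathrm{lap}}\in S(h)$, reads the lower bound $-\Im P_\upsilon^{\mathrm{top}}\ge(\upsilon/C)m_\epsilon-\upsilon k_\upsilon$ off the principal symbol (\ref{rest.38})--(\ref{rest.40}), and absorbs the energetically-supported remainder $k_\upsilon$ by ellipticity of $p-z$ on its support (\ref{rest.46})--(\ref{rest.48}). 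Without an analog of the $f(G)$ reparametrization, the interior part of your argument does not close; with it, the two scales merge automatically into $m_\epsilon$, and there is no need for a separate propagation step to carry mass to $|x|\ge 2\mu$.
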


We next consider a trapping case, namely that of a potential well in
an island. Let $E>0$ and keep the assumptions above, except the one
about non-trapping. Let $\ddot{\mathrm{O}}\Subset {\bf R}^n$ be open (still with
$n=1$) and let $U_0\subset \ddot{\mathrm{O}}$ be a compact
subset. Assume (\ref{rest.50}), (\ref{rest.51}):
$$
V-E<0\hbox{ in }{\bf R}^n\setminus \overline{{\ddot{\mathrm{O}}}},\ V-E>0 \hbox{ in
}{\ddot{\mathrm{O}}}\setminus U_0,\ V-E\le 0\hbox{ in }U_0,
$$
$$
\mathrm{diam}_dU_0=0,
$$
where $d$ is the Lithner-Agmon distance associated to the metric
$$
(V-E)_+dx^2,\ \ (V-E)_+=\max (V-E,0).
$$
Assume (\ref{rest.52}):
$$\hbox{The }H_p\hbox{-flow has no trapped trajectories in
}{{p^{-1}(E)}_\vert}_{{\bf R}^n\setminus \ddot{\mathrm{O}}}.
$$
Let $M_0\subset \ddot{\mathrm{O}}$ be a connected compact set with
smooth boundary (i.e. a compact interval in the present 1D case) such
that (\ref{rest.53}) holds:
$$
M_0\supset \{x\in {\ddot{\mathrm{O}}}; d(x,\partial {\ddot{\mathrm{O}}})\ge \epsilon _0 \},
$$
for some small $\epsilon _0>0$. Let $P_0$ denote the Dirichlet
realization of $P$ on $M_0$.

\par Let $J(h)\subset {\bf R}$ be an interval tending to $\{ E \}$ as
$h\to 0$. Assume (\ref{rest.65}):
$$
P_0\hbox{ has no spectrum in } \partial J(h)+[-\delta (h),\delta (h)] ,
$$
where the parameter $\delta (h)$ is small but not exponentially small;
$$
\ln \delta (h)\ge -o(1)/h.
$$
$\sigma (P_0)\cap J(h)$ is a discrete set of the form
$\{ \mu _1(h),...,\mu _m(h) \}$ where $m=m(h)={\cal O}(h^{-n})$ and we
repeat the eigenvalues according to their multiplicity. Let
$\Gamma (h)$ denote the set of resonances of $P$ in
$J(h)-i[0,\epsilon_\vartheta  /C]$, $C\gg 1$, also
repeated according to their (algebraic) multiplicity.
Assume (\ref{rest.65.5}):
$\epsilon \ge e^{-1/(Ch)},
$
for some $C\gg 1$, so that
$$
\left(\frac{\epsilon }{h} \right)^\vartheta \epsilon  \ge e^{\frac{1}{{\cal O}(h)}-\frac{2S_0}{h}}.
$$
Here $S_0>0$ denotes the Lithner-Agmon distance from $U_0$ to $\partial
\ddot{\mathrm{O}}$. In Proposition \ref{rest6} we recall a result from
\cite{HeSj86} when $V$ is analytic everywhere and due to Fujii\'e, Lahmar-Benbernou, Martinez \cite{FuLabeMa11}, for potentials that are merely smooth on
a bounded set, stating that there is a bijection $b:\,\{ \mu_1,...,\mu _m \}\to \Gamma (h)$ such that
$$
b(\mu )-\mu =\widetilde{{\cal O}}(e^{-2S_0/h}):={\cal O}(e^{\omega
  -2S_0)/h}),\ \omega =\omega (\epsilon _0)\to 0,\ \epsilon _0\to 0.
$$
We give a proof in Section \ref{rest}.
\begin{theo}\label{int4}
For $C\gg 1$ sufficiently large, let
\begin{equation}\label{int.18}
z\in \{z\in J(h)+i]-\epsilon _\vartheta /C,1/C[;\
\mathrm{dist\,}(z,\sigma (P_0)\cap J(h))=\mathrm{dist\,}(z,\sigma
(P_0))\} .
\end{equation}
Assume either that $m$ (the number of elements in $\Gamma (h))$ is
equal to {\rm 1}, or that $\mathrm{dist\,}(z, \sigma (P_0))\ge \widetilde{{\cal
    O}}(e^{-2S_0/h})$. Then we have {\rm (\ref{rest.106})}:
$$  (z-P)^{-1}={\cal O}(h/\delta )+{\cal O}(1)+{\cal
    O}(h/\mathrm{dist\,}(z,\Gamma )): \ m_\epsilon ^{\frac{1}{2}}{\cal H}_{\mathrm{sbd}}\to
  m_\epsilon ^{-\frac{1}{2}}{\cal D}_{\mathrm{sbd}},
$$ where the first two terms are holomorphic in the interior of the
set {\rm (\ref{int.18})}. Here
$$
{\cal H}_\mathrm{sbd}=H(\Lambda _{\upsilon \epsilon G}),\ {\cal
  D}_\mathrm{sbd}=H(\Lambda _{\upsilon \epsilon G},\langle \xi \rangle^2)
$$
and $\upsilon $, $G$ are as in Theorem {\rm \ref{int2}}.
\end{theo}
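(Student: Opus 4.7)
The plan is to combine the non-trapping resolvent estimate of Theorem~\ref{int3} (which controls the exterior) with the spectral theorem for the Dirichlet operator $P_0$ on $M_0$ (which controls the interior away from $J(h)$), tying the two together through a Grushin problem whose effective Hamiltonian $E_{-+}(z)$ inherits the bijection $b:\sigma(P_0)\cap J(h)\to\Gamma(h)$ furnished by Proposition~\ref{rest6}. The equidistance condition in (\ref{int.18}), combined with the gap assumption (\ref{rest.65}), forces every point of $\sigma(P_0)\setminus J(h)$ to lie at distance at least $\delta(h)$ from $z$, so the only possible poles of $(z-P)^{-1}$ in this region come from the $m$ points of $\sigma(P_0)\cap J(h)$, which are in bijective correspondence with the resonances.

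First I construct a \emph{bulk parametrix} $F(z)$ by gluing. Outside a neighborhood of $\ddot{\mathrm{O}}$, I modify $P$ into a globally non-trapping operator $P^\sharp$ by altering $V$ inside the island, and apply Theorem~\ref{int3} to get $(P^\sharp-z)^{-1}=\mathcal{O}(1):m_\epsilon^{1/2}\mathcal{H}_{\mathrm{sbd}}\to m_\epsilon^{-1/2}\mathcal{D}_{\mathrm{sbd}}$ throughout (\ref{int.18}). Inside, I use the Dirichlet resolvent $(z-P_0)^{-1}$ composed with the spectral projection onto the part of $\sigma(P_0)$ lying outside $J(h)$; by the gap and equidistance hypotheses this piece is $\mathcal{O}(1/\delta)$, and the factor $m_\epsilon\sim h$ on compact sets supplies the extra $h$ to give $\mathcal{O}(h/\delta)$ in the claimed weighted norms. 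A partition of unity glues these pieces, the tunneling cross-errors being $\widetilde{\mathcal{O}}(e^{-2S_0/h})$ by Agmon estimates and safely absorbed because $\epsilon_\vartheta\ge e^{1/\mathcal{O}(h)-2S_0/h}$ by (\ref{rest.65.5}). The resulting $F(z)$ is holomorphic in the interior of (\ref{int.18}) and satisfies $F(z)=\mathcal{O}(h/\delta)+\mathcal{O}(1)$.

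Next I set up the Grushin problem
\[
\mathcal{P}(z)=\begin{pmatrix} P-z & R_- \\ R_+ & 0 \end{pmatrix}:\mathcal{D}_{\mathrm{sbd}}\oplus\mathbb{C}^m\to\mathcal{H}_{\mathrm{sbd}}\oplus\mathbb{C}^m,
\]
with $R_-c=\sum_{j=1}^m c_j\chi\phi_j$ injecting cut-off Dirichlet eigenfunctions and $R_+$ defined dually. Standard perturbation based on $F(z)$ yields bijectivity with an effective Hamiltonian $E_{-+}(z)$ whose diagonal entries are $\mu_j-z+\widetilde{\mathcal{O}}(e^{-2S_0/h})$ and off-diagonal entries are $\widetilde{\mathcal{O}}(e^{-2S_0/h})$. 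Proposition~\ref{rest6} identifies the zeros of $\det E_{-+}(z)$ with the resonances $\{b(\mu_j)\}=\Gamma(h)$. The Schur complement formula
\[
(z-P)^{-1}=E(z)+E_+(z)E_{-+}(z)^{-1}E_-(z)
\]
then produces the two holomorphic terms of the claim via $E(z)=F(z)+\widetilde{\mathcal{O}}(e^{-2S_0/h})$, while the pole term $\mathcal{O}(h/\mathrm{dist}(z,\Gamma))$ arises by tracking the Agmon decay of the $\chi\phi_j$: each of $E_\pm(z)$ carries a factor of order $h^{1/2}$ against the weights $m_\epsilon^{\pm 1/2}$, so the combined pole contribution scales as announced.

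The principal obstacle is controlling $\|E_{-+}(z)^{-1}\|\lesssim 1/\mathrm{dist}(z,\Gamma)$ in the multi-eigenvalue regime $m>1$: two eigenvalues $\mu_j,\mu_k\in J(h)$ separated only on the exponentially small scale would make the off-diagonal entries of $E_{-+}$ comparable with its diagonal ones, spoiling the desired bound. The supplementary hypothesis $\mathrm{dist}(z,\sigma(P_0))\ge\widetilde{\mathcal{O}}(e^{-2S_0/h})$ is precisely what makes the Neumann series for $E_{-+}(z)^{-1}$ converge uniformly, after which the identity $\mathrm{dist}(z,\Gamma(h))=\mathrm{dist}(z,\sigma(P_0)\cap J(h))+\widetilde{\mathcal{O}}(e^{-2S_0/h})$ from Proposition~\ref{rest6} converts the Dirichlet-side distance into the resonance-side distance. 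In the alternative case $m=1$ the difficulty evaporates because $E_{-+}(z)$ is then a scalar and its inverse is immediately controlled by $1/\mathrm{dist}(z,\Gamma)$.
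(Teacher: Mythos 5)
Your proof follows essentially the same route as the paper's Section~\ref{rest}: a Grushin problem built from cut-off Dirichlet eigenfunctions, an approximate inverse that glues the non-trapping resolvent of the filled-well operator (exterior) to the Dirichlet resolvent projected away from $J(h)$ (interior), and the Schur complement formula with the observation $m_\epsilon\asymp h$ on compact sets supplying the extra factor of $h$ in each of the three terms, together with the same dichotomy ($m=1$ versus $\mathrm{dist}(z,\sigma(P_0))\ge\widetilde{{\cal O}}(e^{-2S_0/h})$) to control $E_{-+}^{-1}$. Two small slips, neither affecting the final norm bound: the Schur complement identity is $(z-P)^{-1}=-E(z)+E_+E_{-+}^{-1}E_-$ (note the sign in front of $E$, since the Grushin operator has $P-z$ rather than $z-P$ in its upper-left corner), and the correction to the bulk parametrix is only $\widetilde{{\cal O}}(e^{-S_0/(2h)})$ as in (\ref{rest.98})---it is only the effective Hamiltonian $E_{-+}$ that achieves the stronger rate $\widetilde{{\cal O}}(e^{-2S_0/h})$, after the exponential-weight (Agmon) argument leading to (\ref{rest.93}).
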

We next turn to adiabatic expansions. Let $I\subset {\bf R}$ be an interval and let
\begin{equation}\label{bicsint.1}
V_t=V(t,x)\in C_{b}^\infty (I\times {\bf R}^n;{\bf R}).
\end{equation}
Here $C_b^\infty (\Omega )$ denotes the space of smooth functions on
$\Omega $ that are bounded with all their derivatives.
We assume (cf.\ (\ref{rest.3}))
\begin{equation}\label{bicsint.2}
\begin{split}
&V_t\hbox{ has a holomorphic extension (also denoted } V_t \hbox{)
  to}\\
&\{x\in {\bf C}^n;\, |\Re x|>C,\ |\Im x|<|\Re x|/C \}\\&\hbox{such that
}
V_t(x)=o(1),\ x\to \infty .
\end{split}
\end{equation}
\begin{equation}\label{bicsint.3}\begin{split}
\partial _tV_t(x)=0\hbox{ for }|x|\ge C,\hbox{ for some constant }C>0
\end{split}
\end{equation}
It is tacitly assumed that $V(t,x)$ does not depend on $h$. However,
when considering a narrow potential well in an island, of diameter
$\asymp h$, we will have to make an exception and
allow such an $h$-dependence  in a small neighborhood of the well.

\par Let $0<E_-<E_-'<E_+'<E_+<\infty $ and let
\begin{equation}\label{bicsint.4}
E_0(t)\in C_b^\infty (I;[E_-',E_+']).
\end{equation}
We assume that $V_t-E_0(t)$ has a  potential well in an island as above.

\par Let $\ddot{\mathrm{O}}=\ddot{\mathrm{O}}(t)\Subset {\bf R}^n$ be a
connected open set and let $U_0(t)\subset \ddot{\mathrm{O}}(t)$ be
compact. Assume (cf.\ (\ref{rest.50})), still with $n=1$,
\begin{equation}\label{bicsint.5}
V_t-E_0(t)\begin{cases}
<0 \hbox{ in } {\bf R}^n\setminus \overline{\ddot{\mathrm{O}}}(t),\\
>0 \hbox{ in } \ddot{\mathrm{O}}(t)\setminus U_0(t),\\
\le 0\hbox{ in }U_0(t),
\end{cases}
\end{equation}
\begin{equation}\label{bicsint.6}
\mathrm{diam}_{d_t}(U_0(t))=0.
\end{equation}
Here $d_t$ is the Lithner-Agmon distance on $\ddot{\mathrm{O}}(t)$,
given by the metric $(V_t-E_0(t))_+dx^2$.

\par Also assume that with $p_t=\xi^2 + V_t(x)$,
\begin{equation}\label{bicsint.7}
\hbox{the }H_{p_t}\hbox{-flow has no trapped trajectories in
}{p_t^{-1}(E_0(t))_|}_{{\bf R}^n\setminus \ddot{\mathrm{O}}(t)}.
\end{equation}
It follows that
\begin{equation}\label{bicsint.8}
d_xV_t\ne 0 \hbox{ on }\partial \ddot{\mathrm{O}}(t),
\end{equation}
so $\partial \ddot{\mathrm{O}}(t)$ is smooth and depends smoothly on
$t$. Thus $\ddot{\mathrm{O}}(t)$ is a manifold with smooth boundary,
depending smoothly on $t$. Further, $U_0(t)$ depends continuously on
$t$.

\par For $\epsilon _0>0$ small, we define
\begin{equation}\label{bicsint.9}
M_0(t)=\{ x\in \ddot{\mathrm{O}}(t);\, d_t(x,\partial
\ddot{\mathrm{O}}(t))\ge \epsilon _0 \} ,
\end{equation}
so $M_0(t)\Subset \ddot{\mathrm{O}}(t)$ is a compact set with smooth
boundary, depending smoothly on $t$. (Here we use the structure of
$d_t(x,\ddot{\mathrm{O}}(t))$ that follows from (\ref{bicsint.8}), see
Sections 9, 10 of~\cite{HeSj86}.) More precisely, for every fixed $t$
(consequently suppressed from the notation) the function $\phi(x)=d(x,\partial \ddot{\mathrm{O}})$ in
$\overline{\ddot{\mathrm{O}}}$ is analytic in the interior, continuous up
to the boundary, where it vanishes, and solves the eikonal equation
$|\nabla \phi |^2=V(x)$. Over a neighborhood of the boundary, we have
the Lagrangian manifold $\Lambda $, defined as the flow out of
$\partial \ddot{\mathrm{O}}\times \{\xi =0 \}$ under the Hamilton flow
of $q=\xi ^2-V(x)$. The manifold $\Lambda$ has a simple fold over the boundary and
$\Lambda _\phi : \xi =\phi '(x)$ describes ``one of the two covering halves'' of $\Lambda
$. It is quite well known then that if we choose analytic local
coordinates $y=(y_1,...,y_{n-1},V)=(y',V)$ near a boundary point, then
$\phi $ is an analytic function of $y',V^{1/2}$ and has a convergent
expansion $\phi =a_3(y)V^{3/2}+a_4(y)V^2+...$ with $a_k$
analytic and $a_3>0$.

\par Since we would like to allow $I$ in (\ref{bicsint.1}) to be a very long interval, we introduce the following uniformity assumption:
\begin{equation}
\label{bicsint.9.5}
\begin{split}  &(V_t,E_0(t))\in {\cal K},\ \forall t\in I,\hbox{ where }{\cal K}
  \hbox{ is a compact subset of }\\
&\{V\in C_b^\infty ({\bf R}^n;{\bf R});\, V\hbox{ satisfies (\ref{bicsint.2}) with
  a fixed constant }C
\}\times [E_-',E_+']\\
&\hbox{such that }(V,E)\hbox{ satisfies the assumptions (\ref{bicsint.3}),
  with a fixed }C\\ &\hbox{as well as
(\ref{bicsint.5}), (\ref{bicsint.6}), (\ref{bicsint.7}).}
\end{split}
\end{equation}

\par Let $P_0(t)$ denote the Dirichlet realization of $P(t)=-h^2\Delta
+V_t(x)$ on $M_0(t)$. If we enumerate the eigenvalues of $P_0(t)$ in
$]E_-,E_+[$ in increasing order (repeated with multiplicities) we know
(as a general fact for 1-parameter families of self-adjoint
operators), that they are uniformly Lipschitz functions of $t$. Let
$\mu_0 (t)=\mu_0 (t;h)$ be such an eigenvalue and assume,
\begin{equation}\label{bicsint.10}
\mu_0 (t;h)=E_0(t)+o(1),\ h\to 0,\hbox{ uniformly in }t.
\end{equation}
\begin{equation}\label{bicsint.11}\begin{split}
&\mu_0 (t;h) \hbox{ is a simple eigenvalue and}\\ &\sigma (P_0(t))\cap
[E_0(t)-\delta (h),E_0(t)+\delta (h)]=\{ \mu_0 (t;h) \}.
\end{split}
\end{equation}
Here, as above, $\delta (h)>0$ is small but not
exponentially small,
\begin{equation}\label{bicsint.12}
\ln \delta (h)\ge -o(1)/h,\ h\to 0.
\end{equation}
In addition to (\ref{rest.65.5}), we assume (\ref{bicsny.54}), so we
have (\ref{bics.111.5}):
$$
e^{-1/(C_1h)}\le \epsilon \le \min (h/C_2,\delta ),\ C_1,\, C_2\gg 1.
$$

Let $\lambda _0(t)$ be the unique resonance of $P(t)$ in
$D(\mu_0 (t),\delta (h))$ (the open disc in ${\bf C}$ with center $\mu
_0(t)$ and radius $\delta (h)$), so that $\lambda _0(t)-\mu _0(t)=\widetilde{{\cal
    O}}(e^{-2S(t)/h})$. As we shall see in (\ref{bics'.16}):
$$
\partial _t^k\lambda _0(t)={\cal O}(\delta(h) ^{-k}),\
k\ge 1.
$$

\par Let $e^0(t)$ be the corresponding resonant state, $(P(t)-\lambda
_0(t))e^0(t)=0$, uniquely determined up to a factor $\pm 1$ (that we
take independent of $t$) by the condition
$$
\int_{{\bf R}^{n}}e^0(t,x)^2 dx=1.
$$
As we shall see in Section \ref{bics}, we can find an escape function
$G$ as in Theorem \ref{int2} which applies simultaneously to all
$(P(t),E(t))$. Moreover, we can choose $G$ such that $G(x,-\xi
)=-G(x,\xi )$ and with this choice the bilinear scalar product
$$
\langle u|v\rangle =\int u(x)v(x)dx
$$
is well-defined and bounded on $H(\Lambda _{\upsilon \epsilon G})\times H(\Lambda
_{\upsilon \epsilon G})$. Then  $e^0={\cal O}(1)$ in $H(\Lambda
_{\upsilon \epsilon
  G})$. Recall that ${\cal H}_{\mathrm{sbd}}=H(\Lambda _{\upsilon \epsilon G})$, ${\cal
  D}_{\mathrm{sbd}}=H(\Lambda _{\upsilon \epsilon G},\langle \xi \rangle^2)$
where $\upsilon >0$ is small and fixed. Then we have (\ref{bicsny.48}):
$$
\partial _t^ke^0(t)={\cal O}(1) \left(\frac{h}{\epsilon _\vartheta ^2}
\right)^k\hbox{ in } {\cal D}_\mathrm{sbd}\hbox{ for }k\ge 0.
$$
Using the resolvent estimates in Theorem \ref{int4} we will establish
(as Proposition \ref{bics2}) the following result:
\begin{theo}\label{int5}
Under the assumptions above there exist two formal asymptotic series,
\begin{equation}\label{bicsint.65}
\nu (t,\varepsilon )\sim \nu _0(t)+\varepsilon \nu _1(t)+\varepsilon
^2 \nu _2(t)+...\ \hbox{in }C^\infty (I;{\cal D}_{\mathrm{sbd}}),
\end{equation}
\begin{equation}\label{bicsint.66}
\lambda (t,\varepsilon )\sim \lambda _0(t)+\varepsilon \lambda _1(t)+\varepsilon
^2 \lambda _2(t)+...\ \hbox{in }C^\infty (I),
\end{equation}
such that
\begin{equation}\label{bicsint.67}
(\varepsilon D_t+P(t)-\lambda (t,\varepsilon ))\nu (t,\varepsilon
)\sim 0
\end{equation}
as a formal asymptotic series in $C^\infty (I;{\cal H}_\mathrm{sbd})$. Here,
\begin{equation}\label{bicsint.68}
  \partial _t^k\nu _j=   {\cal O}(1) (h/\hat{\epsilon }_\vartheta ^2)
  ^{2j+k}\hbox{ in }{\cal D}_\mathrm{sbd},\ j\ge 0,\ k\ge 0,
\end{equation}
\begin{equation}\label{bicsint.69}
\partial _t^k\lambda _j= {\cal O}(1)(h/\hat{\epsilon }_\vartheta ^2
)^{2j-1+k},\ j\ge 1, k\ge 0,
\end{equation}
where
\begin{equation}\label{int70.5}
\hat{\epsilon }_\vartheta :=\frac{\epsilon _\vartheta }{\max
  (1,\epsilon _\vartheta /(\delta h) )^{1/2}}=\min (\epsilon _\vartheta
,(\epsilon _\vartheta\delta h) ^{1/2}).
\end{equation}
\end{theo}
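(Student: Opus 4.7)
The plan is to carry out a classical Born--Fock--Nenciu adiabatic construction, adapted to the non--self-adjoint setting of the spaces $\mathcal{H}_{\mathrm{sbd}}=H(\Lambda _{\upsilon \epsilon G})$. The starting point is the Riesz projector
$$
\Pi _0(t)=\frac{1}{2\pi i}\oint _{\gamma (t)}(z-P(t))^{-1}\,dz,
$$
where $\gamma (t)$ is a small circle enclosing $\lambda _0(t)$ inside $D(\mu _0(t),\delta (h))$ but no other point of $\Gamma (h)$. By the simplicity assumption \no{bicsint.11} and the bijection $b:\{\mu _j\}\to \Gamma (h)$ recalled before Theorem \ref{int4}, $\Pi _0(t)$ has rank one. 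The symmetry $G(x,-\xi )=-G(x,\xi )$ makes $P(t)$ symmetric for the bilinear pairing $\langle \cdot |\cdot \rangle $, so $\Pi _0(t)u=\langle u|e^0(t)\rangle\, e^0(t)$ after fixing the normalization $\langle e^0|e^0\rangle =1$.

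Next I insert the ans\"atze \no{bicsint.65}--\no{bicsint.66} into \no{bicsint.67} and equate powers of $\varepsilon $. The order $\varepsilon ^0$ equation is $(P(t)-\lambda _0(t))\nu _0(t)=0$ and is solved by $\nu _0=e^0$. For $j\ge 1$, one obtains
$$
(P(t)-\lambda _0(t))\nu _j(t)=\lambda _j(t)\nu _0(t)+\sum _{k=1}^{j-1}\lambda _k(t)\nu _{j-k}(t)-D_t\nu _{j-1}(t)=:f_j(t).
$$
Applying $\Pi _0$ annihilates the left-hand side, since $(P-\lambda _0)\Pi _0=0$, and yields the scalar Fredholm condition $\Pi _0f_j=0$. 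Solved for $\lambda _j$ this determines $\lambda _j$ uniquely in terms of the lower-order data, and with this choice $f_j$ lies in $(I-\Pi _0)\mathcal{H}_{\mathrm{sbd}}$. One then sets $\nu _j=R(t)f_j$, where
$$
R(t)=\frac{1}{2\pi i}\oint _{\gamma (t)}\frac{(z-P(t))^{-1}}{z-\lambda _0(t)}\,dz=(P(t)-\lambda _0(t))^{-1}(I-\Pi _0(t))
$$
is the reduced resolvent, and fixes the normalization $\Pi _0\nu _j=0$ for $j\ge 1$. This recursively produces $(\nu _j,\lambda _j)$ to all orders, so formal existence of the expansions is routine.

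The substantive work lies in the quantitative estimates \no{bicsint.68}--\no{bicsint.69}. Differentiating the Riesz formula yields bounds on $\partial _t^k\Pi _0$ and $\partial _t^kR$; here the norm of $R(t)$ as an operator $m_\epsilon ^{1/2}\mathcal{H}_{\mathrm{sbd}}\to m_\epsilon ^{-1/2}\mathcal{D}_{\mathrm{sbd}}$ is extracted from Theorem \ref{int4} by carefully combining the three terms $\mathcal{O}(h/\delta )+\mathcal{O}(1)+\mathcal{O}(h/\mathrm{dist}(z,\Gamma ))$ along $\gamma (t)$: the pole at $\lambda _0$ is removed by the projector $I-\Pi _0$, while the remaining distance to $\Gamma $ is $\gtrsim \epsilon _\vartheta $ and the distance to the rest of $\sigma (P_0)$ is $\gtrsim \delta $. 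Tracking the weights $m_\epsilon ^{\pm 1/2}$, the interaction between the regimes $\epsilon _\vartheta $ and $(\epsilon _\vartheta \delta h)^{1/2}$ produces precisely the scale $\hat{\epsilon }_\vartheta $ defined in \no{int70.5}, so that $\|R(t)\|\lesssim h/\hat{\epsilon }_\vartheta ^2$. Using the already established bound $\partial _t^ke^0=\mathcal{O}((h/\epsilon _\vartheta ^2)^k)$ in $\mathcal{D}_{\mathrm{sbd}}$, one then performs a two-parameter induction on $j$ and $k$ in the recursive identities: each factor of $R$ gives one power of $h/\hat{\epsilon }_\vartheta ^2$, while passing from $\nu _{j-1}$ to $\nu _j$ through $D_t\nu _{j-1}$ inside $f_j$ costs an additional such power, which together yield the $2j$ in the exponent of \no{bicsint.68} and the $2j-1$ in \no{bicsint.69}; each external $\partial _t$ contributes one more factor.

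The main obstacle is the third step, namely obtaining the reduced-resolvent bound $\|R(t)\|\lesssim h/\hat{\epsilon }_\vartheta ^2$ (together with bounds on $\partial _t^kR$) in the correctly weighted norms. The delicate point is to exploit Theorem \ref{int4} uniformly along a contour whose radius one chooses of order $\min (\delta ,(\epsilon _\vartheta \delta /h)^{1/2}\epsilon _\vartheta )$ so as to balance the two contributions to $\mathrm{dist}(z,\Gamma )$ and $\mathrm{dist}(z,\sigma (P_0))$, and to handle the weight $m_\epsilon $ by commuting it with the resolvent at the cost of controlled errors given by the semi-boundedness of Theorem \ref{int2}. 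Once this quantitative bound is in place, assembling the formal expansions \no{bicsint.65}--\no{bicsint.66} (for example by Borel summation, since the $\nu _j$ and $\lambda _j$ are genuine $C^\infty $ objects on $I$) and verifying \no{bicsint.67} order by order is a direct consequence of the construction.
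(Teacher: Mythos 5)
Your overall architecture --- rank-one Riesz projector $\pi _0(t)u=\langle u|e^0(t)\rangle\,e^0(t)$ via the bilinear pairing, the reduced resolvent $R(t)=(P(t)-\lambda _0(t))^{-1}(1-\pi _0(t))$, the recursion $(P-\lambda _0)\nu _j=f_j$ with the Fredholm condition fixing $\lambda _j$, and a two-parameter induction on $(j,k)$ --- matches the paper's proof and is the right plan. The $2j/(2j-1)$ bookkeeping is also correct once you have the input estimates.

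The gap is in the step you yourself flag as the ``main obstacle.'' You propose to obtain the bound on $\partial _t^kR$ by integrating on a contour of radius $\min(\delta,(\epsilon _\vartheta \delta /h)^{1/2}\epsilon _\vartheta )$ and balancing the $\mathcal O(h/\delta )$ and $\mathcal O(h/\mathrm{dist}(z,\Gamma ))$ contributions of Theorem \ref{int4}. This cannot work as stated: the weighted resolvent estimates are only available for $\Im z\ge -\epsilon _\vartheta /C$, so any contour around $\lambda _0(t)$ must have radius $\lesssim \epsilon _\vartheta $, and there is no freedom to enlarge it to the scale you suggest. Moreover $\hat\epsilon _\vartheta $ does \emph{not} appear in the $k=0$ bound at all; the paper's operator-norm bound on $R(t)$ itself is the sharper $\mathcal O(h/\epsilon _\vartheta ^2)$, and $\hat\epsilon _\vartheta $ enters only through the $t$-derivatives. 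The mechanism is different from what you describe: one first shows $\partial _t^k\big((z-P)^{-1}(1-\pi _0)\big)=\mathcal O\big((h/\epsilon _\vartheta ^2)^{k+1}\big)$ for $|z-\lambda _0|\le \epsilon _\vartheta /C$ (removing the pole with $1-\pi _0$ and using the maximum principle for the now-holomorphic operator family), then extends to $\partial _z^\ell \partial _t^k$ by Cauchy, and finally evaluates at the moving point $z=\lambda _0(t)$ via the chain rule. This last step brings in $\partial _t^\ell \lambda _0=\mathcal O(\delta ^{-\ell})$, and it is precisely the competition between $1/(\delta \epsilon _\vartheta )$ from the chain-rule factors and $h/\epsilon _\vartheta ^2$ from the $\partial _t$-derivatives of the resolvent that produces $\max(h,\epsilon _\vartheta /\delta )/\epsilon _\vartheta ^2=h/\hat\epsilon _\vartheta ^2$.

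The missing ingredient that makes this work is therefore the auxiliary estimate $\partial _t^k\lambda _0=\mathcal O(\delta (h)^{-k})$, which you never use or mention. This is \emph{not} a consequence of the resolvent estimates alone; the naive chain of Riesz-formula bounds only gives $\partial _t^k\lambda _0=\mathcal O\big((h/\epsilon _\vartheta ^2)^{(k-1)_+}\big)$, which is far worse when $\epsilon $ is exponentially small. The paper obtains the $\mathcal O(\delta ^{-k})$ bound by a separate argument: it first bounds $\partial _t^k\mu _0$ for the Dirichlet eigenvalue (a self-adjoint problem, with a genuine spectral gap $\delta (h)$), then shows $\lambda _0-\mu _0=\widetilde{\mathcal O}(e^{-2S_t/h})$ together with all its $t$-derivatives (via subexponential interpolation from the coarse derivative bounds and the exponential closeness at $k=0$), and finally transfers $\partial _t^k\mu _0=\mathcal O(\delta ^{-k})$ to $\lambda _0$. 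Without this detour through the Dirichlet problem you will not obtain the scale $\hat\epsilon _\vartheta $, and your induction will close only with a much worse exponent. So the structure of your proposal is sound, but the one step you identify as delicate is solved by a mechanism (Dirichlet comparison plus chain rule at the moving pole) quite different from the contour-balancing you sketch, and that mechanism needs to be supplied.
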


We continue the discussion under the assumptions of Theorem
\ref{int5}. Put for $N\ge 1$
\begin{equation}\label{bicsint.84}
\nu ^{(N)}=\nu _0+\varepsilon \nu _1+...+\varepsilon ^N\nu _N,
\end{equation}
\begin{equation}\label{bicsint.85}
\lambda ^{(N)}=\lambda _0+\varepsilon \lambda _1+...+\varepsilon ^N
\lambda _N,\ \ N\ge 1.
\end{equation}
Then by the proof of Theorem \ref{int5} (cf.\ (\ref{1eig.4})),
\begin{equation}\label{bicsint.86}
(\varepsilon D_t+P(t)-\lambda ^{(N)})\nu ^{(N)}=r^{(N+1)},
\end{equation}
where
\begin{equation}\label{bicsint.87}
r^{(N+1)}=\varepsilon ^{N+1}D_t\nu _N-\sum_{j,k\le N \atop j+k\ge N+1}\varepsilon
^{j+k}\lambda _j\nu _k.
\end{equation}

\par In the following, we assume that
\begin{equation}\label{bicsint.88}
    \frac{\varepsilon ^{\frac{1}{2}}h}{\hat{\epsilon }_\vartheta ^2}\ll 1 .
\end{equation}
Recall from (\ref{bicsint.12}) that $\delta =\delta (h)$ is small, but
not exponentially small and that $\epsilon _\vartheta =(\epsilon
/h)^\vartheta \epsilon $. Then (\ref{bicsint.88}) holds if we assume that
$\varepsilon $ is exponentially small:
\begin{equation}\label{bicsint.89}
0<\varepsilon \le {\cal O}(1)\exp\left(-1/(Ch)\right),\hbox{ for some }C>0,
\end{equation}
and choose
\begin{equation}\label{bicsint.90}
\epsilon \ge \varepsilon ^{\frac{1}{4(1+\vartheta ) }-\alpha },
\end{equation}
for some $\alpha \in ]0,1/(4(1+\vartheta ))[$. We also assume
(\ref{bics.92.5}), stating that $\epsilon \le {\cal O}(\varepsilon
^{1/N_0})$ for some $N_0>0$. 

\par Having assumed (\ref{bicsint.88}) we get, as we shall see:
\begin{equation}\label{bicsint.91}
  \partial _t^k r^{(N+1)}={\cal O}(1)
  \varepsilon ^{\frac{1}{2}}\left(\frac{\varepsilon
      ^{\frac{1}{2}}h}{\hat{\epsilon }_\vartheta ^2}\right)^{2N+1}
  \left(\frac{h}{\hat{\epsilon }_\vartheta ^2} \right)^k\hbox{ in }{\cal D}_\mathrm{sbd}.
\end{equation}
Also,
\begin{equation}\label{bicsint.91.2}
  \| \nu ^{(N)}(t)\|_{{\cal H}_\mathrm{sbd}}
  =(1+{\cal O}(\varepsilon h^2 /\hat{\epsilon }_\vartheta ^4))\| \nu
  _0(t)\|_{{\cal H}_\mathrm{sbd}}\asymp 1.
\end{equation}

\par From (\ref{int.16}) with $\mu $ in (\ref{rest.26}), $\mu
=h/\epsilon $, we get:
\begin{equation}\label{bicsint.92}
\Im (P(t)u|u)_{{\cal H}_\mathrm{sbd}}\le {\cal
  O}(\epsilon ^\infty )\| u\|^2_{{\cal H}_\mathrm{sbd}}.
\end{equation}

\par Let $I\ni t\mapsto u(t)\in H(\Lambda _{\upsilon \epsilon G},\langle \xi \rangle)$
be continuous such that $\partial _tu$ is continuous with values in
$H(\Lambda _{\upsilon \epsilon G},\langle \xi \rangle^{-1})$. Assume that $u$ is a solution of
$$
(\varepsilon D_t+P(t))u(t)=0.
$$
It then follows from (\ref{bicsint.92}) that
\begin{equation}\label{bicsint.93}
\| u(t)\|_{{\cal H}_\mathrm{sbd}}\le e^{{\cal O}(\epsilon
^\infty )(t-s)}\| u(s)\|_{{\cal H}_\mathrm{sbd}},\ t\ge s.
\end{equation}

From (\ref{bicsint.92}) we will derive a resolvent bound which leads
to the fact that for every $u_0\in {\cal D}_\mathrm{sbd}$ and every
$s\in I$, there exist
$u_0\in C(I\cap [s,\infty [;{\cal D}_\mathrm{sbd})\cap C^1(I\cap
[s,\infty [;{\cal H}_\mathrm{sbd})$ such that
\begin{equation}\label{bicsint.96}
(\varepsilon D_t+P(t))u(t)=0\hbox{ for }s\le t\in I,\ \ u(s)=u_0.
\end{equation}
Again the solution satisfies (\ref{bicsint.93}). 
When $P(t)=P(t_0)$ is independent of $t$, this follows from the
Hille--Yosida theorem. In the general case we can use~\cite[Theorem 6.1 and Remark 6.2]{Ka70}. 

\par This allows us to define the forward fundamental matrix $E(t,s)$,
$I\ni t\ge s\in I$ of
$\varepsilon D_t+P(t)$:
$$
\begin{cases}(\varepsilon D_t+P(t))E(t,s)=0,\ t\ge s,\\ E(t,t)=1
\end{cases}
$$
and we have
\begin{equation}\label{bicsint.97}
\| E(t,s)\|_{{\cal L}({\cal H}_\mathrm{sbd},{\cal H}_\mathrm{sbd})}\le
\exp ((t-s){\cal O}(\epsilon ^\infty )),\ t\ge s,\ t,s\in I.
\end{equation}
If $v\in C (I; {\cal H}_\mathrm{sbd})$ vanishes for $t$ near $\inf I$,
we can solve $(\varepsilon D_t+P(t))u=v$
on $I$ by
$$
u(t)=\frac{i}{\varepsilon }\int _{\inf I}^tE(t,s)v(s) ds.
$$

\par Now return to (\ref{bicsint.84})--(\ref{bicsint.86}) with $\lambda _j$,
$\nu _j$ as in Theorem \ref{int5} and $r^{(N+1)}$ satisfying
(\ref{bicsint.91}). We notice that
\begin{equation}\label{bicsint.98}
\lambda ^{(N)}=\lambda _0+{\cal O}(1)\varepsilon ^{\frac{1}{2}}\frac{\varepsilon
  ^{\frac{1}{2}}h}{\hat{\epsilon }_\vartheta ^2}.
\end{equation}
We can choose $\nu _0(t)=e^0(t)$ implying that $\lambda _1=0$ and (\ref{bicsint.98}) improves to
\begin{equation}\label{bicsint.99}
\lambda ^{(N)}=\lambda _0+{\cal O}(1)\varepsilon ^{\frac{1}{2}}\left(\frac{\varepsilon
  ^{\frac{1}{2}}h}{\hat{\epsilon }_\vartheta ^2}\right)^3.
\end{equation}
See Remark \ref{bics3}. We choose $\nu _0=e^0$ in the remainder of this introduction.

\par Assume, to fix the ideas, that $0\in I$, and restrict the
attention to $I_+=\{ t\in I;\, t\ge 0 \}$. From (\ref{bicsint.86}), we
get
\begin{equation}\label{bicsint.103}(\varepsilon D_t+P(t))u^{(N)}=\rho
  ^{(N+1)},\ t\in I_+,\end{equation}
where
\begin{equation}\label{bicsint.104}
u^{(N)}=e^{-i\int_0^t \lambda ^{(N)} ds/\varepsilon }\nu ^{(N)},\
\rho ^{(N+1)}=e^{-i\int_0^t \lambda ^{(N)} ds/\varepsilon }r ^{(N+1)}.
\end{equation}
By (\ref{bicsint.91.2}), (\ref{bicsint.91}), we have
\begin{equation}\label{bicsint.105}
\| \rho ^{(N+1)}\|_{{\cal H}_\mathrm{sbd}}=
{\cal O}(1) \varepsilon ^{\frac{1}{2}}\left(\frac{\varepsilon
  ^{\frac{1}{2}}h}{\hat{\epsilon }_\vartheta ^2} \right)^{2N+1}
\| u^{(N)}\|_{{\cal H}_\mathrm{sbd}} .
\end{equation}
Using again (\ref{bicsint.92}), we get as in Section \ref{bics},
\begin{equation}\label{bicsint.105.2}
\| u^{(N)}(t)\|\le e^{{\cal O}(1)t\varepsilon ^{-1/2}(\varepsilon
^{1/2}h/\hat{\epsilon }_\vartheta ^2)^{2N+1}}\| u^{(N)}(0)\|,\
t\in I_+.
\end{equation}
Assume (\ref{bics.105.4}):
\begin{equation}\label{bicsint.105.4}
(\sup I)\varepsilon ^{-\frac{1}{2}}\left( \frac{\varepsilon
    ^{\frac{1}{2}}h}{\hat{\epsilon }_\vartheta ^2} \right)^{2N+1}
\le {\cal O}(1).
\end{equation}
Then, for $t\in I_+$,
\begin{equation}\label{bicsint.105.6}
  \| u^{(N)}(t)\|_{{\cal H}_\mathrm{sbd}}\le {\cal O}(1),\
  \| \rho ^{(N+1)}(t)\|_{{\cal H}_\mathrm{sbd}}\le {\cal O}(1) \varepsilon ^{\frac{1}{2}}\left( \frac{\varepsilon
      ^{\frac{1}{2}}h}{\hat{\epsilon }_\vartheta ^2} \right)^{2N+1}.
\end{equation}

\par Using the fundamental matrix $E$ to correct the error $\rho
^{(N+1)}$ we have the exact solution $u=u^{(N)}_\mathrm{exact}$,
\begin{equation}\label{bicsint.106}
u=u^{(N)}-\frac{i}{\varepsilon }\int_0^t E(t,s)\rho ^{(N+1)}(s) ds
\end{equation}
of the equation
$$
(\varepsilon D_t+P(t))u=0\hbox{ on }I_+.
$$

\par From (\ref{bicsint.105.4}) we get
\begin{equation}\label{bicsint.109}
\sup I\le \varepsilon ^{-N_0},
\end{equation}
for some fixed finite $N_0$. Then by (\ref{bicsint.97})
\begin{equation}\label{bicsint.110}
\| E(t,s)\|_{{\cal L}({\cal H}_\mathrm{sbd},{\cal H}_\mathrm{sbd})}\le
e^{{\cal O}(\varepsilon ^\infty )}=1+{\cal O}(\varepsilon ^\infty ),
\end{equation}
and using this and (\ref{bicsint.105.6}) in (\ref{bicsint.106}), we get
\begin{equation}\label{bicsint.111}
\| u-u^{(N)}\|_{{\cal H}_\mathrm{sbd}}\le {\cal O}(1)\varepsilon
^{-1}(\sup I) \varepsilon ^{\frac{1}{2}}\left( \frac{\varepsilon
      ^{\frac{1}{2}}h}{\hat{\epsilon }_\vartheta ^2} \right)^{2N+1}.
\end{equation}
This estimate is the main result of the present work. Let us recollect
the assumptions and the general context in the following theorem
(same as Theorem \ref{bics5} below).
\begin{theo}\label{int6}
Let $V_t=V(t,x)\in C_b^\infty (I\times {\bf R}^n;{\bf R})$, where $n=1$, $0<E_-<E_-'<E_+'<E_+<\infty $,
$E_0(t)\in C^\infty (I;[E_-',E_+'])$, $\ddot{\mathrm{O}}(t)\Subset {\bf R}^n$, $U_0(t)\subset \ddot{\mathrm{O}}(t)$ be as in the discussion around and
including {\rm (\ref{bicsint.1})}--{\rm (\ref{bicsint.8})}, {\rm (\ref{bicsint.9.5})}. Let $\mu_0(t)$ be a Dirichlet eigenvalue of $-h^2\Delta +V(t,\cdot )$ on
$M_0$ as in {\rm (\ref{bicsint.9})}--{\rm (\ref{bicsint.12})}. The operator $P(t)$ has a unique resonance $\lambda _0(t)$ in the set $\Omega (t)$ in
{\rm (\ref{bics.26})}: $$  \Omega (t):=\{ z\in D(\mu _0(t;h),\delta (h)/2);\, \Im z\ge -\epsilon_\vartheta/C  \}.$$
It is simple and satisfies {\rm (\ref{bics.27})}:
$$
\lambda _0(t)-\mu _0(t)=\widetilde{{\cal O}}(e^{-2S_t/h}),\
S_t:=d_t(U_0(t),\partial \ddot{\mathrm{O}}(t)).
$$
Here $\epsilon _\vartheta
  =(\epsilon /h)^\vartheta \epsilon  $, for some $\epsilon \in
  [e^{-1/(Ch)},h/C]$ for some sufficiently large constant $C>0$ and
  $\vartheta >0$ is a fixed small constant. Assume {\rm (\ref{bicsny.54})},
  {\rm (\ref{rest.65.5})}:
\begin{equation}\label{bicsint.111.5}
e^{-1/(C_1h)}\le \epsilon \le \min (h/C_2,\delta ),\ C_1,\, C_2\gg 1,
\end{equation}
as well as {\rm (\ref{bics.92.5})}: $\epsilon \le {\cal O}(\varepsilon ^{1/N_0})$ for some $N_0>0$. 

\par Define the spaces ${\cal H}_\mathrm{sbd} = H(\Lambda _{\upsilon \epsilon G})$,
${\cal D}_\mathrm{sbd}=H(\Lambda _{\upsilon \epsilon G},\langle \xi \rangle^2)$ as earlier in this section, so that $\lambda
  _0(t)$ is the unique eigenvalue in $\Omega (t)$ of $P(t):\, {\cal H}_\mathrm{sbd}\to {\cal H}_\mathrm{sbd}$ with domain ${\cal D}_\mathrm{sbd}$.

  \par Then we have the formal asymptotic series
  $\nu (t,\varepsilon )$, $\lambda (t,\varepsilon )$ in Theorem
  {\rm \ref{int5}}, where we choose $\nu _0(t)=e^0(t)$. For $N\ge 1$, define
  the partial sums $\nu ^{(N)}$, $\lambda ^{(N)}$ as in
  {\rm (\ref{bicsint.84})}, {\rm (\ref{bicsint.85})}. Let $\varepsilon $ be small enough
  so that {\rm (\ref{bicsint.88})} holds (and notice that this would follow
  from {\rm (\ref{bicsint.89})}, {\rm (\ref{bicsint.90})}). Assume (to fix the ideas)
  that $0\in I$, and assume {\rm (\ref{bicsint.105.4})} so that $\sup I\le \varepsilon ^{-N_0}$ for some constant
  $N_0>0$ and put $I_+=I\cap [0,+\infty [$. Let
  $u(t)\in C^1(I_+;{\cal H}_\mathrm{sbd})\cap C^0(I_+;{\cal
    D}_\mathrm{sbd})$ be the solution of
\begin{equation}\label{bicsint.112}
(\varepsilon D_t+P(t))u=0 \hbox{ on }I_+,\ u(0)=u^{(N)}(0),
\end{equation}
where $u^{(N)}$ is defined in {\rm (\ref{bicsint.104})}. Then {\rm (\ref{bicsint.111})}
holds uniformly for $t\in I_+$.
\end{theo}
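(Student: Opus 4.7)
The plan is to follow the strategy outlined in the discussion preceding the statement, which reduces the adiabatic approximation result to combining the formal asymptotic series of Theorem \ref{int5} with the semi-boundedness estimate \eqref{bicsint.92} and the forward propagator $E(t,s)$. First, I would invoke Theorem \ref{int5} with the specific normalization $\nu_0(t)=e^0(t)$ (so that $\lambda_1\equiv 0$, by Remark \ref{bics3}), and form the partial sums $\nu^{(N)}$ and $\lambda^{(N)}$ as in \eqref{bicsint.84}, \eqref{bicsint.85}. Direct substitution gives \eqref{bicsint.86} with the remainder $r^{(N+1)}$ displayed in \eqref{bicsint.87}, and using the bounds \eqref{bicsint.68}, \eqref{bicsint.69} on the $\nu_j$ and $\lambda_j$ together with the normalization assumption \eqref{bicsint.88}, one obtains the key estimate \eqref{bicsint.91} for $r^{(N+1)}$ in ${\cal D}_\mathrm{sbd}$ (and hence \textit{a fortiori} in ${\cal H}_\mathrm{sbd}$), as well as the two-sided bound \eqref{bicsint.91.2} on $\|\nu^{(N)}\|_{{\cal H}_\mathrm{sbd}}$.

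Next, I would form the approximate wavefunction $u^{(N)}$ as in \eqref{bicsint.104} with the adiabatic phase $\exp(-i\int_0^t\lambda^{(N)}\,ds/\varepsilon)$. A short calculation yields \eqref{bicsint.103}, that is $(\varepsilon D_t+P(t))u^{(N)}=\rho^{(N+1)}$, where $\rho^{(N+1)}$ carries the same phase factor. Here one must check that the phase factor is bounded uniformly on $I_+$: since $\Im\lambda_0=\widetilde{\cal O}(e^{-2S/h})$ and the subsequent $\lambda_j$ carry prefactors controlled by $(h/\hat\epsilon_\vartheta^2)^{2j-1}$, the assumption $\epsilon\ge e^{-1/(Ch)}$ in \eqref{bicsint.111.5} keeps $\varepsilon^{-1}\int_0^t\Im\lambda^{(N)}\,ds$ bounded under \eqref{bicsint.105.4}, so $|e^{-i\int_0^t\lambda^{(N)}\,ds/\varepsilon}|={\cal O}(1)$. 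Combining this with the bound on $r^{(N+1)}$ and with \eqref{bicsint.91.2}, I obtain \eqref{bicsint.105} and hence \eqref{bicsint.105.6}.

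Then I would construct the exact solution via the forward fundamental matrix $E(t,s)$. Existence of $E(t,s):\,{\cal H}_\mathrm{sbd}\to{\cal H}_\mathrm{sbd}$ for $t\ge s$ in $I$ follows from the semi-boundedness \eqref{bicsint.92} (which is \eqref{int.16} specialized to the present choice of $G$ and $\upsilon$), via a resolvent bound of Hille--Yosida type for frozen $t$ combined with the Kato theory \cite[Theorem 6.1, Remark 6.2]{Ka70} to handle the $t$-dependence; the resulting propagator satisfies \eqref{bicsint.97}. Under the hypothesis \eqref{bicsint.105.4} one has $\sup I\le\varepsilon^{-N_0}$, so \eqref{bicsint.97} improves to \eqref{bicsint.110}: $\|E(t,s)\|_{{\cal L}({\cal H}_\mathrm{sbd},{\cal H}_\mathrm{sbd})}=1+{\cal O}(\varepsilon^\infty)$.

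Finally, since $u(0)=u^{(N)}(0)$, Duhamel's formula applied to $(\varepsilon D_t+P(t))(u-u^{(N)})=-\rho^{(N+1)}$ gives
\[
u(t)-u^{(N)}(t)=-\frac{i}{\varepsilon}\int_0^t E(t,s)\rho^{(N+1)}(s)\,ds,
\]
which is exactly \eqref{bicsint.106}. Taking norms in ${\cal H}_\mathrm{sbd}$, using \eqref{bicsint.110} and the bound on $\rho^{(N+1)}$ from \eqref{bicsint.105.6}, and estimating the integral by $(\sup I)\cdot\sup_s\|\rho^{(N+1)}(s)\|_{{\cal H}_\mathrm{sbd}}$, yields the desired bound \eqref{bicsint.111}. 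The main obstacle I expect is not in any single step but in having already assembled the necessary technology — the semiclassical resolvent estimate of Theorem \ref{int4}, the asymptotic construction of Theorem \ref{int5}, and the semi-boundedness in the spaces $H(\Lambda_{\upsilon\epsilon G})$; once these are in hand, the present theorem is a rather clean Duhamel-type bootstrap, the delicate points being only the verification that the boundedness of the adiabatic phase factor and the range-of-time condition \eqref{bicsint.105.4} are mutually compatible with the constraints \eqref{bicsint.88}--\eqref{bicsint.90} on $\varepsilon$ and $\epsilon$.
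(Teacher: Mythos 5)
Your outline follows the paper's strategy closely, and the Duhamel/semi-boundedness scaffolding at the end is exactly what Section \ref{bics} does. The gap is in the passage from (\ref{bicsint.91})--(\ref{bicsint.91.2}) to (\ref{bicsint.105.6}): you try to bound the adiabatic phase factor $|e^{-i\int_0^t\lambda^{(N)}ds/\varepsilon}|$ directly from the estimates on the $\lambda_j$, but the computation you sketch does not close. The $\Im\lambda_0\le 0$ contribution is harmless, but the higher corrections contribute, for $j\ge 2$, a term $\varepsilon^{-1}\int_0^t\varepsilon^j|\Im\lambda_j|\,ds$ which (\ref{bicsint.69}) and (\ref{bicsint.105.4}) only bound by
\begin{equation*}
t\,\varepsilon^{-1/2}\left(\frac{\varepsilon^{1/2}h}{\hat\epsilon_\vartheta^2}\right)^{2j-1}\le {\cal O}(1)\left(\frac{\varepsilon^{1/2}h}{\hat\epsilon_\vartheta^2}\right)^{2(j-N-1)},
\end{equation*}
and since $\varepsilon^{1/2}h/\hat\epsilon_\vartheta^2\ll 1$ by (\ref{bicsint.88}), the exponent $2(j-N-1)$ is negative for $j\le N$, so this is large whenever $2\le j\le N$ and $N\ge 2$. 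Nothing in (\ref{bicsint.111.5}) or elsewhere forces $\Im\lambda_j$ to be much smaller than $|\lambda_j|$ --- e.g.\ $\lambda_2=\langle D_t\nu_1\,|\,e^0\rangle=-i\langle\partial_t\nu_1\,|\,e^0\rangle$ carries an explicit factor $i$ --- so the direct phase bound is not available by inspection.

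The paper sidesteps this entirely. Since $u^{(N)}$ and $\rho^{(N+1)}$ carry the \emph{same} scalar phase, (\ref{bicsint.105}) is a phase-independent ratio estimate. One then pairs (\ref{bicsint.103}) against $u^{(N)}$ in ${\cal H}_\mathrm{sbd}$ and uses the semi-boundedness (\ref{bicsint.92}) to obtain the Gr\"onwall-type differential inequality
\begin{equation*}
\varepsilon\,\partial_t\|u^{(N)}\|^2=2\Im(Pu^{(N)}|u^{(N)})-2\Im(\rho^{(N+1)}|u^{(N)})\le{\cal O}(\epsilon^\infty)\|u^{(N)}\|^2+2\|\rho^{(N+1)}\|\,\|u^{(N)}\|,
\end{equation*}
and inserting (\ref{bicsint.105}) (with (\ref{bics.92.5}) to absorb the ${\cal O}(\epsilon^\infty)$ term) yields (\ref{bicsint.105.2}); (\ref{bicsint.105.4}) then gives (\ref{bicsint.105.6}). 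This is what actually establishes the uniform bound on $\|u^{(N)}(t)\|$ --- and hence, a posteriori, on the phase factor, since $\|\nu^{(N)}\|\asymp 1$ --- and it cannot be read off from the $\Im\lambda_j$ estimates alone. With this step supplied, the remainder of your argument (existence and estimate (\ref{bicsint.97}) for $E(t,s)$ via Hille--Yosida and Kato, the improvement to (\ref{bicsint.110}) under $\sup I\le\varepsilon^{-N_0}$, and the Duhamel representation (\ref{bicsint.106}) leading to (\ref{bicsint.111})) is exactly the paper's.
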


\par {\bf Acknowledgments}. Part of this projet was conducted when MH visited Universit\'e de Bourgogne in June-July of 2017. He is very grateful to its
Institut de Math\'ematiques for the generous hospitality. MH and JS would also like to thank Galina Perelman for a very helpful discussion.
JS acknowledges support from PRC CNRS/RFBR 2017-2019 No. 1556 "Multi-dimensional semi-classical problems of Condensed Matter Physics and Quantum Dynamics''.

\section{Formal adiabatic solutions for an isolated eigenvalue}\label{1eig}
\setcounter{equation}{0}

Let ${\cal H}$ be a separable complex Hilbert space, let $I\subset
{\bf R}$ be a compact interval and let $P=P(t):{\cal
  H}\to {\cal H}$ be a closed densely defined operator, depending on
$t\in I$ such that
\begin{itemize}
\item[(H1)] The domain ${\cal D}={\cal D}(t)$ is independent of $t\in
  I$ and the domain norms $\| u \|_{{\cal D}(t)}$ are uniformly
  equivalent to each other in the sense that there exists a constant
  $C\ge 1$ such that $C^{-1}\| u \|_{{\cal D}(s)}\le \| u \|_{{\cal
      D}(t)}\le C \|u\|_{{\cal D}(s)}$, $s,t\in I$, $u\in {\cal D}$
\item[(H2)] $P(t)\in C^\infty (I;{\cal L}({\cal D},{\cal H}))$ in the
  natural sense: the successive derivatives are uniformly bounded
  ${\cal D}\to {\cal H}$ and again differentialble in the uniform
  sense.
\end{itemize}
In this section we shall also assume
\begin{itemize}
\item[(H3)] $P(t)$ has a simple eigenvalue $\lambda _0(t)$ which
  depends continuously on $t$ and is isolated from the rest of the
  spectrum:
  $$
  \mathrm{dist\,}\left(\lambda_0 (t),\sigma (P(t))\setminus \{\lambda_0 (t) \}\right)\ge 1/C,
  $$
  where $C>0$ is independent of $t$.
\end{itemize}
It follows from these assumptions that $\lambda_0(t)$ is a smooth
function of $t$. In the following result we review the formal
adiabatic construction.
\begin{prop}\label{1eig1}
There exist two asymptotic series
\begin{equation}\label{1eig.1}
\nu (t,\varepsilon )\sim \nu _0(t)+\varepsilon \nu _1(t)+\varepsilon ^2\nu
_2(t)+...\hbox{ in }C^\infty (I;{\cal D}),
\end{equation}
\begin{equation}\label{1eig.2}
\lambda (t,\varepsilon )=\lambda _0(t)+\varepsilon \lambda _1(t)+...\hbox{
  in }C^\infty (I),
\end{equation}
where $\nu _0(t)$ is non-vanishing, such that
\begin{equation}\label{1eig.3}
(\varepsilon D_t+P(t)-\lambda (t,\varepsilon ))\nu (t,\varepsilon )\sim 0,
\end{equation}
as an asymptotic series in $C^\infty (I;{\cal H})$.
\end{prop}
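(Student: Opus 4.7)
This is the standard formal adiabatic construction, adapted to the (possibly) non-self-adjoint setting imposed by (H1)--(H3). The plan is: (a) build a smooth rank-one spectral projection associated to $\lambda_0(t)$; (b) choose a smooth eigenvector $\nu_0(t)$; (c) solve the hierarchy of order-by-order equations using the reduced resolvent and a normalisation of the gauge freedom.

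By (H3), for each $t\in I$ the Riesz projection
$$\pi_0(t)=\frac{1}{2\pi i}\oint_{\gamma(t)}(z-P(t))^{-1}\,dz$$
is well-defined, where $\gamma(t)$ is a small positively oriented circle enclosing $\lambda_0(t)$ and no other part of $\sigma(P(t))$; $\gamma$ can be taken independent of $t$ locally. Using the resolvent identity together with (H1), (H2), we obtain $\pi_0\in C^\infty(I;{\cal L}({\cal H}))\cap C^\infty(I;{\cal L}({\cal D}))$, while (H3) makes $\pi_0(t)$ of rank one. Since $I$ is an interval, the smooth rank-one bundle $\pi_0(t){\cal D}$ is trivialisable and we fix a smooth nowhere-vanishing section $\nu_0(t)$, e.g.\ via Kato's parallel transport $\dot U=[\dot\pi_0,\pi_0]U$, $U(0)=\mathbf{1}$, and $\nu_0(t):=U(t)\nu_0(0)$ for a prescribed eigenvector at $t=0$. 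Smoothness of $\lambda_0(t)=\langle\ell,P(t)\nu_0(t)\rangle/\langle\ell,\nu_0(t)\rangle$, for any fixed $\ell\in{\cal H}^*$ with $\langle\ell,\nu_0(0)\rangle\ne 0$, follows from (H2).

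Next I substitute the ansatz (\ref{1eig.1}), (\ref{1eig.2}) into (\ref{1eig.3}) and equate powers of $\varepsilon$. The order $\varepsilon^0$ equation $(P(t)-\lambda_0(t))\nu_0(t)=0$ is satisfied by construction. For $k\ge 1$, collecting terms yields
$$(P(t)-\lambda_0(t))\nu_k(t)=-D_t\nu_{k-1}(t)+\sum_{j=1}^{k}\lambda_j(t)\nu_{k-j}(t).$$
I impose the gauge normalisation $\pi_0(t)\nu_k(t)=0$ for $k\ge 1$. Applying $\pi_0(t)$ to the equation above, using that $\pi_0$ commutes with $P$, that $(P-\lambda_0)\pi_0=0$, and the inductive gauge condition, all the sum terms collapse except one, leaving
$$\lambda_k(t)\nu_0(t)=\pi_0(t)D_t\nu_{k-1}(t),$$
which uniquely determines $\lambda_k(t)\in C^\infty(I)$ since $\nu_0(t)\ne 0$. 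Then $\nu_k(t)$ is defined as the image, under the reduced resolvent ${\cal R}_0(t)$ (the inverse of $P(t)-\lambda_0(t)$ from $(\mathbf{1}-\pi_0(t)){\cal H}$ onto $(\mathbf{1}-\pi_0(t)){\cal D}$), of the right-hand side, which by the solvability just established lies in $\ker\pi_0(t)$. By construction $\pi_0(t)\nu_k(t)=0$, preserving the induction hypothesis, and $\nu_k\in C^\infty(I;{\cal D})$ by smoothness of ${\cal R}_0$.

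Since (\ref{1eig.3}) is a formal asymptotic identity, no Borel resummation or cutoff is needed; the order-by-order solution is the entire content of the proposition. The only real bookkeeping is verifying the smoothness of $\pi_0$ and of ${\cal R}_0$ as bounded operators between the fixed spaces ${\cal H}$ and ${\cal D}$, together with the uniform bound $\|(z-P(t))^{-1}\|_{{\cal L}({\cal H})}\le C$ on $\gamma(t)$ ensured by (H3), and the uniform equivalence of the domain norms in (H1). I expect this to be the only technical hurdle, and it is not a substantive one.
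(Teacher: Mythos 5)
Your argument is correct and proceeds along the same lines as the paper: substitute the formal series, equate powers of $\varepsilon$, use the Riesz projection $\pi_0(t)$ to extract the solvability condition determining $\lambda_k$, and invert $(P-\lambda_0)$ on $\ker\pi_0$ to obtain $\nu_k$. The only genuine divergence is that you fix the gauge $\pi_0(t)\nu_k(t)=0$ for $k\ge 1$ from the outset, which collapses the solvability condition to the clean formula $\lambda_k\nu_0=\pi_0D_t\nu_{k-1}$; the paper instead leaves the freedom $\nu_k\mapsto\nu_k+z(t)\nu_0$ open (and records it in Remark 2.2), writing the solvability in terms of pairing with the co-eigenvector $\delta_0(t)$ of $P(t)^*$. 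Both normalisations work; yours trades a slightly longer setup (explicitly invoking Kato parallel transport to trivialise the line bundle $\pi_0(t)\mathcal{D}$) for a shorter inductive step, whereas the paper keeps the gauge flexibility, which it then exploits in Remark 2.2 to classify all solution pairs $(\lambda,\nu)$.
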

\begin{proof}
We insert the developments for $\nu $ and $\lambda $ into
(\ref{1eig.3}) and try to cancel the successive powers of $\varepsilon $:
\begin{equation}\label{1eig.4}
\begin{split}
&\hskip -1cm (\varepsilon D_t+P(t)-\lambda (t,\varepsilon ))\nu (t,\varepsilon )=\\
&(P(t)-\lambda _0(t))\nu _0(t)+\varepsilon \left( (D_t-\lambda _1(t))\nu _0(t)+(P-\lambda _0)\nu _1(t)
\right)\\
&+ ...\\
&+\varepsilon ^k\left( (P(t)-\lambda _0(t))\nu _k+(D_t-\lambda _1)\nu
  _{k-1}-\lambda _2\nu _{k-2}-...-\lambda _k\nu _0 \right)\\
&+...
\end{split}
\end{equation}
In order to annihilate the $\varepsilon ^0$ term it is necessary and
sufficient to let $\nu _0(t)$ be a non-vanishing eigenvector
associated to $\lambda _0(t)$, which depends smoothly on $t$ and we
fix such a choice.

\par By assumption, $\nu _0(t)$ is unique up to a smooth non-vanishing
scalar factor. The corresponding spectral projection (independent of
the scalar factor) is
\begin{equation}\label{1eig.4.5}
\pi _0(t)=\frac{1}{2\pi i}\int _{\gamma (t)}(z-P(t))^{-1}dz,
\end{equation}
where $\gamma (t)$ is the oriented boundary of the disc $D(\lambda
_0(t),r)$ and $r>0$ is a sufficiently small constant. We know that
$\pi _0(t)$ is a projection of rank 1 and hence of the form
\begin{equation}\label{1eig.5}
\pi _0(t)u=(u|\delta _0(t))\nu _0(t).
\end{equation}
This projection and its adjoint, $\pi _0(t)^*v=(v|\nu _0(t))\delta
_0(t)$ depend smoothly on $t$ and we deduce that $\delta _0(t)$ (like
$\nu _0(t)$) depends smoothly on $t$. Since $\pi _0(t)(P(t)-\lambda
_0(t))=0$, we have $((P(t)-\lambda _0(t))u|\delta _0(t))=0$ for all $u\in
{\cal D}$ and it follows that $\delta _0(t)\in {\cal D}(P_0(t)^*)$ and
that $(P(t)^*-\overline{\lambda }_0(t))\delta _0(t)=0$.
From $\pi _0(t)^2=\pi _0(t)$ it follows that $(\nu _0(t)|\delta
_0(t))=1$.

\par By holomorphic functional calculus, we know that
\begin{equation}\label{1eig.6}
{\cal H}={\cal R}(1-\pi _0(t))\oplus {\cal R}(\pi _0(t))={\cal N}(\pi
_0(t))\oplus {\cal R}(\pi _0(t))=\delta _0(t)^\perp \oplus {\bf C}\nu
_0(t).
\end{equation}
Further, $P(t):\, \delta _0(t)^\perp \to \delta _0(t)^\perp  $ is a
closed densely defined operator with spectrum equal to $\sigma
(P(t))\setminus \{ \lambda _0(t) \}$. From this we conclude that the
equation
\begin{equation}\label{1eig.7}
(P-\lambda _0(t))u=v
\end{equation}
has a solution precisely when $v\perp \delta _0(t)$ and when this
condition is fulfilled the general solution is of the form $u=\widetilde{u}+z\nu
_0(t)$, where $\widetilde{u}$ is the unique solution in $\delta
_0(t)^\perp$ and $z\in {\bf C}$ is arbitrary.

In order to annihilate the $\varepsilon ^1$ term in (\ref{1eig.4}) we need to find
$\nu _1(t)\in {\cal D}$ depending smoothly on $t$ such that
\begin{equation}\label{1eig.8}
(P(t)-\lambda _0(t))\nu _1(t)=\lambda _1(t)\nu _0(t)-D_t\nu _0(t).
\end{equation}
As we have just seen, this equation can be solved precisely when
\begin{equation}\label{1eig.9}
0=(\lambda _1(t)\nu _0(t)-D _t\nu _0(t)|\delta _0(t))=\lambda
_1(t)-(D_t\nu _0|\nu _0),
\end{equation}
so we choose $\lambda _1=(D_t\nu _0|\nu _0)$.
Choose a smooth solution $\nu _1(t)$ of (\ref{1eig.8}) (which is unique
up to a term $z(t)\nu _0(t)$ where $z(t)$ is a smooth scalar
function). Then, to annihilate the $\varepsilon ^2$ term, we have the
equation,
$$(P(t)-\lambda _0(t))\nu _2(t)+(D_t-\lambda _1(t))\nu _1(t)-\lambda
_2\nu _0=0,$$
and we see that the solvabilty with respect to $\nu _2(t)$ imposes a
unique choice of $\lambda _2$. By iterating this argument we get the proposition.
\end{proof}
\begin{remark}\label{1eig2}\par (a) Let $\lambda ,\nu $ be as in the
  proposition and let
$$
\theta (t,\varepsilon )\sim \theta _0(t)+\varepsilon \theta _1(t)+...\hbox{
  in }C^\infty (I).
$$
Then,
$$
\widetilde{\lambda }=\lambda +\varepsilon \partial _t\theta ,\
\widetilde{\nu }=e^{i\theta }\nu
$$
is another pair as in the Proposition. Indeed,
$$
0\sim e^{i\theta }(\varepsilon D_t+P-\lambda )e^{-i\theta }\widetilde{\nu
}=(\varepsilon D_t+P-\widetilde{\lambda })\widetilde{\nu }.
$$
Now, any function $\widetilde{\lambda }\sim \lambda
_0(t)+\varepsilon \widetilde{\lambda }_1(t)+...$ in $C^\infty (I)$ is of the form
$\widetilde{\lambda }=\lambda +\varepsilon \partial _t\theta $ for a
suitable $\theta $ as above (which is unique up to a constant
$C(\varepsilon )\sim C_0+\varepsilon C_1+...$) and we conclude that $\lambda
(t,\varepsilon )$ in the proposition can be any asymptotic series as in
{\rm (\ref{1eig.2})}, with leading term $\lambda _0$ given in {\rm (H3)}.
\par\noindent (b) If $(\lambda ,\nu )$ and $(\lambda ,\widetilde{\nu
})$ are two pairs as in the proposition, then there exists $C(\varepsilon
)\sim C_0+\varepsilon C_1+...$, such that $\widetilde{\nu }=C(\varepsilon
)\nu $. In fact, writing $\widetilde{\nu }_0=C_0(t)\nu
_0$ and the corresponding equation for $\widetilde{\nu }_1$;
$$
(P-\lambda _0)\widetilde{\nu }_1=C_0(\lambda _1-D_t)\nu
_0-(D_tC_0)\nu _0,$$
we see that $D_tC_0$ has to vanish, so that $C_0$ is constant. Repeat
the argument for
$$
(\varepsilon D_t+P-\lambda )\left(\frac{\widetilde{\nu }-C_0\nu
  }{\varepsilon } \right)=0,
$$
to see that $\widetilde{\nu }=C_0\nu +\varepsilon C_1\nu _0+{\cal
  O}(\varepsilon ^2)$, where $C_1$ is a constant. By iteration we get the
statement.
\end{remark}
\section{Some further adiabatic results }\label{fad}
\setcounter{equation}{0}
\subsection{Formal adiabatic solutions for an isolated group of eigenvalues}\label{greig}
\setcounter{equation}{0}

Let $P(t)$ satisfy the assumptions of Section \ref{1eig} except for
the assumption (H3) that we generalize to
\begin{itemize}
\item[(H4)] For some integer $N_0\ge 1$, $P(t)$ has a  group of $N_0$
  eigenvalues $\lambda _1(t),...$, $\lambda_{N_0}(t)$ counted with their
  multiplicities, depending continuously on $t$ and isolated from the
  rest of the spectrum:
$$
\mathrm{dist\,}\left( \{\lambda_\cdot (t)\},\sigma (P(t))\setminus \{\lambda _\cdot (t) \}\right)\ge 1/C,
$$
where $C>0$ is independent of $t$. Here $\{ \lambda _\cdot (t)\} =\{\lambda _j(t);\, 1\le j\le N_0 \}$.
\end{itemize}
This assumption can be reformulated as follows:
\begin{itemize}
\item There exists a simple
closed $C^1$ loop, $\gamma =\gamma _t:\, S^1\to {\bf C}$, enclosing
some non-empty part of the spectrum, such that
$\mathrm{dist\,}(\gamma _t(S^1),\cap \sigma (P(t)))\ge 1/C $ and such that the
spectral projection
\begin{equation}\label{greig.1}
\pi _0(t)=\frac{1}{2\pi i}\int_{\gamma _t}(z-P(t))^{-1} dz
\end{equation}
has finite rank, necessarily equal to some constant $N_0\ge 1$.
\end{itemize}

\par The formal adiabatic problem is now to construct
\begin{equation}\label{greig.2}
U(t,\varepsilon )\sim U_0(t)+\varepsilon U_1(t)+...\in C^\infty (I;{\cal
  L}({\bf C}^{N_0},{\cal D}))
\end{equation}
and
\begin{equation}\label{greig.3}
\Lambda (t,\varepsilon )\sim \Lambda _0(t)+\varepsilon \Lambda _1(t)+...\in C^\infty (I;{\cal
  L}({\bf C}^{N_0},{\bf C}^{N_0})),
\end{equation}
such that
\begin{equation}\label{greig.4}
\left(\varepsilon D_t+P(t) \right) U(t,\varepsilon )-U(t,\varepsilon )\Lambda
(t,\varepsilon )=0
\end{equation}
as a formal powerseries with values in ${\cal L}({\bf C}^{N_0},{\cal
  H})$,
\begin{equation}\label{greig.5}
\sigma (\Lambda _0(t))=\{ \lambda _\cdot (t) \},
\end{equation}
\begin{equation}\label{greig.6}
U_0(t)\hbox{ is injective and }U_0(t)=\pi _0(t)U_0(t).
\end{equation}

\par As in the case of a single eigenvalue ($N_0=1$) we substitute
(\ref{greig.2}), (\ref{greig.3}) into (\ref{greig.4}) and try to
annihilate the successive powers of $\varepsilon $. This leads to the
equations,
\begin{equation}\label{greig.7}
PU_0-U_0\Lambda _0=0,
\end{equation}
\begin{equation}\label{greig.8}
PU_1-U_1\Lambda _0+D_tU_0-U_0\Lambda _1=0,
\end{equation}
...
\begin{equation}\label{greig.9}
PU_k-U_k\Lambda _0+D_tU_{k-1}-U_{k-1}\Lambda _1-U_{k-2}\Lambda
_2...-U_0\Lambda _k=0,
\end{equation}
...

\par As for (\ref{greig.7}), we let
\begin{equation}\label{greig.10}
U_0(t):{\bf C}^{N_0}\to {\cal R}(\pi _0(t))
\end{equation}
be any injective map which depends smothly on $t$ and then take
$\Lambda _0(t)=U_0(t)^{-1}P(t)U_0(t)$, where $U_0^{-1} $ denotes the
inverse of (\ref{greig.10}). Then
$$
\sigma (\Lambda _0 (t))=\sigma \left( {{P(t)}_\vert}_{{\cal R}(\pi _0(t))} \right),
$$
so (\ref{greig.5}) is fulfilled.

\par In order to solve (\ref{greig.8}) we first choose $\Lambda _1(t)$
so that
\begin{equation}\label{greig.11}
\pi _0\left(D_tU_0-U_0\Lambda _1 \right) =0,
\end{equation}
i.e.\ $\Lambda _1=U_0^{-1}\pi _0 D_tU_0$, where $U_0^{-1}$ denotes the
inverse of the operator in (\ref{greig.10}). Then we look for $U_1$
with
\begin{equation}\label{greig.12}
(1-\pi _0)U_1=U_1,
\end{equation}
i.e.\ $U_1:{\bf C}^{N_0}\to {\cal R}(1-\pi _0)\cap {\cal D}$, and it
suffices to find such a (smooth family of) map(s) such that
\begin{equation}\label{greig.13}
PU_1-U_1\Lambda _0+(1-\pi _0)D_tU_0 =0.
\end{equation}
Here $P$ is identified with ${{P}_\vert}_{{\cal R}(1-\pi _0)}:{\cal
  R}(1-\pi _0)\to {\cal R}(1-\pi _0)$, which is closed, densely
defined and satisfies
\begin{equation}\label{greig.14}
\sigma \left( {{P}_\vert}_{{\cal R}(1-\pi _0)} \right)\cap \sigma
(\Lambda _0)=\emptyset .
\end{equation}
\begin{lemma}\label{greig1}
  Let ${\cal H}$ be a complex separable Hilbert space and let $A:{\cal
    H}\to {\cal H}$ be closed and densely defined. Let $B\in {\cal
    L}({\bf C}^{N_0},{\bf C}^{N_0})$ and assume that
\begin{equation}\label{greig.15}
\sigma (A)\cap \sigma (B)=\emptyset .
\end{equation}
Then for every $V\in {\cal L}({\bf C}^{N_0},{\cal H})$, there is a
unique $U\in {\cal L}({\bf C}^{N_0},{\cal D}(A))$ such that
\begin{equation}\label{greig.16}
AU-UB=V.
\end{equation}
\end{lemma}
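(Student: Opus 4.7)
The plan is to treat this as a Sylvester equation and solve it by the standard resolvent contour-integral trick. Since $B$ acts on a finite-dimensional space, its spectrum $\sigma(B)$ is finite and, by hypothesis, disjoint from the closed set $\sigma(A)$. I would choose a finite union $\gamma$ of positively oriented simple closed $C^1$ loops enclosing $\sigma(B)$ and lying in $\rho(A)\cap\rho(B)$, with positive distance from $\sigma(A)$. As the candidate solution I propose
$$
U=-\frac{1}{2\pi i}\int_\gamma (z-A)^{-1}V(z-B)^{-1}\,dz,
$$
interpreted as a Bochner integral with values in $\mathcal{L}(\mathbb{C}^{N_0},\mathcal{H})$.

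To verify that this $U$ satisfies the equation, I would expand using the elementary identities $A(z-A)^{-1}=-I+z(z-A)^{-1}$ and $(z-B)^{-1}B=-I+z(z-B)^{-1}$. The common term $z(z-A)^{-1}V(z-B)^{-1}$ that appears in both $A(z-A)^{-1}V(z-B)^{-1}$ and $(z-A)^{-1}V(z-B)^{-1}B$ cancels upon subtraction, leaving
$$
AU-UB=\frac{1}{2\pi i}\int_\gamma V(z-B)^{-1}\,dz-\frac{1}{2\pi i}\int_\gamma (z-A)^{-1}V\,dz.
$$
The first integral equals $V$ because $\gamma$ encloses all of $\sigma(B)$, so that $\frac{1}{2\pi i}\int_\gamma(z-B)^{-1}dz=I_{\mathbb{C}^{N_0}}$; the second vanishes because $(z-A)^{-1}$ is holomorphic on and inside $\gamma$. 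This yields $AU-UB=V$ and simultaneously shows $U$ maps into $\mathcal{D}(A)$.

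For uniqueness, I would take the difference $W$ of two solutions, giving $W\in\mathcal{L}(\mathbb{C}^{N_0},\mathcal{D}(A))$ with $AW=WB$. For every $z\in\rho(A)\cap\rho(B)$, the relation $(z-A)W=W(z-B)$ furnishes the intertwining $(z-A)^{-1}W=W(z-B)^{-1}$. Integrating this identity along the same contour $\gamma$ and applying the Cauchy formulas used above gives $W=W\cdot I_{\mathbb{C}^{N_0}}=0$, since the other side vanishes by holomorphy of $(z-A)^{-1}$ inside $\gamma$.

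The only non-routine step I anticipate is justifying the interchange of the possibly unbounded operator $A$ with the Bochner integral defining $U$. This is the standard lemma that a closed operator commutes with Bochner integration of continuous $\mathcal{D}(A)$-valued functions whose images under $A$ are also continuous; the identity $A(z-A)^{-1}=-I+z(z-A)^{-1}$ provides precisely the continuity of $z\mapsto A(z-A)^{-1}V(z-B)^{-1}$ on the compact set $\gamma$ that this lemma requires. Nothing else in the argument should present any real difficulty.
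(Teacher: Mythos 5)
Your proof is correct and uses the classical Rosenblum contour-integral formula for the Sylvester equation, whereas the paper's argument is more hands-on: it decomposes $\mathbf{C}^{N_0}$ into the generalized eigenspaces $E_\lambda$ of $B$, writes $B|_{E_\lambda}=\lambda+N$ with $N$ nilpotent, reduces to $(A-\lambda)U-UN=V_\lambda$, and then solves this by the substitution $U=(A-\lambda)^{-1}\widetilde U$, which turns the equation into $\widetilde U-\widetilde N(\widetilde U)=V_\lambda$ with $\widetilde N$ nilpotent, hence a finite Neumann series. Your approach buys symmetry and compactness of presentation: existence and uniqueness both drop out of two Cauchy formulas, and the integral representation makes smooth parameter dependence (which the paper needs immediately after the lemma) essentially automatic, since one can differentiate under the integral sign. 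The paper's approach buys something too: because the Neumann series for $(1-\widetilde N)^{-1}$ terminates, one sees explicitly and with no contour-choice issues how $U$ is built from finitely many applications of $(A-\lambda)^{-1}$, which is convenient in the subsequent "smooth dependence" remark where the authors invoke a Neumann-series argument. One minor point worth spelling out in your write-up: since $\sigma(B)$ is a \emph{finite} set disjoint from the closed set $\sigma(A)$, one can take $\gamma$ to be a finite union of small circles around the points of $\sigma(B)$, with each closed disc disjoint from $\sigma(A)$; this is exactly what is needed so that $(z-A)^{-1}$ is holomorphic on and inside $\gamma$, a fact you use but state only implicitly.
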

\begin{proof}
Decompose
$$
{\bf C}^{N_0}=\bigoplus_{\lambda \in \sigma (B)}E_\lambda ,
$$
where $E_\lambda $ is the spectral subspace, so that $B:E_\lambda \to
E_\lambda $ and ${{B}_\vert}_{E_\lambda }=\lambda +N$, where
$N=N_\lambda $ is nilpotent. It suffices to find, for every $\lambda
\in \sigma (B)$, a unique linear
operator $U=U_\lambda :E_\lambda \to {\cal D}(A)$ such that
$$
AU-U(\lambda +N)=V_\lambda ,\hbox{ where }V_\lambda
={{V}_\vert}_{E_\lambda }.
$$
We write this as
\begin{equation}\label{greig.17}
(A-\lambda )U-UN=V_\lambda
\end{equation}
and notice that when $N=0$, the unique solution is $U=(A-\lambda
)^{-1}V_\lambda $.

\par In the general case we look for $U$ of the form $U=(A-\lambda
)^{-1}\widetilde{U}$, $\widetilde{U}\in {\cal L}(E_\lambda ,{\cal H})$
and (\ref{greig.17}) becomes
\begin{equation}\label{greig.18}
\widetilde{U}-\widetilde{N}(\widetilde{U})=V_\lambda ,
\end{equation}
where
$$
\widetilde{N}(\widetilde{U}):=(A-\lambda )^{-1}\widetilde{U}N.
$$
It then suffices to observe that $\widetilde{N}$ is nilpotent, so that
(\ref{greig.18}) has a unique solution.
\end{proof}

By a simple Neumann series argument, if $A=A(t)$, $B=B(t)$, $V=V(t)$ depend
smoothly on a real parameter and ${\cal D}(A(t))$ is independent of
$t$, then the same holds for $U(t)$.

\par Applying Lemma \ref{greig1} and the above observation to
(\ref{greig.13}), we get a unique solution $U_1(t):{\bf C}^{N_0}\to
{\cal D}\cap {\cal R}(1-\pi _0)$ which is smooth in $t$. Thus, there
is a unique solution $(U_1,\Lambda _1)$ to
(\ref{greig.11})--(\ref{greig.13}).

\par Assuming that $U_0,...,U_{k-1}$, $\Lambda _1,...,\Lambda _{k-1}$
have been constructed, we solve (\ref{greig.9}) in the same way:
First, make the unique choice of $\Lambda _k$ for which
\begin{equation}\label{greig.19}
\pi _0(D_tU_{k-1}-U_{k-1}\Lambda _1-U_{k-2}\Lambda _2...-U_0\Lambda _k)=0.
\end{equation}
Then, let $U_k$ be the unique map: ${\bf C}^{N_0}\to {\cal R}(1-\pi
_0)\cap {\cal D}$, such that
$$
PU_k-U_k\Lambda _0+(1-\pi _0)\left(D_tU_{k-1}-\Lambda
  _1U_{k-1}...-\Lambda _kU_0 \right)=0.
$$

\par Summing up the discussion, we have
\begin{prop}\label{greig2}
  The problem {\rm (\ref{greig.2})}--{\rm (\ref{greig.6})} has a solution with
$$
U_k\in C^\infty (I;{\cal L}({\bf C}^{N_0},{\cal D})), \quad \Lambda_k\in C^\infty (I;{\cal L}({\bf C}^{N_0},{\bf C}^{N_0})).
$$
The solution is unique if we first choose $U_0,\Lambda _0$ as in
  {\rm (\ref{greig.7})} and then require that $(1-\pi _0(t))U_k(t)=U_k(t)$
  for $k\ge 1$.
\end{prop}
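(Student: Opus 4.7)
The plan is to carry out the formal power series construction inductively, order by order in $\varepsilon$, following the chain of equations (\ref{greig.7})--(\ref{greig.9}). At each order we split the equation via $\pi_0(t)$ and $1-\pi_0(t)$: the $\pi_0$-component determines $\Lambda_k$ (a condition on the $\mathcal{R}(\pi_0)$-part of the source), while the $(1-\pi_0)$-component determines $U_k$ via a Sylvester-type equation, solved using Lemma~\ref{greig1}.

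More concretely, for the leading order I pick any smooth injective family $U_0(t):\mathbf{C}^{N_0}\to \mathcal{R}(\pi_0(t))$ (such families exist because $\pi_0(t)$ is a smooth family of rank-$N_0$ projections, so one can trivialize the finite-dimensional bundle $\{\mathcal{R}(\pi_0(t))\}$ over $I$). Setting $\Lambda_0(t)=U_0(t)^{-1}P(t)U_0(t)$ yields (\ref{greig.7}) and (\ref{greig.5}) by (H4). For the induction step, assume $U_0,\ldots,U_{k-1}$ and $\Lambda_0,\ldots,\Lambda_{k-1}$ are constructed smoothly in $t$, with $(1-\pi_0)U_j=U_j$ for $1\le j\le k-1$. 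Applying $\pi_0(t)$ to (\ref{greig.9}) and using $\pi_0 P=P\pi_0$ on $\mathcal{D}$, the terms $PU_k-U_k\Lambda_0$ drop out (since $U_k$ will be chosen in $\mathcal{R}(1-\pi_0)$) and I obtain a linear relation for $\Lambda_k$, namely (\ref{greig.19}), which uniquely determines $\Lambda_k(t)$ as $U_0(t)^{-1}\pi_0(t)(D_tU_{k-1}-\sum_{j=1}^{k-1}U_{k-j}\Lambda_j)$. Applying $1-\pi_0(t)$ then gives an equation for $U_k:\mathbf{C}^{N_0}\to \mathcal{R}(1-\pi_0(t))\cap\mathcal{D}$ of the form $A(t)U_k-U_k\Lambda_0(t)=V(t)$ with $A(t)={P(t)}_{\vert\mathcal{R}(1-\pi_0(t))}$.

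The key step, and the main obstacle, is to invoke Lemma~\ref{greig1} in this inductive step. This requires the spectral separation $\sigma(A(t))\cap\sigma(\Lambda_0(t))=\emptyset$, which is exactly (\ref{greig.14}) and follows from (H4) together with the holomorphic functional calculus decomposition $\mathcal{H}=\mathcal{R}(\pi_0(t))\oplus\mathcal{R}(1-\pi_0(t))$: on the first summand $P(t)$ has spectrum $\{\lambda_\cdot(t)\}=\sigma(\Lambda_0(t))$, and on the second its spectrum is $\sigma(P(t))\setminus\{\lambda_\cdot(t)\}$, which is disjoint from the first by the gap assumption. Lemma~\ref{greig1} then yields a unique $U_k(t)$.

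It remains to check smoothness of $U_k(t)$ in $t$. Since $\mathcal{D}$ is $t$-independent by (H1), $P(t)$ is smooth in $t$ into $\mathcal{L}(\mathcal{D},\mathcal{H})$ by (H2), and $\pi_0(t)$ is smooth via the contour integral (\ref{greig.1}) (the contour $\gamma_t$ can be chosen locally constant by the gap condition), both $A(t)$ and $\Lambda_0(t)$ depend smoothly on $t$; the explicit solution formula in the proof of Lemma~\ref{greig1} (using $(A-\lambda)^{-1}$ on each generalized eigenspace of $\Lambda_0(t)$, combined with the nilpotent Neumann series) then produces $U_k\in C^\infty(I;\mathcal{L}(\mathbf{C}^{N_0},\mathcal{D}))$. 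The uniqueness statement is immediate from the construction: the condition $(1-\pi_0)U_k=U_k$ forces $\Lambda_k$ via the $\pi_0$-projection of (\ref{greig.9}), after which $U_k$ is uniquely pinned down by Lemma~\ref{greig1}. This closes the induction and proves the proposition.
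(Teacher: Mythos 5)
Your proposal follows exactly the same route as the paper: trivialize the bundle $t\mapsto\mathcal{R}(\pi_0(t))$ to obtain $U_0$, set $\Lambda_0=U_0^{-1}PU_0$, and then inductively determine $\Lambda_k$ from the $\pi_0$-component of (\ref{greig.9}) and $U_k$ from the $(1-\pi_0)$-component via Lemma~\ref{greig1}, using (\ref{greig.14}) for the required spectral separation and a Neumann-series argument for smoothness in $t$. The reasoning is correct and matches the paper's proof step for step.
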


\subsection{Adiabatic projections}\label{apro}

We keep the assumption of the preceding section. Recall the notion of
adiabatic spectral projections, \cite{Ne81, Ne93} in the presentation
of \cite{Sj93}. Consider
\begin{equation}\label{apro.1}
{\cal P}(t,\varepsilon D_t;\varepsilon )=\varepsilon D_t+P(t)
\end{equation}
as a vector valued $\varepsilon $-pseudodifferential operator (see
e.g.\ \cite{DiSj99}) with symbol
\begin{equation}\label{apro.2}
{\cal P}(t,\tau ;\varepsilon )=\tau +P(t).
\end{equation}
Then for $z\in \mathrm{neigh\,}(\gamma _t)$, where $\gamma _t$ is as
in (\ref{greig.1}), we define the formal resolvent
\begin{equation}\label{apro.3}
(z-{\cal P})^{-1}=S(t,\varepsilon D_t,z;\varepsilon ),
\end{equation}
as a formal $\varepsilon $-pseudodifferential operator with symbol
{\begin{equation}\label{apro.4}
S(t,\tau ,z;\varepsilon  )\sim S_0(t,\tau ,z )+\varepsilon
S_1(t,\tau ,z)+...,
\end{equation}
defined for $t\in I$, $z-\tau \in \mathrm{neigh\,}(\gamma _t)$ and
obtained by the standard elliptic parametrix construction, so that
$S_0(t,\tau ,z)=(z-\tau -P(t))^{-1}$. The corresponding adiabatic
projection is the formal $\varepsilon $-pseudodifferential operator,
\begin{equation}\label{apro.5}
\pi (t,\varepsilon D_t;\varepsilon )=\frac{1}{2\pi i}\int_{\gamma
  _t}(z-{\cal P})^{-1}dz,
\end{equation}
defined on the symbol level, for $\tau \in \mathrm{neigh\,}(0, {\bf C})$.

\par Using the property,
$$
S(t,\tau , z ;\varepsilon )=S(t,\tau -z,0;\varepsilon )
$$
it is shown in \cite{Sj93} that the symbol $\pi (t,\tau ;\varepsilon )$
is independent of $\tau $, so that
\begin{equation}\label{apro.8}
\pi (t,\varepsilon D_t;\varepsilon )u=\pi (t,\varepsilon )u(t),
\end{equation}
where
\begin{equation}\label{apro.9}
\pi (t;\varepsilon )=\pi _0(t)+\varepsilon \pi _1(t)+...\in C^\infty
(I;{\cal L } ({\cal H},{\cal D})).
\end{equation}
$\pi _0(t)$ is the
spectral projection for $P(t)$ in (\ref{greig.1}). Moreover (cf.\
(16), (17) in \cite{Sj93}),
\begin{equation}\label{apro.10}
\pi (t;\varepsilon )^2=\pi (t;\varepsilon ),
\end{equation}
\begin{equation}\label{apro.11}
[ \varepsilon D_t+P(t),\pi (t;\varepsilon ) ]=0.
\end{equation}
\begin{prop}\label{apro1}
Let $U\sim U_0(t)+\varepsilon U_1(t)+...$, $\Lambda \sim \Lambda
_0(t)+\varepsilon \Lambda _1(t)+...$ be a solution of the problem
{\rm (\ref{greig.2})}--{\rm (\ref{greig.4})}. Put $\widetilde{U}=\pi (t,\varepsilon
)U$. Then $\widetilde{U}\sim U_0+\varepsilon \widetilde{U}_1+...$ (with
$\widetilde{U}_0=\pi _0U_0$), and $(\widetilde{U},\widetilde{\Lambda
}):=(\widetilde{U},\Lambda )$ is a solution of
{\rm (\ref{greig.2})}--{\rm (\ref{greig.4})} and we have
\begin{equation}\label{apro.12}
\pi (t,\varepsilon ) \widetilde{U}(t,\varepsilon )=\widetilde{U}(t,\varepsilon ).
\end{equation}
In view of {\rm (\ref{apro.12})} and the fact that $\pi $ is asymptotically
a projection, we shall say that ${\cal R}(\widetilde{U})\subset {\cal
  R}(\pi )$ (pointwise in $t$).
\end{prop}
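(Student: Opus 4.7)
The plan is to deduce everything from the three structural properties of the adiabatic projection $\pi$ that have just been recorded: that $\pi(t,\varepsilon D_t;\varepsilon)$ acts as left multiplication by the symbol $\pi(t;\varepsilon)\sim\pi_0(t)+\varepsilon\pi_1(t)+\ldots$ on $t$-dependent sections (equation (\ref{apro.8})), that $\pi^2=\pi$ as a formal power series in $\varepsilon$ (equation (\ref{apro.10})), and that $\pi$ commutes with $\varepsilon D_t+P(t)$ (equation (\ref{apro.11})). All manipulations are to be understood at the level of formal asymptotic series in $\varepsilon$ with coefficients in the appropriate spaces $C^\infty(I;{\cal L}({\bf C}^{N_0},{\cal D}))$ or $C^\infty(I;{\cal L}({\bf C}^{N_0},{\cal H}))$.

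First, I would establish the form of the asymptotic expansion of $\widetilde U$. Multiplying the two series $\pi(t;\varepsilon)$ and $U(t,\varepsilon)$ gives $\widetilde U\sim\widetilde U_0+\varepsilon\widetilde U_1+\ldots$ with $\widetilde U_0=\pi_0(t)U_0(t)$. By the normalization condition (\ref{greig.6}) we have $\pi_0(t)U_0(t)=U_0(t)$, so in fact $\widetilde U_0=U_0$. In particular $\widetilde U_0$ is injective and satisfies the same leading normalization, so with $\widetilde\Lambda=\Lambda$ the constraints (\ref{greig.5}), (\ref{greig.6}) continue to hold for the pair $(\widetilde U,\widetilde\Lambda)$.

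Next, I would verify that $(\widetilde U,\Lambda)$ still solves (\ref{greig.4}). Apply $\pi$ to the identity $(\varepsilon D_t+P(t))U-U\Lambda=0$. By (\ref{apro.11}) the operator $\varepsilon D_t+P(t)$ passes through $\pi$, yielding $\pi(\varepsilon D_t+P)U=(\varepsilon D_t+P)\pi U=(\varepsilon D_t+P)\widetilde U$. On the right, since (\ref{apro.8}) shows that $\pi$ is left multiplication on $t$-sections, and since $\Lambda(t,\varepsilon)$ acts on the right (as a ${\bf C}^{N_0}\to{\bf C}^{N_0}$ factor that commutes with the left multiplication by $\pi(t;\varepsilon)\in{\cal L}({\cal H})$), we obtain $\pi(U\Lambda)=(\pi U)\Lambda=\widetilde U\Lambda$. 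Combining, $(\varepsilon D_t+P)\widetilde U-\widetilde U\Lambda=0$ as a formal series in $C^\infty(I;{\cal L}({\bf C}^{N_0},{\cal H}))$, which is (\ref{greig.4}) for the new pair.

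Finally, the invariance property (\ref{apro.12}) follows immediately from (\ref{apro.10}): using again that $\pi$ acts as left multiplication by $\pi(t;\varepsilon)$, one has $\pi\widetilde U=\pi(\pi U)=\pi^2 U=\pi U=\widetilde U$ as formal asymptotic series. I do not foresee any real analytic obstacle here; the main point to monitor is the bookkeeping that distinguishes the left action of $\pi(t;\varepsilon)$ on ${\cal H}$-valued sections from the right action of $\Lambda(t,\varepsilon)$ on ${\cal L}({\bf C}^{N_0},\cdot)$, which is precisely what makes $\pi(U\Lambda)=(\pi U)\Lambda$ hold without cross terms — a direct consequence of (\ref{apro.8}).
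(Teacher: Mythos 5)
Your proof is correct and follows the same basic mechanism as the paper's: apply $\pi$ to $(\varepsilon D_t+P(t))U-U\Lambda=0$ and invoke the commutation identity (\ref{apro.11}), using that $\pi(t,\varepsilon D_t;\varepsilon)$ acts as left multiplication by $\pi(t;\varepsilon)$ (so it passes through the right factor $\Lambda$), to conclude $(\varepsilon D_t+P)\widetilde U-\widetilde U\Lambda=0$. The paper's proof is literally the single display $(\varepsilon D_t+P)\widetilde U-\widetilde U\Lambda=\pi\bigl((\varepsilon D_t+P)U-U\Lambda\bigr)=0$, so you are more complete: you also derive (\ref{apro.12}) from (\ref{apro.10}), and you justify the leading-order claim by invoking (\ref{greig.6}). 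Note that (\ref{greig.6}) is not literally among the listed hypotheses (the proposition only asks that $(U,\Lambda)$ solve (\ref{greig.2})--(\ref{greig.4})); the statement's assertion that $\widetilde U\sim U_0+\varepsilon\widetilde U_1+\ldots$ already presupposes $\pi_0 U_0=U_0$, so you are making explicit an implicit assumption rather than introducing a gap.
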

\begin{proof}
  By (\ref{greig.4}), (\ref{apro.11}), we get
$$
(\varepsilon D_t+P(t))\widetilde{U}-\widetilde{U}\Lambda =\pi
((\varepsilon D_t+P(t)U-U\Lambda )=0
$$
\end{proof}

\begin{prop}\label{apro2}
Let $(U,\Lambda )$, $(\widetilde{U},\widetilde{\Lambda })$ be two
solutions of the problem {\rm (\ref{greig.2})}--{\rm (\ref{greig.4})}, with ${\cal
  R}(U), {\cal R}(\widetilde{U})\subset {\cal R}(\pi )$ in the sense
of {\rm (\ref{apro.12})}. Assume that $U$ satisfies {\rm (\ref{greig.6})}.
Then $\exists !$
  \begin{equation}\label{apro.13}M(t;\varepsilon )\sim M_0(t)+\varepsilon M_1(t)+...\in C^\infty (I;{\cal
    L}({\bf C}^{N_0},{\bf C}^{N_0}))\end{equation} such that
\begin{equation}\label{apro.14}
\widetilde{U}=UM.
\end{equation}

\par Conversely, if $U$ solves {\rm (\ref{greig.2})}-{\rm (\ref{greig.6})} and $M$
is of the form {\rm (\ref{apro.13})} with $M_0(t)$ invertible, then
$(\widetilde{U},\widetilde{\Lambda })$ solves
{\rm (\ref{greig.2})}--{\rm (\ref{greig.5})} where
\begin{equation}\label{apro.15}
\widetilde{U}:=UM,\ \widetilde{\Lambda }:=M^{-1}\varepsilon
D_t(M)+M^{-1}\Lambda M.
\end{equation}
\end{prop}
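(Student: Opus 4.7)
The plan is to build $M$ order by order from the defining relation $\widetilde{U}=UM$; the $\Lambda$'s play no direct role in determining $M$. For the forward direction, I would first observe that $U_0(t):{\bf C}^{N_0}\to {\cal R}(\pi _0(t))$ is a smooth family of bijections, since $U_0$ is injective by (\ref{greig.6}) and the two spaces are $N_0$-dimensional. Hence a smooth left inverse $U_0^+(t):{\cal H}\to {\bf C}^{N_0}$ exists (for instance $U_0^+=(U_0^*U_0)^{-1}U_0^*$), and its restriction to ${\cal R}(\pi _0(t))$ is the actual inverse of $U_0:{\bf C}^{N_0}\to {\cal R}(\pi _0(t))$. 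I would then set $M^{(-1)}:=0$ and assume inductively that $M^{(n-1)}=M_0+\varepsilon M_1+\ldots+\varepsilon^{n-1}M_{n-1}$ has been constructed with $V^{(n-1)}:=\widetilde{U}-UM^{(n-1)}={\cal O}(\varepsilon^n)$. Define $M_n:=U_0^+V_n^{(n-1)}$, where $V_n^{(n-1)}$ is the $\varepsilon^n$-coefficient of $V^{(n-1)}$.

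The one non-routine step is to verify that $V_n^{(n-1)}\in {\cal R}(\pi _0(t))={\cal R}(U_0(t))$, so that $U_0M_n=V_n^{(n-1)}$ holds exactly and the recursion closes. This follows from the hypothesis that $\pi \widetilde{U}\equiv \widetilde{U}$ and $\pi U\equiv U$ modulo ${\cal O}(\varepsilon^\infty)$, whence $\pi V^{(n-1)}\equiv V^{(n-1)}$. Extracting the $\varepsilon^n$ term,
$$
(\pi V^{(n-1)})_n=\sum_{j+k=n}\pi _jV_k^{(n-1)}=\pi _0V_n^{(n-1)},
$$
since $V_k^{(n-1)}=0$ for $k<n$; equating this with $V_n^{(n-1)}$ gives $\pi _0V_n^{(n-1)}=V_n^{(n-1)}$, as required. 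Uniqueness is immediate from injectivity of $U_0$: if $UM=UM'$ and $j$ is the smallest order with $M_j\ne M_j'$, then at order $\varepsilon^j$ one reads $U_0(M_j-M_j')=0$, a contradiction. The relation $\widetilde{U}_0=U_0M_0$, combined with both $U_0$ and $\widetilde{U}_0$ being bijective onto ${\cal R}(\pi _0)$, makes $M_0$ automatically invertible.

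For the converse, with $U$ as in (\ref{greig.2})--(\ref{greig.6}) and $M$ of the form (\ref{apro.13}) with $M_0$ invertible, the formal inverse $M^{-1}$ exists by the Neumann expansion of $M_0^{-1}(I+\varepsilon M_0^{-1}M_1+\ldots)^{-1}$. Define $\widetilde{U}:=UM$ and $\widetilde{\Lambda}:=M^{-1}\varepsilon D_tM+M^{-1}\Lambda M$. A direct computation, using (\ref{greig.4}) for $(U,\Lambda)$, yields
$$
(\varepsilon D_t+P)\widetilde{U}=((\varepsilon D_t+P)U)M+U\varepsilon D_tM=U\Lambda M+U\varepsilon D_tM=\widetilde{U}\widetilde{\Lambda},
$$
so (\ref{greig.4}) holds for $(\widetilde{U},\widetilde{\Lambda})$. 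The leading symbol $\widetilde{\Lambda}_0=M_0^{-1}\Lambda _0M_0$ is similar to $\Lambda _0$ and so satisfies (\ref{greig.5}); and $\widetilde{U}_0=U_0M_0$ is injective as a composition of an injective and an invertible map, with $\pi _0\widetilde{U}_0=(\pi _0U_0)M_0=U_0M_0=\widetilde{U}_0$, giving (\ref{greig.6}). The main obstacle in the whole argument is the single observation $\pi V\equiv V$ that makes the obstruction at each order of the forward recursion land automatically in ${\cal R}(U_0)$; everything else is formal-series bookkeeping.
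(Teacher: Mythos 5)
Your proof is correct and follows essentially the same route as the paper's: the core observation in both is that $\pi V \equiv V$ forces the leading coefficient of the remainder $V$ at each order to lie in ${\cal R}(\pi_0)={\cal R}(U_0)$, making it divisible by $U_0$, and the converse is the same direct computation. (One harmless aside: your remark that $M_0$ is ``automatically invertible'' implicitly uses injectivity of $\widetilde{U}_0$, i.e.\ that $\widetilde{U}$ also satisfies (\ref{greig.6}), which the proposition does not assume; but this claim is not needed for the stated result.)
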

\begin{proof}
We first prove the converse part by direct calculation
\[\begin{split}
(\varepsilon D_t+P(t))\widetilde{U}&=((\varepsilon
D_t+P(t))U)M+U\varepsilon D_t(M)\\
&=U(\Lambda M+\varepsilon D_t(M))\\
&=UM(M^{-1}\Lambda M+M^{-1}\varepsilon D_t(M))\\
&=\widetilde{U}\widetilde{\Lambda }.
\end{split}
\]

\par We next prove the direct part. Let $(U,\Lambda )$,
$(\widetilde{U},\widetilde{\Lambda })$ be as in the beginning of the
proposition, both solving (\ref{greig.2})--(\ref{greig.4}) with $\pi
U=U$, $\pi \widetilde{U}=\widetilde{U}$ and such that $U$ satisfies (\ref{greig.6}).

\par Writing $\widetilde{U}=\widetilde{U}_0+\varepsilon
\widetilde{U}_1+...$, we conclude that $\widetilde{U}_0$ maps ${\bf
  C}^{N_0}\to {\cal R}(\pi _0)$ pointwise in $t$. $U_0(t):{\bf
  C}^{N_0}\to {\cal R}(\pi _0(t))$ has the same property and is
bijective. Hence there is a unique $M_0(t):{\bf C}^{N_0}\to {\bf
  C}^{N_0}$, smooth in $t$, such that
$\widetilde{U}_0(t)=U_0(t)M_0(t)$. From the proof of the ``converse''
part, we see that
$$
V_1(t):=\widetilde{U}(t)-U(t)M_0(t)\sim : \varepsilon
V_{1,1}(t)+\varepsilon ^2V_{2,1}(t)+...
$$
solves (\ref{greig.2})--(\ref{greig.4}) with $\Lambda $ replaced by a
new matrix $\varepsilon \Lambda _1(t;\varepsilon )$. We also have $\pi
V_1=0$, so $\pi _0(t)V_{1,1}(t)=0$ and hence $\exists$ $M_1(t):{\bf
  C}^{N_0}\to {\bf C}^{N_0}$, such that
$V_{1,1}(t)=U_0(t)M_1(t)$. Then
$$
V_2(t):=\widetilde{U}(t)-U(t)(M_0(t)+\varepsilon M_1(t))\sim :
\varepsilon ^2 V_{2,2}(t)+\varepsilon ^3V_{2,3}(t)+...
$$
and $\pi (t)V_2(t)=0$, so $\pi _0(t)V_{2,2}(t)=0$. Iterating this
procedure we get $M(t)\sim M_0(t)+\varepsilon M_1(t)+...$ with the
required properties.
\end{proof}

\subsection{Extension to the case of variable $\varepsilon$}
\label{veps}

Consider the evolution equation
\begin{equation}\label{veps.1}
(D_s+P(s))\nu (s)=0
\end{equation}
on some large interval $I$, where $P(s)$ are closed densely defined
operators with common domain ${\cal D}$. Assume that
\begin{equation}\label{veps.2}
\partial _s^kP(s)={\cal O}(\varepsilon (s)^k),\ k=0,1,2,...,
\end{equation}
as bounded operators from ${\cal D}$ to ${\cal H}$. Here the function
$\varepsilon (s)$ is assumed to satisfy
\begin{equation}\label{veps.3}
\varepsilon >0,\ \ \partial _s^k\varepsilon ={\cal O}(\varepsilon ^{k+1}),\
k\ge 0.
\end{equation}
\begin{remark}\label{veps1}
  Under the same assumptions, if $\widetilde{\varepsilon }\ge \varepsilon $
  is a second function on $I$ which satisfies (\ref{veps.3}), then
  (\ref{veps.2}) holds with $\varepsilon $ replaced by
  $\widetilde{\varepsilon }$.
\end{remark}

Let $f(s)$ be the strictly increasing function, uniquely determined up to a
constant, by
\begin{equation}\label{veps.4}
f'(s)=\varepsilon (s).
\end{equation}
Then, if $t=f(s)$, we have $D_s=f'(s)D_t=\varepsilon (s)D_t$ and
(\ref{veps.1}) takes the form,
\begin{equation}\label{veps.5}
\left( \varepsilon (g(t))D_t+P(g(t)) \right) u(t)=0,\ \ u(t)=\nu
(g(t)).
\end{equation}
Here, $g:=f^{-1}$.

\par Differentiating $f(g(t))=t$, we first get $f'(g(t))g'(t)=1$, so
\begin{equation}\label{veps.6}
g'(t)=\frac{1}{\varepsilon (g(t))}.
\end{equation}
Differentiating $m$ times, where $m\ge 2$, we get
$$
f'(g(t))\partial _t^mg(t)+\sum_{k=2}^m \sum_{m_j\ge 1,\atop
  m_1+...+m_k=m}C_{m_1,...,m_k} f^{(k)}(g(t))\partial _t^{m_1}g(t)\cdot
...\cdot \partial _t^{m_k}g(t)=0.
$$
Assuming by induction, that
$$
\partial _t^{\widetilde{m}}g={\cal O}(1/\varepsilon (g(t))),\ \widetilde{m}<m,
$$
we get
$$
\varepsilon (g(t))\partial _t^mg+\sum_{k=2}^m\sum_{m_1+..+m_k=m}{\cal
  O}(1)\varepsilon (g(t))^{k}\varepsilon (g(t))^{-k}=0,
$$
so we get
\begin{equation}\label{veps.7}
\partial _t^m g={\cal O}(1/\varepsilon (g(t))),\ m\ge 1.
\end{equation}
Now,
\begin{equation}\label{veps.8}
\partial _t^m(P(g(t)))=\sum_{k=1}^m
C_{m_1,..,m_k}P^{(k)}(g(t))\partial _t^{m_1}g...\partial
_t^{m_k}g=\sum {\cal O}(1)\varepsilon ^{k-k}={\cal O}(1).
\end{equation}
Similarly,
\begin{equation}\label{veps.9}
\partial _t^m\varepsilon (g(t))=\sum {\cal O}(1)\varepsilon ^{(k)}\partial
_t^{m_1}g...\partial _t^{m_k}g={\cal O}(\varepsilon (g(t))).
\end{equation}
This shows that (\ref{veps.5}) is a very nice semi-classical equation.

\section{General facts about operators and escape functions
}\label{opesc}
\setcounter{equation}{0}
This section is a review of some material from \cite{HeSj86, GeSj87}
and we add some remarks for later use.
We adopt the frame work of
\cite{HeSj86}: Choose two positive smooth scale functions
$r(x)$, $R(x)$ on ${\bf R}^n$ with
\begin{equation}\label{opesc.1}
r\ge 1,\ \ rR\ge 1,
\end{equation}
Let \begin{equation}\label{opesc.2}
\widetilde{r}(x,\xi )=(r(x)^2+\xi ^2)^{\frac{1}{2}}\in C^\infty ({\bf R}^{2n}).
\end{equation}
If $0<m\in C^\infty ({\bf R}^{2n})$ we say that $a\in C^\infty ({\bf
  R}^{2n})$ belongs to the space $S(m)$, if
\begin{equation}\label{opesc.3}|\partial _x^\alpha \partial _\xi ^\beta a|\le
  C_{\alpha ,\beta }m(x,\xi )R(x)^{-|\alpha
    |}\widetilde{r}(x,\xi )^{-|\beta |}.\end{equation}
We will always assume that the weight $m$ and the scale functions
belong to their own symbol classes:
\begin{equation}\label{opesc.4}
m\in S(m),\ r\in S(r),\ R\in S(R).
\end{equation}

\par
It follows that $\widetilde{r}\in S(\widetilde{r})$. The
naturally associated metric on ${\bf R}^{2n}$ in the spirit of
H\"ormander's Weyl calculus of pseudodifferential operators \cite{Ho79} is given
by
\begin{equation}\label{opesc.4.5}
g=\left(\frac{d\xi }{\widetilde{r}}\right)^2+\left(\frac{dx}{R} \right)^2.
\end{equation}
It is slowly varying, but another important assumption of that
calculus will not be satisfied in our case however, namely the
$\sigma $-temperance.

Often, $a$ and even $m$ will depend on the semi-classical parameter
$h$, and it will then be implicitly assumed that all estimates
involved in the statements $a\in S(m)$ and $m\in S(m)$ are uniform
with respect to $h$ (and possibly other parameters as
well). We define $h^kS(m):=S(h^km)$. When $a:{\bf
    R}^{2n}\to {\bf R}^{2n}$ is a smooth map and $g$ is a smooth
  metric on ${\bf R}^{2n}$, we say that $a$ is of
  class $S(m)$ for the metric $g$, if $g_{a(x,\xi )}(\partial
  _x^\alpha \partial _\xi ^\beta a(x,\xi ))$ satisfy the estimates in
  (\ref{opesc.3}) uniformly.

Let $1\le m_0(x)\in S(m_0)$ and let $P=P(x,hD;h)$ be a semi-classical
differential operator on ${\bf R}^n$ of the form
\begin{equation}\label{opesc.5}
P=\sum_{|\alpha |\le N_0}a_\alpha (x;h)(hD_x)^\alpha ,
\end{equation}
where $a_\alpha (x;h)\in S(m_0(x)r^{-|\alpha |})$ and
\begin{equation}\label{opesc.6}
a_\alpha (x;h)=
\sum_{k=0}^{N_0-|\alpha |} h^ka_{\alpha ,k}(x),\
a_{\alpha ,k}\in S(m_0r^{-|\alpha |}(rR)^{-k}).
\end{equation}
Such operators form an algebra in the natural way.
Then
$$
h^ka_{\alpha ,k}\xi ^\alpha \in S\left( m_0(\widetilde{r}/r)^{N_0}(h/(\widetilde{r}R))^k\right)
$$
and we have the full semi-classical symbol for the standard left quantization
\begin{equation}\label{opesc.7}P(x,\xi ;h)=\sum_{|\alpha |\le
    N_0}a_\alpha (x;h)\xi ^\alpha\in S(m),\hbox{ where
  }m(x,\xi )=m_0(x)(\widetilde{r}/r)^{N_0}
\end{equation} We can write
\begin{equation}\label{opesc.8}
P(x,\xi ;h)= p(x,\xi )+hp_1(x,\xi )+h^2p_2(x,\xi
)+...+h^{N_0}p_{N_0}(x,\xi ),
\end{equation}
where
\begin{equation}\label{opesc.9} p_j\in S(m(\widetilde{r}R)^{-j}),\end{equation}

\par We also assume analyticity in $x$ near infinity:
\begin{equation}\label{opesc.11}\begin{split}
&\exists C>0\hbox{ such that }P \hbox{ extends to a holomorphic
  function} \\
&\hbox{in }\{x\in {\bf C}^n;\, |\Re x|>C,\ |\Im
x|\le R(\Re x)/C \}\hbox{ and the}\\
& \hbox{symbol properties above extend in the natural sense.}
\end{split}
\end{equation}
This could be formulated more directly in terms of the coefficients
$a_{\alpha }$ in (\ref{opesc.5}).

We assume that $P$ is formally self-adjoint, so that the classical and
the semi-classical principal symbols, given respectively by
\begin{equation}\label{opesc.12}
p_\mathrm{class}(x,\xi )=\sum_{|\alpha |=N_0}a_{\alpha ,0}(x)\xi ^\alpha
\end{equation}
and
\begin{equation}\label{opesc.13}
p(x,\xi )=\sum_{|\alpha |\le N_0}a_{\alpha ,0}(x)\xi ^\alpha
\end{equation}
are real-valued. We make the ellipticity assumption
in the classical PDE sense,
\begin{equation}\label{opesc.14}
p_\mathrm{class}(x,\xi )\ge m_0(x)(|\xi |/r)^{N_0},\ (x,\xi )\in {\bf R}^{2n}.
\end{equation}
This implies for the zero energy surface,
\begin{equation}\label{opesc.15}
\Sigma _0=\{ (x,\xi )\in {\bf R}^{2n};\, p(x,\xi )=0 \},
\end{equation}
that
\begin{equation}\label{opesc.16}
|\xi |\le \mathrm{Const.\,}r(x)\hbox{ on }\Sigma _0.
\end{equation}
The same holds on $\Sigma _E:=p^{-1}(E)$ for every fixed $E$, but we
shall mainly concentrate on the case $E=0$ to simplify the notation
(observing that after replacing $p$ with $p-E$, we are reduced to that case).

\par We are particularly interested in the following situation:
\begin{equation}\label{opesc.17}
P=-h^2\Delta +V(x)-1
\end{equation}
with symbol
\begin{equation}\label{opesc.18}
P(x,\xi ;h)=p(x,\xi )=\xi ^2+V(x)-1,
\end{equation}
We will assume that $V$ is smooth, real-valued and extends
holomorphically to the set in (\ref{opesc.11}) and tends to 0 when $x\to
\infty $ in that set. This enters into the general framework with
\begin{equation}\label{opesc.19}
r=1,\ R=\langle x\rangle,\ m_0(x)=1,\ m=\langle \xi \rangle^2.
\end{equation}

\par We next discuss escape functions. If $a_j\in S(m_j)$, $j=1,2$,
then $a_1a_2\in S(m_1m_2)$ and
\begin{equation}\label{opesc.20}
H_{a_1}a_2=\{ a_1 ,a_2 \}\in S\left( \frac{m_1m_2}{\widetilde{r}R} \right).
\end{equation}
(\ref{opesc.20}) remains valid if we weaken the assumption on $a_1$,
$a_2$ to $a_j\in \dot{S}(m_j)$ for $j=1,2$ where we let $\dot{S}(m)$
denote the space of smooth functions $a$ on ${\bf R}^{2n}$ which
satisfy the estimates (\ref{opesc.3}) for all {\it non-vanishing}
$(\alpha ,\beta) \in {\bf N}^{2n}$.
\begin{dref}\label{opesc1}
A real-valued function $G\in \dot{S}(\widetilde{r}R)$ is called an
escape function if there exists a constant $C_0$ and a compact set
$K\subset {\bf R}^{2n}$ such that
\begin{equation}\label{opesc.21}
H_pG(\rho )\ge \frac{m(\rho )}{C_0},\hbox{ for all }\rho \in \Sigma
_0\setminus K.
\end{equation}
\end{dref}
When specifying the energy level we say that $G$ in the definition
above is an escape function at energy 0. More generally we can define
escape functions for $p$ at a real energy $E$, replacing $\Sigma _0$
by $\Sigma _E$.

As we have already noticed, $|\xi |\le r(x)$ on $\Sigma _0$, so
$m\asymp m_0$ there. Also, $H_pG\in S(m)$ for all $G\in
\dot{S}(\widetilde{r}R)$, so when $G$ is an escape function, we have
$H_pG\asymp m$ on $\Sigma _0$ near infinity.

\par We will also need to know that $|p|$ cannot be very small away
from $\Sigma _0$ and therefore make the following assumption:
\begin{equation}\label{opesc.22}\begin{split}
&\hbox{For every }r_0>0,\hbox{ there exists }\epsilon _0>0,\hbox{ such
  that}\\
&|p|\ge \epsilon _0m\hbox{ on }{\bf R}^{2n}\setminus \bigcup_{\rho \in
  \Sigma _0}B_{g(\rho )}(\rho ,r_0).
\end{split}\end{equation}
Here, $g$ is the metric in (\ref{opesc.4.5}) and $B_{g(\rho )}(\rho
,r_0)$ denotes the open ball with center $\rho $ and radius $r_0$ for
the constant metric $g(\rho )$.

\par For $\epsilon _0>0$ sufficiently small, we introduce the energy
shell
\begin{equation}\label{opesc.23}
\Sigma _{[-\epsilon _0,\epsilon_0]}=\{\rho \in {\bf R}^{2n};\, |p(\rho )|\le
\epsilon _0 \}.
\end{equation}
The assumption (\ref{opesc.22}) implies that for every $r_0>0$, there
exists $\epsilon _0>0$ such that
\begin{equation}\label{opesc.24}
\Sigma _{[-\epsilon _0,\epsilon_0]}\subset \bigcup_{\rho \in \Sigma _0}B_{g(\rho
  )}(\rho ,r_0).
\end{equation}
From (\ref{opesc.21}), (\ref{opesc.24}) and the fact that $H_pG\in
S(m)$ it follows that there exist $C_0,\epsilon _0>0$ and a compact
set $K\subset {\bf R}^{2n}$ such that
\begin{equation}\label{opesc.25}
H_pG(\rho )\ge \frac{m(\rho )}{C_0},\hbox{ for all }\rho \in \Sigma
_{[-\epsilon _0,\epsilon_0]}\setminus K.
\end{equation}

For the Hamilton field,
$$
H_G=\partial _\xi G(x,\xi )\cdot \partial _x-\partial _xG(x,\xi
)\cdot \partial _\xi ,
$$
we get when $G\in \dot{S}(\widetilde{r}R)$,
\begin{equation}\label{opesc.26}\| H_G\|_g\asymp \frac{|\partial _\xi
    G|}{R}+\frac{|\partial _x G|}{\widetilde{r}}={\cal O}(1).\end{equation}
Using that $p\in S(m)$, we also have
$$
H_pG=\widetilde{r} p'_\xi\cdot
\frac{G'_x}{\widetilde{r}}-R p'_x\cdot \frac{G'_\xi }{R}={\cal O}(m)\|H_G\|_g.
$$
Thus, if $G$ is an escape function we get with $\epsilon _0$, $K$ as
in (\ref{opesc.25}),
\begin{equation}\label{opesc.27}
\| H_G\|_g\asymp 1\hbox{ in }\Sigma _{[-\epsilon _0,\epsilon_0]}\setminus K
\end{equation}
and here, $H_pG\ge (m/C)\| H_G\|_g$. Also notice that
\begin{equation}\label{opesc.27.5}
\| H_p\|_g={\cal O}\left(\frac{m}{\widetilde{r}R} \right).
\end{equation}
Until further notice we restrict the attention to
  $\Sigma _{[-\epsilon _0,\epsilon_0]}$.
We next review the appendix in \cite{GeSj87} and especially how to
improve the escape function $G$ by modifying it on a bounded set. Let
$$
]-\tau _-(\rho ),\tau _+(\rho )[\ni t\mapsto \exp tH_p(\rho )
$$
be the maximal $H_p$-integral curve through the point $\rho \in \Sigma
_{[-\epsilon _0,\epsilon_0]}$, where $0<\tau _{\pm}(\rho )\le +\infty $ are lower
semi-continuous.

\par If $K\subset \Sigma _{[-\epsilon _0,\epsilon_0]}$ is a compact subset as in
(\ref{opesc.25}), then there exists a finite number $T=T(K)>0$ such
that
\begin{equation}\label{opesc.28}
-T(K)<G<T(K)\hbox{ on }K.
\end{equation}
The set $\{ \rho \in \Sigma _{[-\epsilon _0,\epsilon_0]};\, G(\rho )\ge T(K) \}$ is
invariant under the $H_p$-flow in the positive time direction:
$$G(\rho )\ge T(K)\Longrightarrow G(\exp tH_p(\rho ))\ge T(K), 0\le
t<\tau _+(\rho ).$$
If $G(\rho )\ge T(K)$, $\epsilon _0>0$ we have
$\exp tH_p(\rho )\in B_{g(\rho )}(\rho ,\epsilon _0)$ for
$0\le t\le t_0(\epsilon _0)\widetilde{r}R/m$, when
$t_0(\epsilon _0)>0$ is sufficiently small and $G(\exp tH_p(\rho ))$
will increase by $\asymp \widetilde{r}R(\rho )\ge 1$ during such a
time interval. Then repeat the same consideration with $\rho $
replaced by $\exp t_0(\epsilon _0)H_p(\rho )$ and so on. The
trajectory will have to go through infinitely many balls as above and we
conclude that $G(\exp tH_p(\rho ))\to +\infty $ when $\tau \to \tau
_+(\rho )$, for every
$\rho \in \Sigma _{[-\epsilon _0,\epsilon_0]}\cap G^{-1}([T(K),+\infty
[)$.
Similarly, $G(\exp tH_p(\rho ))\to -\infty $ when
$0\ge t\to -\tau _-(\rho )$ for every
$\rho \in G^{-1}(]-\infty ,-T(K)])$.

\par By a similar argument, if $K_1\subset \Sigma _{[-\epsilon _0,\epsilon_0]}$ is a
sufficiently large compact set containing $K$, then for every $\rho
\in G^{-1}([-T(K),T(K)])\setminus K_1$, we have $\exp tH_p(\rho
)\not\in K$, $t\in {\bf R}$, and $G(\exp tH_p(\rho ))\to \pm \infty $
when $t\to \pm \tau _{\pm}(\rho )$.

\par Define the outgoing and incoming tails $\Gamma _+,\, \Gamma
_-\subset \Sigma _{[-\epsilon _0,\epsilon_0]}$ respectively, by
\begin{equation}\label{opesc.29}
\Gamma _\pm =\{ \rho \in \Sigma _{[-\epsilon _0,\epsilon_0]};\, \exp tH_p(\rho
)\not\to \infty ,\ t\to \mp\tau _\mp (\rho ) \} .
\end{equation}

In \cite{GeSj87} it was shown that $\Gamma _\pm $ are closed sets,
$$
\Gamma _+\subset G^{-1}(]-T(K),+\infty [),\ \Gamma _-\subset
G^{-1}(]-\infty ,T(K)[),
$$
and that
$$\Gamma _+\cap G^{-1}(]-\infty ,T]),\ \Gamma _-\cap
G^{-1}([-T,+\infty [)$$
are compact for every $T\in {\bf R}$. In particular {\it the trapped
  set} $\Gamma _+\cap \Gamma _-$ is a compact subset of
$G^{-1}(]-T(K),T(K)[)$ and
\begin{equation}\label{opesc.30}
\Gamma _\pm =\{ \rho \in \Sigma _{[-\epsilon _0,\epsilon_0]};\, \exp tH_p(\rho )\to
\Gamma _+\cap \Gamma _-,\ t\to\mp \tau _\mp(\rho ) \} .
\end{equation}

\par Having fixed $T=T(K)$ above, let $\widetilde{K}\subset
G^{-1}(]-T(K),T(K)[)$ be a compact set containing the trapped set
$\Gamma _+\cap \Gamma _-$. For $\rho \in G^{-1}(T(K))$, define
\begin{equation}\label{opesc.31}
\sigma _+(\rho )=\sup \{t\in ]0,\tau _-(\rho )[;\, \exp(-[0,t]H_p)(\rho
)\subset G^{-1}(]-T(K),T(K)[)\setminus \widetilde{K} \} .
\end{equation}
When $\rho $ is outside the set $K_1$ above, assumed to
  be large enough, the $(-H_p)$-trajectory
through $\rho $ will hit $G^{-1}(-T(K))$ without reaching
$\widetilde{K}$ or get trapped and $\sigma _+(\rho )$ is the
corresponding hitting time which depends locally smoothly on $\rho
$.
For $\rho \in K_1\cap G^{-1}(T(K))$ it may also happen that the
trajectory hits $\widetilde{K}$ at the finite time $\sigma _+(\rho )$
or converges to $\Gamma _+\cap \Gamma _-$ without hitting
$\widetilde{K}$, in which case
$\sigma _+(\rho )=\tau _-(\rho )=+\infty $.

Notice that $\sigma _+$ is a lower semi-continuous function. Define
the open subset $\Omega _+$ of $G^{-1}(T(K))\times [0,+\infty [$ by
\begin{equation}\label{opesc.32}
\Omega _+:=\{ (\rho ,t)\in G^{-1}(T(K))\times [0,+\infty [;\, 0\le
t<\sigma _+(\rho ) \}.
\end{equation}
Then
$$
\widetilde{\Omega }_+=\{ \exp (-tH_p)(\rho );\, (\rho ,t)\in \Omega _+ \}
$$
is an open subset of $G^{-1}(]-T(K),T(K)])$ and the map
\begin{equation}\label{opesc.33}
\Omega _+\ni (\rho ,t)\mapsto \exp (-tH_p)(\rho )\in \widetilde{\Omega
}_+
\end{equation}
is a diffeomorphism. We have
\begin{equation}\label{opesc.34}
\widetilde{\Omega }_+=G^{-1}(]-T(K),T(K)])\setminus \Gamma _-(\widetilde{K}),
\end{equation}
where $\Gamma _-(\widetilde{K})$ denotes the {\it incoming
  $\widetilde{K}$-tail}, defined as
\begin{equation}\label{opesc.35}
\Gamma _-(\widetilde{K})=\Gamma _-\cup \{ \exp (-tH_p)(\rho );\, \rho
\in \widetilde{K},\ 0\le t<\tau _-(\rho ) \}.
\end{equation}
The intersection of $\Gamma _-(\widetilde{K})$ with
$G^{-1}([-T,+\infty [)$ is compact for every $T\in {\bf R}$.

\par Let $f_+\in C^{\infty }(\widetilde{\Omega }_+;]0,+\infty [)$ be
equal to $H_pG$ near $G^{-1}(T(K))$ and outside some bounded
set. Define $G_+\in C^\infty (\widetilde{\Omega }_+)$ by
\begin{equation}\label{opesc.36}
H_pG_+=f_+,\ G_+=G \hbox{ on } G^{-1}(T(K))
\end{equation}
and observe first that $G_+=G$ near $G^{-1}(T(K))$. By choosing $f_+$
large enough we may arrange so that
\begin{equation}\label{opesc.37}\begin{split}
&\limsup _{\widetilde{\Omega }_+\ni \nu \to \partial \widetilde{\Omega
  }_+}G_+(\nu )\le -T(K)-\frac{1}{C},\ C>0,\\
&\limsup _{\widetilde{\Omega }_+\ni \nu \to G^{-1}(-T(K))}G_+(\nu )\le -T(K),
\end{split}
\end{equation}
where $\partial \widetilde{\Omega }_+$ denotes the boundary of
$\widetilde{\Omega }_+$ as a subset $G^{-1}(]-T(K),T(K)])$, so that
$\partial \widetilde{\Omega }_+\subset \Gamma _-(\widetilde{K})$.
Since $G_+=G$ near $G^{-1}(T(K))$, we can extend $G_+$ by $G$ to a smooth
function
$G_+\in C^\infty (\widetilde{\Omega }_+ \cup G^{-1}(]T(K),\infty[)) $.
\par By construction, if $\chi \in C^\infty ({\bf R})$ and
$\mathrm{supp\,}\chi \subset [-T(K),+\infty [$, then $\chi \circ G_+$
is well-defined in $C^\infty ({\bf R}^{2n})$.

Next we briefly introduce the analogous quantities, $\Omega _-$, $G_-$:
For $\rho \in G^{-1}(-T(K))$, define
\begin{equation}
\label{opesc.38}
\sigma _-(\rho )=\sup \{t\in ]0,\tau _+(\rho )[;\, \exp([0,t]H_p)(\rho
)\subset G^{-1}(]-T(K),T(K)[)\setminus \widetilde{K} \} .
\end{equation}
$\sigma _-$ is a lower semi-continuous function. Let us also define
the open subset $\Omega _-$ of $G^{-1}(-T(K))\times [0,+\infty [$ by
\begin{equation}\label{opesc.39}
\Omega _-:=\{ (\rho ,t)\in G^{-1}(-T(K))\times [0,+\infty [;\, 0\le
t<\sigma _-(\rho ) \}.
\end{equation}
Then
$$
\widetilde{\Omega }_-:=\{ \exp (tH_p)(\rho );\, (\rho ,t)\in \Omega _- \}
$$
is an open subset of $G^{-1}([-T(K),T(K)[)$ and the map
\begin{equation}\label{opesc.40}
\Omega _-\ni (\rho ,t)\mapsto \exp (tH_p)(\rho )\in \widetilde{\Omega
}_-
\end{equation}
is a diffeomorphism. We have
\begin{equation}\label{opesc.41}
\widetilde{\Omega } _-=G^{-1}([-T(K),T(K)[)\setminus \Gamma _+(\widetilde{K}),
\end{equation}
where $\Gamma _+(\widetilde{K})$ denotes the {\it outgoing
  $\widetilde{K}$-tail}, defined as
\begin{equation}\label{opesc.42}
\Gamma _+(\widetilde{K})=\Gamma _+\cup \{ \exp (tH_p)(\rho );\, \rho
\in \widetilde{K},\ 0\le t<\tau _+(\rho ) \}.
\end{equation}
The intersection of $\Gamma _+(\widetilde{K})$ with
$G^{-1}(]-\infty ,T])$ is compact for every $T\in {\bf R}$.

\par Let $f_-\in C^{\infty }(\widetilde{\Omega }_+;]0,+\infty [)$ be
equal to $H_pG$ near $G^{-1}(-T(K))$ and outside some bounded
set. Define $G_-\in C^\infty (\widetilde{\Omega }_-)$ by
\begin{equation}\label{opesc.43}
H_pG_-=f_-,\ G_-=G \hbox{ on } G^{-1}(-T(K))
\end{equation}
and observe that $G_-=G$ near $G^{-1}(-T(K))$. By choosing $f_-$
large enough we may arrange so that
\begin{equation}\label{opesc.44}\begin{split}
&\liminf_{\widetilde{\Omega }_-\ni \nu \to \partial \widetilde{\Omega
  }_-}G_-(\nu )\ge T(K)+\frac{1}{C},\ C>0,\\
&\liminf_{\widetilde{\Omega }_-\ni \nu \to G^{-1}(T(K))}G_-(\nu )\ge T(K),
\end{split}
\end{equation}
where $\partial \widetilde{\Omega }_-$ denotes the boundary of
$\widetilde{\Omega }_+$ as a subset $G^{-1}([-T(K),T(K)[)$, so that
$\partial \widetilde{\Omega }_-\subset \Gamma _+(\widetilde{K})$.

\par Since $G_-=G$ near $G^{-1}(-T(K))$, we can extend $G_-$ by $G$ to a smooth
function
$G_-\in C^\infty (\widetilde{\Omega }_- \cup G^{-1}(]-\infty ,-T(K)[)) $. By construction, if $\chi \in C^\infty ({\bf R})$ and
$\mathrm{supp\,}\chi \subset ]-\infty ,T(K)]$, then $\chi \circ G_-$
is well-defined in $C^\infty ({\bf R}^{2n})$.

With $T=T(\widetilde{K})$, we define
\begin{equation}\label{opesc.45}
\widetilde{G}=\chi _+\circ G_++\chi _-\circ G_-\in C^\infty (\Sigma
_{[-\epsilon _0,\epsilon_0]}),
\end{equation}
where
\begin{itemize}
\item $\chi _\pm \in C^\infty ({\bf R};{\bf R})$, $\pm \chi _\pm \ge 0$,
\item $\chi _+(t)+\chi _-(t)=t$,
\item $\chi '_+>0$ on $]-T,+\infty [$,
\item $\chi '_->0$ on $[-\infty ,T[$,
\item $\mathrm{supp\,}\chi _+=[-T,+\infty [$, $\mathrm{supp\,}\chi
  _-=]-\infty ,T ]$,
\end{itemize}

We notice that $\widetilde{G} = G$ outside a bounded
subset of $G^{-1}([-T(K),T(K)])$ and that
\begin{equation}\label{opesc.46}
\widetilde{G}^{-1}(0)\supset\{\rho ;\, G_+(\rho )\le -T(K),\ G_-(\rho )\ge
T(K) \}\supset \Gamma _+(\widetilde{K})\cap \Gamma _-(\widetilde{K}).
\end{equation}
It is also clear that $\widetilde{G}^{-1}(0)\subset
G^{-1}(]-T,T[)$. Moreover, we can choose $f_+$, $f_-$ so that
\begin{itemize}
\item[] the set $G_+(\rho )\le -T(K)$ is contained in an arbitrarily
  small neighborhood of $G^{-1}(-T(K))\cup \Gamma _-(\widetilde{K})$,
\item[] the set $G_-(\rho )\ge T(K)$ is contained in an arbitrarily
  small neighborhood of $G^{-1}(T(K))\cup \Gamma _+(\widetilde{K})$.
\end{itemize}

Outside a bounded set, we have $\widetilde{G}=G$, $\| H_G\|_g\asymp 1$ and
$H_pG\ge m/{\cal O}(1)$ by (\ref{opesc.27}), (\ref{opesc.21}). In a
bounded set, we use (\ref{opesc.45}) and get
\begin{equation}\label{opesc.47}
H_p\widetilde{G}=(\chi _+'\circ G_+)f_++(\chi _-'\circ G_-)f_-\asymp
\chi _+'\circ G_++\chi _-'\circ G_-,
\end{equation}
\begin{equation}\label{opesc.48}
  \| H_{\widetilde{G}}\|_g=(\chi _+'\circ G_+)\| H_{G_+}\|_g+(\chi
  _-'\circ G_-) \| H_{G_-}\|_g
  \asymp
  \chi _+'\circ G_++\chi _-'\circ G_-,
  \end{equation}
so
\begin{equation}\label{opesc.49}
H_p\widetilde{G}\asymp m\|H_{\widetilde{G}}\| _g\hbox{ uniformly on }\Sigma
_{[-\epsilon _0,\epsilon _0]} ,
\end{equation}
in addition to the fact that $H_p\widetilde{G}\in S(m)$ and $\|
H_{\widetilde{G}}\|_g\asymp 1$ away from a bounded set.

\par We can arrange so that $H_p\widetilde{G}>0$ outside an
arbitrarily small neighborhood of $\Gamma _+(\widetilde{K})\cap \Gamma
_-(\widetilde{K})$.

\par We now strengthen the assumption $G\in \dot{S}(\widetilde{r}R)$
to
\begin{equation}\label{opesc.50}
G\in S(\widetilde{r}R).
\end{equation}
Then $\widetilde{G}\in S(\widetilde{r}R)$
and outside a bounded set we have
$$
\widetilde{G}={\cal O}(\widetilde{r}R)\| H_{\widetilde{G}}\|_g.
$$
Choose $\chi _\pm$ so that
$$
\chi _\pm ={\cal O}(\chi _\pm ')\hbox{ uniformly on any bounded set.}
$$
Then from (\ref{opesc.45}), (\ref{opesc.48}), we conclude that
\begin{equation}\label{opesc.51}
\widetilde{G}={\cal O}(\widetilde{r}R)\| H_{\widetilde{G}}\|_g,\hbox{ uniformly on }\Sigma _{[-\epsilon _0,\epsilon_0]}.
\end{equation}

\par Assume
\begin{equation}\label{opesc.52}
m\asymp 1 \hbox{ on }\Sigma _{[-\epsilon _0,\epsilon_0]}.
\end{equation}
Let $\chi \in C_0^\infty (]-{\epsilon _0} ,{\epsilon _0} [;[0,1])$ be equal
to 1 on $[-{\epsilon _0} /2,{\epsilon _0} /2]$ and define globally,
\begin{equation}\label{opesc.53}
G^0(\rho )=\chi (p(\rho ))\widetilde{G}(\rho )\in C^\infty ({\bf R}^{2n}),
\end{equation}
with the convention that $G^0=0$ outside $\Sigma _{[-\epsilon _0,\epsilon_0]} $.
By (\ref{opesc.52}) we have $\chi (p(\rho ))\in S(1)$ and hence
$G^0\in S(\widetilde{r}R)$ by (\ref{opesc.50}) and the subsequent observation.
Then
\begin{equation}\label{opesc.54}
H_pG^0=\chi (p)H_p\widetilde{G},
\end{equation}
\begin{equation}\label{opesc.55}
H_{G^0}=\chi (p)H_{\widetilde{G}}+\widetilde{G}\chi '(p)H_p.
\end{equation}
From the last equation, (\ref{opesc.27.5}) and (\ref{opesc.52}), we get
\[
\begin{split}
\| H_{G^0}\|_g
&
\le \chi (p)\| H_{\widetilde{G}}\|_g+{\cal O}(1)|\chi
'(p)||\widetilde{G}|\frac{m}{\widetilde{r}R}\\
&\le \chi (p)\| H_{\widetilde{G}} \|_g+{\cal O}(1)\frac{|\chi '(p)
  \widetilde{G}|}
{\widetilde{r}R}={\cal O}(1),
\end{split}
\]
leading to
$$
\|H_{G^0}\|_g^2\le {\cal O}(1)\left(\chi (p)^2\|H_{\widetilde{G}}\|_g^2 +\frac{\chi (p)|{\widetilde{G}}|^2}{(\widetilde{r}R)^2} \right),
$$
in view of the standard estimate, $\chi '={\cal O}(\chi ^{1/2})$ for
non-negative smooth functions. Now apply (\ref{opesc.51}) to get
\begin{equation}\label{opesc.56}
\|H_{G^0}\|_g^2\le {\cal O}(1)\chi (p)\|H_{\widetilde{G}}\|_g^2\le {\cal O}(1)\chi (p)\|H_{\widetilde{G}}\|_g,
\end{equation}
where the last inequality follows from (\ref{opesc.26}) which also
holds for $\widetilde{G}$. Combining this
with (\ref{opesc.49}), (\ref{opesc.54}), we get
\begin{equation}\label{opesc.57}
H_pG^0\ge \frac{m}{{\cal O}(1)}\| H_{G^0}\|_g^2.
\end{equation}

\par We sum up the constructions in
\begin{prop}\label{opesc2}
Let $r,R,\widetilde{r}$ be as in {\rm (\ref{opesc.1})}--{\rm (\ref{opesc.4})},
define the metric $g$ by {\rm (\ref{opesc.4.5})}. Let $P$, $p$ $m$ be as in {\rm (\ref{opesc.5})}--{\rm (\ref{opesc.9})},
where $1\le m_0\in S(m_0)$. Assume {\rm (\ref{opesc.14})} with $p_\mathrm{class}$ as in {\rm (\ref{opesc.12})}. Define the energy
 slice $\Sigma _{[-\epsilon _0,\epsilon_0]}$ by {\rm (\ref{opesc.23})} for some $\epsilon
 _0>0$ and let $G\in S(\widetilde{r}R)$ be an escape function in the
 sense of Definition {\rm \ref{opesc1}} and assume {\rm (\ref{opesc.22})}, so that {\rm (\ref{opesc.25})} holds if
 $\epsilon _0>0$ is small enough, and fix such a choice of $\epsilon
 _0$. Let $\widetilde{K}\subset \Sigma _{[-\epsilon _0,\epsilon_0]}$ be a compact
 set which contains the trapped set $\Gamma _+\cap \Gamma _-$ (cf.\
 {\rm (\ref{opesc.29})}). Define the outgoing and incoming
 $\widetilde{K}$-tails $\Gamma _+(\widetilde{K})$, $\Gamma
 _-(\widetilde{K})$ by {\rm (\ref{opesc.42})}, {\rm (\ref{opesc.35})}, so that
 $\widehat{\widetilde{K}}:=\Gamma _+(\widetilde{K})\cap \Gamma
 _-(\widetilde{K})\subset \Sigma _{[-\epsilon _0,\epsilon_0]}$ is a compact set; ``the
 $H_p$-convex hull'' of $\widetilde{K}$.

\par Then, after modifying $G$ on a bounded set to a new function $\widetilde{G}$, we can achieve that
\begin{itemize}
\item $H_p{\widetilde{G}}\asymp m\|H_{\widetilde{G}}\|_g$ uniformly on $\Sigma _{\epsilon_0} $,
\item $H_p{\widetilde{G}}>0$ outside any fixed given neighborhood of $\widehat{\widetilde{K}}$,
\item ${\widetilde{G}}=0$ in a neighborhood of $\widehat{\widetilde{K}}$.
\end{itemize}

\par If we also assume {\rm (\ref{opesc.52})} and define $G^0\in
S(\widetilde{r}R)$ as in {\rm (\ref{opesc.53})}, then we have {\rm (\ref{opesc.57})}.
\end{prop}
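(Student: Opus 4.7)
The plan is to assemble the construction carried out in the preceding discussion and verify that the resulting $\widetilde{G}$ and $G^0$ satisfy the claimed properties. The strategy has three steps: (a) extend $G$ by Hamilton flow away from the trapped set to force $H_pG$ to stay positive on the relevant level sets, (b) glue the forward and backward extensions using cutoffs so that the result still coincides with $G$ near infinity, and (c) localize in $p$ to get a globally defined symbol.

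First I would fix a compact set $K \subset \Sigma_{[-\epsilon_0,\epsilon_0]}$ as in (\ref{opesc.25}), choose $T=T(K)$ satisfying (\ref{opesc.28}), and pick an enlarged compact set $\widetilde{K}$ containing $\Gamma_+\cap\Gamma_-$. I would then use the maps (\ref{opesc.33}) and (\ref{opesc.40}) to transport the boundary data $G$ on $G^{-1}(\pm T(K))$ along the $H_p$-flow, defining $G_+$ on $\widetilde{\Omega}_+$ and $G_-$ on $\widetilde{\Omega}_-$ by (\ref{opesc.36}), (\ref{opesc.43}) with chosen positive right hand sides $f_\pm$. By choosing $f_+$, $f_-$ large enough on bounded pieces of their domains, one enforces (\ref{opesc.37}) and (\ref{opesc.44}), and one can simultaneously arrange that the sets $\{G_+\le -T(K)\}$ and $\{G_-\ge T(K)\}$ lie in arbitrarily small neighborhoods of $\Gamma_-(\widetilde{K})$ and $\Gamma_+(\widetilde{K})$ respectively. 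Extending $G_\pm$ by $G$ past $G^{-1}(\pm T(K))$ gives smooth functions defined on the whole relevant half-space.

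Next I would assemble $\widetilde{G}=\chi_+\circ G_+ + \chi_-\circ G_-$ with the cutoff properties listed after (\ref{opesc.45}). The three conclusions then follow: outside a bounded set $\widetilde{G}=G$ so $\widetilde{G}\in S(\widetilde{r}R)$ and the estimate $H_p\widetilde{G}\asymp m\|H_{\widetilde{G}}\|_g$ is inherited from (\ref{opesc.21}), (\ref{opesc.27}); on the bounded part one uses (\ref{opesc.47}) and (\ref{opesc.48}) so that both sides are $\asymp \chi_+'\circ G_+ + \chi_-'\circ G_-$, giving (\ref{opesc.49}). The positivity of $H_p\widetilde{G}$ outside any fixed neighborhood of $\widehat{\widetilde{K}}$ follows because at any such point at least one of $\chi_\pm'\circ G_\pm$ is strictly positive, by the control of $\{G_+\le -T(K)\}$, $\{G_-\ge T(K)\}$ arranged above. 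The vanishing of $\widetilde{G}$ on a neighborhood of $\widehat{\widetilde{K}}$ is the content of (\ref{opesc.46}): on such a neighborhood $G_+\le -T$ and $G_-\ge T$, and the choice of $\chi_\pm$ kills both summands.

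For the last part I would define $G^0=\chi(p)\widetilde{G}$ as in (\ref{opesc.53}) with $\chi$ a cutoff to $|p|<\epsilon_0$. Thanks to (\ref{opesc.52}) we have $\chi(p)\in S(1)$ and hence $G^0\in S(\widetilde{r}R)$; clearly $G^0$ is supported in $\Sigma_{[-\epsilon_0,\epsilon_0]}$ and equals $\widetilde{G}$ on $\Sigma_{[-\epsilon_0/2,\epsilon_0/2]}$. The identities (\ref{opesc.54}), (\ref{opesc.55}) together with (\ref{opesc.27.5}), (\ref{opesc.51}) and the Cauchy--Schwarz style inequality $|\chi'|\le C\chi^{1/2}$ for non-negative smooth $\chi$ yield $\|H_{G^0}\|_g^2\le C\chi(p)\|H_{\widetilde{G}}\|_g$ as in (\ref{opesc.56}), and combining this with the first item $H_p\widetilde{G}\asymp m\|H_{\widetilde{G}}\|_g$ gives the desired bound (\ref{opesc.57}).

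The main obstacle I anticipate is the simultaneous arrangement of the second and third bullets: to make $\widetilde{G}$ vanish on a full neighborhood of $\widehat{\widetilde{K}}$ one needs $G_+\le -T$ and $G_-\ge T$ there, yet one also needs $H_p\widetilde{G}>0$ immediately outside that set. This is exactly where the freedom in choosing $f_\pm$ large (only on bounded regions, so that the $S(\widetilde{r}R)$ bounds on $\widetilde{G}$ are preserved) is decisive, and where one must verify that the sets $\{G_\pm\gtrless \mp T\}$ can be squeezed into arbitrarily small tubular neighborhoods of the tails $\Gamma_\mp(\widetilde{K})$ without destroying the monotonicity of $\chi_\pm'\circ G_\pm$ elsewhere.
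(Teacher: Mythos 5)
Your proposal is correct and follows the paper's own argument closely: the proposition is stated as a summary of the preceding discussion (equations (\ref{opesc.28})--(\ref{opesc.57})), and you retrace exactly that construction — the flow-out functions $G_\pm$ with chosen $f_\pm$, the gluing $\widetilde{G}=\chi_+\circ G_++\chi_-\circ G_-$, the three bullets via (\ref{opesc.46})--(\ref{opesc.49}), and the localization $G^0=\chi(p)\widetilde{G}$ with (\ref{opesc.54})--(\ref{opesc.57}). The obstacle you flag (squeezing $\{G_+\le -T\}$, $\{G_-\ge T\}$ into small neighborhoods of the tails while keeping $H_p\widetilde{G}>0$ outside a fixed neighborhood of $\widehat{\widetilde{K}}$) is precisely the point the paper resolves by the freedom in $f_\pm$, so no gap remains.
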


\section{Microlocal approach to resonances (\cite{HeSj86})}\label{nore}
\setcounter{equation}{0}

In this section we review some basic notions developed in the first
half of \cite{HeSj86}.
\subsection{I-Lagrangian manifolds}
Let $G\in \dot{S}(\widetilde{r}R)$ be real-valued. Then the manifold
\begin{equation}\label{nore.1}
\Lambda _G=\{ (x,\xi )\in {\bf C}^{2n};\, \Im (x,\xi )=H_G (\Re (x,\xi
))\}
\end{equation}
is I-Lagrangian, i.e. Lagrangian in ${\bf C}^{2n}$ for the real
symplectic form $-\Im \sigma $, where $\sigma =\sum d\xi _j\wedge
dx_j$ is the complex symplectic form.
Since $\Lambda _G$ is I-Lagrangian, $d(-{{\Im (\xi \cdot
    dx)}_\vert}_{\Lambda _G})=-\Im {{\sigma }_\vert}_{\Lambda _G}=0$
and since $\Lambda _G$ is topologically trivial, we know that $-{{\Im
  (  \xi \cdot dx)}_\vert}_{\Lambda _G}$ is exact and hence $=dH$ for
some smooth function $H\in C^\infty (\Lambda _G)$. The primitive
$H$ is unique up to a constant and we can choose
\begin{equation}\label{nore.7}
H=-\Re \xi \cdot \Im x +G(\Re (x,\xi ))=G(\Re (x,\xi ))-\Re \xi \cdot
G'_\xi (\Re (x,\xi )).
\end{equation}

\par If we also assume that $G$ is
small in $\dot{S}(\widetilde{r}R)$, then $\Lambda _G$ is ${\bf
  R}$-symplectic, i.e. a symplectic submanifold of ${\bf C}^{2n}$,
equipped with the symplectic form $\Re \sigma $. In other words,
${{\sigma }_\vert}_{\Lambda _G}$ is a (real) symplectic form on
$\Lambda _G$ and we have the volume element
$$
d\alpha =\frac{1}{n!}{{\sigma ^n}_\vert}_{\Lambda _G}.
$$

\subsection{FBI-transforms}
(\ref{nore.1}) gives a parametrization ${\bf R}^{2n}\ni \rho
\mapsto \rho +iH_G(\rho )$ of $\Lambda _G$ and we can then define
symbol spaces $S(m)=S(m,\Lambda _G)$ of functions on $\Lambda _G$ by
pulling back functions and weights to ${\bf R}^{2n}$. In particular, we
define the scales $\widetilde{r}$ and $R$ by this pull back.
Let $\lambda =\lambda (\alpha )\in S(\widetilde{r}R^{-1},\Lambda _G )$
be positive, elliptic (in the sense that $\lambda $ is
  non-vanishing and $1/\lambda \in S((\widetilde{r}R^{-1})^{-1},\Lambda _G )$) and put
\begin{equation}\label{nore.2}
\phi (\alpha ,y)=(\alpha _x-y)\alpha _\xi +i\lambda (\alpha )(\alpha
_x-y)^2,\ \alpha =(\alpha _x,\alpha _\xi )\in \Lambda _G,\ y\in {\bf C}^n.
\end{equation}
This will be the phase in our FBI-transform.

\par The amplitude will be a ${\bf C}^{n+1}$-valued smooth function
${\bf t}(\alpha ,y;h)$ on $\Lambda_G \times {\bf C}^n_y$
which is affine linear in $y$. When discussing symbol properties of such
functions we restrict the attention to a region
\begin{equation}\label{nore.3}
|y-\alpha _x|<{\cal O}(1)R(\alpha _x),
\end{equation}
and with this convention, we require that ${\bf t}\in
h^{-3n/4}S(\widetilde{r}^{n/4}R^{-n/4})$ and that
${\bf t}, \partial _{y_1}{\bf t},...,\partial
_{y_n}{\bf t}$ are maximally linearly independent in the
sense that with ${\bf t}$ treated as a column vector,
\begin{equation}\label{nore.4}
\left|\det \begin{pmatrix}{\bf t} &\partial
  _{y_1}{\bf t}
  &... &\partial _{y_n}{\bf t}\end{pmatrix}\right|\asymp R^{-n}\left( h^{-\frac{3n}{4}}\widetilde{r}^{\frac{n}{4}}R^{-\frac{n}{4}} \right)^{n+1}.
\end{equation}
Notice that the determinant is independent of $y$. If
$e_0,e_1,...,e_n$ is the canonical basis in ${\bf C}^{n+1}$, we can
choose
$$
{\bf t}(\alpha ,y)={\bf t}_0(\alpha
;h)+\sum_1^n(\alpha _{x_j}-y_j){\bf t}_j(\alpha ;h),
$$
where,
$$
{\bf t}_j=t_je_j, \hbox{ and } t_j(\alpha
;h)=\frac{t_0(\alpha ;h)}{R} \hbox{ for }j>0
$$
and $t_0\in h^{-3n/4}S(\widetilde{r}^{n/4}R^{-n/4})$ is elliptic.

\begin{remark}\label{nore1}
  If ${\bf s}(\alpha ,y;h)$ is a second amplitude with the
  same properties as ${\bf t}$, then it is not hard to show
  that there exists $U(\alpha ;h):{\bf C}^{n+1}\to {\bf C}^{n+1}$,
  independent of $y$ and invertible, such that
$$U,\, U^{-1}\in S(1)\hbox{ and }{\bf s}(\alpha
,y;h)=U(\alpha ;h){\bf t}(\alpha ,y;h). $$

\end{remark}

\par Let $\chi \in C_0^\infty (B(0,1/C))$ be equal to one in
$B(0,1/(2C))$, where $C>0$ is large enough. We define the
FBI-transform $T:{\cal D}'({\bf R}^n)\to C^\infty (\Lambda _G;{\bf
  C}^{n+1})$ by
\begin{equation}\label{nore.6}
Tu(\alpha ;h)=\int e^{\frac{i}{h}\phi (\alpha
  ,y)}{\bf t}(\alpha ,y;h) \chi _\alpha (y)u(y)dy,
\end{equation}
where $\chi _\alpha (y)=\chi ((y-\Re \alpha _x)/R(\Re \alpha
_x))$. Here the domain of integration is equal to ${\bf
    R}^n$ and the integral is defined as the bilinear scalar
  product of $u\in {\cal D}'({\bf R}^n)$ and a test function in
  $C_0^\infty ({\bf R}^n)$.

\par We assume from now on that $G$
belongs to $S(\widetilde{r}R)$. We also assume:
\begin{equation}\label{nore.7.5}\begin{split}
&\exists\, g_0=g_0(x)\in S(rR), \hbox{ such that }G(x,\xi )-g_0(x)\\
&\hbox{has
its support in a region where }|\xi |\le {\cal O}(r(x))\\
&\hbox{and }G(x,\xi )-g_0(x) \hbox{ is sufficiently small in }S(rR).
\end{split}\end{equation}

\par We will also consider the more special situation,
when $G\in S(m_G)$:
\begin{equation}\label{nore.7.7}\begin{split}
&\exists\, g=g(x)\in S(m_G^0), \hbox{ such that }G(x,\xi )-g(x)\\
&\hbox{has
its support in a region where }|\xi |\le {\cal O}(r(x)).
\\
&\hbox{and }G(x,\xi )-g_0(x) \hbox{ is sufficiently small in }S(m_G).
\end{split}\end{equation}
Here $m_G\le \widetilde{r}R$ is an order function, and we have put
$m_G^0(x)=m_G(x,0)$.

\par Let $H$ be given in (\ref{nore.7}). Then $H\in
S(\widetilde{r}R)$. Under the more restrictive assumption
(\ref{nore.7.7}), we have
\begin{equation}\label{nore.7.8}
H\in S(m_G).
\end{equation}

Using $T$ we shall define the function spaces $H(\Lambda _G,m)$,
essentially by requiring that
$$
Tu\in L^2(\Lambda _G,m^2e^{-2H/h}d\alpha )=:L^2(\Lambda _G,m).
$$
Here, $m$ is an order function: $0<m\in S(m)$. An intuitive reason for
the appearance of $H$ here is the following: The function
\begin{equation}\label{nore.8}
f(y,\theta )=-\Im y\cdot \theta +G(\Re y,\theta ),\ \theta \in {\bf R}^n
\end{equation}
is a nondegenerate phase function on ${\bf C}^n\times {\bf R}^n$ in
the sense of H\"ormander's theory of Fourier integral operators
(apart from a homogeneity condition) with $\theta $ as the fiber
variables. The corresponding critical manifold $C_f$ is given by
$$
f'_\theta (y,\theta )=0:\ \Im y=G'_\eta (\Re y,\theta )
$$
and the associated I-Lagrangian manifold is
$$
\{ (y,\frac{2}{i}\partial _yf(y,\theta ));\, (y,\theta )\in C_f \} =
\Lambda _G .
$$
We are beyond the scope of
H\"ormander's theory, but from  this it is natural to  define the
space $H(\Lambda _G,m)$ by saying that a distribution $u$ should
belong to it when $Tu\in L^2(\Lambda
_G,m^2e^{-2\widetilde{H}/h}d\alpha )$, where
$$
\widetilde{H}(\alpha )=\mathrm{v.c.}_{(y,\theta )}\left(-\Im \phi (\alpha,y)+f(y,\theta )\right).
$$
Here $\mathrm{v.c.}_{(y,\theta )}$ indicates that we take the critical
value with respect to the variables $(y,\theta )$.
The critical point is nondegenerate and given by $(y,\theta
)=(\alpha _x,\Re \alpha _\xi )$ and we get
$$
\widetilde{H}(\alpha )=H(\alpha ).
$$

\par Letting $\Lambda _0={\bf R}^{2n}$, we can find an FBI-transform
$$
T_0:{\cal D}'({\bf R}^{n})\to C^\infty (\Lambda _0;{\bf C}^{n+1})
$$
given by
$$
T_0u(\beta ;h)=\int e^{\frac{i}{h}\phi_0(\beta,y)}{\bf s}(\beta ,y ;h)\widetilde{\chi }_\beta (y)u(y) dy,
$$
which is equivalent to $T$ in the sense of (\ref{nore.11})
below, provided that
$$
\phi _0(\beta ,y)=(\beta _x-y)\cdot \beta _\xi +i\lambda _0(\beta
)(\beta _x-y)^2
$$
and ${\bf s}$, $\widetilde{\chi }_\beta $ are chosen suitably. First we need a
bijection $\Lambda _G\ni \alpha \mapsto \beta \in \Lambda _0$ and we
define $\beta =\beta (\alpha )$ by imposing the condition
$$
\{ \beta  \}=\Lambda _0\cap \{ (y,-\partial _y\phi (\alpha ,y));\,
y\in {\bf C}^n \},
$$
which gives
\begin{equation}\label{nore.9}
\beta =(\beta _x,\beta _\xi )=\left( \Re \alpha _x+\frac{\Im \alpha
    _\xi }{2\lambda (\alpha )},\Re \alpha _\xi -2\lambda (\alpha )\Im
  \alpha _x\right) .
\end{equation}
This gives a bijection $\Lambda _G\to \Lambda _0$ with inverse $\beta
\to \alpha (\beta )$, both having the natural symbol properties. We
define the elliptic element $0<\lambda _0\in S(\widetilde{r}R^{-1})$ by
\begin{equation}\label{nore.10}
\lambda _0(\beta )=\lambda (\alpha (\beta )).
\end{equation}
By construction the two quadratic polynomials $\phi (\alpha ,\cdot )$ and $\phi
_0(\beta ,\cdot )$ have the same gradients and Hessians at the point
$y=\beta_x$, so they differ by a constant (independent of $y$). More
explicitly,
$$
\phi (\alpha ,y)=\phi (\alpha ,\beta _x)+\phi _0(\beta ,y).
$$

\par Finally, choose
$$
{\bf s}(\beta ,y)={\bf t}(\alpha ,y),\ \widetilde{\chi
  }_\beta (y)=\chi _\alpha (y).
$$
Then
\begin{equation}\label{nore.11}
Tu(\alpha ;h)=e^{\frac{i}{h}\phi (\alpha ,\beta _x)}T_0u(\beta ;h),
\end{equation}
which expresses the equivalence of $T$ and $T_0$.

\par It follows that if we identify order functions on $\Lambda _G$
and on $\Lambda _0$ in a natural way, then we have the equivalence
$$
Tu\in L^2(\Lambda _G,m^2e^{-2H/h}d\alpha ) \Leftrightarrow T_0u\in
L^2(\Lambda _0,m^2e^{-2F/h}d\beta  ),
$$
where
\begin{equation}\label{nore.12}
F=H+\Im \phi (\alpha ,\beta _x) = \mathrm{v.c.}_{y,\theta }\left(-\Im\phi
_0(\beta ,y)+f(y,\theta )\right).
\end{equation}

\par Let $G_1,\,G_2\in S(\widetilde{r}R)$ be as above and let $f_1,\,
f_2$ and $F_1,\, F_2$ be the corresponding functions. In
\cite{HeSj86} it was shown, using (\ref{nore.12}) and a corresponding
inverse ``Legendre'' formula, that we have the equivalence,
\begin{equation}\label{nore.13}
G_1\le G_2 \Leftrightarrow F_1\le F_2.
\end{equation}
From this and the description with the help of $T_0$ it will follow
that we have the inclusion $H(\Lambda _{G_1},m)\subset H(\Lambda _{G_2},m)$, when $G_1\le G_2$.

\subsection{Sobolev spaces with expo\-nen\-tial pha\-se spa\-ce
  wei\-ghts}\label{hlg}
Let $G$ satisfy (\ref{nore.7.5}) and be
sufficiently small in $S(\widetilde{r}R)$. Define $H$ as in
(\ref{nore.7}), let $m$ be an order function on $\Lambda _G$ and let
$T$ be an associated FBI-transform as in (\ref{nore.6}). In
\cite{HeSj86} it is shown that $T$ is injective on
  $C_0^\infty ({\bf R}^n)$ and also on more general Sobolev spaces with
  exponential weights,  by the construction of
an approximate left inverse of $T$ which works with exponentially
small errors.
\begin{dref}\label{nore2}
$H(\Lambda _G,m)$ is the completion of $C_0^\infty ({\bf R}^n)$ for the
norm
\begin{equation}\label{nore.14}
\| u\|_{H(\Lambda _G,m)}=\| Tu\|_{L^2(\Lambda _G,m^2e^{-2H/h}d\alpha )}.
\end{equation}
\end{dref}
The following facts were established in \cite{HeSj86}:
\begin{itemize}
\item $H(\Lambda _G,m)$ is a Hilbert space
\item If we modify the choice of $\lambda $ and ${\bf t}$
  in the definition of $T$, we get the same space $H(\Lambda _G,m)$
  and the new norm is uniformly equivalent to the earlier one, when
  $h\to 0$.
\item When $G=g(x)$ is independent of $\xi $ and $m=m_0(x)$, we get
$$
H(\Lambda _G,m)=L^2({\bf R}^n;m_0^2e^{-2g(x)/h}dx)
$$
with uniform equivalence of norms. More generally, when $m(x,\xi
)=m_0(x)(\widetilde{r}(x,\xi )/r(x))^{N_0}$, $N_0\in {\bf R}$, then
$H(\Lambda _G,m)$ is the naturally defined exponentially weighted
Sobolev space.
\end{itemize}
\begin{remark}\label{nore3}
From the last point, we know that $L^2({\bf R}^n)=H(\Lambda _0,1)$
(when $G=0$) with uniformly equivalent norms. As in {\rm \cite{HeSj86}},
this can be improved:

\par There exists a positive weight $1\asymp M_0(\alpha ;h)\in S(1)$
such that if $L_0(d\alpha )=M_0(\alpha ;h)^2L(d\alpha )$ ($L$ being
the Lebesgue measure), then
\begin{equation}
\label{nore3.1}
(u|v)_{L^2({\bf R}^n)}=\int_{\Lambda _0} Tu \overline{Tv}L_0(d\alpha
)+(Ku|v)_{L^2({\bf R}^n)},
\end{equation}
where $K$ is negligible of order 1 (as defined in the beginning of
Subsection {\rm \ref{pfops}}) so that for every $N\in
{\bf N}$,
$$
K={\cal O}(1):\, H(\Lambda _0, (\widetilde{r}R/h)^{-N})\to H(\Lambda _0, (\widetilde{r}R/h)^{N}).
$$
Notice that the weight $H$ is zero when $G=0$.
\end{remark}

\subsection{Pseudodifferential- and Fourier integral
  operators}\label{pfops}
Such operators can be defined directly (cf.\ (6.3), (7.8) in
\cite{HeSj86}). We will only need their descriptions on the FBI-side,
somewhat in the spirit of Toeplitz operators.

Let $m$ be an order function on $\Lambda _G$. We say that $R:\,H(\Lambda
_G,m)\to H(\Lambda _G,1)$ is negligible of order $m$ if for every
order function $\widetilde{m}$ and every $N_0\in {\bf N}$, $R$ is a
well defined operator $H(\Lambda _G,m\widetilde{m})\to H(\Lambda
_G,\widetilde{m}(\widetilde{r}R/h)^{N_0})$ which is uniformly
bounded in the limit $h\to 0$. (``Well defined'' here refers to the
existence of a unique extension from the dense subspace $C_0^\infty
({\bf R}^n)$.) We have a completely analogous notion of negligible
operators of order $m$: $L^2(\Lambda _G,m)\to L^2(\Lambda _G,1)$.
Here, we write $L^2(\Lambda
  _G,m)=L^2(\Lambda _G,m^2e^{-2H/h}d\alpha )$ for short.
We will use the abbreviation\begin{itemize}
\item[]nop $=$ negligible operator,
\item[]pop $=$ pseudodifferential operator,
\item[]top $=$ Toeplitz operator.
\end{itemize}

\par Let $\Pi $ be the orthogonal projection $L^2(\Lambda _G,m)\to
TH(\Lambda _G,m)$. Then (see \cite{HeSj86}, (7.24) and the adjacent
discussion)
\begin{equation}\label{nore.15}\begin{split}
&\Pi =\widetilde{\Pi }+\Pi _{-\infty },\ \Pi _{-\infty }\hbox{ is
}L^2\hbox{-negligible of order }1,\\
&\widetilde{\Pi }u(\alpha )=\int p(\alpha ,\beta ;h)e^{\frac{i}{h}\psi
(\alpha ,\beta )}u(\beta )m(\beta )^2e^{-2H(\beta )/h}d\beta ,\end{split}
\end{equation}
where $\psi $ is independent of $m$ and of class $S(\widetilde{r}R)$
in a region
$\{ (\alpha ,\beta );\, d_{g}(\alpha ,\beta )\le 1/{\cal O}(1) \}$
and satisfies,
\begin{equation}\label{nore.16}
-\Im \psi (\alpha ,\beta )-H(\alpha )-H(\beta )\asymp
-\left(\frac{\widetilde{r}}{R}|\alpha _x-\beta
_x|^2+\frac{R}{\widetilde{r}}|\alpha _\xi -\beta _\xi |^2  \right).
\end{equation}
Moreover,
\begin{equation}\label{nore.17}
p\in S(m^{-2}h^{-n})\hbox{ is supported in a region }
d_{g}(\alpha ,\beta )\le 1/{\cal O}(1)
\end{equation}
and we have
\begin{equation}\label{nore.18}
\overline{\psi (\alpha ,\beta ) }=-\psi (\beta ,\alpha ),\
\overline{p(\alpha ,\beta ;h)}=p(\beta ,\alpha ;h).
\end{equation}
We refrain from recalling the characterization of $TH(\Lambda _G,m)$
as the approximate null space of a left ideal of pseudodifferential
operators.

We also have a class of pseudodifferential operators of order $m$ (\cite{HeSj86})
$A:\, H(\Lambda _G,\widetilde{m})\to H(\Lambda _G,\widetilde{m}/m)$,
$\forall \widetilde{m}$. Such
an operator has an associated principal symbol $\sigma (A)\in
S(m,\Lambda _G)/S(mh/(\widetilde{r}R),\Lambda _G)$ which determines
the operator $A$ up to an operator of order $mh/(\widetilde{r}R)$ and the
principal symbol map is a bijection between the corresponding quotient
spaces of operators and of symbols. We also have the usual result for
the composition modulo negligible operators.

\par When $P$ is an $h$-differential operator as in
(\ref{opesc.5})--(\ref{opesc.13}) with coefficients that are
holomorphic near $\pi _x\mathrm{supp\,}G$, then $P$ is an
$h$-pseudodifferential operator of order $m=m_0(x)(\widetilde{r}(x,\xi
)/r(x))^{N_0}$, associated to $\Lambda _G$ and the corresponding
principal symbol is
\begin{equation}\label{nore.19}
{{p}_\vert}_{\Lambda _G}.
\end{equation}

According to Proposition 7.3 in \cite{HeSj86} the classes $\{ \Pi b\Pi
;\, b\in S(m)\}$ and $\{ TAT^{-1}\Pi ;\, A\hbox{ is an }h\hbox{-pseudo
  of order }m\hbox{ associated to }\Lambda _G
 \}$ coincide modulo negligible operators of order $m$. Moreover, $b$
 and $A$ are related by
\begin{equation}\label{nore.20}
b\equiv \sigma _A\hbox{ mod }S\left(\frac{mh}{\widetilde{r}R} \right) .
\end{equation}

\par Now, let $G_0$, $G_1$ be two functions with the properties
of $G$ above. Then (see the beginning of Chapter 7 in \cite{HeSj86})
there exists a smooth real bijective canonical transformation $\kappa :\Lambda
_{G_0}\to \Lambda _{G_1}$ such that, writing $(x,\xi )=\kappa (y,\eta
)$, we have
$$
x-y\in S(R),\ \xi -\eta \in S(\widetilde{r}),
$$
either as functions of $(y,\eta )\in \Lambda _{G_0}$ or of $(x,\xi
)\in \Lambda _{G_1}$. We can then define Fourier integral operators of
order $m$, associated to $\kappa $; $A={\cal O}(1)$: $H(\Lambda
_{G_0},\widetilde{m})\to H(\Lambda _{G_1},\widetilde{m}/m),$ $\forall
\widetilde{m}$. Such operators have the usual composition result up to
negligible operators. Moreover, we have the usual notion of elliptic
operators: If $U:H(\Lambda _{G_0},m)\to H(\Lambda _{G_1},1)$ is an
elliptic Fourier integral operator of order $m$, then (for $h$ small
enough) $U$ is bijective and the inverse is an elliptic Fourier
integral operator of order $m^{-1}$ associated to $\kappa ^{-1}$ up to
a negligible operator of order $m^{-1}$. We also have a corresponding
Egorov's theorem: With $U$ as above, let $A$ be a pseudodifferential
operator of order $\widehat{m}$ associated to $\Lambda _{G_1}$. Then
$B=U^{-1}AU$ is a pseudodifferential operator of order $\widehat{m}$
, associated to
$\Lambda _{G_0}$
(up to a negligible operator of the same order), and the principal
symbols are related by
\begin{equation}\label{nore.21}
\sigma _B=\sigma _A\circ \kappa .
\end{equation}

We now specify the above in the case when
\begin{equation}\label{nore.21.2}G_0=0,\ G_1=G\end{equation}
 and in
doing so we go slightly beyond \cite{HeSj86}. Since there will be
several different symplectic frameworks, let us denote the standard
real Hamilton field of $G$ on ${\bf R}^{2n}$, by $H_G^{{\bf
    R}^{2n}}$. Recall that
\begin{equation}\label{nore.21.5}
\Lambda _{\upsilon G}=\{ \rho \in {\bf C}^{2n};\, \Im\rho = \upsilon H_G^{{\bf
    R}^{2n}}(\Re\rho ) \}.
\end{equation}

\par
We let $\sigma =\sum_1^n d\xi _j\wedge dx_j$ denote the
  complex symplectic form on ${\bf C}^n_x\times {\bf C}^n_\xi $. The
  real and imaginarty parts $\Re \sigma $ and $\Im \sigma $ are real
  symplectic forms. When $f$ is a real $C^1$ function on some open
  subset of ${\bf C}^{2n}$, we let $H_f^{\Re \sigma }$ and $H_f^{\Im
    \sigma }$ denote the corresponding Hamilton fields. In general, if $r=p+iq$ is
differentiable with complex-linear differential at some point, then at
that point (cf.\ \cite{Sj82}, (11.5), (11.6)),
\begin{equation}\label{nore.21.7}
\widehat{H}_r=H_q^{\Im \sigma },\ \ J\widehat{H}_r=H^{\Im \sigma }_p.
\end{equation}
Here, $J=\hbox{multiplication of
  tangent vectors with }i$,
$H_r$ denotes the complex Hamilton field for $\sigma $ (of type 1,0)
and the hat indicates that we take the corresponding real vector field;
$\widehat{H}_r=H_r+\overline{H}_r$,
$H_r=r'_\zeta \cdot \partial _z-r'_{z}\cdot \partial _{\zeta }$.

\par Returning to (\ref{nore.21.5}), if $G(\rho )=G(\Re \rho )$ is considered as a function on ${\bf
  C}^{2n}$, we have
$$
H_G^{\Im \sigma }=JH_G^{{\bf R}^{2n}}.
$$
Then we can view the family $\Lambda _{\upsilon G}$ as obtained from $\Lambda
_0$ by deformation with the field
$$
\nu _\upsilon ={{H_G^{\Im \sigma }}_\vert}_{\Lambda _{\upsilon G}}.
$$
Since $\Lambda _{\upsilon G}$ is $I$-Lagrangian and we get the same
deformation is we modify $\nu _\upsilon $ by adding a field tangent to
$\Lambda _{\upsilon G}$, we can replace $\nu _\upsilon $ with $\widetilde{\nu }_\upsilon ={{H^{\Im \sigma
    }_{F_\upsilon }}_\vert}_{\Lambda _{\upsilon G}}$, if $F_\upsilon $ is real, smooth and
$F_\upsilon =G$ on $\Lambda _{\upsilon G}$.

\par Let $\widetilde{G}_\upsilon $ be an almost holomorphic extension from $\Lambda
_{\upsilon G}$ of ${{G}_\vert}_{\Lambda _{\upsilon G}}$. Then
at $\Lambda _{\upsilon G}$,
$$J\widehat{H}_{\widetilde{G}_\upsilon }=H^{\Im \sigma }_{\Re
  \widetilde{G}_\upsilon }\equiv H^{\Im \sigma }_G\ \mathrm{mod}\ T\Lambda _{\upsilon G},
$$
by (\ref{nore.21.7}),
so $J\widehat{H}_{\widetilde{G}_\upsilon }$ generates the family $\Lambda
_{\upsilon G}$ by deformation from $\Lambda _0$.

\par $\widetilde{G}_\upsilon $ can be constructed in the following way:
Consider the map
$$
\theta =\theta _\upsilon :\, {\bf R}^{2n}\ni \rho \mapsto \rho +i\upsilon H_G^{{\bf
    R}^{2n}}(\rho )=:\rho +i\gamma _\upsilon (\rho )\in {\bf C}^{2n}.
$$
For $k\in {\bf N}$, $\partial _\upsilon ^k\gamma _\upsilon $ is of class $S(1)$ for
the metric $g$. (Here we use that $H_G$ is of class
  $S(1)$ for the metric $g$.) Thus $\partial _\upsilon ^k\theta_\upsilon  $ is of class
$\dot{S}(1)$ when $k=0$ and of class $S(1)$ when $k\ge 1$.

\par Let $$\widetilde{\theta }_\upsilon :{\bf C}^{2n}\to {\bf C}^{2n}$$ be an
almost holomorphic extension of $\theta _\upsilon $ with the same symbol properties and let
$\widetilde{G}\in S(\widetilde{r}R)$ be an almost holomorphic
extension of $G$. We notice that $\widetilde{\theta }_\upsilon $ is a local
diffeomorphism and that
$\widetilde{\theta }_\upsilon ^{-1}:\, \mathrm{neigh\,}(\Lambda _{\upsilon G},{\bf
  C}^{2n})\to {\bf C}^{2n}$
is almost holomorphic at $\Lambda _{\upsilon G}$ with the same symbol
properties. Then
$\widetilde{G}_\upsilon :=\widetilde{G}\circ \widetilde{\theta }_\upsilon ^{-1}$ has
the required properties. One can also see that it can be defined in a
$1/{\cal O}(1)$-neighborhood of $\Lambda _{\upsilon G}$ for $g$, and be of class
$S(\widetilde{r}R)$ there with all its $t$-derivatives. Using
$\widetilde{G}_\upsilon $, we get a smooth family of canonical transformations
$\kappa _\upsilon :\Lambda _0\to \Lambda _{\upsilon G}$ by integration of
$$
\dot{\kappa }_\upsilon (\rho )=H_{i\widetilde{G}_\upsilon }(\kappa _\upsilon (\rho )),\ \rho
\in \Lambda _0\hbox{ (identifying }H_{i\widetilde{G}_\upsilon }\simeq J\widehat{H}_{\widetilde{G}_\upsilon }).
$$
In this way $\kappa _\upsilon $ is defined in a $1/{\cal O}(1)$-neighborhood of
$\Lambda _0$ for $g$ and almost holomorphic at
$\Lambda _0$. $\kappa _\upsilon \in \dot{S}(1)$, $\partial _\upsilon ^k\kappa _\upsilon \in
S(1)$ for $k\ge 1$.

Write $G_\upsilon ^\ell =\widetilde{G}_\upsilon $ and let $G_\upsilon ^r$ be the almost
holomorphic function at $\Lambda _0$ which is given by
\begin{equation}\label{nore.22}
G_\upsilon ^\ell \circ \kappa _\upsilon =G_\upsilon ^r.
\end{equation}
We have
$$
\partial _\upsilon ^kG_\upsilon ^\ell,\, \partial _\upsilon ^kG_\upsilon ^r\in S(\widetilde{r}R)
\hbox{ for }k\ge 0.
$$

Then on $\Lambda _{\upsilon G}$:
$$
H_{G_\upsilon ^\ell}=\left(\kappa _\upsilon  \right)_*H_{G_\upsilon ^r},
$$
where $(\kappa _\upsilon )_*$ denotes the operation of push forward of vector fields.

\par Let ${\cal G}_\upsilon ^r$, ${\cal G}_\upsilon ^\ell$ be
pseudodifferential operators of order $\widetilde{r}R$ associated to
$\Lambda _0$, $\Lambda _{\upsilon G}$ with principal symbols
$G_\upsilon ^r$ and $G_\upsilon ^\ell$ respectively. We can also
assume that $\partial _\upsilon ^k {\cal G}_\upsilon ^r$ is a
pseudodifferential operator of order $\widetilde{r}R$ for all $k$.
Then we have elliptic Fourier integral operators $U_\upsilon $,
$\widetilde{U}_\upsilon $ of order $1$ associated to
$\kappa _\upsilon $, such that
\begin{equation}\label{nore.23}
hD_\upsilon U_\upsilon +iU_\upsilon {\cal G}_\upsilon ^r=K_\upsilon ^r,\ U_0=1,
\end{equation}
\begin{equation}\label{nore.24}
hD_\upsilon \widetilde{U}_\upsilon +i{\cal G}_\upsilon ^\ell\widetilde{U}_\upsilon =K_\upsilon ^\ell,\ \widetilde{U}_0=1,
\end{equation}
where $K_\upsilon ^r$, $\partial _\upsilon ^kK_\upsilon ^r$,
$K_\upsilon ^\ell$, $\partial _\upsilon ^kK_\upsilon ^\ell$ are negligible operators of order
$\widetilde{r}R$. This is a straight forward WKB-solution of Cauchy
problems within the framework of \cite{HeSj86}. Now replace
${\cal G}_\upsilon ^r$ with
${\cal G}_\upsilon ^r+iU_\upsilon ^{-1}K_\upsilon ^r$ and
${\cal G}_\upsilon ^\ell$ with
${\cal G}_\upsilon ^\ell+iK_\upsilon ^\ell \widetilde{U}_\upsilon
^{-1}$ and notice that $U_\upsilon ^{-1}K_\upsilon ^r$ and $K_\upsilon
^\ell \widetilde{U}_\upsilon ^{-1}$ are negligible
of order 1 with all their $\upsilon $-derivatives. Then we get,
\begin{equation}\label{nore.25}
hD_\upsilon U_\upsilon +iU_\upsilon {\cal G}_\upsilon ^r= 0,\ U_0=1,
\end{equation}
\begin{equation}\label{nore.26}
hD_\upsilon \widetilde{U}_\upsilon +i{\cal G}_\upsilon ^\ell\widetilde{U}_\upsilon = 0,\ \widetilde{U}_0=1.
\end{equation}
If we choose first ${\cal G}_\upsilon ^r$, $U_\upsilon $ in
(\ref{nore.25}) and then determine ${\cal G}_\upsilon ^\ell$ by
\begin{equation}\label{nore.27}
{\cal G}_\upsilon ^\ell U_\upsilon =U_\upsilon {\cal G}_\upsilon ^r
\end{equation}
(in formal agreement with Egorov's theorem and (\ref{nore.22})),
we get $\widetilde{U}_\upsilon =U_\upsilon $ in (\ref{nore.26}):
\begin{equation}\label{nore.28}
hD_\upsilon U_\upsilon +i{\cal G}_\upsilon ^\ell U_\upsilon =0,\ U_0=1.
\end{equation}

\par Using also (\ref{nore.31}) below, we get
\begin{equation}\label{nore.28.5}
(hD_\upsilon )^k{\cal G}_\upsilon ^\ell =U_\upsilon \left(hD_\upsilon  -i\mathrm{ad\,}_{{\cal G}_\upsilon ^r}
\right)^k({\cal G}_\upsilon ^r)U_\upsilon ^{-1},
\end{equation}
which shows that for every $k\ge 0$, $\partial _\upsilon ^k{\cal G}_\upsilon ^\ell$ is
the sum of a pseudodifferential operator and a negligible operator of
order $\widetilde{r}R$. Here $\mathrm{ad}_A(B)$ denotes the commutator $[A,B]$.
\par Let $P$ be an $h$-differential operator of order
  $m=m_0(x)(\widetilde{r}/r)^{N_0}$ as in (\ref{opesc.5})--(\ref{opesc.13}), so that $P$ is also
an $h$-pseudodifferential operator
\begin{equation}\label{nore.29}
P:\, H(\Lambda _{\upsilon G},m)\to H(\Lambda _{\upsilon G},1)
\end{equation}
with principal symbol ${{p}_\vert}_{\Lambda _{\upsilon G}}$ as in
(\ref{nore.19}). Here we also assume that the
  coefficients of $P$ are analytic in a neighborhood of the $x$-space
  projection of $\mathrm{supp\,}G$.
The study of $P$ in (\ref{nore.29}) is equivalent to
that of
\begin{equation}\label{nore.30}
V_\upsilon PU_\upsilon =:P_\upsilon :H(\Lambda _0,m)\to H(\Lambda _0,1),\hbox{ where }V_\upsilon =U_\upsilon ^{-1}.
\end{equation}
We will often write $H(\Lambda _{\upsilon G})=H(\Lambda _{\upsilon G},1)$. Notice that
$$
(Pu|v)_{H(\Lambda _{\upsilon G})}=(P_\upsilon V_\upsilon u|V_\upsilon v),
$$
if we define the norm on $H(\Lambda _{\upsilon G})$ by
\begin{equation}
\label{nore'.31}
\|v\|_{H(\Lambda _{\upsilon G})}=\|V_\upsilon v\|_{L^2},
\end{equation}
making the operators $U_\upsilon :L^2\to H(\Lambda _{\upsilon G})$ and $V_\upsilon :
H(\Lambda _{\upsilon G}) \to L^2$ unitary. This norm is uniformly
equivalent to the one in (\ref{nore.14}).
\begin{remark}\label{nore'3} Let $\Omega \Subset {\bf R}^n$ be open
  and assume that $G(x,\xi )=0$ whenever $x\in \widetilde{\Omega }$,
  where $\widetilde{\Omega }$ is a neighborhood of $\overline{\Omega
  }$. We can choose first the formal pseudodifferential operator
  part of ${\cal G}^r_\upsilon $ with symbol equal to zero over
  $\widetilde{\Omega }$. Then formally, $U_\upsilon $ is a Fourier
  integral operator equal to 1 on $L^2(\widetilde{\Omega })$. It
  follows from the way Fourier integral operators are defined in
  {\rm \cite{HeSj86}}, that we can choose a realization of $U_\upsilon $
  (that we denote with the same symbol) such that
\begin{equation}
\label{nore'.32}
U_\upsilon u=u,\hbox{ when }u\in C_0^\infty (\Omega ).
\end{equation}
As before, let $V_\upsilon =U_\upsilon ^{-1}$. Applying $V_\upsilon $
to {\rm (\ref{nore'.32})}, we get
\begin{equation}
\label{nore'.33}
V_\upsilon u=u,\hbox{ when }u\in C_0^\infty (\Omega ).
\end{equation}
After that we modify ${\cal G}^\ell_\upsilon $, ${\cal G}^r_\upsilon $
with negligible terms as above, so that {\rm (\ref{nore.25})},
{\rm (\ref{nore.28})} hold. From {\rm (\ref{nore'.31})}, we now get
\begin{equation}
\label{nore'.34}
\| v \|_{H(\Lambda _{\upsilon G})}=\| v\|_{L^2},\ v\in C_0^\infty
(\Omega ).
\end{equation}

\end{remark}

\par
From (\ref{nore.25}), we first notice that
\begin{equation}\label{nore.31}
hD_\upsilon V_\upsilon -i{\cal G}_\upsilon ^r V_\upsilon =0,
\end{equation}
and then that
\begin{equation}\label{nore.32}
h\partial _\upsilon P_\upsilon =[P_\upsilon ,{\cal G}_\upsilon ^r].
\end{equation}

We already know that $P_\upsilon =P_\upsilon '+N_\upsilon $ where $P'_\upsilon $, $N_\upsilon $ are continuous
in $\upsilon $ with values in the pseudodifferential and negligible operators
respectively, of order $m$. See the statements 1--3 after
Theorem 7.2 in \cite{HeSj86}. Write (\ref{nore.32}) as
$$
\left( \partial _\upsilon +\frac{1}{h}\mathrm{ad}_{{\cal G}_\upsilon ^r} \right) P_\upsilon =0,
$$
which implies,
$$
\left( \partial _\upsilon +\frac{1}{h}\mathrm{ad}_{{\cal G}_\upsilon ^r} \right)^k
P_\upsilon =0,\ k=1,2,...
$$
From this we deduce that $\partial _\upsilon ^kP_\upsilon $ has the same
structure. From Taylor's formula with integral remainder, we get
$$
P_\upsilon =P_{\upsilon ,k}+N_{\upsilon ,k}
$$
for every $k\in {\bf N}$, where $\upsilon \mapsto P_{\upsilon ,k}$, $\upsilon \mapsto N_{\upsilon ,k}$
are of class $C^k$ with values in the pops of order $m$ and nops of
order $m$ respectively.

\par On the other hand, since the machineries are based on the
(complex) method of stationary phase, we also know that the Weyl
symbols of $P_\upsilon $ and $P_{\upsilon ,k}$ are of the form
\begin{equation}\label{nore.33}
\sim \sum_{0}^\infty h^jp_j(\upsilon ,x,\xi ),
\end{equation}
where $p_j\in S(m/(\widetilde{r}R)^j)$ are independent of $k$ and
therefore smooth in $\upsilon $. We conclude that
$P_\upsilon =P_\upsilon '+N_\upsilon $, where $P'_\upsilon $,
$N_\upsilon $ are smooth in $\upsilon $ with values in the pops and
nops respectively, of order $m$.

The equation for $p_0(\upsilon ,x,\xi )=p_\upsilon (x,\xi )$ is
$$
\partial _\upsilon p_\upsilon =iH_{G_\upsilon ^r}p_\upsilon ,\ p_{\upsilon =0}=\hbox{the principal symbol of }P.
$$
We recover the fact (already known by Egorov's theorem) that
\begin{equation}\label{nore.34}
p_\upsilon (\rho )=p(\kappa _\upsilon (\rho ))=:\widetilde{p}_\upsilon .
\end{equation}
Indeed, the two symbols are equal when $\upsilon =0$ and
\[
\begin{split}
\partial _\upsilon \widetilde{p}_\upsilon (\rho )&=\langle \dot{\kappa }_\upsilon (\rho
),dp(\kappa _\upsilon (\rho ))\rangle = i\langle (\kappa _\upsilon )_*
H_{G_\upsilon ^r},dp(\kappa _\upsilon (\rho ))\rangle\\
&=\langle H_{G_\upsilon ^r},\kappa _\upsilon ^*(dp(\kappa _\upsilon (\rho )))\rangle=
i\langle H_{G_\upsilon ^r},d(p\circ \kappa _\upsilon (\rho ))\rangle =iH_{G^r_\upsilon }\widetilde{p}_\upsilon .
\end{split}
\]

\par From the construction of $\kappa _\upsilon $ prior to (\ref{nore.22}), we
see that
\begin{equation}\label{nore.35}
\kappa _\upsilon (\rho )=\rho +i\upsilon H_G(\rho )+{\cal O}(\upsilon ^2),
\end{equation}
where the remainder is ${\cal O}(\upsilon ^2)$ as a smooth function of $\upsilon $
with values in $S(1)$ (with respect to the metric $g$).
Using this in (\ref{nore.34}), we get
\begin{equation}\label{nore.36}
p_\upsilon (\rho )=p(\rho )-i\upsilon H_pG+{\cal O}(\upsilon ^2m)
\end{equation}
in the sense of smooth functions $\mathrm{neigh\,}(0,{\bf R})\ni
\upsilon \mapsto S(m)$.

\par From (\ref{nore.35}) we get
\begin{equation}\label{nore.37}
\widetilde{\rho }:=\Re \kappa _\upsilon (\rho )=\rho +{\cal O}(\upsilon ^2),\hbox{ so
  that by (\ref{nore.21.5}) }\kappa _\upsilon (\rho )=\widetilde{\rho }+i\upsilon H_G(\widetilde{\rho })
\end{equation}
and hence,
\begin{equation}\label{nore.38}
p_\upsilon (\rho )=p(\widetilde{\rho }+i\upsilon H_G(\widetilde{\rho })).
\end{equation}

\par If $G=G_s\in S(\widetilde{r}R)$ is real and depends smoothly on
$s\in \mathrm{neigh\,}(0,{\bf R})$, then the smooth dependence on $s$
diffuses into the whole construction above and we get (with the
obvious notation) that
$P_{\upsilon ,s}:=V_{\upsilon ,s}PU_{\upsilon ,s}$ in (\ref{nore.30}) is a smooth function of
$(\upsilon ,s)$ with values in the pops+nops of order $m$. (Recall
that we sometimes abbreviate: pop$=$pseudodifferential operator, nop$=$negli\-gible operator.)

\par Let $\Pi $ be the orthogonal projection $L^2(\Lambda _0,M_0)\to
TL^2({\bf R}^n)$ (cf.\ Remark \ref{nore3}) whose properties were recalled in
(\ref{nore.15})--(\ref{nore.18}). Combining the above properties of
$P_{\upsilon ,s}$ with Proposition 7.3 in \cite{HeSj86}, we get
\begin{equation}\label{nore.39}
TP_{\upsilon ,s}T^{-1}\Pi =\Pi P^\mathrm{top}_{\upsilon ,s}\Pi +N_{\upsilon ,s}
\end{equation}
where $N_{\upsilon ,s}$ is smooth in $(\upsilon ,s)$ with values in the nops of order
$m$ and
\begin{equation}\label{nore.40}
P_{\upsilon ,s}^\mathrm{top}(\rho ;h)\sim \sum_0^\infty  h^k p^k_{\upsilon ,s}(\rho
),\ \rho \in \Lambda _0,
\end{equation}
in $C^\infty (\mathrm{neigh\,}(0,0),S(m))$ and with the
general term in the sum belonging to $C^\infty
(\mathrm{neigh\,}(0,0),h^kS(m/(\widetilde{r}R)^k))$. Here, as already
recalled in (\ref{nore.20}),
\begin{equation}\label{nore.41}
p_{\upsilon ,s}^0=p_{\upsilon ,s}
\end{equation}
is the principal symbol of $P_{\upsilon ,s}$.

\par From (\ref{nore.39}) we infer that
\begin{multline}
\label{nore.42}
(PU_{\upsilon ,s}u|U_{\upsilon ,s}v)_{H(\Lambda _{\upsilon G_s})} =
(P_{\upsilon ,s}u|v) \\ =\int_{\Lambda _0}P^\mathrm{top}_{\upsilon ,s}(\rho ;h)Tu(\rho
)\cdot \overline{Tv(\rho )}L_0(d\rho )+(N_{\upsilon ,s}u|v),
\end{multline}
for $u,v\in H(\Lambda _0,m^{1/2})$ (cf.\ Remark \ref{nore3} and
(\ref{nore'.31})). This can be expressed in the coordinates
$\widetilde{\rho }$ in (\ref{nore.37}). Here the scalar product in the
middle is the one of $L^2({\bf R}^n)$. The Jacobian satisfies
\begin{equation}\label{nore.43}
J_{\upsilon ,s}(\widetilde{\rho }):=\frac{d\rho }{d\widetilde{\rho }}=1+{\cal
  O}(\upsilon ^2) \hbox{ in }S(1)
\end{equation}
and is a smooth function of $\upsilon ,s$. We can write
\begin{equation}\label{nore.44}
P_{\upsilon ,s}^\mathrm{top}(\rho
;h)=\widetilde{P}_{\upsilon ,s}^\mathrm{top}(\widetilde{\rho };h),
\end{equation}
so (\ref{nore.42}) becomes
\begin{equation}\label{nore.45}
(P_{\upsilon ,s}u|v)=\int_{\Lambda
  _0}\widetilde{P}^\mathrm{top}_{\upsilon ,s}(\widetilde{\rho
};h)\widetilde{T}u(\widetilde{\rho
})\cdot \overline{\widetilde{T}v(\widetilde{\rho })}J_{\upsilon ,s}(\widetilde{\rho
})L_0(d\widetilde{\rho })+(N_{\upsilon ,s}u|v),
\end{equation}
where $\widetilde{T}u(\widetilde{\rho }):=Tu(\rho
)$. $\widetilde{P}^\mathrm{top}_{\upsilon ,s}(\widetilde{\rho };h)$ has an
asymptotic expansion as in (\ref{nore.40}) with
$\widetilde{p}^k_{\upsilon ,s}(\widetilde{\rho })=p^k_{\upsilon ,s}(\rho )$ and the
advantage with (\ref{nore.45}) is that
$\widetilde{p}_{\upsilon ,s}=\widetilde{p}^0_{\upsilon ,s}$ satisfies
\begin{equation}\label{nore.46}
\widetilde{p}_{\upsilon ,s}(\widetilde{\rho })=p(\widetilde{\rho
}+i\upsilon H_{G_s}(\widetilde{\rho })).
\end{equation}
All this remains valid if we replace the single parameter $s$ by
$s=(s_1,...,s_k)\in \mathrm{neigh}(0,{\bf R}^k)$.

\par If $p$ is real-valued on $\Lambda _0$, we get
\begin{equation}\label{nore.47}
\begin{split}
\Im \widetilde{p}_{\upsilon ,0}(\widetilde{\rho })&=\upsilon H_{G_0}(p)+{\cal O}(m\upsilon ^3\|
H_{G_0}\|_g^3)\\
&=-\upsilon H_p(G_0)+{\cal O}(m\upsilon ^3\| H_{G_0}\|_g^3).
\end{split}
\end{equation}

We summarize the results in this section.
\begin{prop}\label{nore4}
Let $P$ be an $h$-differential operator of order $m(x,\xi
)=m_0(x)(\widetilde{r}/r)^{N_0}$ as in
{\rm (\ref{opesc.5})}--{\rm (\ref{opesc.13})}. Let $G\in S(\widetilde{r}R)$
satisfy {\rm (\ref{nore.7.5})} and assume that the coefficients of $P$ are
analytic in a neighborhood of the $x$-space projection of
$\mathrm{supp\,}(G-g_0)$. Then for $0\le \upsilon \le 1$, $P:H(\Lambda
_{\upsilon G},m)\to H(\Lambda _{\upsilon G},1)$ is the sum of an
$h$-pop and a nop both of order $m$, depending smoothly on $\upsilon $. The
principal symbol is equal to ${{p}_\vert}_{\Lambda _{\upsilon G}}$.

\par We can find a canonical transformation
$\kappa _\upsilon :\Lambda _0\to \Lambda _{\upsilon G}$ of class $\dot{S}(1)$ for the
metric $g$, depending smoothly on $\upsilon \in [0,1]$ in that class,
satisfying {\rm (\ref{nore.35})} and an operator
$U_\upsilon :H(\Lambda _0,1)\to H(\Lambda _{\upsilon G},1)$ of the form $U'_\upsilon +N_\upsilon $,
where $U'_\upsilon $ is an elliptic Fourier integral operator of order 1
associated to $\kappa _\upsilon $ and $N_\upsilon $ is a nop of order 1, with
$U_0=\mathrm{id}$, such that $U_\upsilon ^{-1}=:V_\upsilon =V_\upsilon '+M_\upsilon $ has the
analogous properties (with $\kappa _\upsilon $ replaced with
$\kappa _\upsilon ^{-1}$, such that $P_\upsilon :=V_\upsilon PU_\upsilon $ has the following
properties:
\begin{itemize}
\item $P_\upsilon $ is the sum of a pop and a nop of order $m$, both
  depending smoothly on $\upsilon $ in the corresponding spaces of operators.
\item The principal symbol of $P_\upsilon $ is given by {\rm (\ref{nore.38})},
  {\rm (\ref{nore.37})}.
\item Writing the Weyl symbol of $P_\upsilon $ as $\sim \sum_0^\infty
  h^jp_j(\upsilon ,x,\xi )$, we have
  $$
  \mathrm{supp\,}(p_j(\upsilon ,\cdot )-p_j(0,\cdot ))\subset \mathrm{supp\,}G,\quad j\ge 0
  $$.
\item We have the Toeplitz representation {\rm (\ref{nore.39})},
  {\rm (\ref{nore.40})}, {\rm (\ref{nore.42})} (without the parameters $s$ for the
  moment), where the leading symbol in {\rm (\ref{nore.40})} is equal to the
  one of $P_\upsilon $ as a pseudodifferential operator, i.e. $p_0(\upsilon ,x,\xi )$.
\end{itemize}

\par When $G$ depends smoothly on additional parameters $s\in
\mathrm{neigh\,}(0,{\bf R}^k)$ we have the corresponding smooth
dependence of all terms above.

\par When $G\in S(m_G)$ satisfies the more special condition
{\rm (\ref{nore.7.7})}, we can choose $U_\upsilon $ so that $P_\upsilon -P$ is of order
$mm_G/(\widetilde{r}R)$.
More precisely, $\partial
  _\upsilon P_\upsilon \sim \sum_0^\infty h^j\partial _\upsilon p_j$ in
  $S(mm_G/(\widetilde{r}R))$, $\partial _\upsilon p_j\in
  S(mm_G/(\widetilde{r}R)^{j+1})$. A similar statement holds for
  $P^\mathrm{top}_{\upsilon ,s}$, $\widetilde{P}^\mathrm{top}_{\upsilon ,s}$ and we
  here retain that $\partial _t(P^{\mathrm{top}}_{\upsilon ,s}-p_{\upsilon ,s})\in S(hmm_G/(\widetilde{r}R)^2)$.

\end{prop}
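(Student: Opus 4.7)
The plan is to assemble the statement from the constructions already made above together with material from~\cite{HeSj86}, notably Proposition~7.3 there. The first assertion, that $P:H(\Lambda _{\upsilon G},m)\to H(\Lambda _{\upsilon G},1)$ is a pop of order $m$ with principal symbol ${{p}_\vert}_{\Lambda _{\upsilon G}}$, was recalled around (\ref{nore.29}). The analyticity hypothesis on the coefficients of $P$ in a neighborhood of $\pi _x\mathrm{supp\,}(G-g_0)$ is precisely what allows the required contour deformation in $x$; outside $\mathrm{supp\,}(G-g_0)$ one has $G=g_0(x)$, and there $H(\Lambda _{\upsilon G},m)$ reduces to an exponentially weighted Sobolev space on which a differential operator acts without any analyticity requirement.

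For the conjugation part, I would first construct $\kappa _\upsilon :\Lambda _0\to \Lambda _{\upsilon G}$ by integrating $H_{i\widetilde{G}_\upsilon }$ as in the passage from (\ref{nore.21.5}) through (\ref{nore.22}); smoothness in $\upsilon $ and the expansion (\ref{nore.35}) follow from the symbol properties of $\theta _\upsilon $ and $\widetilde{\theta }_\upsilon ^{-1}$ together with $G\in S(\widetilde{r}R)$. I would then build $U_\upsilon ,V_\upsilon $ by the WKB solution of (\ref{nore.25}), (\ref{nore.28}), with ${\cal G}^r_\upsilon $ and ${\cal G}^\ell_\upsilon $ pops of order $\widetilde{r}R$ coupled by (\ref{nore.27}) and the residual negligible errors absorbed as after (\ref{nore.24}). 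The conjugated operator $P_\upsilon =V_\upsilon PU_\upsilon $ then satisfies the ODE (\ref{nore.32}). Iterating this ODE and using that the adjoint action of pops on pops preserves the pop class modulo nops, one obtains that $\partial _\upsilon ^kP_\upsilon $ is a pop plus nop of order $m$ for every $k$; combined with the stationary phase Weyl expansion (\ref{nore.33}), whose coefficients are independent of the order of differentiation in $\upsilon $, this yields the smooth-in-$\upsilon $ decomposition $P_\upsilon =P'_\upsilon +N_\upsilon $. The principal symbol formula is then Egorov's theorem in the form (\ref{nore.34})--(\ref{nore.38}), and the Toeplitz representation is Proposition~7.3 of~\cite{HeSj86} applied via Remark~\ref{nore3}, giving the Bergman-type structure (\ref{nore.39})--(\ref{nore.42}). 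The support statement on $p_j(\upsilon ,\cdot )-p_j(0,\cdot )$ is enforced by choosing ${\cal G}^r_\upsilon $ with symbol supported in $\mathrm{supp\,}G$, which makes $\kappa _\upsilon $ the identity outside. Smooth dependence on additional parameters $s$ is automatic since each step depends smoothly on the data feeding it.

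The main delicate point is the final bullet under the sharper assumption (\ref{nore.7.7}). The plan is to write $G=g(x)+(G-g)$ with $G-g\in S(m_G)$ supported where $|\xi |\le {\cal O}(r(x))$, and to arrange ${\cal G}^r_\upsilon $ so that its principal symbol lies in $S(m_G)$; the $g(x)$-part can be peeled off separately as a weight on the $x$-side which does not actually deform $\Lambda _{\upsilon G}$ in $\xi $. The commutator identity $h\partial _\upsilon P_\upsilon =[P_\upsilon ,{\cal G}^r_\upsilon ]$ from (\ref{nore.32}) then gains the standard Poisson-bracket factor $h/(\widetilde{r}R)$, giving $\partial _\upsilon P_\upsilon $ of order $mm_G/(\widetilde{r}R)$ with the corresponding $h^j/(\widetilde{r}R)^j$-improvement on each $\partial _\upsilon p_j$; the same mechanism propagates through (\ref{nore.44}) to $\widetilde{P}^{\mathrm{top}}_{\upsilon ,s}$ and to $\partial _t(P^{\mathrm{top}}_{\upsilon ,s}-p_{\upsilon ,s})$. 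The hardest part here is the bookkeeping needed to track $m_G$ through the Fourier integral operator composition so that the gain $S(mm_G/(\widetilde{r}R))$ truly survives the absorption of the negligible remainders, rather than being visible only at the level of principal symbols.
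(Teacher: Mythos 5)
Your proposal is correct and follows the paper's approach essentially exactly: the proposition is a summary of the constructions made earlier in Section~\ref{nore} (the paper even introduces it with ``We summarize the results in this section''), and you correctly re-trace the logic — the boundedness and pop structure from around (\ref{nore.29}), the construction of $\kappa_\upsilon$ and $U_\upsilon$ via (\ref{nore.21.5})--(\ref{nore.28}), the ODE (\ref{nore.32}) and its iteration combined with the stationary-phase independence of the Weyl coefficients (\ref{nore.33}) to get the smooth-in-$\upsilon$ splitting, Egorov for the principal symbol, and Proposition 7.3 of \cite{HeSj86} for the Toeplitz representation. The paper itself dismisses the final refined bullet with ``follows from an inspection of the proofs,'' so your candid flagging of the bookkeeping there as the hardest part is apt rather than a gap.

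One small slip worth correcting: when you say the $g(x)$-part ``does not actually deform $\Lambda_{\upsilon G}$ in $\xi$,'' the opposite is true — a $\xi$-independent $G=g(x)$ gives $H_g\simeq(0,-g'(x))$, so the deformation is entirely in the $\xi$-direction (which is exactly what turns $H(\Lambda_{\upsilon g},m)$ into the weighted space $L^2(e^{-2\upsilon g/h}\,dx)$). What fails to happen is the deformation of the $x$-contour, and that is why no analyticity of the coefficients is needed over $\pi_x\mathrm{supp}\,g_0$. Your first paragraph states this correctly; the final paragraph reverses the direction.
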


The extension in the last paragraph of the proposition follows from an
inspection of the proofs.

\section{Semi-boundedness in $H(\Lambda_{\upsilon G})$ spaces}\label{sbd}
\setcounter{equation}{0}

We continue the discussion from the preceding section and work with
the representations (\ref{nore.45}), (\ref{nore.46}), where we drop
the tildes until further notice. From (\ref{nore.46}), we get
$$
\partial _sp_{\upsilon ,s}(\rho )=i\upsilon \langle H_{\partial _sG_s}(\rho ),dp(\rho
+i\upsilon H_{G_s}(\rho ))\rangle ,
$$
which can also be written
\begin{equation}\label{sbd.1}
\partial _sp_{\upsilon ,s}(\rho )=-i\upsilon \langle H_p(\rho +i\upsilon H_{G_s}(\rho ) ),
d\partial _sG_s(\rho )\rangle ,
\end{equation}
If $p\in S(m)$ is real-valued on $\Lambda _0$, we get by Taylor
expansion ``in the $g$-metric'',
\begin{equation}\label{sbd.2}
\partial _s\Im p_{\upsilon ,s}(\rho )=-\upsilon \langle H_p(\rho ),d\partial
_sG_s(\rho )\rangle +{\cal O}\left( \upsilon ^3 \frac{m}{\widetilde{r}R}\|
  H_{G_s}\|_g^2 \|d\partial _sG_s\|_{g^*} \right).
\end{equation}
Here the factor $m/(\widetilde{r}R)$ corresponds to the estimate
(\ref{opesc.27.5}) and $g^*$ is the dual metric to $g$:
\begin{equation}\label{sbd.3}
g^* =(\widetilde{r}d\xi )^2+(Rdx)^2,
\end{equation}
so
$$
\| df\|^2_{g^*}=\widetilde{r}^2|\partial _\xi f |^2+R^2 |\partial
_xf|^2={\cal O}(\widetilde{r}^2R^2),
$$
when $f\in \dot{S}(\widetilde{r}R)$. Hence, if we assume a uniform bound on
$\partial _sG_s$ in $S(\widetilde{r}R)$, (\ref{sbd.2}) simplifies
to
\begin{equation}\label{sbd.4}
\partial _s\Im p_{\upsilon ,s}(\rho )=-\upsilon H_p(\partial _sG_s)(\rho )+{\cal
  O}(m\upsilon ^3\| H_{G_s}\|_g^2).
\end{equation}
In this formula we can take $s=(s_1,...,s_k)$ close to $0$ in ${\bf
  R}^k$ and replace $\partial _s$ with $\partial _{s_k}$:
$$
\partial _{s_k}\Im p_{\upsilon ,s}(\rho )=-\upsilon H_p(\partial _{s_k}G_s)(\rho )+{\cal
  O}(m\upsilon ^3\| H_{G_s}\|_g^2).
$$
Taylor expansion at $s=0$ gives,
\begin{equation}\label{sbd.5}
\partial _{s_k}\Im p_{\upsilon ,s}(\rho )=-\upsilon H_p(\partial
_{s_k}G_s)_{s_k=0}(\rho )+{\cal O}(m\upsilon s_k)+{\cal O}(m\upsilon ^3|s|)+{\cal O}(m\upsilon ^3\|H_{G_0}\|_g^2).
\end{equation}
Writing $s=(s',s_k)$ and integrating (\ref{sbd.5}) from $0$ to $s_k$,
gives
\begin{equation}\label{sbd.6}
\begin{split}
&-\Im p_{\upsilon ,s}(\rho )+\Im p_{\upsilon ,(s',0)}(\rho )=\\
&\upsilon s_k H_p{(\partial _{s_k})}_{s_k=0}G_s+{\cal O}(m\upsilon s_k^2)+{\cal
  O}(m\upsilon ^3|s|s_k)+
{\cal O}(m\upsilon ^3s_k\|H_{G_0}(\rho )\|_g^2).
\end{split}
\end{equation}

We now consider the situation in Proposition \ref{opesc2}. Let
$$
K_0\supset K_1\supset K_2\supset ...
$$
be a sequence of compact $H_p$-convex sets in $\Sigma _{[-\epsilon _0,\epsilon_0]}$ that
contain the trapped set. We choose $K_j$ so that $K_{j+1}$ is
contained in the interior of $K_j$. For $j\in {\bf N}$, let $\chi _j\in C_0^\infty
(]-\epsilon _0,\epsilon _0[;[0,1])$ be equal to 1 on $[-\epsilon
_0/2,\epsilon _0/2]$ and such that $\chi _{j+1}=1$ on
$\mathrm{supp\,}\chi _j$.

\par Let $G_0$ be a modification of $G$ on a bounded subset of $\Sigma
_{[-\epsilon _0,\epsilon_0]}$ as ``$\widetilde{G}$'' in Proposition \ref{opesc2} with $\widetilde{K}$
there equal to $K_0$. Let $G_j$ be constructed similarly with
$\widetilde{K}$ equal to $K_j$ in such a way that $H_pG_{j+1}>0$ in
$\Sigma _{[-\epsilon _0,\epsilon_0]}\setminus K_j$. This implies that $H_pG_{j+1}>0$
on $\mathrm{supp\,}G_j$.

\par Let $G^0_j=\chi _j(p)G_j$. Since $H_pG_j^0=\chi _j(p)H_pG_j$, we
get
\begin{equation}\label{sbd.7}
H_pG_{j+1}^0\ge \frac{m}{{\cal O}(1)}\hbox{ on }\mathrm{supp\,}G_j^0.
\end{equation}
Let
$$
G^{(N)}=G_0^0+hG_1^0+...+h^NG_N^0
$$
and notice that this enters into the framework of Section \ref{nore}:
$$
G^{(N)}=G_s=G_0^0+s_1G_1^0+...+s_NG_N^0,\ s=(h,h^2,...,h^N).
$$
We apply (\ref{nore.45}) (with the tildes dropped since the beginning
of this section) with $\upsilon >0$ small, and recall that we have
(\ref{nore.40}), (\ref{nore.46}) (with the tildes dropped). When going
from $G^{(k-1)}$ to $G^{(k)}$, the leading term $p_{\upsilon ,s}$ changes
according to (\ref{sbd.6}). Writing
$p_\upsilon ^{(k)}=p_{\upsilon ,(h,..,h^k,0,..,0)}$, we get
\begin{equation}\label{sbd.8}
\begin{split}
&-\Im p^{(k)}_\upsilon +\Im p^{(k-1)}_\upsilon \\
&= \upsilon h^kH_p(G_k^0)+{\cal O}(m\upsilon h^{2k})+{\cal O}(m\upsilon ^3h^{k+1})+{\cal
  O}(m\upsilon ^3h^k\| H_{G_0^0}\|_g^2)\\
&=\upsilon h^kH_p(G_k^0)+{\cal O}(m\upsilon h^{k+1})+{\cal O}(m\upsilon ^3h^k\|H_{G_0^0}\|_g^2).
\end{split}
\end{equation}
Also, by (\ref{nore.47}),
\begin{equation}\label{sbd.9}
\begin{split}
-\Im p_{\upsilon ,0}&=\upsilon H_p(G_0^0)+{\cal O}(m\upsilon ^3\| H_{G_0^0}\|_g^3)\\
&=\upsilon (1+{\cal O}(\upsilon ^2))H_p(G_0^0),
\end{split}
\end{equation}
where we also used (\ref{opesc.57}) for $G_0=G^0_0$. Using that
estimate also for the last term in (\ref{sbd.8}) and summing over $k$, we get
\begin{equation}\label{sbd.10}\begin{split}
-\Im p^{(N)}_\upsilon =&\upsilon \left( 1+{\cal O}^{(0)}(\upsilon ^2)+{\cal
  O}^{(1)}(\upsilon ^2h)+...+{\cal O}^{(N)}(\upsilon ^2h^N)\right) H_p(G_0^0)\\
&+ \upsilon \left( hH_pG_1^0+h^2H_pG_2^0+...+h^NH_pG_N^0\right)\\
&+\left( m\upsilon {\cal O}^{(1)}(h^2)+m\upsilon {\cal O}^{(2)}(h^3)+...+m\upsilon {\cal
  O}^{(N)}(h^{N+1}) \right) .
\end{split}
\end{equation}
Here ${\cal O}^{(k)}(\cdot )$ denotes a term which depends on
$G_0^0,...,G_k^0$ but not on $G_{k+1}^0,...$ and whose support is
contained in that of $G_k^0$. We see that after successive
replacements, $G_j^0\mapsto \alpha _jG_j^0$ with $\alpha _j>0$ large
enough, we can achieve that
\[
\begin{split}
&h^2H_pG_2^0+m\upsilon {\cal O}^{(1)}(h^2)\ge 0,\\
&h^3H_pG_3^0+m\upsilon {\cal O}^{(2)}(h^3)\ge 0,\\
&\ \ \ ....
\end{split}
\]
\begin{equation}\label{sbd.11}
-\Im p^{(N)}_\upsilon \ge \upsilon (1+{\cal O}(\upsilon ^2))H_p(G_0^0)-{\cal O}(\upsilon mh^{N+1}).
\end{equation}

\par Now recall (\ref{nore.40}) (after adding and removing the
tildes), where $p^k_{\upsilon ,s}$ are real for $\upsilon =0$. Taylor
  expand each $p_{\upsilon ,s}^k$ to sufficiently high order at $s=0$ and take
$s=(h,h^2,...,h^N)$. Then we get with
$P_\upsilon ^{\mathrm{top},(N)}=P^{\mathrm{top}}_{\upsilon ,(h,..,h^N)}$,
\begin{multline}\label{sbd.12}
-\Im P^{\mathrm{top},(N)}_\upsilon (\rho ;h)=\\
-\Im p^{(N)}_\upsilon +{h\cal
  O}^{(0)}(m\upsilon )+h^2{\cal O}^{(1)}(m\upsilon )+...+h^{N+1}{\cal O}^{(N)}(m\upsilon ).
\end{multline}
Here the factors ${\cal O}^{(j)}$ belong to
  $S(m\upsilon )$. They are independent of $h$ for $j\le N-1$.
By successive replacements $G_j^0\mapsto \alpha _jG_j^0$, we can
achieve, using (\ref{sbd.10}), that
\begin{equation}\label{sbd.13}
-\Im P^{\mathrm{top},(N)}_\upsilon \ge \upsilon (1+{\cal O}(\upsilon ^2))H_p(G_0^0)-{\cal O}(\upsilon mh^{N+1}).
\end{equation}
Hence, by (\ref{nore.45}),
\begin{equation}\label{sbd.14}
\begin{split}
-\Im (P^{(N)}_\upsilon u|u)\ge & \int_{\Lambda _0} \upsilon (1+{\cal
  O}(\upsilon ^2))H_p(G^0_0)\| Tu(\rho )\|^2 J_{\upsilon ,(h,..,h^N)}L_0(d\rho )\\
&-{\cal O}(\upsilon )h^{N+1}\| u\|^2_{H(\Lambda _0,m^{1/2} )}.
\end{split}\end{equation}

\par The replacement $G_j^0\mapsto \alpha _jG_j^0$ does not depend on the value of $N\ge
j$, so we get a full sequence $G_0^0,G_1^0,...$. Consider an asymptotic sum
\begin{equation}\label{sbd.15}
G^0\sim \sum_0^\infty  G_j^0h^j \hbox{ in }S(\widetilde{r}R).
\end{equation}
Then for every $N\ge 1$, $G^0=G^{(N)}+h^{N+1}\widetilde{G}^0_{N+1}$,
$\widetilde{G}^0_{N+1}\in S(\widetilde{r}R)$
and if $P_\upsilon =V_\upsilon PU_\upsilon $ (with the natural definitions of $U_\upsilon $ and
$V_\upsilon =U_\upsilon ^{-1}$) we have the analogue of (\ref{nore.45}) (now with the
tildes dropped),
\begin{equation}\label{sbd.16}
(P_\upsilon u|v)=\int_{\Lambda _0}P_\upsilon ^{\mathrm{top}}(\rho ;h)Tu(\rho ;h)\cdot
\overline{Tv(\rho ;h)}J_\upsilon (\rho ;h)L_0(d\rho )+(N_\upsilon u|v).
\end{equation} Here $N_\upsilon $ is negligible of order $m$, $\Im P^{\mathrm{top}}_\upsilon $ and
$\Im N_\upsilon $ vanish for $\upsilon =0$. We can replace $P_\upsilon ^{(N)}$ with $P_\upsilon $ in
(\ref{sbd.14}) and the discussion leading to that estimate shows that
\begin{equation}\label{sbd.16.5}
0<J_\upsilon (\rho ;h)=1+{\cal O}(\upsilon ^2),\ \ -\Im P_\upsilon ^{\mathrm{top}}\ge
\upsilon (1+{\cal O}(\upsilon ^2))H_pG_0^0-{\cal O}(m\upsilon h^\infty ).
\end{equation}
In particular,
\begin{equation}\label{sbd.18}
-\Im (P_\upsilon u|u)\ge -\upsilon {\cal O}(h^\infty )\| u\|^2_{H(\Lambda _0,m^{1/2})}.
\end{equation}
Recall that $P_\upsilon $ is just a reduction to $H(\Lambda _0)$ of the
restriction to $H(\Lambda _{\upsilon G^0},m)$ of $P$, so with the norm and
scalar product on
$H(\Lambda _{\upsilon G^0})$ induced by $U_\upsilon $, we get
\begin{equation}\label{sbd.19}
-\Im (Pu|u)_{H(\Lambda _{\upsilon G^0},1)}\ge -\upsilon {\cal O}(h^\infty )\|u\|^2_{H(\Lambda _{\upsilon G^0},m^{1/2})},
\end{equation}
for $u\in H(\Lambda _{\upsilon G^0},m)$.
\begin{remark}\label{sbd1}
Only
${m_|}_{\Sigma _{[-\epsilon _0,\epsilon_0]}}$ matters in the
calculations. Especially, in the Schr\"odinger case (discussed in
Section {\rm \ref{int}}), we have $m=\langle \xi \rangle^2$, so we can replace $m$ with $1$.
\end{remark}

We end this section with an observation about decoupling of the
exterior and the interior part in certain situations. Let $G^0$ be as in
(\ref{sbd.15}). Let $\Omega \subset {\bf
  R}^n$ be a bounded open set such that
\begin{equation}\label{sbd.20}
\overline{\Omega }\cap \pi _x(\mathrm{supp\,}G^0)=\emptyset ,\ \pi
_x(x,\xi )=x.
\end{equation}
We have seen in Remark \ref{nore'3} that we can choose the Fourier
integral operators $U_\upsilon $, $V_\upsilon $ in Section \ref{nore}
so that (\ref{nore'.32})--(\ref{nore'.34}) hold for $G=G^0$:
\begin{equation}\label{sbd.21}
\|v\|_{H(\Lambda _{\upsilon G^0})} = \|v\|_{L^2},\ v\in C_0^\infty
(\Omega ).
\end{equation}
It follows that
$$
-\Im (Pu|u)_{H(\Lambda _{\upsilon G^0})}=-\Im (\widetilde{P}u|u) _{H(\Lambda _{\upsilon G^0})}
$$
if $\widetilde{P}$ is a new formally self-adjoint operator (with
respect to $L^2({\bf R}^n)$) such that
$\mathrm{supp\,}(\widetilde{P}-P)\subset \Omega $, where
$\mathrm{supp\,}(\widetilde{P}-P)$ is defined to be the union of the
supports of the coefficients of $\widetilde{P}-P$. In particular, we
may then replace $P$ with $\widetilde{P}$ in (\ref{sbd.19}).
}

\section{Far away improvement}\label{far}
\setcounter{equation}{0}
In this section we discuss improvements in the semi-bound estimates,
when the escape function is supported far away. We let $P$, $m$, $r$,
$\widetilde{r}$, $R$ be as in Section \ref{opesc} with the following
special choices,
\begin{equation}\label{far.1}
r=1,\ R(x)=\langle x\rangle,\ m_0(x)=1,
\end{equation}
implying
\begin{equation}\label{far.2}
m(x,\xi )=\langle \xi \rangle^{N_0},\ \widetilde{r}(x,\xi )=\langle
\xi \rangle .
\end{equation}
With $p(x,\xi )$ still denoting the semi-classical principal symbol,
we assume that $p(x,\xi )\to p_\infty (\xi )\in S(m)$ as $x\to \infty
$ in the following  sense: For all $\alpha ,\beta \in {\bf
  N}^n$,
\begin{equation}\label{far.3}
\partial _x^\alpha \partial _\xi ^\beta (p(x,\xi )-p_\infty (\xi
))=o(1)m(\xi )R(x)^{-|\alpha |}\widetilde{r}(x,\xi )^{-|\beta |},\
x\to \infty ,
\end{equation}
uniformly with respect to $\xi $.

\par If we assume the existence of an escape function in $\Sigma
_{[-\epsilon _0,\epsilon_0]}$ for $\epsilon _0>0$ small enough, then
\begin{equation}\label{far.4}
p_\infty (\xi )=0\ \Rightarrow\ \partial _\xi p_\infty (\xi )\ne 0.
\end{equation}
From the ellipticity assumption (\ref{opesc.14}) we know in addition
to (\ref{opesc.15}), (\ref{opesc.16}) that $p_\infty ^{-1}(0)$ is
bounded. Conversely, if we assume (\ref{far.4}), then
$$
G(x,\xi )=x\cdot \frac{\partial _\xi p_\infty (\xi )}{\langle \xi
  \rangle^{N_0-2}}
\in S(R\widetilde{r})
$$
has the required properties. Indeed, when $x\to \infty $,
$$
H_pG(x,\xi )\to H_{p_\infty }G=\frac{(\partial _\xi p_\infty
  )^2}{\langle \xi \rangle ^{N_0-2}}\asymp m \hbox{ on }p_\infty ^{-1}(0).
$$
In the classical Schr\"odinger operator case, this gives $G(x,\xi
)=2x\cdot \xi $, which up to the factor $2$ is the escape function
appearing in standard complex scaling.

\par We study the situation in a domain $|x|>\mu /{\cal O}(1)$, for
$\mu \gg 1$, and eventually we will choose our escape function $G^0$
with its support contained in such domains. It is
natural to make the change of variables, $x=\mu \widetilde{x}$, so
that $|\widetilde{x}|>1/{\cal O}(1)$.

\par Consider first the principal symbol. Put
\begin{equation}\label{far.5}
p_\mu (\widetilde{x},\widetilde{\xi })=p(\mu
\widetilde{x},\widetilde{\xi })=p\circ \kappa _\mu
(\widetilde{x},\widetilde{\xi }),\end{equation}
where
\begin{equation}\label{far5.5}
\kappa _\mu
(\widetilde{x},\widetilde{\xi })=(\mu \widetilde{x},\widetilde{\xi
}),\end{equation}
\begin{equation}\label{far.5.5}
\kappa _\mu ^*\sigma =\mu \sigma .
\end{equation}
Then in any region, $|\widetilde{x}|>1/{\cal O}(1)$ we have
\begin{equation}\label{far.5.7}
\partial _{\widetilde{x}}^\alpha \partial _{\widetilde{\xi }}^\beta
p_\mu (\widetilde{x},\widetilde{\xi })={\cal O}(1)m(\widetilde{\xi })\widetilde{r}^{-|\beta
  |}\widehat{R}(\widetilde{x})^{-|\alpha |},
\end{equation}
\begin{equation}\label{far.6}
\partial _{\widetilde{x}}^\alpha \partial _{\widetilde{\xi }}^\beta
(p_\mu (\widetilde{x},\widetilde{\xi })-p_\infty (\widetilde{\xi
}))=o(1)m(\widetilde{\xi })\widetilde{r}^{-|\beta
  |}\widehat{R}(\widetilde{x})^{-|\alpha |},\ \mu \to \infty
\end{equation}
Here $\widehat{R}=\widehat{R}_\mu $
is given by
\begin{equation}\label{far.7}
\widehat{R}(\widetilde{x})=\frac{R(\mu \widetilde{x})}{\mu },
\end{equation}
so that
$$
\widehat{R}(\widetilde{x})\asymp R(\widetilde{x}),\
|\widetilde{x}|>1/{\cal O}(1).
$$

\par We restrict the attention to a region,
\begin{equation}\label{far.8}
\Sigma _{\mu ,\epsilon _0}=p_\mu ^{-1}([-\epsilon _0,\epsilon _0]).
\end{equation}
\begin{prop}\label{far1}
The ``balls'' $\pi _{\widetilde{x}}^{-1}B(0,r_0)\cap \Sigma _{\mu
  ,\epsilon _0}$ are $H_{p_\mu }$-convex for $r_0\ge 1/{\cal O}(1)$
when $\mu$ is large enough. More precisely, every $H_{p_\mu }$-trajectory in
$\Sigma _{\mu ,\epsilon _0}$ can visit such a ball only during at most one time interval
which can be finite or infinite.
\end{prop}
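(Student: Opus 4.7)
The plan is to show that along any $H_{p_\mu}$-trajectory $t\mapsto(\widetilde x(t),\widetilde\xi(t))$ in $\Sigma_{\mu,\epsilon_0}$, the function $f(t):=|\widetilde x(t)|^2$ is strictly convex whenever $|\widetilde x(t)|\ge r_0$, provided $r_0\ge 1/\mathcal{O}(1)$ and $\mu$ is large enough depending on $r_0$. Once this convexity is in hand, the proposition follows by a standard argument: if the visit set $\{t:\widetilde x(t)\in B(0,r_0)\}$ were not an interval, one could pick $t_1<t_2<t_3$ with $t_1,t_3$ in the visit set and $t_2$ outside, and then on the maximal subinterval $(a,b)\ni t_2$ on which $f>r_0^2$ one would have $f(a)=f(b)=r_0^2$ by continuity; strict convexity of $f$ on $[a,b]$ would then force $f<r_0^2$ on $(a,b)$, contradicting the choice of $(a,b)$.

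I would obtain the convexity by a direct computation. Along an $H_{p_\mu}$-integral curve,
\begin{equation*}
\tfrac{d^2}{dt^2}|\widetilde x|^2
=2|\partial_{\widetilde\xi}p_\mu|^2
-2\,\widetilde x\cdot\partial_{\widetilde\xi}^2 p_\mu\cdot\partial_{\widetilde x}p_\mu
+2\,\widetilde x\cdot\partial_{\widetilde x}\partial_{\widetilde\xi}p_\mu\cdot\partial_{\widetilde\xi}p_\mu.
\end{equation*}
For the leading term, on $\Sigma_{\mu,\epsilon_0}\cap\{|\widetilde x|\ge r_0\}$ the estimate (\ref{far.6}) gives $p_\infty(\widetilde\xi)=p_\mu(\widetilde x,\widetilde\xi)+o(1)=\mathcal{O}(\epsilon_0)+o(1)$ as $\mu\to\infty$, so $\widetilde\xi$ lies in an arbitrarily small neighborhood of the compact set $p_\infty^{-1}(0)$. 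Assumption (\ref{far.4}) together with compactness then yields $|\partial_{\widetilde\xi}p_\infty(\widetilde\xi)|\ge 2c_0>0$ there, and applying (\ref{far.6}) once more with $|\beta|=1$ gives $|\partial_{\widetilde\xi}p_\mu|\ge c_0$ for $\mu$ large. Hence the first term is bounded below by $2c_0^2$.

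For the two error terms I would use that $p_\infty$ is $\widetilde x$-independent, so that applying (\ref{far.6}) with $|\alpha|=1$ and with $|\alpha|=|\beta|=1$ gives $|\partial_{\widetilde x}p_\mu|\le o(1)\,m(\widetilde\xi)/\widehat R(\widetilde x)$ and $|\partial_{\widetilde x}\partial_{\widetilde\xi}p_\mu|\le o(1)\,m(\widetilde\xi)/(\widetilde r\,\widehat R(\widetilde x))$ respectively. Since $r_0\ge 1/\mathcal{O}(1)$ forces $\widehat R(\widetilde x)\asymp\langle\widetilde x\rangle\gtrsim|\widetilde x|$, both error terms carry the harmless factor $|\widetilde x|/\widehat R(\widetilde x)\le C$; multiplied by the bounded factors $\partial_{\widetilde\xi}^2p_\mu$, $\partial_{\widetilde\xi}p_\mu$ and $m(\widetilde\xi)$ on the shell (recall $|\widetilde\xi|\le C$ by the ellipticity (\ref{opesc.14}) applied to $p_\mu$), each error term is $o(1)$ uniformly on the region in question. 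Consequently, for $\mu$ large enough depending on $r_0$,
\begin{equation*}
\tfrac{d^2}{dt^2}|\widetilde x(t)|^2\ge 2c_0^2-o(1)\ge c_0^2>0
\end{equation*}
on $\Sigma_{\mu,\epsilon_0}\cap\{|\widetilde x|\ge r_0\}$, which is the strict convexity needed.

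The only genuinely delicate point is that the lower bound $r_0\ge 1/\mathcal{O}(1)$ is essential rather than merely convenient: the asymptotics (\ref{far.3}), (\ref{far.6}) deliver their $o(1)$ only when $|\mu\widetilde x|=|x|\to\infty$, and for very small $\widetilde x$ the asymptotic control is lost, so one cannot expect the strong convexity of $|\widetilde x|^2$ near the origin. This is exactly what forces the quantifier ``$r_0\ge 1/\mathcal{O}(1)$'' in the statement and explains why $\mu$ must be chosen large \emph{after} $r_0$ is fixed.
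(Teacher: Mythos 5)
Your proof is correct and follows essentially the same route as the paper's: both verify that $H_{p_\mu}^2(\widetilde{x}^2/2)>0$ (equivalently the strict convexity of $t\mapsto|\widetilde{x}(t)|^2$ along the flow) on $\Sigma_{\mu,\epsilon_0}\cap\{|\widetilde{x}|\ge 1/\mathcal{O}(1)\}$ by the same computation, with the leading term $|\partial_{\widetilde\xi}p_\mu|^2\to|\partial_{\widetilde\xi}p_\infty|^2>0$ and the cross terms tending to $0$ as $\mu\to\infty$. You simply fill in the uniform-convergence and nondegeneracy details, via {\rm (\ref{far.4})}, {\rm (\ref{far.6})}, and compactness of $p_\infty^{-1}(0)$, which the paper leaves as a terse ``$\to(\partial_{\widetilde\xi}p_\infty)^2>0$'', and you also make explicit the standard convexity-to-single-interval argument that the paper compresses into ``It suffices to check that''.
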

\begin{proof}
It suffices to check that
$$
H_{p_\mu }^2(\widetilde{x}^2/2)>0,\ \ (\widetilde{x},\widetilde{\xi
})\in \Sigma _{\mu ,\epsilon _0},\ |\widetilde{x}|\ge 1/{\cal O}(1).
$$
With a somewhat simplified notation, we get
\[
\begin{split}
H_{p_\mu }^2\left(\frac{\widetilde{x}^2}{2}\right)
&= \left( \frac{\partial p_\mu }{\partial \widetilde{\xi }}\cdot
  \frac{\partial }{\partial \widetilde{x}}-
\frac{\partial p_\mu }
{\partial \widetilde{x}}\cdot \frac{\partial }{\partial
  \widetilde{\xi }} \right)\left(\frac{\partial p_\mu }{\partial
  \widetilde{\xi }}\cdot \widetilde{x} \right)
\\
&= \left(\frac{\partial p_\mu }{\partial \widetilde{\xi }}
\right)^2+\frac{\partial p_\mu }{\partial \widetilde{\xi }}\cdot
\frac{\partial ^2p_\mu }{\partial \widetilde{x}\partial \widetilde{\xi
  }}\cdot \widetilde{x}-\frac{\partial p_\mu }{\partial
  \widetilde{x}}\cdot \frac{\partial ^2p_\mu }{\partial \widetilde{\xi
  }^2}\cdot \widetilde{x}\\
&\to \left(\frac{\partial p_\infty  }{\partial \widetilde{\xi }}
\right)^2>0,\ \ \mu \to \infty .
\end{split}
\]
\end{proof}

\par From this proposition and (\ref{far.17}) below, it will follow that the ``balls'' $\pi
_x^{-1}(B(0,\mu ))\cap \Sigma _{\epsilon_0} $ are $H_p$ convex for
$\mu $ large enough.

\par We next apply the change of variables $x=\mu \widetilde{x}$ to
the operator $P$ in (\ref{opesc.5}). We get,
\begin{equation}\label{far.9}
P(x,hD_x;h)=P(\mu
\widetilde{x},\widetilde{h}D_{\widetilde{x}};h)=:P_\mu
(\widetilde{x},\widetilde{h}D_{\widetilde{x}};\widetilde{h}),\ \
\widetilde{h}=\frac{h}{\mu }.
\end{equation}
More explicitly, in view of (\ref{opesc.5}), (\ref{opesc.6}):
\begin{equation}\label{far.10}
  P_\mu (\widetilde{x},\widetilde{h}D_{\widetilde{x}};\widetilde{h})=
  \sum_{|\alpha |\le N_0}a_\alpha ^\mu
  (\widetilde{x};\widetilde{h})(\widetilde{h}D_{\widetilde{x}})^\alpha ,
\end{equation}
where,
\begin{equation}\label{far.11}
a_\alpha ^\mu (\widetilde{x};\widetilde{h})=a_{\alpha }(\mu
\widetilde{x};h)
=\sum_{k=0}^{N_0-|\alpha |}h^ka_{\alpha ,k}(\mu
\widetilde{x})=\sum_{k=0}^{N_0-|\alpha |}\widetilde{h}^ka^\mu _{\alpha
,k}(\widetilde{x}).
\end{equation}
Here
\begin{equation}\label{far.12}
a_{\alpha ,k}^\mu =\mu ^ka_{\alpha ,k}(\mu \widetilde{x})\in
S(\widehat{R}^{-k}),\ |\widetilde{x}|\ge 1/{\cal O}(1),
\end{equation}
and $\widehat{R}\asymp R(\widetilde{x})$ as in (\ref{far.7}). This
means that $P_\mu $ satisfies the general assumptions for $P$ in the
region, $|\widetilde{x}|\ge 1/{\cal O}(1)$ and we have the analogue of
(\ref{opesc.8}):
\begin{equation}\label{far.13}
P_\mu (\widetilde{x},\widetilde{\xi };\widetilde{h})=p_{0,\mu
}(\widetilde{x},\widetilde{\xi })+\widetilde{h}p_{1,\mu
}(\widetilde{x},\widetilde{\xi })+...+h^{N_0}p_{N_0,\mu
}(\widetilde{x},\widetilde{\xi }),
\end{equation}
$p_{0,\mu }(\widetilde{x},\widetilde{\xi })=p_\mu
(\widetilde{x},\widetilde{\xi })$,
\begin{equation}\label{far.14}
p_{j,\mu }(\widetilde{x},\widetilde{\xi })\in
S(m(\widetilde{r}R)^{-j}),\ \ |\widetilde{x}|\ge 1/{\cal O}(1),
\end{equation}

\par We next check that $\Lambda _{\upsilon G}$ scales naturally when $G$ is
an escape function. We expect the scaled weight $G_\mu $ to satisfy,
$$
e^{\upsilon G(x,\xi )/h}=e^{\upsilon G_\mu (\widetilde{x},\widetilde{\xi
  })/\widetilde{h}},\ (x,\xi )=\kappa _\mu
(\widetilde{x},\widetilde{\xi }),
$$
i.e.\ $G_\mu (\widetilde{x},\widetilde{\xi })=G(x,\xi )/\mu $, so we
define:
\begin{equation}\label{far.15}
G_\mu (\widetilde{x},\widetilde{\xi })=\frac{1}{\mu }G(\mu
\widetilde{x},\widetilde{\xi })=\frac{1}{\mu }(G\circ \kappa _\mu
)(\widetilde{x},\widetilde{\xi }).
\end{equation}
We have,
\begin{equation}\label{far.16}
\Lambda _{\upsilon G_\mu }=\kappa _\mu ^{-1}(\Lambda _{\upsilon G}).
\end{equation}
Indeed, for $(\widetilde{x},\widetilde{\xi })\in \Lambda _{\upsilon G_\mu }$,
we have
\[
\begin{split}
&\mu \Im \widetilde{x}=\mu \upsilon \partial _{\widetilde{\xi }}G_\mu (\Re
\widetilde{x},\Re \widetilde{\xi })=\upsilon \partial _{\xi
}G(\mu \Re \widetilde{x},\Re \widetilde{\xi }),\\
&\Im \widetilde{\xi }=-\upsilon \partial _{\widetilde{x}}G_\mu (\Re
\widetilde{x},\Re \widetilde{\xi })=-\upsilon \partial _xG(\mu \Re
\widetilde{x},\Re \widetilde{\xi }),
\end{split}
\]
which shows that $(x,\xi )\in \Lambda _{\upsilon G}$ if $(x,\xi )=\kappa _\mu
(\widetilde{x},\widetilde{\xi })$.

\par In the same spirit, we observe that
\begin{equation}\label{far.17}
(\kappa _\mu )_*H_{p_\mu }=\mu H_p.
\end{equation}

\par We finally apply the natural scaling
$$
\Lambda _{\upsilon G}\ni \alpha \mapsto \widetilde{\alpha }=\kappa _\mu
^{-1}(\alpha )\in \Lambda _{\upsilon G_\mu }
$$
to $Tu$ in (\ref{nore.4}), (\ref{nore.6}). Starting from
(\ref{nore.6}), we put $u(y)=\widetilde{u}(\widetilde{y})$, where
$y=\mu \widetilde{y}$. Again, with $\widetilde{h}=h/\mu $, we get from
(\ref{nore.2}):
\begin{equation}\label{far.18}
\begin{split}
  \frac{1}{h}\phi (\alpha ,y)&=\frac{1}{\widetilde{h}}\widetilde{\phi
  }(\widetilde{\alpha },\widetilde{y}),\hbox{ where }\\
  \widetilde{\phi }(\widetilde{\alpha
  },\widetilde{y})&=(\widetilde{\alpha }_x-\widetilde{y})\cdot
  \widetilde{\alpha }_\xi +i\lambda _\mu (\widetilde{\alpha}
  )(\widetilde{\alpha }_x-\widetilde{y})^2
\end{split}
\end{equation}
and
\begin{equation}\label{far.19}
\lambda _\mu (\widetilde{\alpha })=\mu \lambda (\mu \widetilde{\alpha
}_x,\widetilde{\alpha }_\xi )\in S(\widetilde{r}(\widetilde{\alpha
})R(\widetilde{\alpha }_x)^{-1})
\end{equation}
for $|\widetilde{\alpha }_x|\ge 1/{\cal O}(1)$, where we also used
that $\mu /R(\mu \widetilde{\alpha }_x)\asymp 1/R(\widetilde{\alpha
}_x)$. With the same changes of variables in (\ref{nore.6}), we get
$$Tu(\alpha ;h)=\int e^{\frac{i}{\widetilde{h}}\widetilde{\phi } (\widetilde{\alpha}
  ,\widetilde{y})}\mu ^n{\bf t}(\mu \widetilde{\alpha
}_x,\widetilde{\alpha }_\xi, \mu \widetilde{y};\mu \widetilde{h}) \chi
\left(\frac{\widetilde{y}-\Re\widetilde{\alpha
    }_x}{\widehat{R}(\widetilde{\alpha }_x)}
\right)\widetilde{u}(\widetilde{y})d\widetilde{y},$$
still with $\widehat{R}\asymp R$ as in (\ref{far.7}), so the cutoff is
the naturally scaled one. The new amplitude
$$
\widetilde{{\bf t}} (\widetilde{\alpha },\widetilde{y};\widetilde{h})=\mu ^n{\bf t}(\mu
\widetilde{\alpha }_x,\widetilde{\alpha }_\xi ,\mu
\widetilde{y}; \mu \widetilde{h}),
$$
belongs to
$$
S\left(\mu ^nh^{-\frac{3n}{4}}\widetilde{r}(\widetilde{\alpha }_\xi
)^{\frac{n}{4}}R(\mu \widetilde{\alpha }_x)^{-\frac{n}{4}}\right)=
S\left( \widetilde{h}^{-\frac{3n}{4}} \widetilde{r}(\widetilde{\alpha
  }_\xi )^{\frac{n}{4}} \widehat{R}(\widetilde{\alpha }_x)^{-\frac{n}{4}} \right),
$$
which is the right symbol class (working still in $|\widetilde{\alpha
}_x|\ge 1/{\cal O}(1)$).

Furthermore,
\[
\begin{split}
  \left|\det \begin{pmatrix}\widetilde{{\bf t}} &\partial
      _{\widetilde{ y}_1}\widetilde{{\bf t}} &... &\partial
      _{\widetilde{ y}_n}\widetilde{{\bf t}}\end{pmatrix}\right| &=\mu
  ^{(n+1)n+n}\left|\det \begin{pmatrix}{\bf t} &\partial _{y_1}{\bf t}
      &... &\partial _{y_n}{\bf t}\end{pmatrix}\right|\\
  &\asymp
  \widehat{R}^{-n}\left(\widetilde{h}^{-\frac{3n}{4}}\widetilde{r}^{\frac{n}{4}}
  \widehat{R}^{-\frac{n}{4}}\right)^{n+1},
\end{split}
\]
which is analogous to (\ref{nore.4}).

\par In conclusion, for $|\alpha _x|\ge \mu /{\cal O}(1)$, we have
\begin{equation}\label{far.20}\begin{split}
    &Tu(\alpha ;h)=\widetilde{T}\widetilde{u}(\widetilde{\alpha
    };\widetilde{h}), \Lambda _{\upsilon G}\ni \alpha =\kappa _\mu
    (\widetilde{\alpha }),\
    \widetilde{\alpha }\in \Lambda _{\upsilon G_\mu },
\\
    &\widetilde{u}(\widetilde{y})=u(y),\ y=\mu \widetilde{y},\
    \widetilde{h}=h/\mu ,
\end{split}
\end{equation}
where $\widetilde{T}$ has all the general properties of an
FBI-transform in any fixed region $|\widetilde{\alpha } _x|\ge 1/{\cal
  O}(1)$.

\par If the two order functions $m$ and $\widetilde{m}$ are related by
\begin{equation}\label{far.21}
\widetilde{m}=m\circ \kappa _\mu ,
\end{equation}
which is fulfilled under the assumptions (\ref{far.1}),
  (\ref{far.2}), when $\widetilde{m}=\langle \widetilde{\xi }\rangle^{N_0}$,
then we can define the Sobolev spaces $H(\Lambda _{\upsilon G},m)$, $H(\Lambda
_{\upsilon G_\mu },\widetilde{m})$ as in Section \ref{hlg}, by
(\ref{nore.14}), with $G$ replaced by $\upsilon G$, and its analogue,
\begin{equation}\label{far.22}
\| \widetilde{u}\|_{H(\Lambda _{\upsilon G_\mu },\widetilde{m})}= \|
\widetilde{T}\widetilde{u}\|_{L^2(\Lambda
  _{\upsilon G_\mu },\widetilde{m}^2e^{-2\upsilon H_\mu /\widetilde{h}}d\widetilde{\alpha
  })}.
\end{equation}
Here we use $H$ in (\ref{nore.14}) adapted to $G$ (cf.\
(\ref{nore.7})) so that $\upsilon H$ is adapted to $\upsilon G$,
define $H_\mu $ by the analogous
relation and notice that
  $\widetilde{H}_\mu (\widetilde{x},\widetilde{\xi })=\mu ^{-1}H(\mu
  \widetilde{x},\widetilde{\xi })$. ($G$ and $H$ also depend on $\mu $
  and we use the subscript $\mu $ to indicate when we work in the
  scaled variables $(\widetilde{x},\widetilde{\xi })$.) If we extend the definition of $\kappa _\mu $ to maps: ${\bf
  R}^n\to {\bf R}^n$ by putting $\kappa _\mu (\widetilde{y})=\mu
\widetilde{y}$, and let $\kappa _\mu ^*$ denote right composition with
$\kappa _\mu $ in the usual way, then (\ref{far.20}) tells us that
$$
\kappa _\mu ^*\circ T=\widetilde{T}\circ \kappa _\mu ^*.
$$
Moreover, $d\alpha =\mu ^nd\widetilde{\alpha }$, $dy=\mu
^nd\widetilde{y}$, so
\begin{equation}\label{far.23}\begin{split}
  \| u\|_{L^2}^2&=\mu ^n\| \widetilde{u}\|_{L^2}^2,\\ \|
  Tu\|^2_{L^2(\Lambda _{\upsilon G},m^2e^{-2\upsilon H/h}d\alpha )}&=\mu ^n
  \|\widetilde{T}\widetilde{u}\|_{L^2(\Lambda
    _{\upsilon G_\mu },\widetilde{m}^2e^{-2\upsilon H_\mu/h}d\widetilde{\alpha })} .
\end{split}
\end{equation}
Thus,
\begin{equation}\label{far.24}
\| u\|_{H(\Lambda _{\upsilon G},m)}=\mu ^{\frac{n}{2}} \|
\widetilde{u}\|_{H(\Lambda _{\upsilon G_\mu },\widetilde{m})}.
\end{equation}
This applies in particular to the spaces $H(\Lambda _{\upsilon G})$,
$H(\Lambda _{\upsilon G_\mu })$.

Now we apply the discussion at the end of Section \ref{sbd} to the
operator $P_\mu $, whose symbol properties we have verified in any
region $|\widetilde{x}|\ge 1/{\cal O}(1)$. In view of Proposition
\ref{far1} we have the strictly decreasing sequence of $H_{p_\mu
}$-convex sets in $\Sigma _{\mu ,\epsilon _0}$:
\begin{equation}\label{far.25}
\widetilde{K}_j=\pi _{\widetilde{x}}^{-1}(\overline{B(0,r_j)})\cap
\Sigma _{\mu ,\epsilon _0},\ j=0,1,2,...
\end{equation}
where $3/2>r_0>r_1>r_2,...\searrow 1$, $j\to \infty $. This gives rise
to a weight $G_\mu ^0$, vanishing over a neighborhood of $B(0,1/2)$, for which we would have
\begin{equation}\label{far.26}
-\Im (P_\mu \widetilde{u}|\widetilde{u})_{H(\Lambda
  _{\upsilon G_\mu ^0})}\ge -\upsilon {\cal O}(\widetilde{h}^{\infty })\|
\widetilde{u}\|^2_{H(\Lambda _{\upsilon  G^0_\mu },\widetilde{m}^{1/2})},
\end{equation}
had it been true that $P_\mu $ is a differential operator of the
right symbol class also inside a region $|\widetilde{x}|\le 1/{\cal
  O}(1)$. However, this symbol property is guaranteed only outside
such balls, but according to the observation at the end of Section
\ref{sbd}, we can choose the $H(\Lambda _{\upsilon G_\mu ^0})$ norm, so that
$$
-\Im (P_\mu u|u)_{H(\Lambda _{\upsilon G^0_\mu })}=-\Im
(\widetilde{P}_\mu u|u)_{H(\Lambda _{\upsilon G^0_\mu })},
$$
whenever $\mathrm{supp\,}(\widetilde{P}_\mu -P_\mu )$ is contained in
some small fixed neighborhood of $B(0,1/2)$. Moreover we can find such
a $\widetilde{P}_\mu $ satisfying (\ref{far.26}). Hence (\ref{far.26})
holds for $P_\mu $. (It suffices to take
$\widetilde{P}_\mu =(1-\chi )P_\mu (1-\chi )$, where
$\chi \in C_0^\infty (B(0,1/2);{\bf R})$ is equal to 1 on $B(0,1/3)$.)
\par
In order to get the corresponding estimates for $P$, we put
$$
G^0=\mu G_\mu ^0\circ \kappa _\mu ^{-1}:\ G^0(x,\xi )=\mu G_\mu ^0(x/\mu ,\xi ).
$$
Then (\ref{far.26}) gives,
\begin{equation}\label{far.28}
-\Im (P u|u)_{H(\Lambda
  _{\upsilon G^0})}
\ge -\upsilon {\cal O}((h/\mu )^\infty  )\| u\|^2_{H(\Lambda _{\upsilon G^0},m^{1/2})}.
\end{equation}
By Remark \ref{sbd1}, we can replace
$m$ by 1 in the Schr\"odinger case. In (\ref{far.26}), the
semi-classical parameter is $\widetilde{h}=h/\mu $ while in
(\ref{far.28}) we are back to using $h$.

\section{Resolvent estimates}\label{rest}
\setcounter{equation}{0}

\par To fix the ideas, we assume right away that $n=1$ and that
\begin{equation}\label{rest.1}
P=-h^2\partial _x^2+V(x),\ V\in C^\infty ({\bf R};{\bf R}).
\end{equation}
We adopt the general assumptions of Section \ref{far}. More precisely,
we assume (\ref{far.1}): $r=1$, $R(x)=\langle x\rangle$, $m_0(x)=1$
and (\ref{far.2}) with $N_0=2$ so that $m(x,\xi )=\langle \xi
\rangle^2$. Recall that $\widetilde{r}(x,\xi )=\langle \xi \rangle$. We
also assume (\ref{far.3}) with $p(x,\xi )=\xi ^2+V(x)$, $p_\infty (\xi )=\xi ^2$, which amounts
to
\begin{equation}\label{rest.2}
\partial _x^\alpha V(x)=o(1)\langle x\rangle ^{-|\alpha |},\ x\to
\infty .
\end{equation}
We also assume dilation analyticity near $\infty $:
\begin{equation}\label{rest.3}\begin{split}
&V\hbox{ has a holomorphic extension to }
\{x\in {\bf C};\, \Re x>C,\,\,
|\Im x|<|\Re x|/C\}\\
&\hbox{and denoting the extension also by }V, \hbox{ we have
}V(x)=o(1),\\
&\hbox{when }x\to \infty \hbox{ in the truncated sector above.}
\end{split}
\end{equation}
The earlier discussion was focused on the energy level $E=0$. Here we
will apply it with $P$ replaced by $P-E$ for $E\in [E_-,E_+]$ for
$0<E_-<E_+<+\infty $. In other terms we will mainly work in
\begin{equation}\label{rest.4}
p^{-1}([E_-,E_+]),\ 0<E_-<E_+<+\infty ,
\end{equation}
and the slight difference with the earlier discussion is that we now
take a wider energy range $[E_-,E_+]$ instead of $[-\epsilon
_0,\epsilon _0]$.

\par Assume that for a choice $E\in ]E_-,E_+[$,
\begin{equation}\label{rest.5}\begin{split}
&\hbox{the }H_p\hbox{-flow is non trapping in every unbounded}\\ &\hbox{connected
component of } p^{-1}(E). \end{split}
\end{equation}
Later, we shall strengthen this assumption to non-trapping in
$p^{-1}(E)$. Using the special structure of the symbol
$p(x,\xi )=\xi ^2+V(x)$, we see that every unbounded connected
component $\Sigma '_E$ of $p^{-1}(E)$ is a simple
smooth integral curve
$\gamma:\ {\bf R}\ni t\mapsto \gamma (t)=(x(t),\xi (t))$ of $H_p$ with
one of the following 4 properties:
\begin{itemize}
\item[1)] $t\xi (t)>0$, $x(t)\to +\infty $, when $|t|\to \infty $,
\item[2)] $t\xi (t)<0$, $x(t)\to -\infty $, when $|t|\to \infty $,
\item[3)] $\xi (t)>0$, $x(t)\to \pm\infty $, when $t\to \pm\infty $,
\item[4)] $\xi (t)<0$, $x(t)\to \mp\infty $, when $t\to \pm\infty $.
\end{itemize}
Moreover, the union $\Sigma _E$ of all unbounded
components of $p^{-1}(E)$ is the union of two
different components as above where either
\medskip\par\noindent
I) One is of type 1) and the other is of type 2),

\medskip
or

\medskip\par\noindent
II) one is of type 3) and the other is of type 4)

\medskip

\par Using a sequence of cutoffs, $\chi _j(p)$, $\chi _j\in C_0^\infty
(]E_-,E_+[)$, where
$$1_{[E_-+\delta ,E_+-\delta ]}\prec \chi _0\prec \chi _1\prec ...,$$
a corresponding sequence of escape functions $G_0, G_1, G_2, ...$ with
$G_0\prec G_1\prec G_2\prec ...$ and a dilation $\widetilde{x}\mapsto
\mu \widetilde{x}$, $\mu \ge 1$, we obtain as in Sections \ref{sbd},
\ref{far} a function $G^0=G_\mu ^0$ of class $S(\widetilde{r}R)$,
uniformly with respect to $\mu $, with support in $\Sigma
_{[E_-,E_+]}\cap \{ (x,\xi );\, |x|\ge \mu /(2C) \}$ such that we have
the semi-boundedness property (\ref{far.28}) for $0\le \upsilon \ll 1$, $\mu
\ge 1$ and
\begin{equation}\label{rest.7}
H_pG^0\ge \chi _0(p)/C,\hbox{ on }\{ |x|\ge \mu /C \} .
\end{equation}
(The only difference with Sections \ref{sbd}, \ref{far} is that we have
replaced $[-\epsilon _0,\epsilon _0]$ with $[E_-,E_+]$ which is quite
straight forward in the Schr\"odinger case.) Since we shall next turn
to resolvent estimates with more escape functions, it is convenient to
rename $G^0$:
\begin{equation}\label{rest.8}
G_{\mathrm{sbd}}^\mu =G_{\mathrm{sbd}}:=G^0.
\end{equation}

For the resolvent estimates, we need to supply a suitable escape
function in the set $\Sigma _E\cap\{ |x|\le \mu /C \}$ and to merge it to $G^\mu
_{\mathrm{sbd}}$. Here we assume that $E\in [E_-+\delta ,E_+-\delta ]$.

\par First we can find an escape function
\begin{equation}\label{rest.9}
G\in C^\infty (\Sigma _E;{\bf R})
\end{equation}
of class $S(\widetilde{r}R)$ such that
\begin{equation}\label{rest.10}H_pG>0,\end{equation}
\begin{equation}\label{rest.11}G(x,-\xi )=-G(x,\xi ),\end{equation}
\begin{equation}\label{rest.12}G(x,\xi )=x\cdot \xi ,\ |x|\gg 1.\end{equation}
Observe that, since we are in the 1D case,
\begin{equation}\label{rest.13}
\langle G(x,\xi ) \rangle\asymp \langle x\rangle,
\end{equation}
so $\langle G\rangle$ and $\langle x\rangle$ are equivalent weights.

\par Let $0\le \vartheta \ll 1$, to be fixed small enough. Let $f=f_\vartheta \in
C^\infty ({\bf R};{\bf R})$ be given by
\begin{equation}\label{rest.14}
f(0)=0,\ f'(t)=\frac{h}{\langle t\rangle^{1+\vartheta }}.
\end{equation}
Then $f$ is odd, and when $\vartheta >0$, we have
\begin{equation}\label{rest.15}
f(t)=h(\pm C_\vartheta +{\cal O}(\langle t\rangle^{-\vartheta }),\ t\to \pm\infty.
\end{equation}
Here,
$$
C_\vartheta =\int_0^{+\infty }\frac{1}{\langle t\rangle^{1+\vartheta }}dt.
$$
When $\vartheta =0$, we get
\begin{equation}\label{rest.16}
f(t)=h(\ln t +{\cal O}(1)),\ t\to +\infty  .
\end{equation}
Because of the unboundedness in this case, we assume from now on that
$\vartheta >0$.

Define the (new) function $G^0$ by
\begin{equation}\label{rest.17}
G^0=f(G).
\end{equation}
Then,
\begin{equation}\label{rest.18}
H_pG^0=f'(G)H_pG\asymp \frac{h}{\langle G\rangle ^{1+\vartheta }}\asymp \frac{h}{\langle x\rangle ^{1+\vartheta }}.
\end{equation}
Also, $H_{G^0}=f'(G)H_G=\frac{h}{\langle G\rangle^{1+\vartheta }}H_G$ and
recalling that $\|H_G\|_g={\cal O}(1)$, we get
\begin{equation}\label{rest.19}
\|H_{G^0}\|_g=\frac{{\cal O}(h)}{\langle x\rangle^{1+\vartheta }}.
\end{equation}
This also follows from (\ref{rest.20}) below.
\begin{prop}\label{rest1} We have
\begin{equation}\label{rest.20}G^0\in \dot{S}\left(\frac{h}{\langle
      x\rangle^\vartheta } \right),\end{equation}
\begin{equation}\label{rest.21}
G^0\in S(h).
\end{equation}
\end{prop}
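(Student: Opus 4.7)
The plan is to derive both symbol estimates from Faà di Bruno's formula applied to the composition $G^0 = f(G)$, using explicit control on the derivatives of $f$ together with the assumption $G \in S(\widetilde r R)$.

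As a preliminary estimate, I would establish that
\[
|f^{(k)}(t)| \leq C_k\, h \langle t \rangle^{-\vartheta - k}, \qquad k \geq 1.
\]
This follows by induction from $f'(t) = h \langle t \rangle^{-1-\vartheta}$ and the elementary observation that $\partial_t \langle t \rangle^{-a} = -a\, t \langle t \rangle^{-a-2}$ is $O(\langle t \rangle^{-a-1})$, so each further differentiation of $\langle t \rangle^{-a}$ reduces the order by one.

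Next, for a multi-index $\gamma = (\alpha,\beta) \in {\bf N}^{2}$ with $|\gamma| \geq 1$, Faà di Bruno's formula expands $\partial^\gamma G^0$ as a finite linear combination of terms of the form
\[
f^{(k)}(G) \prod_{j=1}^k \partial^{\gamma_j} G, \qquad 1 \leq k \leq |\gamma|,\ |\gamma_j| \geq 1,\ \sum_j \gamma_j = \gamma.
\]
Using $G \in S(\widetilde r R)$ for each of the $k$ factors, the product is bounded by $C (\widetilde r R)^k R^{-|\alpha|} \widetilde r^{-|\beta|}$. Combining this with the bound on $f^{(k)}$ and replacing $\langle G \rangle$ by $\langle x \rangle = R$ via the weight equivalence (\ref{rest.13}), we obtain
\[
\left| f^{(k)}(G) \prod_{j=1}^k \partial^{\gamma_j} G \right| \leq C h\, R^{-\vartheta-k} (\widetilde r R)^k R^{-|\alpha|} \widetilde r^{-|\beta|} = C h\, R^{-\vartheta}\, \widetilde r^k\, R^{-|\alpha|} \widetilde r^{-|\beta|}.
\]
On the effective support of $G$, concentrated on (or near) $\Sigma_E$ where $\xi^2 = E - V(x)$ is bounded, we have $\widetilde r = \langle \xi \rangle = O(1)$ so the factor $\widetilde r^k$ is harmless, and we conclude (\ref{rest.20}).

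Finally, for (\ref{rest.21}), the zeroth-order bound $|G^0| = |f(G)| \leq C_\vartheta h$ is immediate from (\ref{rest.15}), which crucially uses $\vartheta > 0$; the higher-order bounds $|\partial^\gamma G^0| \leq C h \langle x \rangle^{-|\alpha|}\langle \xi \rangle^{-|\beta|}$ follow from (\ref{rest.20}) via the trivial inequality $\langle x \rangle^{-\vartheta} \leq 1$. The only technical point worth flagging is the Faà di Bruno bookkeeping combined with the absorption of the $\widetilde r^k$ factor using the $\xi$-boundedness on the effective support of $G$; all other ingredients are entirely elementary.
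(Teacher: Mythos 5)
Your proof is correct and follows essentially the same path as the paper's: estimate $f^{(k)}(t)={\cal O}(h)\langle t\rangle^{-\vartheta-k}$, apply Fa\`a di Bruno to $G^0=f(G)$, and invoke the weight equivalence $\langle G\rangle\asymp\langle x\rangle$ from (\ref{rest.13}). Your tracking of the $\widetilde r^{\,k}$ factor and its absorption using $\langle\xi\rangle={\cal O}(1)$ near $\Sigma_E$ is the same step the paper performs silently (the paper writes $\partial_x^{\alpha_j}\partial_\xi^{\beta_j}G={\cal O}(\langle x\rangle^{1-\alpha_j})$, having already discarded the bounded $\widetilde r$-powers), so there is no substantive difference.
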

\begin{proof}
For $k\ge 1$, we have
$$
f^{(k)}(t)={\cal O}(h)\langle t\rangle^{-\vartheta -k}
$$
and for $(\alpha ,\beta )\in {\bf N}^2\setminus {0}$, we can write
$\partial _x^\alpha \partial _\xi ^\beta G^0$ as a finite linear
combination of terms
\begin{equation}\label{rest.22}
f^{(k)}(G)\left(\partial _x^{\alpha _1}\partial _\xi ^{\beta _1}G\right)... \left(\partial _x^{\alpha _k}\partial _\xi ^{\beta _k}G\right),
\end{equation}
where $k\ge 1$, $(\alpha _j,\beta _j)\ne (0,0)$, $\alpha _1+...+\alpha
_k=\alpha $, $\beta _1+...+\beta _k=\beta $. Since $\langle
G\rangle\asymp \langle x\rangle$, it follows that the term in
(\ref{rest.22}) is
$$
{\cal O}(1)\frac{h}{\langle x\rangle^{\vartheta +k}}\langle
x\rangle^{1-\alpha _1}...\langle x\rangle^{1-\alpha _k}=\frac{{\cal
    O}(h)}{\langle x\rangle^{\vartheta +\alpha }}
$$
and (\ref{rest.20}) follows. Now (\ref{rest.21}) follows from
(\ref{rest.20}) and the fact that $G^0={\cal O}(h)$ by (\ref{rest.15}).
\end{proof}

\par Until further notice, we assume that
\begin{equation}\label{rest.22.5}
\hbox{The }H_p\hbox{ flow is non-trapping on }p^{-1}(E).
\end{equation}
In other words, $\Sigma _E$ is equal to all of
$p^{-1}(E)$. Let $\chi ^0\in C_0^\infty (\mathrm{neigh\,}(E,{\bf
  R});[0,1])$ be equal to 1 near $E$ and with its support contained in
$\chi _0^{-1}(1)$, where $\chi _0,\chi _1,...$ are the cutoffs used
before. Put
\begin{equation}\label{rest.23}
G_{\mathrm{lap}}=\chi ^0(p)G^0\in S(h).
\end{equation}
Here ``lap'' stands for ``limiting absorption principle'', because
$G_{\mathrm{lap}}$ can be used to give a quick proof of the
semi-classical limiting absorption principle of Robert--Tamura \cite{RoTa84}, also
proved by G\'erard--Martinez \cite{GeMa88}. This is
also related to Martinez' result \cite{Ma02} on the absence of
resonances for non-trapping potentials that are merely smooth on some
bounded set. We put
\begin{equation}\label{rest.24}
G_\epsilon =G_{\mathrm{lap}}+\epsilon G_{\mathrm{sbd}}, \ 0<\epsilon
\ll h,
\end{equation}
where we recall that $G_{\mathrm{sbd}}$ depends on a large parameter
$\mu $. We next choose $\mu $ as a function of $\epsilon $, and to do
so we notice that when the support of $\chi ^0$ is narrow enough,
\begin{equation}\label{rest.25}
H_pG_{\mathrm{lap}}\asymp\chi ^0(p)\frac{h}{\langle G\rangle^{1+\vartheta
  }}\asymp \chi ^0(p)\frac{h}{\langle x\rangle^{1+\vartheta }},
\end{equation}
which is $\asymp h/\langle x\rangle^{1+\vartheta }$ where $\chi _0(p)=1$
and in particular in $p^{-1}([E-\delta_0 ,E+\delta_0 ])$ when $\delta
_0>0$ is small enough. On the other
hand $H_p(\epsilon G_\mathrm{sbd})$ is ${\cal O}(\epsilon )$ and of
order of magnitude $\epsilon $ in $\{ \chi _0(p)=1 \}\cap \{ |x|\ge
\mu /C \}$, by (\ref{rest.7}). Accepting a loss due to the positivity of $\vartheta $, we
choose $\mu $ so that
$$
\frac{h}{\mu }=\epsilon ,
$$
i.e.
\begin{equation}\label{rest.26}
\mu =\frac{h}{\epsilon }\gg 1.
\end{equation}

\par Then,
\begin{equation}\label{rest.27}
G_\epsilon \in S(h+\epsilon \langle x\rangle),
\end{equation}
and in particular,
\begin{equation}\label{rest.28}
\|H_{G_\epsilon }\|_g={\cal O}\left( \frac{h}{\langle x
\rangle}+\epsilon  \right) .
\end{equation}

\par In addition to (\ref{rest.25}), (\ref{rest.7}), we know that
\begin{equation}\label{rest.29}
H_pG_{\mathrm{sbd}}\ge 0.
\end{equation}
Since $\chi ^0\prec \chi _0$, it follows that
\begin{equation}\label{rest.30}
H_pG_\epsilon \gtrsim \chi ^0(p)\left(\frac{h}{\langle
    x\rangle^{1+\vartheta }}+\epsilon 1_{\{|x|\ge \mu/C  \}} \right)
\end{equation}
to be compared with the upper bound, that follows from (\ref{rest.27})
and the fact that $\mathrm{supp\,}G_\epsilon \subset p^{-1}([E_-,E_+])
$:
\begin{equation}\label{rest.31}
H_pG_\epsilon ={\cal O}(1)\left(\frac{h}{\langle x\rangle}+\epsilon  \right),
\end{equation}
where
\begin{equation}\label{rest.32}
\frac{h}{\langle x\rangle}+\epsilon \asymp \frac{h}{\langle
  x\rangle}+\epsilon 1_{\{ |x|\ge \mu/C  \}}.
\end{equation}
\begin{remark}\label{rest2}
We define the spaces $H(\Lambda _{\upsilon \epsilon  G_{\mathrm{sbd}}})$, $H(\Lambda
_{\upsilon G_\epsilon })$ as in {\rm (\ref{nore.14})}, with $G$
replaced by $\upsilon \epsilon  G_{\mathrm{sbd}}$ and $\upsilon G_\epsilon $ respectively. Since $G_{\mathrm{lap}}\in S(h)$ by
{\rm (\ref{rest.23})}, we see that $H^{\upsilon G_\epsilon
}-H^{\upsilon \epsilon G_{\mathrm{sbd}}}={\cal O}(\upsilon h)$, where $H^{\upsilon G_\epsilon }$,
$H^{\upsilon \epsilon G_{\mathrm{sbd}}}$ are defined in {\rm (\ref{nore.7})}, with $G$
replaced by $\upsilon G_\epsilon $ and $\upsilon \epsilon  G_\mathrm{sbd}$ respectively. We conclude that
\begin{equation}
\label{rest.33}
\| u\| _{H(\Lambda _{\upsilon G_\epsilon })}\asymp \| u\|_{H(\Lambda_{\upsilon \epsilon  G_\mathrm{sbd}})},
\end{equation}
uniformly with respect to $\upsilon $, $h$, $u$, when $0\le \upsilon
\le \upsilon _0\ll 1$,
$h\le h\le h_0\ll 1$.
\end{remark}

\par We now apply Proposition \ref{nore4} and get
\begin{equation}\label{rest.34}
\Im (Pu|u)_{H(\Lambda _{\upsilon G_\epsilon })}=\int_{\Lambda _{\upsilon G_\epsilon
  }}P_\upsilon ^{\mathrm{top}}Tu\cdot \overline{Tu} e^{-2\upsilon
  H_\epsilon/h}d\alpha
+(N_\upsilon u|u)_{H(\Lambda _{\upsilon G_\epsilon })},
\end{equation}
where we have preferred the more invariant integration on $\Lambda
_{\upsilon G_\epsilon }$ rather than to reduce everything to $\Lambda
_0$. $H_\epsilon $ is defined as in (\ref{nore.7}) with $G$ replaced
by $G_\epsilon $. Here (cf.\ (\ref{rest.27}))
\begin{equation}\label{rest.35}
P_\upsilon ^{\mathrm{top}}=p_\upsilon ^{\mathrm{top}}+{\cal O}(1)\frac{\upsilon h(h+\epsilon
  \langle x\rangle )}{\langle x\rangle^2}=p_\upsilon ^{\mathrm{top}}+{\cal
  O}(1)\frac{\upsilon h}{\langle x\rangle} \left(\frac{h}{\langle x\rangle}
  +\epsilon  \right),
\end{equation}
\begin{equation}\label{rest.36}\begin{split}
p_\upsilon ^{\mathrm{top}}(\vartheta )&=\Im p(\Re \vartheta +i\upsilon H_{G_\epsilon }(\Re \vartheta ))\\
&=-\upsilon H_pG_\epsilon (\Re \vartheta )+{\cal O}(\upsilon ^3)\|H_{G_\epsilon }\|_g^3\\
&=-\upsilon H_pG_\epsilon (\Re \vartheta )+{\cal O}(\upsilon ^3)\left(\frac{h}{\langle
    x\rangle}+\epsilon  \right)^3.
\end{split}
\end{equation}
Further, $N_\upsilon $ is negligible of order $\upsilon $.

\par Here we notice that by (\ref{rest.32})
\begin{equation}\label{rest.37}
\frac{h}{\langle x\rangle^{1+\vartheta }}+\epsilon 1_{\{ |x|\ge \mu/C
  \}}\gtrsim \left( \frac{h}{\langle x\rangle^{1+\vartheta
    }}+\epsilon_\vartheta
\right)=:m_\epsilon (x;h),\ \ \epsilon _\vartheta :=\left(\frac{\epsilon }{h}  \right)^\vartheta \epsilon.
\end{equation}
(In the limiting case, $\epsilon =h/C$, where $C>0$ is a
large constant, $\mu $ is of the order of a large constant and $\epsilon
_\vartheta \asymp h \asymp m_\epsilon $.)

Thus if we fix $\vartheta >0$ small enough, we have
$$
\frac{\upsilon h}{\langle x\rangle}\left(\frac{h}{\langle x\rangle}+\epsilon
\right),\,\, \upsilon ^3\left(\frac{h}{\langle x\rangle}+\epsilon
\right)^3\ll \upsilon \left( \frac{h}{\langle x\rangle^{1+\vartheta }}+\epsilon 1_{\{ |x|\ge
    \mu/C \}}\right)
$$
It then follows from (\ref{rest.30}),
(\ref{rest.35}), (\ref{rest.36}) that
\begin{equation}\label{rest.38}
-\Im P_\upsilon ^{\mathrm{top}}(\rho )\ge \frac{\upsilon }{C}\chi ^0(p)\left(
  \frac{h}{\langle x\rangle^{1+\vartheta }}+\epsilon 1_{\{ |x|\ge \mu/C
    \}} \right) -\upsilon \widetilde{k}_\upsilon ,
\end{equation}
where the right hand side is evaluated at the point $\Re \rho $,
\begin{equation}\label{rest.39}
\widetilde{k}_\upsilon ={\cal O}(1)\left(\frac{h}{\langle x\rangle}+\epsilon
\right),\ \mathrm{supp\,}\widetilde{k}_\upsilon  \subset
p^{-1}([E_-,E_+]\setminus [E-\delta_0 ,E+\delta_0 ]).
\end{equation}
We retain from this and (\ref{rest.37}), that
\begin{equation}\label{rest.40}
-\Im P_\upsilon ^{\mathrm{top}}\ge \frac{\upsilon }{C}m_\epsilon (x;h)-\upsilon k_\upsilon ,
\end{equation}
where
\begin{equation}\label{rest.41}
k_\upsilon ={\cal O}\left(\frac{h}{\langle x\rangle} +\epsilon  \right),\
\mathrm{supp\,}k_\upsilon \cap p^{-1}([E-\delta_0 ,E+\delta_0 ])=\emptyset .
\end{equation}

\par Using again the identification of $h$-pops and $h$-tops, we see
that
\begin{equation}\label{rest.42}
\int_{\Lambda _{\upsilon G_\epsilon }}\upsilon  k_\upsilon  Tu\cdot
\overline{Tu}
e^{-2\upsilon H_\epsilon /h}d\alpha
\le \upsilon \|
R_\upsilon u\|^2_{H(\Lambda _{\upsilon G_\epsilon })}+(N_\upsilon 'u|u),
\end{equation}
where $N_\upsilon '$ is negligible of order $\upsilon $ and $R_\upsilon $ is an $h$-pop whose
symbol is ${\cal O}(1)$ in $S((h/\langle x\rangle +\epsilon )^{1/2})$
and with support disjoint from $p^{-1}([E-\delta_0 ,E+\delta_0
])$. Thus with a new negligible operator of order $\upsilon $,
\begin{equation}\label{rest.43}
-\Im (Pu|u)_{H(\Lambda _{\upsilon G_\epsilon })}\ge\frac{\upsilon }{C}\|
m_\epsilon ^{1/2} u \|_{H(\Lambda _{\upsilon G_\epsilon
  })}^2-\upsilon \| R_\upsilon u \|^2_{H(\Lambda _{\upsilon G_\epsilon })}-(N_\upsilon u|u).
\end{equation}

Assume from now on that
\begin{equation}\label{rest.44}
\upsilon >0\hbox{ is fixed and sufficiently small.}
\end{equation}

\par We next remark that the arguments work virtually without any
changes if we replace $P$ by $P-z$ for $z\in {\bf C}$, satisfying,
\begin{equation}\label{rest.45}
\Re z\in [E-\delta_0/2 ,E+\delta_0/2 ],\
-\frac{1}{C}\epsilon _\vartheta  \le \Im z\le \frac{1}{C},
\end{equation}
for $C>$ sufficiently large. Also, since the support of the symbol of
$R_\upsilon $ is contained in a region where $|p-z|\ge 1/{\cal O}(1)$, we have
for every fixed $N>0$,
\begin{multline}
\label{rest.46}
\| R_\upsilon u\|_{H(\Lambda _{\upsilon G_\epsilon })} \\
\le {\cal O}(1)\|(P-z)u\|_{H(\Lambda _{\upsilon G_\epsilon })}+{\cal O}(1
)\| (h/\langle x\rangle) ^N (h/\langle x\rangle +\epsilon )^{1/2}u\|_{H(\Lambda
  _{\upsilon G_\epsilon })}.
\end{multline}
Here (\ref{rest.46}) follows by the calculus of $h$-pseudodifferential operators associated to $\Lambda_{\upsilon G_{\epsilon}}$, see Section 6 of~\cite{HeSj86}.
Using this in (\ref{rest.43}) with $P$ there replaced by $P-z$, we get,
\begin{equation}\label{rest.47}
-\Im ((P-z)u|u)_{H(\Lambda _{\upsilon G_\epsilon })}\ge \frac{1}{C}\|
m_\epsilon ^{1/2}u\|^2_{H(\Lambda
  _{\upsilon G_\epsilon }) }-C\| (P-z)u\|^2_{H(\Lambda _{\upsilon G_\epsilon })}.
\end{equation}
Here, we have for every $\alpha >0$,
\begin{multline*}
  -\Im ((P-z)u|u)_{H(\Lambda _{\upsilon G_\epsilon })}\\
  \le \| m_\epsilon (x;h)^{-\frac{1}{2}}(P-z)u\|_{H(\Lambda
    _{\upsilon G_\epsilon })} \| m_\epsilon (x;h)^{\frac{1}{2}}u\|_{H(\Lambda _{\upsilon G_\epsilon
    })}\\
  \le \frac{\alpha }{2}\| m_\epsilon (x;h)^{-\frac{1}{2}} (P-z)u\|^2_{H(\Lambda _{\upsilon G_\epsilon })}
  +\frac{1}{2\alpha } \| m_\epsilon (x;h)^{\frac{1}{2}} u\|^2_{H(\Lambda _{\upsilon G_\epsilon })}.
\end{multline*}
Use this in (\ref{rest.47}) for a fixed large enough $\alpha $
together with the observation
$$
\| (P-z)u\|_{H(\Lambda _{\upsilon G_\epsilon })}\lesssim \| m_\epsilon (x;h)^{-\frac{1}{2}} (P-z)u\|_{H(\Lambda _{\upsilon G_\epsilon })},
$$
to conclude that
\begin{equation}\label{rest.48}
\| m_\epsilon (x;h)^{\frac{1}{2}} u\|_{H(\Lambda
  _{\upsilon G_\epsilon })}^2\le
{\cal O}(1)
\| m_\epsilon (x;h)^{-\frac{1}{2}} (P-z)u\|_{H(\Lambda
  _{\upsilon G_\epsilon })}^2 .
\end{equation}
Summing up, we have:
\begin{prop}\label{rest3}
Under the assumptions above, in particular {\rm (\ref{rest.22.5})} about non
trapping, we fix $\delta_0 >0$ small and then $\upsilon >0$ small enough. Then
for $z$ in the region {\rm (\ref{rest.45})}, where $C>0$ is large enough,
$$
P-z:H(\Lambda _{\upsilon G_\epsilon },\langle \xi \rangle ^2)\to H(\Lambda
_{\upsilon G_\epsilon })
$$
is bijective and
\begin{equation}\label{rest.49}
m_\epsilon (x;h)^{\frac{1}{2}} (z-P)^{-1}
m_\epsilon (x;h)^{\frac{1}{2}}
={\cal O}(1):\ H(\Lambda _{\upsilon G_\epsilon })\to H(\Lambda _{\upsilon G_\epsilon }),
\end{equation}
where $m_\epsilon $ is defined in {\rm (\ref{rest.37})}.
\end{prop}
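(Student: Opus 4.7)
The plan is to combine the a priori estimate (\ref{rest.48}), already derived in the paragraphs preceding the proposition, with a continuity argument to establish bijectivity of $P - z$, from which the norm bound (\ref{rest.49}) follows immediately.

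First I would note that (\ref{rest.49}) is a direct reformulation of (\ref{rest.48}) under the assumption that $(z - P)^{-1}$ exists: substituting $u = (z - P)^{-1} m_\epsilon^{1/2} w$ into (\ref{rest.48}) yields $\| m_\epsilon^{1/2} (z - P)^{-1} m_\epsilon^{1/2} w \|_{H(\Lambda_{\upsilon G_\epsilon})} \lesssim \| w \|_{H(\Lambda_{\upsilon G_\epsilon})}$. The same estimate, combined with the lower bound $m_\epsilon \ge \epsilon_\vartheta > 0$, also shows that $P - z : H(\Lambda_{\upsilon G_\epsilon}, \langle \xi \rangle^2) \to H(\Lambda_{\upsilon G_\epsilon})$ is injective and has closed range.

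For surjectivity I would start with a ``seed'' $z_0$ of known bijectivity. By Theorem \ref{int2} applied with $\mu = h/\epsilon$, together with Remark \ref{rest2} identifying $H(\Lambda_{\upsilon G_\epsilon})$ and $H(\Lambda_{\upsilon \epsilon G_{\mathrm{sbd}}})$ up to equivalent norms, the Schr\"odinger operator $P$ satisfies $\Im(Pu|u)_{H(\Lambda_{\upsilon G_\epsilon})} \le \upsilon \, \mathcal{O}(\epsilon^\infty)\, \| u \|^2$. Choosing $z_0$ with $\Im z_0 = 1/C$ for $C$ large enough, $P - z_0$ is strictly dissipative. Bijectivity then follows from the Lumer--Phillips theorem, provided one establishes the analogous semi-boundedness for the transpose of $P$ on the dual space $H(\Lambda_{-\upsilon G_\epsilon})$, which is accessible via the antisymmetry $G(x,-\xi) = -G(x,\xi)$ arranged in (\ref{rest.11}) and the reflection invariance $p(x,-\xi) = p(x,\xi)$ of the Schr\"odinger symbol; the map $\xi \mapsto -\xi$ sends $\Lambda_{\upsilon G_\epsilon}$ to $\Lambda_{-\upsilon G_\epsilon}$ and identifies the two analyses.

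A connectedness argument then closes the proof. Let $R$ denote the region (\ref{rest.45}) and $R' \subset R$ the set of $z$ for which $P - z$ is bijective between the claimed spaces. Then $R'$ is nonempty (contains the seed), open in $R$ (resolvent sets of closed operators are open), and closed in $R$: if $z_n \in R'$ converges to $z_\infty \in R$, the uniform bound $\|(z_n - P)^{-1}\|_{H(\Lambda_{\upsilon G_\epsilon}) \to H(\Lambda_{\upsilon G_\epsilon})} = \mathcal{O}(\epsilon_\vartheta^{-1})$ from (\ref{rest.48}) permits a Neumann series expansion $(z_\infty - P)^{-1} = (z_n - P)^{-1}\sum_{k \ge 0} ((z_n - z_\infty)(z_n - P)^{-1})^k$ for $n$ sufficiently large, placing $z_\infty$ in $R'$. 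Since $R$ is connected, $R' = R$. The main obstacle I anticipate is precisely the seed step: carefully handling the dual pairing between $H(\Lambda_{\upsilon G_\epsilon})$ and $H(\Lambda_{-\upsilon G_\epsilon})$ and verifying that the semi-boundedness machinery of Section \ref{sbd} and Section \ref{far} transfers cleanly under the $\xi \to -\xi$ symmetry, so as to yield bijectivity (and not merely injectivity plus closed range) at $z_0$.
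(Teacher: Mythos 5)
Your proposal is correct, and it fills in a step that the paper leaves implicit. The paper's text stops at the a~priori estimate (\ref{rest.48}), and the proposition is presented as a summary (``Summing up, we have\dots''); the bijectivity argument is not spelled out. The most likely intended route, in light of what the paper \emph{does} write for the trapping case (after (\ref{rest.61}) it says that $P-z:\mathcal{D}_{\mathrm{sbd}}\to\mathcal{H}_{\mathrm{sbd}}$ has Fredholm index $0$ ``by the general theory of resonances \cite{HeSj86}'' and hence a right inverse is a two-sided inverse), is to invoke the index-$0$ property established in \cite{HeSj86}: then (\ref{rest.48}) gives injectivity, and index $0$ upgrades it to bijectivity at once. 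You instead take the more elementary route of a dissipativity seed plus continuation: (\ref{rest.48}) yields injectivity, closed range, and a uniform $\mathcal{O}(\epsilon_\vartheta^{-1})$ resolvent bound on the set of $z$ where the inverse exists; a seed $z_0$ with $\Im z_0=1/C$ is bijective by dissipativity of $P$ and of its transpose; and the uniform bound makes the bijectivity locus simultaneously open and closed in the connected region (\ref{rest.45}). Your conversion of (\ref{rest.48}) into (\ref{rest.49}) by substituting $u=(z-P)^{-1}m_\epsilon^{1/2}w$ is exactly right.

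The trade-off: the paper's index-$0$ shortcut is one line but rests on the heavier machinery of \cite{HeSj86}, whereas your route is self-contained but forces you to confront the duality. You correctly identify this as the delicate point. To close it cleanly, use the identification $\mathcal{H}_{\mathrm{sbd}}^{*}=H(\Lambda_{-\upsilon\epsilon G_{\mathrm{sbd}}})$ from Proposition~8.8 of \cite{HeSj86} (the paper cites this in deriving (\ref{rest.96})), together with the observation that on $\Lambda_{-\upsilon G_\epsilon}$ the sign of $\Im p$ flips, so the formally self-adjoint $P$ satisfies the opposite inequality $\Im(Pu|u)_{H(\Lambda_{-\upsilon G_\epsilon})}\ge -o(1)\|u\|^2$; since $\Im\bar z_0=-1/C<0$, this gives injectivity of the adjoint at $\bar z_0$, hence dense range of $P-z_0$, hence bijectivity of the seed. (The $\xi\mapsto-\xi$ antisymmetry you mention is exactly what makes the dual space again one of the $H(\Lambda)$-spaces of the same family, and it holds for both $G_{\mathrm{lap}}=\chi^0(p)f(G)$ with $f$ odd and $G$ odd in $\xi$, and for $G_{\mathrm{sbd}}$ built from $x\cdot\xi$.) One small point to keep in mind: the semi-boundedness you quote from Theorem~\ref{int2} is stated for $H(\Lambda_{\upsilon\epsilon G_{\mathrm{sbd}}})$; passing to $H(\Lambda_{\upsilon G_\epsilon})$ uses Remark~\ref{rest2}, or alternatively one reads the needed one-sided bound $\Im(Pu|u)_{H(\Lambda_{\upsilon G_\epsilon})}\le\mathcal{O}(\upsilon h)\|u\|^2$ directly off (\ref{rest.43}); either way the margin $1/C$ dominates for $h$ small, so the seed works.
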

\begin{remark}\label{rest4}
By Remark {\rm \ref{rest2}}, we can replace $G_\epsilon $ with
$\epsilon G_{\mathrm{sbd}}$ in {\rm (\ref{rest.49})}. Now $G_{\mathrm{sbd}}$ vanishes
for $\langle x\rangle\le \mu /(2C)$ and it follows that $\|
u\|_{H(\Lambda _{\upsilon G_\mathrm{sbd}})}\asymp \| u\|_{L^2}$ for
$u$ with support in a fixed compact set. Moreover $m_\epsilon
(x;h)\asymp h$ for $x$ in any fixed compact set. Hence from
{\rm (\ref{rest.49})}, we deduce that
$$
\chi (z-P)^{-1}\chi ={\cal O}(1/h):\, L^2\to L^2,
$$
for every fixed $\chi \in C_0^\infty ({\bf R}^n)$.
\end{remark}

\par In order to shorten the notation, we will often write
\begin{equation}\label{rest.49.5}
{\cal H}_{\mathrm{sbd}}=H(\Lambda _{\upsilon \epsilon G_{\mathrm{sbd}}}),\quad
{\cal D}_{\mathrm{sbd}}=H(\Lambda _{\upsilon \epsilon G_{\mathrm{sbd}}},\langle \xi \rangle^2),
\end{equation}
where $\upsilon >0$ is small and fixed, as above.

\par We finally treat a trapping case, namely that of a potential well
in an island, generating shape resonances. As before, let $E\in ]E_-+\delta
,E_+-\delta [$ be a fixed energy level. (We can also allow it to vary,
as we shall do in the next section, but then some geometric
quantities will also vary.)
Let ${\ddot{\mathrm{O}}}\Subset {\bf
  R}^n$ be a connected open set (still assuming
$n=1$ but trying to keep the discussion as general as possible). Let
$U_0\subset {\ddot{\mathrm{O}}}$ be a compact subset. Assume:
\begin{equation}\label{rest.50}
V-E<0\hbox{ in }{\bf R}^n\setminus \overline{{\ddot{\mathrm{O}}}},\ V-E>0 \hbox{ in
}{\ddot{\mathrm{O}}}\setminus U_0,\ V-E\le 0\hbox{ in }U_0,
\end{equation}
\begin{equation}\label{rest.51}
\mathrm{diam}_dU_0=0,
\end{equation}
where $d$ is the Lithner-Agmon distance given by the metric $(V-E)_+(x)dx^2$,
$(V-E)_+=\max (V-E,0)$,
\begin{equation}\label{rest.52}
\hbox{The }H_p\hbox{-flow has no trapped trajectories in
}{{p^{-1}(E)}_\vert}_{{\bf R}^n\setminus \ddot{\mathrm{O}}}.
\end{equation}

\par Let $M_0\subset {\ddot{\mathrm{O}}}$ be a connected compact set with
smooth boundary such that
\begin{equation}\label{rest.53}
M_0\supset \{x\in {\ddot{\mathrm{O}}}; d(x,\partial {\ddot{\mathrm{O}}})\ge \epsilon _0 \},
\end{equation}
for some small $\epsilon _0>0$. Let $P_0$ denote the Dirichlet
realization of $P$ in $M_0$, equipped with the domain ${\cal D}(P_0) = H^2(\stackrel{\circ}{M}_0) \cap H^1_0(\stackrel{\circ}{M}_0)$.
(The right hand side in (\ref{rest.53}) has smooth boundary, as we recalled after (\ref{bicsint.9}).)

\par From Agmon estimates we have the
well-known fact that if $\widetilde{M}_0\subset \ddot{\mathrm{O}}$ has
the same properties as $M_0$ with the same value of $\epsilon _0$,
then in any $o(1)$-neighborhood of $E$, the eigenvalues of $P_0$ and
$\widetilde{P}_0$ differ by ${\cal O}_\alpha (1)\exp 2(\alpha
+\epsilon _0-d(U_0,\partial
\ddot{\mathrm{O}}))/h$ for every $\alpha >0$. (Cf.\ \cite{HeSj84})

\par Let $K(h)\subset {\bf C}$ converge to
$\{ E \}$, when $h\to 0$ such that uniformly for all $z\in K(h)$,
\begin{equation}\label{rest.54}z\hbox{ satisfies (\ref{rest.45})},
\end{equation}
\begin{equation}\label{rest.54.3}
\mathrm{dist\,}(z,\sigma (P_0))\ge \lambda (h),
\end{equation}
where $\lambda (h)>0$ and
\begin{equation}\label{rest.54.6}
\ln \lambda (h)\ge -o(1)/h,\ h\to 0.
\end{equation}

Let $\widetilde{V}=V+W$, where $0\le W\in C_0^\infty
(\mathrm{neigh\,}(U_0))$ has its support in a small neighborhood of
$U_0$ and $V+W-E>0$ in ${\ddot{\mathrm{O}}}$. Let $\widetilde{P}:=-h^2\Delta
+V+W$, $\widetilde{p}(x,\xi )=\xi ^2+V+W$. Then $\widetilde{p}$
satisfies (\ref{rest.22.5}) if $E_-<E<E_+$ and $E_+-E$, $E-E_-$ are
small enough. Then the resolvent estimate
(\ref{rest.49}) applies to $\widetilde{P}$ for $z\in K(h)$ and we
recall Remark \ref{rest4}. We can then apply Agmon estimates inside
$\ddot{\mathrm{O}}$, as explained in Section 6 of~\cite{DiSj99}, and we get
\begin{equation}
\label{rest.55}
(\widetilde{P}-z)^{-1}(x,y)=\check{{\cal O}}(e^{-d(x,y)/h}),\ x,y\in {\ddot{\mathrm{O}}} ,\ z\in K(h),
\end{equation}
where the notation $\check{{\cal O}}$ is explained in Proposition 9.3
in \cite{HeSj86}. Under the assumptions (\ref{rest.54.3}),
(\ref{rest.54.6}) we get the same estimate for $P_0$, i.e.\ we can
replace $(\widetilde{P},\ddot{\mathrm{O}})$ by $(P_0,M_0)$ in (\ref{rest.55}).

Recall the elementary telescopic formula, for the moment under the a priori
assumption that $(P-z)^{-1}$ exists for $z\in K(h)$ (which will follow
from the discussion):
\begin{multline}
\label{rest.56}
(P-z)^{-1} \\ =(\widetilde{P}-z)^{-1}+(\widetilde{P}-z)^{-1}W
(\widetilde{P}-z)^{-1}+
(\widetilde{P}-z)^{-1}W (P-z)^{-1}W (\widetilde{P}-z)^{-1}.
\end{multline}
It reduces the study of $(P-z)^{-1}$ to that of $W(P-z)^{-1}W$ and we
shall make a perturbation series approach. Let $\chi \in C_0^\infty
(\stackrel{\circ }{M}_0)$ be equal to 1 on $\{ x\in {\ddot{\mathrm{O}}};\, d(x,\partial
{\ddot{\mathrm{O}}})\ge 2\epsilon _0 \}$ and let $\chi _0\in C_0^\infty
(\mathrm{neigh\,}(\mathrm{supp\,}W))$ be equal to one near
$\mathrm{supp\,}W$. Take the neighborhood small enough so that
$\mathrm{supp\,}\chi _0\cap \mathrm{supp\,}(1-\chi )=\emptyset $.
Let $\chi _1\in C_0^\infty (\stackrel{\circ}{M}_0)$ satisfy
$1_{B_d(U_0,S_0/2-\epsilon _0)}\prec \chi _1\prec 1_{B_d
  (U_0,S_0/2+\epsilon _0)}$, where $B_d(U_0,r)$ denotes the open ball
of center $U_0$ and radius $r$ for the Lithner-Agmon distance and
$S_0:=d(U_0,\partial \ddot{\mathrm{O}})$.
As a
first approximation to $(P-z)^{-1}$, we take
\begin{equation}\label{rest.57}
E=\chi (P_0-z)^{-1}\chi _1+(\widetilde{P}-z)^{-1}(1-\chi _1)=:E_0+\widetilde{E}.
\end{equation}
Then,
\begin{multline}\label{rest.58}
  (P-z)E=1+[P,\chi ](P_0-z)^{-1}\chi _1-W(\widetilde{P}-z)^{-1}(1-\chi
  _1)
  =: 1-K.
\end{multline}

\par From (\ref{rest.55}) and the corresponding estimate for $P_0$, we see
that
\begin{equation}\label{rest.60}
\| K\|_{{\cal L}(m_\epsilon ^{1/2}{\cal H}_\mathrm{sbd}, L^2(\ddot{\mathrm{O}}))}=\widetilde{{\cal O}}(1) \exp\left(-\frac{S_0}{2h}\right),
\end{equation}
where $\widetilde{\cal O}(1)$ indicates a quantity which is ${\cal O}(e^{\alpha /h})$ for some $\alpha >0$ which tends to 0 when $\epsilon _0$ and
$d(\mathrm{supp\,}\chi _0,U_0)$ tend to 0. Thus for $h>0$ small enough, the Neumann series,
\begin{equation}\label{rest.61}
  1+K+K^2+K^3+...
\end{equation}
converges to $(1-K)^{-1}=1+\widetilde{{\cal O}} (1)\exp \left(-\frac{S_0}{2h}\right)$ in
${\cal L}(m_\epsilon^{1/2}{\cal H}_\mathrm{sbd},L^2(\ddot{\mathrm{O}}))$. It
follows that $E(1-K)$ is a right inverse of $P-z:\, {\cal D}_\mathrm{sbd}\to {\cal H}_{\mathrm{sbd}}$ and since the latter is
of index 0 by the general theory of resonances (\cite{HeSj86}) it is a two-sided inverse.
\begin{prop}\label{rest5}
Let ${\bf C}\supset K(h)\to \{ E \}$, $h\to 0$ and assume
{\rm (\ref{rest.54})}--{\rm (\ref{rest.54.6})}. Then for $h>0$ small enough and
for $z\in K(h)$, $P-z:\, {\cal  D}_\mathrm{sbd}\to {\cal H}_{\mathrm{sbd}}$ is bijective with inverse
\begin{equation}\label{rest.62}
  (P-z)^{-1}=\chi (P_0-z)^{-1}\chi
  _1(1-K)^{-1}+(\widetilde{P}-z)^{-1}(1-\chi _1)(1-K)^{-1}.
\end{equation}
Here,
\begin{equation}\label{rest.63}
\| \chi (P_0-z)^{-1}\chi _1(1-K)^{-1}\|_{{\cal L}({\cal
    H}_\mathrm{sbd},{\cal H}_\mathrm{sbd})}
=\frac{{\cal O}(1)}{\mathrm{dist\,}(z,\sigma (P_0))}
\end{equation}
and by Proposition {\rm \ref{rest3}},
\begin{equation}\label{rest.64}
m_\epsilon (x;h)^{\frac{1}{2}}(\widetilde{P}-z)^{-1}(1-\chi
_1)(1-K)^{-1}m_\epsilon (x;h)^{\frac{1}{2}}={\cal O}(1):{\cal
  H}_{\mathrm{sbd}}\to {\cal H}_{\mathrm{sbd}}.
\end{equation}
\end{prop}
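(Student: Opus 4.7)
The strategy is to combine the non-trapping resolvent bound from Proposition \ref{rest3}, applied to the auxiliary filled-in operator $\widetilde{P}=P+W$, with Agmon tunneling estimates inside the island $\ddot{\mathrm{O}}$ via the Grushin-style parametrix $E=E_0+\widetilde{E}$ already written down in (\ref{rest.57}).

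First I would verify that $\widetilde{p}(x,\xi)=\xi^2+V+W$ satisfies the non-trapping hypothesis (\ref{rest.22.5}) on $\widetilde{p}^{-1}(E)$: by (\ref{rest.52}) there are no trapped trajectories outside $\ddot{\mathrm{O}}$, and inside $\ddot{\mathrm{O}}$ we have $V+W-E>0$ by construction of $W$, so $\widetilde{p}^{-1}(E)$ lies entirely in $\pi_x^{-1}(\mathbb{R}^n\setminus \ddot{\mathrm{O}})$ and is non-trapping. Hence Proposition \ref{rest3} furnishes the weighted bound (\ref{rest.49}) for $(\widetilde{P}-z)^{-1}$, uniformly for $z\in K(h)$, and in particular $(\widetilde{P}-z)^{-1}=\mathcal{O}(1/h)$ on compact sets by Remark \ref{rest4}. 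Coupling this with the standard Agmon weighted a priori estimates (as in Section 6 of \cite{DiSj99}) yields the pointwise kernel bound (\ref{rest.55}); the analogous decay for $(P_0-z)^{-1}$ follows from the spectral gap (\ref{rest.54.3})--(\ref{rest.54.6}) through self-adjoint Agmon estimates in $M_0$.

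Second, with $\chi,\chi_0,\chi_1$ as in the text (note $\chi\equiv 1$ on $\mathrm{supp\,}\chi_1$ and $W\equiv 0$ on $\mathrm{supp\,}(1-\chi_1)$), a direct computation shows $(P-z)E=1-K$ with $K=W(\widetilde{P}-z)^{-1}(1-\chi_1)-[P,\chi](P_0-z)^{-1}\chi_1$. Both pieces of $K$ are ``support-separated'' compositions: the distance between $\mathrm{supp\,}[P,\chi]$ and $\mathrm{supp\,}\chi_1$ in the Agmon metric is at least $S_0/2-\mathcal{O}(\epsilon_0)$, and similarly for $\mathrm{supp\,}W$ versus $\mathrm{supp\,}(1-\chi_1)$. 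The decay (\ref{rest.55}) then converts these geometric separations into the operator-norm bound (\ref{rest.60}), namely $\|K\|_{\mathcal{L}(m_\epsilon^{1/2}\mathcal{H}_{\mathrm{sbd}},L^2(\ddot{\mathrm{O}}))}=\widetilde{\mathcal{O}}(1)\exp(-S_0/(2h))$, where the loss $e^{\alpha/h}$ in $\widetilde{\mathcal{O}}$ can be absorbed thanks to (\ref{rest.54.6}).

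Third, since $\|K\|<1/2$ for $h$ sufficiently small, the Neumann series (\ref{rest.61}) converges in norm to $(1-K)^{-1}=1+\widetilde{\mathcal{O}}(e^{-S_0/(2h)})$, and hence $E(1-K)^{-1}$ is a right inverse for $P-z:\mathcal{D}_{\mathrm{sbd}}\to\mathcal{H}_{\mathrm{sbd}}$. To upgrade to a two-sided inverse I invoke the general theory of resonances in the spaces $H(\Lambda_{\upsilon\epsilon G})$ developed in \cite{HeSj86}, which says that $P-z$ is Fredholm of index $0$ in the range (\ref{rest.54}); a right inverse for an index-zero Fredholm operator is automatically a two-sided inverse, yielding (\ref{rest.62}). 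The bound (\ref{rest.63}) is immediate from the self-adjoint spectral-theoretic estimate $\|(P_0-z)^{-1}\|_{L^2\to L^2}\le 1/\mathrm{dist\,}(z,\sigma(P_0))$ together with $(1-K)^{-1}=1+o(1)$, and (\ref{rest.64}) follows by applying Proposition \ref{rest3} to the second summand in (\ref{rest.62}). The main obstacle is verifying that the pointwise Agmon decay (\ref{rest.55}) yields the operator bound (\ref{rest.60}) in the correct weighted spaces with uniform control as $z$ ranges over $K(h)$; this is where the hypothesis (\ref{rest.54.6}) on the quasi-polynomial gap is essential, as it allows the exponential factor $e^{\alpha/h}$ from the pointwise estimate to be absorbed into $\widetilde{\mathcal{O}}(e^{-S_0/(2h)})$ while keeping the Neumann iteration convergent.
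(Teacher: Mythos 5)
Your proposal matches the paper's own argument step for step: the same auxiliary operator $\widetilde{P}=P+W$, the same parametrix $E=\chi(P_0-z)^{-1}\chi_1+(\widetilde{P}-z)^{-1}(1-\chi_1)$ leading to $(P-z)E=1-K$, the same support-separation Agmon argument yielding $\|K\|=\widetilde{\mathcal O}(e^{-S_0/2h})$, the Neumann series for $(1-K)^{-1}$, and the Fredholm index-$0$ upgrade from a right inverse to a two-sided inverse. The only difference is that you correctly write $E(1-K)^{-1}$ where the paper's prose has the apparent typo $E(1-K)$.
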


\par We next study the situation when $z$ gets closer to $\sigma
(P_0)$. Let $J(h)\subset {\bf R}$ be an
interval tending to $\{ E \}$ as $h\to 0$. Assume that
\begin{equation}\label{rest.65}
P_0\hbox{ has no spectrum in } \partial J(h)+[-\delta (h),\delta (h)]
\end{equation}
where the parameter $\delta (h)$ is small but not exponentially small;
$$
\ln \delta (h)\ge -o(1)/h.
$$
$\sigma (P_0)\cap J(h)$ is a discrete set of the form
$\{ \mu _1(h),...,\mu _m(h) \}$ where $m=m(h)={\cal O}(h^{-n})$ and we
repeat the eigenvalues according to their multiplicity. Let
$\Gamma (h)$ denote the set of resonances of $P$ in
$J(h)-i[0,\epsilon_\vartheta  /C]$, $C\gg 1$, also
repeated according to their (algebraic) multiplicity.
Assume that
\begin{equation}\label{rest.65.5}
\epsilon \ge e^{-1/(Ch)},
\end{equation}
for some $C\gg 1$, so that
$$
\epsilon \left(\frac{\epsilon }{h} \right)^\vartheta \ge e^{\frac{1}{{\cal O}(h)}-\frac{2S_0}{h}}.
$$
 Then we have,
\begin{prop}\label{rest6}
There is a bijection $b:\{ \mu _1,...,\mu _m \}\to \Gamma (h)$, such that
$$
b(\mu )-\mu =\widetilde{{\cal O}}(e^{-2S_0/h}),
$$
where $S_0=d(U_0,\partial {\ddot{\mathrm{O}}})$ and the tilde
indicates that the right hand side is ${\cal O}(e^{(\omega
  -2S_0)/h})$, where $\omega =\omega (\epsilon _0)>0$ and
$\omega (\epsilon _0) \to 0$, when $\epsilon _0\to 0$.
\end{prop}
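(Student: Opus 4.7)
The plan is to set up a Grushin (effective Hamiltonian) problem that reduces the study of resonances in $J(h)-i[0,\epsilon_\vartheta/C]$ to the zeros of the determinant of an $m\times m$ holomorphic matrix $E_{-+}(z)$ which is an $\widetilde{{\cal O}}(e^{-2S_0/h})$-perturbation of ${\rm diag}(z-\mu_j)$, and then to apply Rouch\'e's theorem.

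Let $\phi_1,\dots,\phi_m$ be an orthonormal basis of $P_0$-eigenfunctions on $M_0$ with eigenvalues $\mu_1,\dots,\mu_m\in J(h)$. Standard Agmon estimates inside $\ddot{\mathrm{O}}$, as in \cite{HeSj86}, give $e^{d(x,U_0)/h}\phi_j=\widetilde{{\cal O}}(1)$ in $L^2(M_0)$. Fix $\widetilde{\chi}\in C_0^\infty(\stackrel{\circ}{M}_0;[0,1])$ with $\widetilde{\chi}=1$ on $\{d(\cdot,U_0)\le S_0-\epsilon_0\}$, and set $\widetilde{\phi}_j=\widetilde{\chi}\phi_j$, extended by zero. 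By Remark \ref{nore'3} I may arrange that $\|\widetilde{\phi}_j\|_{{\cal H}_{\mathrm{sbd}}}=\|\widetilde{\phi}_j\|_{L^2}$, and since $(P-\mu_j)\widetilde{\phi}_j=[P,\widetilde{\chi}]\phi_j=\widetilde{{\cal O}}(e^{-S_0/h})$ in $L^2$, the $\widetilde{\phi}_j$ form an approximate eigenbasis in ${\cal H}_{\mathrm{sbd}}$. Form the Grushin problem
\begin{equation*}
{\cal P}(z)=\begin{pmatrix} P-z & R_- \\ R_+ & 0\end{pmatrix}:{\cal D}_{\mathrm{sbd}}\times{\bf C}^m\to {\cal H}_{\mathrm{sbd}}\times{\bf C}^m,
\end{equation*}
with $R_- u_-=\sum_j u_-(j)\widetilde{\phi}_j$ and $R_+ u=((u\,|\,\widetilde{\phi}_j))_{j=1}^m$, using the Hilbert inner product on ${\cal H}_{\mathrm{sbd}}$ (or equivalently the bilinear pairing $\int uv\,dx$, which is well defined on ${\cal H}_{\mathrm{sbd}}$ since $G_{\mathrm{sbd}}$ can be chosen odd in $\xi$).

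Next I would prove that ${\cal P}(z)$ is bijective for $z$ in the region (\ref{rest.45}) by a modification of the parametrix of Proposition \ref{rest5}: replace $\chi(P_0-z)^{-1}\chi_1$ by its reduced version $\chi(P_0-z)^{-1}(1-\Pi_0)\chi_1$, where $\Pi_0=\sum_j(\cdot\,|\,\phi_j)\phi_j$; on the range of $1-\Pi_0$ the inverse is bounded by $1/\delta(h)$ by (\ref{rest.65}), and the remainder operator $K$ (the analogue of (\ref{rest.58})) is still $\widetilde{{\cal O}}(e^{-S_0/(2h)})$, which dominates $\delta(h)^{-1}$ by (\ref{rest.65.5}) so the Neumann series converges. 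Writing ${\cal P}(z)^{-1}=\begin{pmatrix} E & E_+ \\ E_- & E_{-+}\end{pmatrix}$, one has the standard equivalence $z\in\Gamma(h)\Leftrightarrow\det E_{-+}(z)=0$, with multiplicities preserved. From the defining relations $(P-z)E_++R_-E_{-+}=0$, $R_+E_+=I$, and the ansatz $E_+ e_k=\widetilde{\phi}_k+w_k(z)$ with $w_k$ computed from the Neumann parametrix applied to $(z-\mu_k)\widetilde{\phi}_k-[P,\widetilde{\chi}]\phi_k$, one obtains after a direct calculation
\begin{equation*}
E_{-+}(z)_{jk}=(\mu_k-z)\delta_{jk}+\bigl(\widetilde{\phi}_j\,\bigl|\,[P,\widetilde{\chi}]\phi_k\bigr)+r_{jk}(z).
\end{equation*}
Since $[P,\widetilde{\chi}]$ is localized in the Agmon annulus $\{S_0-\epsilon_0\le d(\cdot,U_0)\le S_0\}$, the decay of $\widetilde{\phi}_j$ and $\phi_k$ in that annulus gives each factor an $e^{-(S_0-\epsilon_0)/h}$ bound, yielding $\widetilde{{\cal O}}(e^{-2S_0/h})$; the remainder $r_{jk}$ is of the same order by (\ref{rest.55}) and the exponential decay of the parametrix kernels.

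Finally, Rouch\'e's theorem is applied as follows: on $\partial D(\mu_j,\delta(h)/2)$ (or on the boundary of a union of overlapping such disks when eigenvalues cluster) one has $|\det{\rm diag}(z-\mu_k)|\gtrsim \delta(h)^m/{\cal O}(1)$ by (\ref{rest.65}), while $E_{-+}(z)-{\rm diag}(z-\mu_k)=\widetilde{{\cal O}}(e^{-2S_0/h})$, which is dominated under (\ref{rest.65.5}) since $\delta(h)$ is only sub-exponentially small. Counting zeros then gives the bijection $b$, and a second Rouch\'e argument on concentric disks $D(\mu_j,\widetilde{{\cal O}}(e^{-2S_0/h}))$ localizes $b(\mu_j)$ within the required distance of $\mu_j$. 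The hard part will be the bookkeeping for clustered eigenvalues (where the individual disks must be merged and the single-$\mu_j$ Rouch\'e step replaced by a cluster-level one) together with the exponential-loss accounting of the $\omega(\epsilon_0)/h$ terms, ensuring that these remain absorbed into the $\widetilde{{\cal O}}$-notation; this is made possible by the freedom to take $\epsilon_0$ arbitrarily small in the construction of $\widetilde{\chi}$ and by the sub-exponential lower bound on $\delta(h)$.
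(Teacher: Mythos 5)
Your proposal follows the paper's own strategy almost exactly: a Grushin problem for $P$ with $R_\pm$ built from cutoff Dirichlet eigenfunctions $\chi e_j^0$, a parametrix combining the Dirichlet resolvent $(P_0-z)^{-1}(1-\Pi_0)$ with the filled-potential resolvent $(\widetilde{P}-z)^{-1}$, Neumann-series inversion, the refined Agmon-weight estimate $E_{-+}(z)-\mathrm{diag\,}(z-\mu_j)=\widetilde{{\cal O}}(e^{-2S_0/h})$, and zero-counting of $\det E_{-+}$ to identify the resonances and obtain the bijection. The one correction worth recording is in the Rouch\'e step: the disks $D(\mu_j,\delta(h)/2)$ may well exit the strip $\widetilde{J}(h)=J(h)+i[-\epsilon_\vartheta/C,1/C]$ on which $E_{-+}$ is defined, since $\epsilon_\vartheta$ can be far smaller than $\delta(h)$; one should instead use a contour that stays within ${\cal O}(\epsilon_\vartheta)$ of the real axis (for instance the boundary of $J(h)+i[-\epsilon_\vartheta/(2C),b]$, using (\ref{rest.65}) to bound $|z-\mu_k|\ge\delta$ on the vertical sides), which still dominates the $\widetilde{{\cal O}}(e^{-2S_0/h})$ perturbation by (\ref{rest.65.5}) and gives the same conclusion.
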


\par We shall prove the proposition and also get precise information
about the resolvent by studying an associated Grushin problem. Let
$e_1^0(h),...,e_m^0(h)$ be an orthonormal system of eigenfunctions of
$P_0$ associated to the eigenvalues $\mu _1(h),...,\mu _m(h)$. Then we
know from Chapter 6 of~\cite{DiSj99} that
\begin{equation}\label{rest.66}
e_j^0=\check{\cal O}(e^{-d(U_0,x)/h}).
\end{equation}
A first trivial Grushin problem for $P_0$ is defined by the matrix
\begin{equation}\label{rest.67}
{\cal P}_0=\begin{pmatrix} P_0-z &R_-^0\\ R_+^0 &0\end{pmatrix}:
{\cal D}(P_0)\times {\bf C}^m\to L^2(M_0)\times {\bf C}^m,
\end{equation}
where
\begin{equation}\label{rest.68}
R_+^0u(j)=(u|e_j^0), \ R_-^0=(R_+^0)^*.
\end{equation}

\par Let
\begin{equation}\label{rest.69}
\widetilde{J}(h)=J(h)+i[-\epsilon_\vartheta  /C,1/C]
\end{equation}
with $C>0$ sufficiently large (cf.\ (\ref{rest.45})). Then it follows from Section 9
of~\cite{HeSj86} that for $z\in \widetilde{J}(h)$, the operator ${\cal P}_0(z)$ is bijective with inverse
\begin{equation}\label{rest.70}
{\cal E}_0=\begin{pmatrix}E^0(z) &E_+^0(z)\\ E_-^0(z)
  &E^0_{-+}\end{pmatrix}:
L^2(M_0)\times {\bf C}^m\to {\cal D}(P_0)\times {\bf C}^m,
\end{equation}
where, with $\Pi _0$ denoting the spectral projection onto the space
spanned by $e_1^0,...,e_m^0$,
\begin{equation}\label{rest.71}
E^0(z)=(P_0-z)^{-1}(1-\Pi_0)={\cal O}(1/\delta (h)):L^2\to {\cal D}(P_0),
\end{equation}
\begin{equation}\label{rest.72}
E_+^0(z)v_+=\sum v_+(j)e_j^0,\ \| E_+^0\|_{{\cal L}({\bf C}^m,{\cal
    D}(P_0))}\le {\cal O}(1),
\end{equation}
\begin{equation}\label{rest.73}
E_-^0(z)u(j)=(u|e_j^0),\ \| E_-^0\|_{{\cal L}(L^2,{\bf C}^m)}\le {\cal O}(1),
\end{equation}
\begin{equation}\label{rest.74}E^0_{-+}(z)=\mathrm{diag\,}(z-\mu_j).\end{equation}

\par Choose $\chi \in C_0^\infty (\stackrel{\circ }{M}_0)$ as after
(\ref{rest.56}) and put
\begin{equation}\label{rest.75}
R_+=R_+^0\chi :\, {\cal H}_\mathrm{sbd}\to {\bf C}^m,
\end{equation}
\begin{equation}\label{rest.76}
R_-=\chi R_-^0 :\, {\bf C}^m\to {\cal H}_\mathrm{sbd}.
\end{equation}

Define,
\begin{equation}\label{rest.81}
{\cal P}(z)=\begin{pmatrix}P-z &R_-\\ R_+ &0\end{pmatrix}:
{\cal D}_{\mathrm{sbd}}\times {\bf C}^m\to {\cal H}_{\mathrm{sbd}}\times {\bf C}^m.
\end{equation}
This is a Fredholm operator of index 0, so to show that it is bijective
it suffices to construct a right inverse.

\par Let $\chi _0$, $\chi _1$, $\chi $ be as in the construction of
$(P-z)^{-1}$ in Proposition \ref{rest5} (where the assumptions on
$z$ were different). Following the same path as there, we put
\begin{equation}\label{rest.82}
\begin{split}
\widetilde{{\cal E}}=
\begin{pmatrix}\chi E^0\chi _1 &\chi E_+^0\\ E_-^0\chi &E^0_{-+}\end{pmatrix}+
\begin{pmatrix}(\widetilde{P}-z)^{-1}(1-\chi _1) &0\\ 0
  &0\end{pmatrix}=:\begin{pmatrix}\widetilde{E} &\widetilde{E}_+\\
  \widetilde{E}_- &\widetilde{E}_{-+}\end{pmatrix},
\end{split}
\end{equation}
\begin{equation}\label{rest.82.5}
\widetilde{E}=\underbrace{{\cal O}(1/\delta (h))}_{{\cal
    H}_\mathrm{sbd}\to {\cal D}_\mathrm{sbd}}+\underbrace{{\cal
    O}(1)}_{m_\epsilon ^{1/2}{\cal H}_\mathrm{sbd}\to m_\epsilon
  ^{-1/2}{\cal D}_\mathrm{sbd}}={\cal O}(1/\epsilon _\vartheta ):{\cal
  H}_\mathrm{sbd}\to {\cal D}_\mathrm{sbd}.
\end{equation}
A straight forward calculation gives
\begin{equation}\label{rest.83}
{\cal P}(z)\widetilde{{\cal E}}(z)=\begin{pmatrix}A_{11} &A_{12}\\
  A_{21} &A_{22}\end{pmatrix},
\end{equation}
where
\[
\begin{split}
A_{11}&=(P-z)\chi E^0\chi _1+\chi R^0_-E_-^0\chi +1-\chi
_1-W(\widetilde{P}-z)^{-1}(1-\chi _1),\\
A_{12}&=(P-z)\chi E_+^0+\chi R^0_-E_{-+}^0,
\\
A_{21}&=R^0_+\chi^2 E^0\chi _1+R_+^0\chi (\widetilde{P}-z)^{-1}(1-\chi _1)
\\
A_{22}&=R_+^0\chi^2 E_+^0.
\end{split}
\]
Here, using standard Lithner-Agmon estimates for $E^0$, $(\widetilde{P}-z)^{-1}$, together with the fact that
$$
E_-^0 \chi = E_-^0 \chi_1 + \widetilde{{\cal O}} (e^{-S_0/(2h)}),
$$
we get
\[\begin{split}
(P-z)\chi E^0\chi _1+\chi R^0_-E_-^0\chi
&=\chi (\underbrace{(P-z)E^0+R^0_-E_-^0}_{=1})\chi _1+
\widetilde{{\cal O}} (e^{-S_0/(2h)})\\
&=\chi _1+\widetilde{{\cal O}} (e^{-S_0/(2h)}),
\end{split}
\]
and
$$
W(\widetilde{P}-z)^{-1}(1-\chi _1)=\widetilde{{\cal O}} (e^{-S_0/(2h)}).
$$
Hence \begin{equation}\label{rest.84}A_{11}=1+\widetilde{{\cal
      O}}(e^{-S_0/(2h)})\end{equation}
Similarly,
\begin{equation}\label{rest.85}
  A_{12}=\chi (\underbrace{(P-z)E_+^0+R^0_-E^0_{-+}}_{=0})+
\underbrace{[P,\chi ]E_+^0}_{=\widetilde{{\cal O}} (e^{-S_0/h})}=\widetilde{{\cal O}} (e^{-S_0/h}),
\end{equation}
\begin{equation}\label{rest.86}
\begin{split}
&A_{21}=\\
&\underbrace{R^0_+E^0}_{=0}\chi _1+\underbrace{R_+^0(\chi ^2 -1_{M_0}
  )E^0\chi _1}_{=\widetilde{{\cal
      O}}(e^{-{3S_0/(2h)}})}+\underbrace{R^0_+\chi (\widetilde{P}-z)^{-1}(1-\chi
  _1)}_{=\widetilde{{\cal O}}(e^{-S_0/(2h)})}
=\widetilde{{\cal O}}(e^{-S_0/(2h)}),
\end{split}
\end{equation}
\begin{equation}\label{rest.87}
A_{22}=\underbrace{R_+^0E_+^0}_{=1}+R_+^0(\chi ^2 -1_{M_0}
  )E_+^0=1+\widetilde{{\cal O}}(e^{-2S_0/h}).
\end{equation}

\par A first conclusion is that
$$
{\cal P}(z)\widetilde{{\cal E}}(z)=1+\widetilde{{\cal O}}(e^{-S_0/(2h)}): {\cal H}_{\mathrm {sbd}} \times {\bf C}^m
\rightarrow {\cal H}_{\mathrm {sbd}} \times {\bf C}^m,
$$
where the remainder has entries with distribution kernels supported in
$M_0\times M_0$, $M_0\times {\bf C}^m$, ${\bf C}^m\times M_0$,
$\emptyset $ respectively.
so ${\cal P}(z)$ is bijective with inverse
\begin{equation}\label{rest.88}\begin{split}
{\cal E}(z)=&\widetilde{{\cal E}}(z)(1+\widetilde{{\cal
    O}}(e^{-S_0/(2h)}))=\\
&\widetilde{{\cal E}}(z)+\widetilde{{\cal O}}(e^{-S_0/(2h)}):
m_\epsilon ^{\frac{1}{2}}{\cal H}_{\mathrm{sbd}}\times {\bf C}^m\to
m_\epsilon ^{-\frac{1}{2}}{\cal D}_{\mathrm{sbd}}\times {\bf C}^m
\end{split}\end{equation}
In particular, if we write
\begin{equation}\label{rest.89}
{\cal E}(z)=\begin{pmatrix}E(z) &E_+(z)\\ E_-(z) &E_{-+}(z)\end{pmatrix},
\end{equation}
we get
$$
E_{-+}-E_{-+}^0=\widetilde{{\cal O}}(e^{-S_0/(2h)})
$$
and $E$ satisfies the estimate in (\ref{rest.82.5}).

\par We shall improve this estimate by working with exponential
weights in $\ddot{\mathrm{O}}$. For $\phi \in C_0^\infty
(\stackrel{\circ }{M}_0)$ real-valued, we put
$$
{\cal H}^\phi _{\mathrm{sbd}}=e^{\phi /h}{\cal H}_{\mathrm{sbd}}
$$
equipped with the norm $\| e^{-\phi /h} u\|_{{\cal H}_\mathrm{sbd}}
$. As a vector space it is equal to ${\cal H}_\mathrm{sbd}$. The
constructions above work without any great changes if we assume that
$$
\mathrm{supp\,}\nabla \phi \cap U_0=\emptyset ,\ (\nabla \phi )^2\le
V-E-\epsilon _0.
$$
By varying $\phi $ we see that
\begin{equation}\label{rest.90}
\begin{split}
  &(m_\epsilon ^{\frac{1}{2}}Em_\epsilon
  ^{\frac{1}{2}})(x,y)=\check{{\cal O}}(e^{-d(x,y)/h}),\
  (m_\epsilon ^{\frac{1}{2}}E_+)(x)=\check{{\cal
      O}}(e^{-d(x,U_0)/h}),\\
  &(E_-m_\epsilon ^{\frac{1}{2}})(y)=\check{{\cal O}} (e^{-d(U_0,y)/h}),\ x,y\in B_d(U_0,S_0),
\end{split}
\end{equation}
where we use the same symbols to denote the distribution kernels of
$E$, $E_\pm$.

\par For $v_+\in {\bf C}^m$, the solution $(u,u_-)$ of the problem
\begin{equation}\label{rest.91}
\begin{cases}
(P-z)u+R_-u_-=0,\\
R_+u=v_+,
\end{cases}
\end{equation}
is given by $u=E_+v_+$, $u_-=E_{-+}v_+$. As an approximate solution to
(\ref{rest.91}), we take $u^0=\chi E_+^0v_+$,
$u_-^0=E^0_{-+}v_+$. Then
$$
\begin{cases}
(P-z)u^0+R_-u_-^0=[P,\chi ]E_+^0v_+,\\
R_+u^0=v_++R_+^0(\chi^2-1_{M_0})E_+^0v_+,
\end{cases}
$$
so we get the solution to (\ref{rest.90}) in the form
\begin{equation}\label{rest.92}
\begin{cases}
u=u^0-E[P,\chi ]E_+^0v_+-E_+R_+^0(\chi^2-1_{M_0})E_+^0v_+,\\
u_-=u_-^0-E_-[P,\chi ]E_+^0v_+-E_{-+}R_+^0(\chi^2-1_{M_0})E_+^0v_+.
\end{cases}
\end{equation}
Now it follows from (\ref{rest.90}) and the corresponding estimates
for $E^0$, $E_\pm^0$, that
$$
|u_--u_-^0|=\widetilde{{\cal O}}(e^{-2S_0/h})|v_+|,
$$
which means that
\begin{equation}\label{rest.93}
E_{-+}-E_{-+}^0=\widetilde{{\cal O}}(e^{-2S_0/h}).
\end{equation}
\begin{proofof} Proposition \ref{rest6}. By (\ref{rest.74}),
(\ref{rest.93}) reads
\begin{equation}\label{rest.94}
E_{-+}-\mathrm{diag\,}(z-\mu _j)=\widetilde{{\cal O}}(e^{-2S_0/h}).
\end{equation}
Now the resonances of $P$ in $\widetilde{J}(h)$ are the zeros of $\det
(E_{-+})$ and we get the proposition by means of elementary arguments
for zeros of holomorphic functions of one variable.
\end{proofof}

\par The first equation in (\ref{rest.92}) can be written
$$
E_+v_+-\chi E_+^0v_+=-\left(
E[P,\chi ]E_+^0+E_+R_+^0(\chi^2-1_{M_0})E_+^0
 \right)v_+
$$
and it follows that
$$
\|m_\epsilon ^{\frac{1}{2}}(E_+v_+-\chi E_+^0v_+) \|_{{\cal
    H}_{\mathrm{sbd}}}=\widetilde{{\cal O}}(e^{-S_0/h})|v_+|,
$$
i.e.\
\begin{equation}\label{rest.95}
m_\epsilon ^{\frac{1}{2}}(E_+-\chi E_+^0)=\widetilde{{\cal
    O}}(e^{-S_0/h}):\ {\bf C}^m\to {\cal H}_{\mathrm{sbd}} .
\end{equation}

\par Taking the adjoints with respect to the scalar product on $L^2({\bf
  R}^n)\times {\bf C}^m$, we have
$$
{\cal P}(z)^*{\cal E}(z)^*=1,
$$
where
$$
{\cal P } (z)^*=\begin{pmatrix}P^*-z &R_+^*\\ R_-^* &0\end{pmatrix},\ \
{\cal E}(z)^*=\begin{pmatrix}E(z)^* &E_-(z)^
  *\\ E_+(z)^* &E_{-+}(z)^*\end{pmatrix},
$$
and hence ${\cal E}(z)^*$ can be constructed by starting with
$$
\widehat{{\cal E}}(z)=\begin{pmatrix}
\chi {E^0}^* \chi _1 &\chi {E_-^0}^* \\ {E_+^0}^*\chi  &{E_{-+}^0}^*
\end{pmatrix}
+
\begin{pmatrix}
(\widetilde{P}^*-\overline{z})^{-1}(1-\chi _1) &0\\ 0 &0 .
\end{pmatrix}
$$
In analogy with (\ref{rest.95}) we get
$$
m_\epsilon ^{\frac{1}{2}}(E_-^* -\chi {E_-^0}^*)=
\widetilde{{\cal O}}(e^{-S_0/h}):\ {\bf C}^m\to {\cal H}_{\mathrm{sbd}}^*,
$$
where we notice that ${\cal H}_{\mathrm{sbd}}^*=H(\Lambda_{-\upsilon G_{\mathrm{sbd}}})$, in view of Proposition 8.8 of~\cite{HeSj86}. By duality,
\begin{equation}
\label{rest.96}
(E_--E_-^0\chi )m_\epsilon ^{\frac{1}{2}}=\widetilde{{\cal O}}(e^{-S_0/h}):\ {\cal H}_{\mathrm{sbd}}\to {\bf C}^m.
\end{equation}

\par Recall the standard formula for Grushin problems:
\begin{equation}\label{rest.97}
(z-P)^{-1}=-E(z)+E_+(z)E_{-+}(z)^{-1}E_-(z),\ z\in
\widetilde{J}\setminus \Gamma (h).
\end{equation}
Here, $E(z)$ is holomorphic and by (\ref{rest.89}), (\ref{rest.82}) , (\ref{rest.88}),
\begin{equation}\label{rest.98}
E(z)=\chi E^0\chi _1+(\widetilde{P}-z)^{-1}(1-\chi _1)+{\cal
  O}(e^{-S_0/(2h)}):
m_\epsilon ^{\frac{1}{2}}{\cal H}_{\mathrm{sbd}}\to
m_\epsilon ^{-\frac{1}{2}}{\cal D}_{\mathrm{sbd}}.
\end{equation}
\begin{equation}\label{rest.99}
E(z)={\cal O}(h/\delta )+{\cal O}(1):\ m_\epsilon ^{\frac{1}{2}}{\cal
  H}_\mathrm{sbd}
\to m_\epsilon ^{-\frac{1}{2}}{\cal D}_\mathrm{sbd}.
\end{equation}
Here the term ${\cal O}(h/\delta )$ represents the term $\chi E^0\chi
_1$ which is ${\cal O}(1/\delta )$ as an operator on $L^2({\bf
  R}^n)$.

\par When either $m=1$ or $\mathrm{dist\,}(z,\sigma (P_0))\ge
\widetilde{{\cal O}}(e^{-2S_0/h})$, it follows from (\ref{rest.93})
that
\begin{equation}\label{rest.100}
E_{-+}^{-1}={\cal O}\left( \frac{1}{\mathrm{dist\,}(z,\Gamma (h))}
\right) .
\end{equation}
Here we also assumed for simplicity that
$\mathrm{dist\,}(z,\sigma(P_0)\cap J(h))=\mathrm{dist\,}(z,\sigma(P_0))$ which can be achieved by a slight
shrinking of the interval $J(h)$. Thus, when (\ref{rest.100}) holds, we
get
\begin{equation}\label{rest.101}
E_+E_{-+}^{-1}E_-={\cal O}\left( \frac{1}{\mathrm{dist\,}(z,\Gamma (h))}
\right) :\ m_\epsilon ^{\frac{1}{2}}{\cal H}_{\mathrm{sbd}}\to
m_\epsilon ^{-\frac{1}{2}}{\cal D}_{\mathrm{sbd}},
\end{equation}
where we also used that by (\ref{rest.95}), (\ref{rest.96}),
\begin{equation}\label{rest.102}
E_+={\cal O}(1):\ {\bf C}^m\to m_\epsilon ^{-\frac{1}{2}}{\cal D}_{\mathrm{sbd}},
\end{equation}
\begin{equation}\label{rest.103}
E_-={\cal O}(1):\ m_\epsilon ^{\frac{1}{2}}{\cal H}_{\mathrm{sbd}}\to
{\bf C}^m.
\end{equation}
Now (\ref{rest.101}) implies that $\psi E_+E_{-+}^{-1}E_- \psi $ is
${\cal O}(1/(h\,\mathrm{dist\,}(z,\Gamma ))):\, L^2\to L^2$ for every $\psi \in C_0^\infty $. Using (\ref{rest.95}), (\ref{rest.96})
more directly, we get
\begin{equation}\label{rest.104}
E_+E_{-+}^{-1}E_--\chi E^0_+E_{-+}^{-1} E^0_-\chi =
\frac{1}{\mathrm{dist\,}(z,\Gamma )}\widetilde{{\cal O}}(e^{-S_0/h}):\
m_\epsilon ^{\frac{1}{2}}{\cal H}_{\mathrm{sbd}}\to
m_\epsilon ^{-\frac{1}{2}}{\cal D}_{\mathrm{sbd}}
\end{equation}
and here
\begin{equation}\label{rest.105}
\chi E^0_+E_{-+}^{-1} E^0_-\chi ={\cal
  O}(h/\mathrm{dist\,}(z,\Gamma )):\ m_\epsilon ^{\frac{1}{2}}{\cal H}_{\mathrm{sbd}}\to
m_\epsilon ^{-\frac{1}{2}}{\cal D}_{\mathrm{sbd}}.
\end{equation}

\par From (\ref{rest.97}), (\ref{rest.98}), (\ref{rest.99}),
(\ref{rest.100}), (\ref{rest.102}), (\ref{rest.103}), we get
\begin{prop}
\label{rest7}
We let $z$ vary in the set $\widetilde{J}(h)$ in {\rm (\ref{rest.69})}.
Assume that $m=1$ or $\mathrm{dist\,}(z,\sigma (P_0))\ge
\widetilde{{\cal O}}(e^{-2S_0/h})$, and also that $\mathrm{dist\,}(z,\sigma
(P_0)\cap J(h))=\mathrm{dist\,}(z,\sigma (P_0))$. Then we have,
\begin{equation}
\label{rest.106}
  (z-P)^{-1}={\cal O}(h/\delta )+{\cal O}(1)+{\cal
    O}(h/\mathrm{dist\,}(z,\Gamma )): \ m_\epsilon ^{\frac{1}{2}}{\cal H}_{\mathrm{sbd}}\to
  m_\epsilon ^{-\frac{1}{2}}{\cal D}_{\mathrm{sbd}},
\end{equation}
where the first two terms to the right are holomorphic in $z$.
\end{prop}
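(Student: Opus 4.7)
The plan is to synthesize the estimates already built up in Section \ref{rest} into a single resolvent bound by starting from the Grushin formula (\ref{rest.97}),
$$
(z-P)^{-1} = -E(z) + E_+(z)\, E_{-+}(z)^{-1}\, E_-(z),
$$
and estimating each block separately. The holomorphic block $E(z)$ will yield the first two summands in (\ref{rest.106}), while the singular block $E_+ E_{-+}^{-1} E_-$ will yield the third.

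First, for the $E(z)$ part, I would use the decomposition (\ref{rest.98}). The interior piece $\chi E^0 \chi_1 = \chi(P_0-z)^{-1}(1-\Pi_0)\chi_1$ is holomorphic in $z \in \widetilde{J}(h)$ (by the spectral separation in (\ref{rest.65})) and of norm $\mathcal{O}(1/\delta)$ on $L^2$; since its kernel is supported in a compact subset of $M_0 \times M_0$ where $m_\epsilon \asymp h$, conjugation by $m_\epsilon^{\pm 1/2}$ produces the bound $\mathcal{O}(h/\delta)$ as a map $m_\epsilon^{1/2}\mathcal{H}_{\mathrm{sbd}} \to m_\epsilon^{-1/2}\mathcal{D}_{\mathrm{sbd}}$. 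The exterior piece $(\widetilde{P}-z)^{-1}(1-\chi_1)$ is holomorphic and $\mathcal{O}(1)$ in the same pair of spaces by Proposition \ref{rest3}. The negligible remainder $\widetilde{\mathcal{O}}(e^{-S_0/(2h)})$ from (\ref{rest.88}) is absorbed into the $\mathcal{O}(1)$ term. Both surviving pieces are manifestly holomorphic in $z$, which justifies the claim that the first two summands in (\ref{rest.106}) are holomorphic.

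For the singular part, the main input is the inversion of $E_{-+}(z)$. From (\ref{rest.93})/(\ref{rest.94}) one has $E_{-+}(z) = \mathrm{diag}(z-\mu_j) + \widetilde{\mathcal{O}}(e^{-2S_0/h})$. Under either of the two hypotheses (either $m=1$, in which case the inversion reduces to a scalar computation controlled directly by $b(\mu)-\mu$ of Proposition \ref{rest6}; or the quantitative separation $\mathrm{dist}(z,\sigma(P_0)) \ge \widetilde{\mathcal{O}}(e^{-2S_0/h})$, in which case the diagonal dominates the off-diagonal perturbation and a Neumann series converges), one obtains $\|E_{-+}^{-1}\| = \mathcal{O}(1/\mathrm{dist}(z,\Gamma(h)))$, with the distance to $\Gamma(h)$ (rather than to $\{\mu_j\}$) coming from the identification of $\Gamma(h)$ with the zeros of $\det E_{-+}$ via Proposition \ref{rest6}. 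Combining with (\ref{rest.102})--(\ref{rest.103}) on $E_\pm$ in weighted spaces, together with the fact that on the compact support of the cutoffs $\chi,\chi_1$ we have $m_\epsilon^{1/2}\asymp h^{1/2}$ so that the leading piece $\chi E_+^0 E_{-+}^{-1} E_-^0 \chi$ acquires an extra factor $h$ in the weighted operator norm, we obtain the third summand $\mathcal{O}(h/\mathrm{dist}(z,\Gamma))$. The exponentially small correction in (\ref{rest.104}) is absorbed into the preceding $\mathcal{O}(1)$ term. The hard part, and the reason for the dichotomy in the hypotheses, is precisely the inversion of $E_{-+}$ when the Dirichlet eigenvalues $\mu_j$ may cluster: without either $m=1$ or the explicit distance condition, the exponentially small off-diagonal perturbation can mix near-degenerate eigenvalues and spoil the naive bound $\|E_{-+}^{-1}\| \sim 1/\mathrm{dist}(z,\sigma(E_{-+}))$; everything else is bookkeeping of weights.
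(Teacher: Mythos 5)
Your proposal follows the paper's argument closely and faithfully: it starts from the Grushin formula (\ref{rest.97}), treats the holomorphic block $-E(z)$ via the decomposition (\ref{rest.98})--(\ref{rest.99}) (with the factor $h$ coming from $m_\epsilon\asymp h$ on the compact support of $\chi,\chi_1$), inverts $E_{-+}$ using (\ref{rest.93})--(\ref{rest.94}) under either of the two alternatives exactly as in (\ref{rest.100}), and picks up the crucial extra factor $h$ in the singular block by replacing $E_\pm$ with $\chi E_\pm^0$ through (\ref{rest.95}), (\ref{rest.96}) and (\ref{rest.105}). Your closing remark about why the dichotomy in the hypotheses is needed (near-degenerate $\mu_j$ spoiling the naive inversion of $E_{-+}$) is a correct reading.

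The only flaw is your final sentence about the correction from (\ref{rest.104}). That correction is $\frac{1}{\mathrm{dist}(z,\Gamma)}\,\widetilde{\mathcal{O}}(e^{-S_0/h})$ and it is \emph{not} $\mathcal{O}(1)$: the set $\widetilde{J}(h)$ of (\ref{rest.69}) overlaps $\Gamma(h)$, so $\mathrm{dist}(z,\Gamma)$ can be taken arbitrarily small (for instance, when $m=1$ and $z$ approaches $\lambda_0$), and even under the second alternative one only gets $\mathrm{dist}(z,\Gamma)\gtrsim\widetilde{\mathcal{O}}(e^{-2S_0/h})$, which leaves a ratio that can be as large as $\widetilde{\mathcal{O}}(e^{S_0/h})$. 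The right observation is that $\widetilde{\mathcal{O}}(e^{-S_0/h})=\mathcal{O}(e^{-c/h})\ll h$ once $\epsilon_0$ is small, so the correction is dominated by, and absorbed into, the \emph{third} summand $\mathcal{O}(h/\mathrm{dist}(z,\Gamma))$, not the $\mathcal{O}(1)$ term. With that single change your argument is complete and matches the paper's.
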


\section{Back to adiabatics}\label{bics}
\setcounter{equation}{0}

Let $I \subset {\bf R}$ be an interval and let
\begin{equation}
\label{bics.1}
V_t=V(t,x)\in C_{b}^\infty (I\times {\bf R}^n;{\bf R}).
\end{equation}
We assume that (cf.\ (\ref{rest.3}))
\begin{equation}\label{bics.2}
\begin{split}
&V_t\hbox{ has a holomorphic extension (also denoted } V_t \hbox{)
  to}\\
&\{x\in {\bf C}^n;\, |\Re x|>C,\ |\Im x|<|\Re x|/C \}\\&\hbox{such that
}
V_t(x)=o(1),\ x\to \infty .
\end{split}
\end{equation}
\begin{equation}\label{bics.3}\begin{split}
\partial _tV_t(x)=0\hbox{ for }|x|\ge C,\hbox{ for some constant }C>0
\end{split}
\end{equation}
It is tacitly assumed that $V(t,x)$ does not depend on $h$. However,
when considering a narrow potential wells in an island, of diameter
$\asymp h$, we will have to make an exception and
allow such an $h$-dependence  in a small neighborhood of the well.

\par Let $0<E_-<E_-'<E_+'<E_+<\infty $ and let
\begin{equation}\label{bics.4}
E_0(t)\in C_b^\infty (I;[E_-',E_+']).
\end{equation}
We assume that $V_t-E_0(t)$ has a  potential well in an island as in
Section \ref{rest}.

\par Let $\ddot{\mathrm{O}}=\ddot{\mathrm{O}}(t)\Subset {\bf R}^n$ be a
connected open set and let $U_0(t)\subset \ddot{\mathrm{O}}(t)$ be
compact. Assume (cf.\ (\ref{rest.50})),
\begin{equation}\label{bics.5}
V_t-E_0(t)\begin{cases}
<0 \hbox{ in } {\bf R}^n\setminus \overline{\ddot{\mathrm{O}}}(t),\\
>0 \hbox{ in } \ddot{\mathrm{O}}(t)\setminus U_0(t),\\
\le 0\hbox{ in }U_0(t),
\end{cases}
\end{equation}
\begin{equation}\label{bics.6}
\mathrm{diam}_{d_t}(U_0(t))=0.
\end{equation}
Here $d_t$ is the Lithner-Agmon distance on $\ddot{\mathrm{O}}(t)$,
given by the metric $(V_t-E_0(t))_+dx^2$.

\par Also assume that with $p_t=\xi ^2+V_t(x)$,
\begin{equation}\label{bics.7}
\hbox{the }H_{p_t}\hbox{-flow has no trapped trajectories in
}{p_t^{-1}(E_0(t))_|}_{{\bf R}^n\setminus \ddot{\mathrm{O}}(t)}.
\end{equation}
It follows that
\begin{equation}\label{bics.8}
d_xV_t\ne 0 \hbox{ on }\partial \ddot{\mathrm{O}}(t),
\end{equation}
so $\partial \ddot{\mathrm{O}}(t)$ is smooth and depends smoothly on
$t$. Thus $\ddot{\mathrm{O}}(t)$ is a manifold with smooth boundary,
depending smoothly on $t$. Further, $U_0(t)$ depends continuously on
$t$. (This will still be true when we allow $h$-dependence near
$U_0(t)$.)

\par For $\epsilon _0>0$ small, we define
\begin{equation}\label{bics.9}
M_0(t)=\{ x\in \ddot{\mathrm{O}}(t);\, d_t(x,\partial
\ddot{\mathrm{O}}(t))\ge \epsilon _0 \} ,
\end{equation}
so $M_0(t)\Subset \ddot{\mathrm{O}}(t)$ is a compact set with smooth
boundary, depending smoothly on $t$. (Here we use the structure of
$d_t(x,\ddot{\mathrm{O}}(t))$ that follows from (\ref{bics.8}), see
\cite{HeSj86}).

When $I$ is a fixed compact interval, the assumptions above are
fulfilled uniformly in $t$. Since we also want to allow $I$ to be a
very long interval, we add the following compactness assumption:
\begin{equation}\label{bics.9.5}
\begin{split}  &(V_t,E_0(t))\in {\cal K},\ \forall t\in I,\hbox{ where }{\cal K}
  \hbox{ is a compact subset of }\\
&\{V\in C_b^\infty ({\bf R}^n;{\bf R});\, V\hbox{ satisfies (\ref{bics.2}) with
  a fixed constant }C
\}\times [E_-',E_+']\\
&\hbox{such that }(V,E)\hbox{ satisfies the assumptions (\ref{bics.3}),
  with a fixed }C\\ &\hbox{as well as
(\ref{bics.5}), (\ref{bics.6}), (\ref{bics.7}).}
\end{split}
\end{equation}

\par Let $P_0(t)$ denote the Dirichlet realization of $P(t)=-h^2\Delta
+V_t(x)$ on $M_0(t)$. If we enumerate the eigenvalues of $P_0(t)$ in
$]E_-,E_+[$ in increasing order (repeated with multiplicities) we know
(as a general fact for 1-parameter families of self-adjoint
operators), that they are uniformly Lipschitz functions of $t$. Let
$\mu_0 (t)=\mu_0 (t;h)$ be such an eigenvalue and assume (cf.\ (\ref{rest.65})),
\begin{equation}\label{bics.10}
\mu_0 (t;h)=E_0(t)+o(1),\ h\to 0,\hbox{ uniformly in }t.
\end{equation}
\begin{equation}\label{bics.11}\begin{split}
&\mu_0 (t;h) \hbox{ is a simple eigenvalue and}\\ &\sigma (P_0(t))\cap
[E_0(t)-\delta (h),E_0(t)+\delta (h)]=\{ \mu_0 (t;h) \}.
\end{split}
\end{equation}
Here, as in Section \ref{rest}, $\delta (h)>0$ is small but not
exponentially small,
\begin{equation}\label{bics.12}
\ln \delta (h)\ge -o(1)/h,\ h\to 0.
\end{equation}

\par We restrict the spectral parameter $z$ to $D(\mu_0 (t),\delta
(h)/2)$. In this region we have,
\begin{equation}\label{bics.13}
(z-P_0(t))^{-1}={\cal O}\left(\frac{1}{|z-\mu_0 (t)|} \right):\ L^2\to
{\cal D}(P_0(t)),
\end{equation}
and more generally,
\begin{equation}\label{bics.14}
  \partial _t^k(z-P_0(t))^{-1}={\cal O}\left(\frac{1}{|z-\mu_0 (t)|^{1+k}} \right):\ L^2\to
  {\cal D}(P_0(t)).
\end{equation}
Strictly speaking, we work on sufficiently small time intervals, where
we can replace $P_0(t)$ with the unitarily equivalent operator
$U(t)^{-1}P_0(t)U(t)$, where $U(t):L^2(M_0(t))\to L^2(M_0(t_0))$ is
induced by a diffeomorphism $\kappa _t:M_0(t_0)\to M_0(t)$, depending
smoothly on $t$.
The spectral projection $\Pi _0(t)$, associated to $(P_0(t),\mu_0(t))$ is
given by
\begin{equation}\label{bics.15}
\Pi _0(t)=\frac{1}{2\pi i}\int_{\partial D(\mu_0
  (t),r)}(z-P_0(t))^{-1}dz,\ 0<r\le \delta (h)/2,
\end{equation}
and choosing $r=\delta (h)/2$, we see that
\begin{equation}\label{bics.16}
\partial _t^k\Pi _0(t)={\cal O}(\delta (h)^{-k}):\ L^2\to {\cal D}(P_0(t)).
\end{equation}
It follows that we can choose a normalized eigenfunction $e_0(t)$:
\begin{equation}\label{bics.17}
P_0(t)e_0(t)=\mu_0 (t)e_0(t),\ \|e_0\|_{L^2}=1,
\end{equation}
such that
\begin{equation}\label{bics.18}
\partial _t^ke_0(t)={\cal O}(\delta (h)^{-k})\hbox{ in }{\cal
  D}(P_0(t)),\ k=1,2,...,
\end{equation}
Now it is classical that
$$
\mu_0 (t)=(P_0(t)e_0(t)|e_0(t)),
$$
$$
\partial _t\mu_0 (t)=(\partial _tP_0e_0|e_0)+(P_0\partial _te_0|e_0)+(P_0e_0|\partial _te_0),
$$
where the sum of the last two terms is equal to 0:
$$
(\partial _te_0|P_0 e_0)+(P_0e_0|\partial _te_0)=
\mu_0 (t)((\partial _te_0|e_0)+(e_0|\partial _te_0))=\mu_0 (t)\partial
_t(e_0|e_0)=0.
$$
Thus,
\begin{equation}\label{bics.19}
\partial _t\mu_0 (t)=(\partial _tP_0e_0|e_0)={\cal O}(1),
\end{equation}
and after differentiating in $t$:
$$
\partial _t^k \mu_0 (t)={\cal O}(\delta (h)^{-k+1}),\ k=1,2,...
$$
For our purposes, it will be enough to work with the weaker estimate
\begin{equation}\label{bics.20}
\partial _t^k\mu_0 (t)={\cal O}(\delta (h)^{-k}),\ k=1,2,...
\end{equation}

\par Before discussing shape resonances, it will be convenient to
discuss some simple symmetry properties. In \cite{HeSj86}, (7.17) it
was shown that
$$
(u|v)_{H(\Lambda _G)}=(Bu|v)_{L^2({\bf R}^n)},\ u,v\in H(\Lambda _{G}),
$$
where $B:\, H(\Lambda _G)\to H(\Lambda _{-G})$ is the sum of an
elliptic Fourier integral operator and a nop of order 1. (Here $G$
denotes the function ``$\upsilon G$" in Section \ref{rest}, where the
parameter ``$\upsilon $' is
fixed according to (\ref{rest.44}).) Taking complex
conjugates and exchanging $u$ and $v$, we get
\begin{equation}\label{bics.21}
(u|v)_{H(\Lambda _G)}=(u|Bv)_{L^2({\bf R}^n)},\ u,v\in H(\Lambda _G),.
\end{equation}
Write
\begin{equation}\label{bics.22}\langle u|v\rangle=\int_{{\bf R}^n} uv dx \end{equation}
for the bilinear scalar product on $L^2$, so that
$$
\langle u|v\rangle =(u|\Gamma v)_{L^2},\hbox{ where }\Gamma v:=\overline{v}.
$$
\begin{prop}\label{bics1}
We have $\Gamma ={\cal O}(1):\, H(\Lambda _G)\to H(\Lambda
_{\check{G}})$, where $\check{G}(x,\xi ):=G(x,-\xi )$.
\end{prop}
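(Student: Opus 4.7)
The plan is to compare the FBI transforms of $v$ and $\overline v$ by transporting the machinery of Section \ref{nore} across the antiholomorphic involution $\iota:{\bf C}^{2n}\to {\bf C}^{2n}$ given by $\iota(x,\xi)=(\overline x,-\overline\xi)$, and exploiting that conjugation in $y$-space corresponds, at the FBI level, to conjugation together with $\iota$ in phase space.

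First I would verify that $\iota$ restricts to a diffeomorphism $\Lambda_G\to\Lambda_{\check G}$ which, in the natural real parametrization, sends $\Re\alpha=(\Re\alpha_x,\Re\alpha_\xi)$ to $(\Re\alpha_x,-\Re\alpha_\xi)$. This is a direct consequence of the defining relation $\Im(x,\xi)=H_G(\Re(x,\xi))$ combined with the identities $\partial_\xi\check G(x,\xi)=-\partial_\xi G(x,-\xi)$ and $\partial_x\check G(x,\xi)=\partial_xG(x,-\xi)$. Applying the same calculation to (\ref{nore.7}) yields $H_{\check G}\circ\iota=H_G$ on $\Lambda_G$.

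Next, given the FBI transform $T$ associated to $\Lambda_G$ with phase $\phi(\alpha,y)=(\alpha_x-y)\alpha_\xi+i\lambda(\alpha)(\alpha_x-y)^2$ and amplitude ${\bf t}$, I would introduce the candidate transform $\widetilde T$ for $\Lambda_{\check G}$ by
\[
\widetilde T u(\widetilde\alpha;h):=\overline{T\overline u(\iota^{-1}\widetilde\alpha;h)},\qquad \widetilde\alpha\in\Lambda_{\check G}.
\]
Substituting into the integral definition of $T$ shows that
\[
\widetilde T u(\widetilde\alpha;h)=\int e^{i\widetilde\phi(\widetilde\alpha,y)/h}\widetilde{\bf t}(\widetilde\alpha,y;h)\chi_{\widetilde\alpha}(y)u(y)\,dy,
\]
where $\widetilde\phi(\widetilde\alpha,y)=-\overline{\phi(\iota^{-1}\widetilde\alpha,y)}=(\widetilde\alpha_x-y)\widetilde\alpha_\xi+i\widetilde\lambda(\widetilde\alpha)(\widetilde\alpha_x-y)^2$ with $\widetilde\lambda=\overline{\lambda\circ\iota^{-1}}$ and $\widetilde{\bf t}=\overline{{\bf t}\circ\iota^{-1}}$. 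The key point to check is that $\widetilde T$ satisfies the hypotheses of Section \ref{nore}: $\widetilde\lambda$ is positive and elliptic in $S(\widetilde r R^{-1},\Lambda_{\check G})$ because $\lambda$ is, and $\widetilde{\bf t}$ meets the maximal linear independence condition (\ref{nore.4}) since complex conjugation preserves the modulus of the relevant $(n+1)\times(n+1)$ determinant. Thus $\widetilde T$ is an admissible FBI transform for $\Lambda_{\check G}$, and by the independence of $H(\Lambda_{\check G})$ from the choice of FBI transform (Section \ref{hlg}) it may be used to measure $\overline v$.

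Finally, taking $u=\overline v$ gives $\widetilde T\overline v(\widetilde\alpha)=\overline{Tv(\iota^{-1}\widetilde\alpha)}$, so $|\widetilde T\overline v(\widetilde\alpha)|^2=|Tv(\iota^{-1}\widetilde\alpha)|^2$. Since $\iota$ preserves the symplectic volume in absolute value (it merely flips $\Re\alpha_\xi$ in the standard parametrization) and $H_{\check G}\circ\iota=H_G$, the change of variables $\widetilde\alpha=\iota\alpha$ yields
\[
\|\overline v\|^2_{H(\Lambda_{\check G})}\asymp\int_{\Lambda_{\check G}}|\widetilde T\overline v|^2 e^{-2H_{\check G}/h}\,d\widetilde\alpha=\int_{\Lambda_G}|Tv|^2 e^{-2H_G/h}\,d\alpha\asymp\|v\|^2_{H(\Lambda_G)},
\]
which establishes the claimed boundedness. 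The main obstacle, though relatively mild, is the verification that $\widetilde T$ genuinely falls within the framework of admissible FBI transforms of Section \ref{nore}; once this is granted, the norm identity reduces to the above change of variables.
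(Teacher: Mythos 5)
Your proof is correct and takes essentially the same route as the paper: the paper's three ``easily checked facts'' are exactly your diffeomorphism $\iota:\Lambda_G\to\Lambda_{\check G}$, the conjugation relation between FBI transforms across $\iota$, and the identity for the weight function $H$ under $\iota$. You have merely written out in detail what the paper leaves as a remark, with the minor caveat that you overload the notation $H_G$ (first for the Hamilton field in the defining relation, later for the weight from (\ref{nore.7})) --- the intent is clear from context, but you may wish to distinguish the two.
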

\begin{proof}
This follows from 3 easily checked facts, where $\check{G}(x,\xi
)=G(x,-\xi )$:
\begin{itemize}
\item[1)] $(x,\xi )\in \Lambda _{\check{G}}\Longleftrightarrow
  (\overline{x},-\overline{\xi })\in \Lambda _G.$
\item[2)] If $T$ is an FBI-transformation adapted to $\Lambda _{\check{G}}$,
  then for $u\in H(\Lambda _G)$,
$$
T\Gamma u(\alpha )=\overline{\widetilde{T}u(\overline{\alpha
  }_x,-\overline{\alpha }_\xi )},\ \alpha \in \Lambda _{\check{G}},
$$
where $\widetilde{T}$ is an FBI-transformation adapted to $\Lambda _G$.
\item[3)] Let $H$ be the function on $\Lambda _G$, defined in
  (\ref{nore.7}) and let $\check{H}$ be the corresponding function on
  $\Lambda _{\check{G}}$. Then
$$
\check{H}(x,\xi )=H(\overline{x},-\overline{\xi }).
$$
\end{itemize}
\end{proof}

Then,
\begin{equation}\label{bics.23}
\langle u|\Gamma Bv\rangle=(u|v)_{H(\Lambda _G)},\ u,v\in H(\Lambda _G),
\end{equation}
and here $\Gamma B$ is an antilinear bijection $H(\Lambda _G)\to
H(\Lambda _{-\check{G}})$ with $\Gamma B$ and $(\Gamma B)^{-1}$
uniformly bounded.

\par Since $P=P^\mathrm{t}$ is symmetric (with ``t'' indicating
transpose for the bilinear scalar product),
$$\langle Pu| v\rangle= \langle u|Pv\rangle , $$
and hence $p(x,-\xi )=p(x,\xi )$ (also clear from the explicit formula
$p(x,\xi )=\xi ^2+V(x)$), we see that $-\check{G}$ is also an escape
function and hence also $\frac{1}{2}G-\frac{1}{2}\check{G}$. Replacing
$G$ with the latter we get a new escape function $G$ satisfying
\begin{equation}\label{bics.24}
-\check{G}=G.
\end{equation}
Now $\Gamma B$ becomes an antilinear bijection: $H(\Lambda _G)\to
H(\Lambda _G)$, uniformly bounded with its inverse. Replacing $v$ with
$(\Gamma B)^{-1}v$ in (\ref{bics.23}), we get
\begin{equation}\label{bics.25}
\langle u|v\rangle =(u|(\Gamma B)^{-1}v)_{H(\Lambda _G)},\ u,v\in
H(\Lambda _G).
\end{equation}
Then $\langle u|v\rangle$ is a bilinear nondegenerate scalar product
on $H(\Lambda _G)$
(In the case of ordinary complex scaling, this is seen more directly
by a shift of contour in (\ref{bics.22}).)

We resume the earlier discussion with $G=G_{\mathrm{sbd}}$ (assuming
for simplicity that the parameter ``$\upsilon $'' in Section \ref{rest} is
equal to 1) and apply Propositions \ref{rest6}, \ref{rest7} with
$J=[\mu _0(t;h)-\delta (h)/2,\mu _0(t;h)+\delta (h)/2]$. Let
\begin{equation}\label{bics.26}
  \Omega (t):=\{ z\in D(\mu _0(t;h),\delta (h)/2);\, \Im z\ge -\epsilon
  _\vartheta/C  \},
\end{equation}
where we recall that $\epsilon _\vartheta =(\epsilon
  /h)^{\vartheta }\epsilon$ .
Then $P(t)$ has a unique resonance $\lambda _0(t)=\lambda _0(t;h)$ in
$\Omega (t)$. It is simple and
\begin{equation}\label{bics.27}
\lambda _0(t)-\mu _0(t)=\widetilde{{\cal O}}(e^{-2S_t/h}),\
S_t:=d_t(U_0(t),\partial \ddot{\mathrm{O}}(t)).
\end{equation}
(\ref{rest.106}) gives
\begin{equation}\label{bics.28}
(z-P(t))^{-1}={\cal O}\left(\frac{h}{\delta (h)}\right)+{\cal O}(1)+{\cal
    O}\left( \frac{h}{z-\lambda _0(t)}\right):\ m_\epsilon
  ^{\frac{1}{2}}{\cal H}_\mathrm{sbd}\to m_\epsilon
  ^{-\frac{1}{2}}{\cal D}_\mathrm{sbd},
\end{equation}
where the first two terms in the right hand side are holomorphic in
$z$. In addition to (\ref{rest.24}), (\ref{rest.65.5}) and the
assumption $\ln \delta (h)\ge -o(1)/h$, we assume from now on that
\begin{equation}\label{bics.28.5}
\epsilon \le \delta (h),
\end{equation}
so that the first two terms in (\ref{bics.28}) drop out when
$|z-\lambda _0(t)|\le \epsilon _\vartheta $.

\par We have the spectral projection
\begin{equation}\label{bics.29}
\pi _0(t)=\frac{1}{2\pi i}\int_{\partial D(\lambda _0(t),r)}(z-P(t))^{-1}dz,
\end{equation}
where $0<r\le \epsilon _\vartheta /(2C)$ and choosing the maximal
value for $r$, we get from (\ref{bics.28}),
\begin{equation}\label{bics.30}
\pi _0(t)={\cal O}(h):\ m_\epsilon ^{\frac{1}{2}}{\cal
  H}_\mathrm{sbd}\to m_\epsilon ^{-\frac{1}{2}}{\cal D}_\mathrm{sbd}.
\end{equation}
For the higher $t$ derivatives, we write
\begin{equation}\label{bics.31}
  \partial _t^k\pi _0(t)=\frac{1}{2\pi i}\int_{\partial D(\lambda
    _0(t),r)}\partial _t^k(z-P(t))^{-1}dz,
\end{equation}
where the integrand is a linear combination of terms,
\begin{equation}\label{bics.32}
(z-P(t))^{-1}(\partial _t^{k_1}P)(z-P(t))^{-1}(\partial _t^{k_2}P)...
(\partial _t^{k_\ell}P)(z-P(t))^{-1},
\end{equation}
with $k_j\ge 1$, $k_1+k_2+...+k_\ell =k$. In view of (\ref{rest.37}),
we have $m_\epsilon \asymp h$ on every fixed compact set and since $P$
is independent of $t$ outside such a set, we conclude that
$$
\partial _t^{k_j}P(t)={\cal O}(1/h):\ m_\epsilon ^{-\frac{1}{2}}{\cal
  H}_{{\mathrm{sbd}}}
\to m_\epsilon ^{\frac{1}{2}}{\cal H}_{{\mathrm{sbd}}}.
$$
Also, for $z\in \Omega (t)$, we have
$$
(z-P(t))^{-1}=\frac{{\cal O}(h)}{z-\lambda _0(t)}: m_\epsilon ^{\frac{1}{2}}{\cal
  H}_{{\mathrm{sbd}}}
\to m_\epsilon ^{-\frac{1}{2}}{\cal H}_{{\mathrm{sbd}}}, \hbox{ when } |z-\lambda _0(t)|\lesssim
\epsilon _\vartheta .
$$
Hence the term (\ref{bics.32}) is
$$
\frac{{\cal O}(h)}{(z-\lambda _0(t))^{\ell +1}}: m_\epsilon ^{\frac{1}{2}}{\cal
  H}_{{\mathrm{sbd}}}
\to m_\epsilon ^{-\frac{1}{2}}{\cal D}_{{\mathrm{sbd}}},
$$
so the integrand in (\ref{bics.31}) is
$$
\frac{{\cal O}(h)}{(z-\lambda _0(t))^{k+1}}: m_\epsilon ^{\frac{1}{2}}{\cal
  H}_{{\mathrm{sbd}}}
\to m_\epsilon ^{-\frac{1}{2}}{\cal D}_{{\mathrm{sbd}}},
$$
and we conclude that
\begin{equation}\label{bics.33}
\partial _t^k\pi _0(t)=\frac{{\cal O}(h)}{\epsilon _\vartheta  ^k}:
m_\epsilon ^{\frac{1}{2}}{\cal
  H}_{\mathrm{sbd}}
\to m_\epsilon ^{-\frac{1}{2}}{\cal D}_{\mathrm{sbd}}.
\end{equation}

\par Let us fix $t\in I$ for a while and write $P=P(t)$. $P$ is
symmetric for the bilinear scalar product (\ref{bics.25}) and so is
the Grushin operator ${\cal P}$ in (\ref{rest.81}) ($m=1$) if we use
$$
\langle \begin{pmatrix}u\\ u_-\end{pmatrix}
| \begin{pmatrix}\widetilde{u}\\ \widetilde{u}_-\end{pmatrix} \rangle
= \langle  u| \widetilde{u}\rangle + u_-\widetilde{u}_-,
$$
and take care to use real eigenfunctions of $P_0$, when defining
$R_\pm$. Then the inverse ${\cal E}:{\cal H}_{\mathrm{sbd}}\times {\bf
  C}\to {\cal D}_{\mathrm{sbd}}\times {\bf C}$ is symmetric:
\begin{equation}\label{bics.34}
E^{\mathrm{t}}=E,\ E_+^{\mathrm{t}}=E_-.
\end{equation}

Using that
$$
(z-P)^{-1}=-E(z)+E_+(z)E_{-+}(z)^{-1}E_-(z)
$$
in (\ref{bics.29}), we get (with $\lambda _0=\lambda _0(t)$ etc.),
\begin{equation}\label{bics.35}
\pi _0=\frac{1}{E'_{-+}(\lambda _0)}E_+(\lambda _0)E_-(\lambda _0).
\end{equation}
Here, by (\ref{rest.94}) and the Cauchy inequality,
\begin{equation}\label{bics.36}
E'_{-+}=1+\widetilde{{\cal O}}(1)e^{-2S_t/h}/\epsilon _\vartheta.
\end{equation}
From (cf.\ (\ref{rest.65.5})), we get
\begin{equation}\label{bics.37}
\epsilon _\vartheta \ge e^{-S_t/h}.
\end{equation}

\par Now, by (\ref{rest.95}), (\ref{bics.34}) we have
\begin{equation}\label{bics.38}
E_+(\lambda _0)v_+=v_+e_+,\ E_-(\lambda _0)v=\langle v| e_+\rangle ,
\end{equation}
\begin{equation}\label{bics.39}
m_\epsilon ^{\frac{1}{2}}(e_+-\chi e_0)=\widetilde{{\cal
    O}}(e^{-S_t/h})\hbox{ in }{\cal D}_\mathrm{sbd}.
\end{equation}
Since $m_\epsilon \ge \epsilon _\vartheta $, this implies that,
\begin{equation}\label{bics.39.3}
e_+-\chi e_0=\widetilde{{\cal O}}(e^{-S_t/h}/\epsilon _\vartheta
^{1/2})\hbox{ in }{\cal D}_\mathrm{sbd},
\end{equation}
\begin{equation}\label{bics.39.6}
m_\epsilon ^{-\frac{1}{2}}(e_+-\chi e_0)=\widetilde{{\cal O}}(e^{-S_t/h}/\epsilon _\vartheta
)\hbox{ in }{\cal D}_\mathrm{sbd}.\end{equation}

\par From (\ref{bics.35}), (\ref{bics.38}), we get
\begin{equation}\label{bics.40}
\pi _0u=\frac{1}{E'_{-+}(\lambda _0)}\langle u|e_+\rangle e_+,
\end{equation}
and in particular that $e_+$ is a resonant state.
The reproducing property $\pi _0^2=\pi _0$ means that $\pi _0e_+=e_+$,
which by (\ref{bics.40}) is equivalent to
\begin{equation}\label{bics.41}
\frac{1}{E'_{-+}(\lambda _0)}\langle e_+|e_+\rangle =1.
\end{equation}
Put
\begin{equation}\label{bics.42}
e^0=(E'_{-+}(\lambda _0))^{-\frac{1}{2}}e_+=(1+\widetilde{{\cal
    O}}(e^{-2S_t/h}/\epsilon _\vartheta ))e_+.
\end{equation}
Then
\begin{equation}\label{bics.43}
\pi _0u=\langle u|e^0\rangle e^0,\ \ \langle e^0|e^0\rangle=1.
\end{equation}

\par We next estimate the $t$-derivatives of $e^0=e^0(t)$. We work in
a small neighborhood of a variable point $t_0\in I$. Then
$$
f(t)=\pi _0(t)e^0(t_0)
$$
is collinear to $e^0(t)$ and we recover $e^0(t)$ from the formula
\begin{equation}\label{bics.44}
e^0(t)=\langle f(t)|f(t)\rangle^{-\frac{1}{2}}f(t).
\end{equation}
By (\ref{bics.33}), we have
\begin{equation}\label{bics.45}
\| m_\epsilon ^{1/2}\partial _t^k f(t)\|_{{\cal
    D}_{\mathrm{sbd}}}={\cal O}(h\epsilon _\vartheta ^{-k})\|
m_\epsilon ^{-1/2}e^0(t_0)\|_{{\cal H}_\mathrm{sbd}}.
\end{equation}

\par Using that $m_\epsilon \ge \epsilon _\vartheta $, we conclude
that
\begin{equation}\label{bicsny.46}
\|\partial _t^kf(t)\|_{{\cal D}_\mathrm{sbd}}={\cal
  O}(1)\frac{h}{\epsilon _\vartheta ^{1+k}}\| e^0\|_{{\cal
    H}_\mathrm{sbd}}={\cal O}(1)\frac{h}{\epsilon _\vartheta ^{k+1}},
\end{equation}
for $k\ge 1$. For $k=0$ we have $\| f\|_{{\cal D}_\mathrm{sbd}}=1$ so
we have the simpler but weaker estimate,
\begin{equation}\label{bicsny.47}
\|\partial _t^k f(t)\|_{{\cal D}_\mathrm{sbd}}={\cal
  O}(1)\left(\frac{h}{\epsilon _\vartheta ^2} \right)^k, k\ge 0.
\end{equation}
From (\ref{bics.44}) it then follows that
\begin{equation}\label{bicsny.48}
\|\partial _t^k e^0(t)\|_{{\cal D}_\mathrm{sbd}}={\cal
  O}(1)\left(\frac{h}{\epsilon _\vartheta ^2} \right)^k, k\ge 0,
\end{equation}
and this implies that
\begin{equation}\label{bicsny.49}
\partial _t^k \pi _0(t)={\cal
  O}(1)\left(\frac{h}{\epsilon _\vartheta ^2} \right)^k:\, {\cal
  H}_\mathrm{sbd}\to {\cal D}_\mathrm{sbd},\ k\ge 0.
\end{equation}

Next, we estimate $\partial ^k\lambda _0(t)$,
$k=1,2,...$. We start with
\begin{equation}\label{bics'.1}
\lambda _0(t)=\langle P(t)e^0(t)|e^0(t)\rangle .
\end{equation}
By the symmetry of $P(t)$ and the fact that $\langle
e^0(t)|e^0(t)\rangle =1$, we get
\begin{equation}\label{bics'.2}
\partial _t\lambda _0(t)=\langle (\partial _tP(t))e^0(t)|e^0(t)\rangle
={\cal O}(1).
\end{equation}
It follows from (\ref{bicsny.48}) that
$$
\partial _t^{k+1}\lambda _0={\cal O}(1)\left( \frac{h}{\epsilon
    _\vartheta ^2} \right)^k,\ k\ge 0,
$$
and hence,
\begin{equation}\label{bics'.9}
\partial _t^k\lambda _0={\cal O}(1)\left( \frac{h}{\epsilon
    _\vartheta ^2} \right)^{(k-1)_+},\ k\ge 0.
\end{equation}
Recall from (\ref{rest.24}), (\ref{bics.28.5}), that
\begin{equation}\label{bicsny.54}
\epsilon \le \min (h/C,\delta ),\ C\gg 0.
\end{equation}
$\lambda _0$ does not depend on the choice of $\epsilon $ and if we
make the maximal choice in (\ref{bicsny.54}), we get $\epsilon
_\vartheta \asymp \min (1,\delta /h)^\vartheta \min (h,\delta )$ and
(\ref{bics'.9}) gives
\begin{equation}\label{bics'.14}
\partial _t^k\lambda _0={\cal O}(1)\left(h\min (1,\delta
  /h)^{2+2\vartheta }\right)^{-(k-1)_+}
\end{equation}

Then (\ref{bics.20}) implies similar subexponential estimates for $\partial
^k(\lambda _0-\mu _0)$, $k\ge 1$. Combining this with (\ref{bics.27})
and elementary interpolation estimates, we get
\begin{equation}\label{bics'.15}
\partial _t^k(\lambda _0-\mu _0)=\widetilde{{\cal O}}(e^{-2S_t/h}),\
k\ge 0,
\end{equation}
and we can then use (\ref{bics.20}) again, to get
\begin{equation}\label{bics'.16}
\partial _t^k\lambda _0={\cal O}(\delta (h)^{-k}),\ k\ge 1.
\end{equation}

We next study $(\lambda _0(t)-P(t))^{-1}(1-\pi _0(t))$ and its
derivatives. In the discussion leading to (\ref{bics.33}) we have
seen that
$$
m_\epsilon ^{\frac{1}{2}}\partial _t^k (z-P(t))^{-1}m_\epsilon
^{\frac{1}{2}}=\frac{{\cal O}(h)}{(z-\lambda _0)^{k+1}}:\ {\cal
  H}_\mathrm{sbd}\to {\cal D}_\mathrm{sbd},\ |z-\lambda _0|\le
\epsilon _\vartheta /C
$$
and hence
$$
\partial _t^k (z-P(t))^{-1}=\frac{{\cal O}(h)}{\epsilon _\vartheta (z-\lambda _0)^{k+1}}:\ {\cal
  H}_\mathrm{sbd}\to {\cal D}_\mathrm{sbd},\ |z-\lambda _0|\le
\epsilon _\vartheta /C.
$$
Combining this with (\ref{bicsny.49}), we get
$$
\partial _t^k((z-P(t))^{-1}(1-\pi _0(t)))={\cal
  O}(1)\sum_{k_1+k_2=k}\frac{h}{\epsilon _\vartheta (z-\lambda
  _0)^{k_1+1}}\left(\frac{h}{\epsilon _\vartheta ^2} \right)^{k_2}:\
{\cal H}_\mathrm{sbd}\to {\cal D}_{\mathrm{sbd}}.
$$
When $|z-\lambda _0|\asymp \epsilon _\vartheta $, the majorant is
\begin{multline*}
\le {\cal O}(1)\sum_{k_1+k_2=k}\frac{h}{\epsilon _\vartheta
  ^2}\frac{1}{\epsilon _\vartheta ^{k_1}}\left(\frac{h}{\epsilon
    _\vartheta ^2} \right)^{k_2}\\
\le {\cal O}(1)\sum_{k_1+k_2=k}\frac{h}{\epsilon _\vartheta
  ^2}\left(\frac{h}{\epsilon _\vartheta^2}\right) ^{k_1}\left(\frac{h}{\epsilon
    _\vartheta ^2} \right)^{k_2}\le {\cal O}(1)\left(\frac{h}{\epsilon
  _\vartheta ^2} \right)^{k+1}.
\end{multline*}
Since $(z-P(t))^{-1}(1-\pi _0(t))$ and its $t$-derivatives are
holomorphic near $z=\lambda _0(t)$ the maximum principle gives for
$|z-\lambda _0(t)|\le \epsilon _\vartheta /C$:
\begin{equation}\label{bics.63}
\partial _t^k((z-P(t))^{-1}(1-\pi _0(t)))={\cal
  O}(1)\left(\frac{h}{\epsilon _\vartheta ^2} \right)^{k+1}:\ {\cal
  H}_\mathrm{sbd}\to {\cal D}_\mathrm{sbd}.
\end{equation}
With the Cauchy inequalities, this extends to
\begin{multline}\label{bics.64}
\partial _z^\ell\partial _t^k((z-P(t))^{-1}(1-\pi _0(t)))=\\ {\cal
  O}(1)\epsilon _\vartheta ^{-\ell}\left(\frac{h}{\epsilon _\vartheta
    ^2} \right)^{k+1}
={\cal O}(1)\left(\frac{h}{\epsilon _\vartheta ^2} \right)^{k+\ell +1}
:\ {\cal
  H}_\mathrm{sbd}\to {\cal D}_\mathrm{sbd}.
\end{multline}

Finally we put $z=\lambda _0(t)$ and get with the natural meaning of ``lincomb''
\begin{multline*}
\partial _t^k\left( (\lambda _0(t)-P_0(t))^{-1}(1-\pi _0(t)) \right)\\
=\underset{m+\ell_1+..+\ell_\lambda =k,\atop \ell_j\ge
  1}{\mathrm{lincomb\,}}
\partial _t^m\partial _z^\lambda \left( (z-P_0(t))^{-1}(1-\pi _0(t))
\right)_{z=\lambda _0(t)}(\partial _t^{\ell_1}\lambda _0)... (\partial
_t^{\ell_\lambda }\lambda _0)
\end{multline*}
Using (\ref{bics.64}), (\ref{bics'.16}), we see that the ${\cal L}({\cal H}_\mathrm{sbd},{\cal D}_\mathrm{sbd})$-norm of
the general term is
\begin{multline*}
{\cal O}(1)\epsilon _\vartheta ^{-\lambda }\left(\frac{h}{\epsilon
    _\vartheta ^2} \right)^{m+1}\delta ^{-(\ell_1+...+\ell_\lambda )}\\
\le {\cal O}(1)\left(\frac{1}{\delta \epsilon _\vartheta }
\right)^{k-m}\left(\frac{h}{\epsilon _\vartheta ^2} \right)^{m+1}\le \\
{\cal O}(1)\frac{h}{\epsilon _\vartheta ^2}\left(\frac{\max (h,\epsilon
  _\vartheta /\delta ) }{\epsilon _\vartheta ^2} \right)^k,
\end{multline*}
where we used that $\lambda \le \ell_1+...\ell_\lambda =k-m$. Thus for
every $k\in {\bf N}$,
\begin{equation}\label{bics.64.5}
\partial_t ^k\left( (\lambda _0(t)-P_0(t))^{-1}(1-\pi _0(t)\right) =
{\cal O}(1)\frac{h}{\epsilon _\vartheta ^2}\left(\frac{\max (h,\epsilon
  _\vartheta /\delta ) }{\epsilon _\vartheta ^2} \right)^k
:\ {\cal H}_\mathrm{sbd}\to {\cal D}_\mathrm{sbd}.
\end{equation}
When $\epsilon $ is exponentially small,
$\epsilon =\exp (-1/{\cal O}(h))$, or more generally when
$\epsilon _\vartheta \le \delta h $, the estimate simplifies to
\begin{equation}\label{bics.64.7}
\partial_t ^k\left( (\lambda _0(t)-P_0(t))^{-1}(1-\pi _0(t)\right) =
{\cal O}(1)\left(\frac{h}{\epsilon _\vartheta ^2} \right)^{k+1}
:\ {\cal H}_\mathrm{sbd}\to {\cal D}_\mathrm{sbd},\ k\ge 0.
\end{equation}

We next consider formal adiabatic solutions in the spirit of
Proposition \ref{1eig1}. For the moment, we let $\epsilon $,
$\varepsilon $ be independent parameters.
\begin{prop}\label{bics2}
Under the assumptions above, there exist two for\-mal asy\-mp\-to\-tic series,
\begin{equation}\label{bics.65}
\nu (t,\varepsilon )\sim \nu _0(t)+\varepsilon \nu _1(t)+\varepsilon
^2 \nu _2(t)+...\ \hbox{in }C^\infty (I;{\cal D}_{\mathrm{sbd}}),
\end{equation}
\begin{equation}\label{bics.66}
\lambda (t,\varepsilon )\sim \lambda _0(t)+\varepsilon \lambda _1(t)+\varepsilon
^2 \lambda _2(t)+...\ \hbox{in }C^\infty (I),
\end{equation}
such that
\begin{equation}\label{bics.67}
(\varepsilon D_t+P(t)-\lambda (t,\varepsilon ))\nu (t,\varepsilon
)\sim 0
\end{equation}
as a formal asymptotic series in $C^\infty (I;{\cal H}_\mathrm{sbd})$. Here,
\begin{equation}\label{bics.68}
  \partial _t^k\nu _j=   {\cal O}(1) (h/\hat{\epsilon }_\vartheta ^2)
  ^{2j+k}\hbox{ in }{\cal D}_\mathrm{sbd},\ j\ge 0,\ k\ge 0,
\end{equation}
\begin{equation}\label{bics.69}
\partial _t^k\lambda _j= {\cal O}(1)(h/\hat{\epsilon }_\vartheta ^2
)^{2j-1+k},\ j\ge 1, k\ge 0.
\end{equation}
Here,
\begin{equation}\label{70.5}
\hat{\epsilon }_\vartheta :=\frac{\epsilon _\vartheta }{\max
  (1,\epsilon _\vartheta /(\delta h) )^{1/2}}=\min (\epsilon _\vartheta
,(\epsilon _\vartheta\delta h) ^{1/2}).
\end{equation}
\end{prop}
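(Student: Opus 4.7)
The plan is to follow the formal adiabatic recipe of Proposition \ref{1eig1}, carried out with the nondegenerate bilinear pairing $\langle \cdot | \cdot \rangle$ from (\ref{bics.22})--(\ref{bics.25}) in place of the Hilbert scalar product, so that $e^0(t)$ from (\ref{bics.43}) plays the role of both the right eigenvector and the ``adjoint'' functional via $\pi_0 u = \langle u|e^0\rangle e^0$. I would take $\nu_0(t) = e^0(t)$. At each order $k\ge 1$ the relation (\ref{1eig.4}) becomes
\begin{equation*}
(P(t)-\lambda_0(t))\nu_k = \lambda_k\nu_0 - (D_t-\lambda_1)\nu_{k-1} + \sum_{j=2}^{k-1}\lambda_j\nu_{k-j},
\end{equation*}
and the solvability condition obtained by pairing with $e^0$ determines
\begin{equation*}
\lambda_k = \langle D_t\nu_{k-1}|e^0\rangle - \sum_{j=1}^{k-1}\lambda_j\langle \nu_{k-j}|e^0\rangle,
\end{equation*}
after which $\nu_k := -(P(t)-\lambda_0(t))^{-1}(1-\pi_0(t))\bigl[(D_t-\lambda_1)\nu_{k-1}-\sum_{j\ge 2}\lambda_j\nu_{k-j}\bigr]$ is the unique solution with $(1-\pi_0)\nu_k=\nu_k$.

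The quantitative input is the identity $\max(h,\epsilon_\vartheta/\delta)/\epsilon_\vartheta^2 = h/\hat{\epsilon}_\vartheta^2$ combined with $h/\epsilon_\vartheta^2\le h/\hat{\epsilon}_\vartheta^2$, which repackages (\ref{bics.64.5}) as the clean bound
\begin{equation*}
\partial_t^\ell\bigl[(\lambda_0(t)-P(t))^{-1}(1-\pi_0(t))\bigr] = \mathcal{O}(1)\bigl(h/\hat{\epsilon}_\vartheta^2\bigr)^{\ell+1}:\ \mathcal{H}_{\mathrm{sbd}}\to\mathcal{D}_{\mathrm{sbd}},\ \ \ell\ge 0.
\end{equation*}
Together with (\ref{bicsny.48}), which in particular gives $\partial_t^\ell \nu_0 = \mathcal{O}(1)(h/\hat{\epsilon}_\vartheta^2)^\ell$ in $\mathcal{D}_{\mathrm{sbd}}$ (base case $j=0$ of (\ref{bics.68})), and (\ref{bics'.16}), which yields $\partial_t^\ell\lambda_0=\mathcal{O}(\delta^{-\ell})\le \mathcal{O}((h/\hat{\epsilon}_\vartheta^2)^\ell)$ since $\hat{\epsilon}_\vartheta^2\le \delta h$, this packages all the ingredients in the common scale $h/\hat{\epsilon}_\vartheta^2$.

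The estimates (\ref{bics.68}), (\ref{bics.69}) then follow by strong induction on $j$. Assuming them through order $j-1$, apply $\partial_t^\ell$ to the formula for $\lambda_j$ and distribute by Leibniz: each resulting term couples $\partial_t^{\ell_0}\nu_{j-1}$ (a factor $(h/\hat{\epsilon}_\vartheta^2)^{2(j-1)+1+\ell_0}$ after accounting for one $D_t$) with derivatives $\partial_t^{\ell_i}\lambda_{j_i}$, $\partial_t^{\ell_i'}\nu_{k_i'}$, whose exponents sum to exactly $2j-1+\ell$ by the accounting $(2j_i-1)+2k_i'+\ell_i+\ell_i' = 2(j-1)+1+\ell$. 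Similarly, for $\nu_k$, Leibniz on the composition of the reduced resolvent with the bracketed right-hand side yields a factor $(h/\hat{\epsilon}_\vartheta^2)^{\ell_0+1}$ from the resolvent and $(h/\hat{\epsilon}_\vartheta^2)^{2(k-1)+1+\ell_1}$ from the bracket, summing to $(h/\hat{\epsilon}_\vartheta^2)^{2k+\ell}$.

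The main obstacle is not conceptual but combinatorial: keeping track, in the Leibniz expansion for $\partial_t^\ell\nu_k$, of the worst-case exponent across all products involving previously constructed $\nu_{k'}$, $\lambda_{j'}$ with $k'<k$, $j'\le k$. The key bookkeeping identity is that in any term arising from the recurrence, one iteration step ``costs'' exactly one $D_t$ and one reduced resolvent, i.e.\ two powers of $h/\hat{\epsilon}_\vartheta^2$, which is precisely the discrepancy between the exponents $2j$ at successive orders. Once this is verified, the subtraction of $\sum_{j\ge 2}\lambda_j\nu_{k-j}$ is absorbed by the same inductive bound since $(2j-1)+2(k-j)+\ell=2k-1+\ell\le 2k+\ell$. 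The formal solvability in $C^\infty(I;\mathcal{D}_{\mathrm{sbd}})$ is automatic from the smooth dependence established in Section~\ref{bics} for $\lambda_0(t)$, $e^0(t)$ and the reduced resolvent.
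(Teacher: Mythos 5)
Your proof is correct and follows essentially the same route as the paper's: both follow the formal recipe of Proposition \ref{1eig1} carried out with the bilinear pairing $\langle\cdot|\cdot\rangle$ and the projector $\pi_0 u = \langle u|e^0\rangle e^0$, both repackage the derivative bounds for $e^0$, the reduced resolvent, and $\lambda_0$ in the common scale $h/\hat{\epsilon}_\vartheta^2$, and both close the estimates by the same two-power-per-order induction. The only cosmetic difference is that the paper keeps a general normalizing factor $\theta_0(t)$ in $\nu_0=\theta_0 e^0$ (setting $\theta_0=1$ only later, in Remark \ref{bics3}), whereas you take $\nu_0=e^0$ from the start; this also lets you note that $\langle\nu_{k-j}|e^0\rangle=0$ for $k-j\ge 1$, so the sum in your formula for $\lambda_k$ vanishes identically and $\lambda_k=\langle D_t\nu_{k-1}|e^0\rangle$, which is consistent with (\ref{bics.79}) after the same cancellations.
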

\begin{proof}
We sacrifice optimal sharpness for simplicity and work with the weaker
form of (\ref{bics.64.5}):
\begin{equation}\label{bics.70}
\partial_t ^k\left( (\lambda _0(t)-P_0(t))^{-1}(1-\pi _0(t)\right)
={\cal O}(1)(h/\hat{\epsilon }_\vartheta ^2)^{1+k},
\end{equation}
which is equivalent to (\ref{bics.64.7}) in the most interesting case
when $\epsilon _\vartheta \le \delta h $.
We shall use (\ref{bics'.16}): $\partial _t^k\lambda _0={\cal
  O}(\delta ^{-k})$ and the following weakened form of
(\ref{bicsny.48}):
\begin{equation}\label{bics.71}
\partial _t^k e^0(t)={\cal O}(1)(h/\hat{\epsilon }_\vartheta
^2)^k\hbox{ in }{\cal D}_\mathrm{sbd}.
\end{equation}

We follow the proof of Proposition \ref{1eig1} and annihilate
successively the powers of $\varepsilon $ in the right hand side of
(\ref{1eig.4}).  The first equation is then
\begin{equation}\label{bics.72}
(P(t)-\lambda _0(t))\nu _0(t)=0,
\end{equation}
so we choose
\begin{equation}\label{bics.73}
\nu _0(t)=\theta _0(t)e^0(t)
\end{equation}
with the condition
\begin{equation}\label{bics.74}
  \theta _0(t)\asymp 1,\ \partial _t^k\theta _0={\cal O}(1)(h/\hat{\epsilon }_\vartheta ^2 )^k,
\end{equation}
so that $\nu _0$ satisfies (\ref{bics.68}).
Then the $\varepsilon ^0$ term in (\ref{1eig.4}) vanishes.

\par To annihilate the $\varepsilon ^1$-term, we need to solve (\ref{1eig.8})
which is solvable precisely when (cf.\ (\ref{1eig.9}))
$$
0=\langle \lambda _1(t)\nu _0(t)-D_t\nu _0(t)|e^0(t)\rangle
=\theta _0(t)\lambda _1(t)-\langle D_t\nu _0(t)|e^0(t)\rangle .
$$
Here,
$$
\langle D_t\nu _0(t)|e^0(t)\rangle =D_t\theta _0(t)+\theta
_0(t)\langle D_te^0(t)|e^0(t)\rangle =D_t\theta _0(t),
$$
since
$$\langle D_te^0|e^0\rangle =
\frac{1}{2}D_t \langle e^0|e^0\rangle=0,\hbox{ recalling that }\langle
e^0| e^0\rangle=1.
$$
Thus, $\lambda _1$ should satisfy $\theta _0(t)\lambda
_1(t)-D_t\theta _0(t)=0$,
\begin{equation}\label{bics.75}
\lambda _1(t)=\frac{D_t\theta _0}{\theta _0},
\end{equation}
and in particular,
$\partial _t^k \lambda _1(t)={\cal O}(1)(h/\hat{\epsilon }_\vartheta
^2)^{1+k}$, so $\lambda _1$ satisfies (\ref{bics.69}).
\begin{remark}\label{bics3}
A natural choice of $\theta _0$ is $\theta _0=1$. Then we get $\lambda_1=0$ in {\rm (\ref{bics.75})}.
\end{remark}
 With
this unique choice of $\lambda _1$, we can solve (\ref{1eig.8}) and
the general solution is
\begin{equation}\label{bics.76}
\nu _1(t)=(P(t)-\lambda _0(t))^{-1}(1-\pi _0(t))(\lambda _1(t)\nu
_0-D_t\nu _0(t))+z(t)e^0(t),
\end{equation}
where we are free to choose $z(t)$, and we will take $z(t)=0$ for
simplicity. From (\ref{bics.70}), the estimate (\ref{bics.69}) for
$\lambda _1$ and (\ref{bics.68}) for $\nu _0$, we get
\begin{equation}\label{bics.77}
\partial _t^k \nu _1
={\cal O}(1) (h/\hat{\epsilon }_\vartheta ^2)^{2+k} \hbox{ in }{\cal
  D}_\mathrm{sbd},\ k\ge 0,
\end{equation}
i.e.\ $\nu _1$ satisfies (\ref{bics.68}).

\par The equation for annihilating the $\varepsilon^j$-term in
(\ref{1eig.4}) is
\begin{equation}\label{bics.78}
(P(t)-\lambda _0(t))\nu _j=(\lambda _1-D_t)\nu _{j-1}+\lambda _2\nu
_{j-2}+...+\lambda _{j-1}\nu _1+\lambda _j\nu _0.
\end{equation}
Let $N\ge 2$ and assume that we have already constructed $\nu _j$,
$\lambda _j$ for $j\le N-1$, satisfying (\ref{bics.68}),
(\ref{bics.69}), (\ref{bics.78}) for $j\le N-1$. Consider
(\ref{bics.78}) for $j=N$. The condition for finding a solution
$\nu _N$ is that the right hand side is orthogonal (for
$\langle \cdot |\cdot \rangle$) to $e^0$ and since
$\langle \nu _0|e^0\rangle = \theta _0(t)$, we get
\begin{equation}\label{bics.79}
\lambda _N=\theta _0^{-1}\langle (D_t-\lambda _1)\nu _{N-1}+\lambda _2\nu
_{N-2}+...+\lambda _{N-1}\nu _1\, |\, e^0\rangle.
\end{equation}
Here
\begin{equation}\label{bics.80}
\partial _t^kD_t\nu _{N-1}={\cal O}(1)(h/\hat{\epsilon }_\vartheta ^2
)^{2(N-1)+k+1}={\cal O}(1)(h/\hat{\epsilon }_\vartheta ^2)^{2N-1+k}
\end{equation}
and for $1\le \ell \le N-1$:
\begin{equation}\label{bics.81}
\partial _t^k (\lambda _\ell \nu _{N-\ell}) ={\cal O}(1)(h/\hat{\epsilon }_\vartheta ^2
)^{2\ell -1 +2(N-\ell )+k}={\cal O}(1)(h/\hat{\epsilon }_\vartheta ^2)^{2N-1+k}.
\end{equation}
Using also (\ref{bics.74}), we see that $\lambda _N$ satisfies
(\ref{bics.69}).

\par We can now solve for $\nu _N$ in (\ref{bics.78}):
\begin{multline}\label{bics.82}
\nu _N=(P(t)-\lambda _0(t))^{-1}(1-\pi _0(t))\\((\lambda _1-D_t)\nu _{N-1}+\lambda _2\nu
_{N-2}+...+\lambda _{N-1}\nu _1+\lambda _N\nu _0)+z(t)e^0(t).
\end{multline}
Again we take $z=0$ for simplicity and get, using (\ref{bics.70}),
(\ref{bics.80}), (\ref{bics.81}):
$$
\partial _t^k \nu _N={\cal O}(1)(h/\hat{\epsilon }_\vartheta ^2
)^{1+2N-1+k}=
{\cal O}(1)(h/\hat{\epsilon }_\vartheta ^2)^{2N+k},
$$
so $\nu _N$ satisfies (\ref{bics.68}) and this finishes the
inductive proof.
\end{proof}

\begin{remark}\label{bics4}
The construction of $\nu _j$, $\lambda _j$ is independent of the
choice of ambient spaces and if we choose $\epsilon $ maximal in
{\rm (\ref{bicsny.54})} we see as after that inequality that {\rm (\ref{bics.69})}
becomes
\begin{equation}\label{bics.83}
\partial _t^k\lambda _j= {\cal O}(1)
\left(\min (1,\delta /h)^\vartheta \min(\delta ,\min (1,\delta /h)^{1+\vartheta})
\right)^{-(2j+k-1)},\ j\ge 1,\ k\ge 0.
\end{equation}
When $\delta \le h$ this simplifies to
$$\partial _t^k\lambda _j= {\cal O}(1)\left(\left(\frac{h}{\delta }
  \right)^{1+\vartheta }\frac{1}{\delta } \right)^{2j-1+k}.$$
This can probably be improved as in the proof of {\rm (\ref{bics'.16})}.
\end{remark}

We continue the discussion under the assumptions of Proposition
\ref{bics2}. Put for $N\ge 1$
\begin{equation}\label{bics.84}
\nu ^{(N)}=\nu _0+\varepsilon \nu _1+...+\varepsilon ^N\nu _N,
\end{equation}
\begin{equation}\label{bics.85}
\lambda ^{(N)}=\lambda _0+\varepsilon \lambda _1+...+\varepsilon ^N
\lambda _N,\ \ N\ge 1.
\end{equation}
Then by construction (cf.\ (\ref{1eig.4})),
\begin{equation}\label{bics.86}
(\varepsilon D_t+P(t)-\lambda ^{(N)})\nu ^{(N)}=r^{(N+1)},
\end{equation}
where
\begin{equation}\label{bics.87}
r^{(N+1)}=\varepsilon ^{N+1}D_t\nu _N-\sum_{j,k\le N \atop j+k\ge N+1}\varepsilon
^{j+k}\lambda _j\nu _k.
\end{equation}
From the estimates in Proposition \ref{bics2}, we get
$$
r^{(N+1)}=
{\cal O}(1)\left(\varepsilon ^{N+1}(h/\hat{\epsilon }_\vartheta
  ^2)^{2N+1}
+\sum_{j,k\le N\atop j+k\ge N+1}\varepsilon ^{j+k}(h/\hat{\epsilon }_\vartheta ^2)^{2(j+k)-1}
 \right)
\hbox{ in }{\cal D}_\mathrm{sbd}.
$$

\par In the following, we assume that
\begin{equation}\label{bics.88}
    \frac{\varepsilon ^{\frac{1}{2}}h}{\hat{\epsilon }_\vartheta ^2}\ll 1 .
\end{equation}
Recall from (\ref{bics.12}) that $\delta =\delta (h)$ is small, but
not exponentially small and that $\epsilon _\vartheta =(\epsilon
/h)^\vartheta \epsilon $. Then (\ref{bics.88}) holds if we assume that
$\varepsilon $ is exponentially small:
\begin{equation}\label{bics.89}
0<\varepsilon \le {\cal O}(1)\exp \left(-1/(Ch)\right),\hbox{ for some }C>0,
\end{equation}
and choose
\begin{equation}\label{bics.90}
\epsilon \ge \varepsilon ^{\frac{1}{4(1+\vartheta ) }-\alpha },
\end{equation}
for some $\alpha \in ]0,1/(4(1+\vartheta ))[$.

\par Having assumed (\ref{bics.88}) we get $r^{(N+1)}={\cal O}(1)
\varepsilon ^{\frac{1}{2}}\left(\frac{\varepsilon ^{\frac{1}{2}}h}{\hat{\epsilon }_\vartheta ^2}\right)^{2N+1}$ in ${\cal D}_\mathrm{sbd}$ and more generally,
\begin{equation}\label{bics.91}
  \partial _t^k r^{(N+1)}={\cal O}(1)
  \varepsilon ^{\frac{1}{2}}\left(\frac{\varepsilon
      ^{\frac{1}{2}}h}{\hat{\epsilon }_\vartheta ^2}\right)^{2N+1}
  \left(\frac{h}{\hat{\epsilon }_\vartheta ^2} \right)^k\hbox{ in }{\cal D}_\mathrm{sbd}.
\end{equation}
Also, since $\| \nu _0(t)\|_{{\cal H}_\mathrm{sbd}}=\| e^0(t)\|_{{\cal
    H}_\mathrm{sbd}}$, we get
\begin{equation}\label{bics.91.2}
  \| \nu ^{(N)}(t)\|_{{\cal H}_\mathrm{sbd}}
  =(1+{\cal O}(\varepsilon h^2 /\hat{\epsilon }_\vartheta ^4))\| \nu
  _0(t)\|_{{\cal H}_\mathrm{sbd}}\asymp 1.
\end{equation}

\par Recall (\ref{far.28}) with the subsequent observation and the
choice of $\mu $ in (\ref{rest.26}):
\begin{equation}\label{bics.92}
-\Im (P(t)u|u)_{{\cal H}_\mathrm{sbd}}\ge -{\cal
  O}(\epsilon ^\infty )\| u\|^2_{{\cal H}_\mathrm{sbd}}.
\end{equation}

\par Let $I\ni t\mapsto u(t)\in H(\Lambda _{\epsilon G},\langle \xi \rangle)$
be continuous such that $\partial _tu$ is continuous with values in
$H(\Lambda _{\epsilon G},\langle \xi \rangle^{-1})$, $G=G_\mathrm{sbd}$. Assume
that $u$ is a solution of
$$
(\varepsilon D_t+P(t))u(t)=0.
$$
Then,
$$
\varepsilon \partial _t \| u(t)\|_{{\cal H}_\mathrm{sbd}}^2
=2\Im (P(t)u|u)\le {\cal O}(\epsilon ^\infty )\| u\|^2_{{\cal H}_\mathrm{sbd}},
$$
implying
$$
\| u(t)\|_{{\cal H}_\mathrm{sbd}}\le e^{{\cal O}(\epsilon
^\infty )(t-s)/\varepsilon }\| u(s)\|_{{\cal H}_\mathrm{sbd}},\ t\ge s.
$$
Assume
\begin{equation}\label{bics.92.5}
\epsilon \le {\cal O}(\varepsilon ^{1/N_0}),\hbox{ for some fixed }N_0>0.
\end{equation}
Then, 
\begin{equation}\label{bics.93}
\| u(t)\|_{{\cal H}_\mathrm{sbd}}\le e^{{\cal O}(\epsilon
^\infty )(t-s)}\| u(s)\|_{{\cal H}_\mathrm{sbd}},\ t\ge s.
\end{equation}

From (\ref{bics.92})and the fact that $P(t_0)-z:{\cal
  D}_\mathrm{sbd}\to {\cal H}_\mathrm{sbd}$ is Fredholm of index 0,
when $\Im z>0$, we see that
\begin{equation}\label{bics.94}
\| (P(t_0)-z)^{-1}\|_{{\cal L}({\cal
    H}_\mathrm{sbd},{\cal H}_\mathrm{sbd})}
\le \frac{1}{\Im z-y_\epsilon },\hbox{ for }\Im z>y_\epsilon ,
\end{equation}
where $y_\epsilon ={\cal O}(\epsilon ^\infty )$ can be chosen
independent of $t_0\in I$. By the Hille-Yosida theorem, $-iP(t_0)$
generates a strongly continuous semi-group leading to: If $u_0\in {\cal
  D}_\mathrm{sbd}$, then $\exists ! u\in C([0,+\infty [;{\cal
  D}_\mathrm{sbd})\cap
C^1([0,+\infty [;{\cal H}_\mathrm{sbd})
$
such that
\begin{equation}\label{bics.95}
(\varepsilon D_t+P(t_0))u(t)=0 \hbox{ for }t\ge 0,\ \ u(0)=u_0,
\end{equation}
where we entered the parameter $\varepsilon >0$ to conform to the
general discussion. Now $P(t)-P(t_0)$ is a smooth function of $t$ with
values in ${\cal L}({\cal H}_\mathrm{sbd},{\cal H}_\mathrm{sbd})$ and
an application of~\cite[Theorem 6.1 and Remark 6.2]{Ka70}, allows us to conclude 
that for every $u_0\in {\cal D}_\mathrm{sbd}$ and every $s\in I$, there exist
$u_0\in C(I\cap [s,\infty [;{\cal D}_\mathrm{sbd})\cap C^1(I\cap [s,\infty [;{\cal
  H}_\mathrm{sbd})$ such that
\begin{equation}\label{bics.96}
(\varepsilon D_t+P(t))u(t)=0\hbox{ for }s\le t\in I,\ \ u(s)=u_0.
\end{equation}
Again the solution satisfies (\ref{bics.93}).

\par This allows us to define the forward fundamental matrix $E(t,s)$,
$I\ni t\ge s\in I$ of
$\varepsilon D_t+P(t)$:
$$
\begin{cases}(\varepsilon D_t+P(t))E(t,s)=0,\ t\ge s,\\ E(t,t)=1
\end{cases}
$$
and from~\cite[Theorem 6.1 and Remark 6.2]{Ka70} we infer, in particular, that $E(t,s)$ is strongly continuous in the
${\cal H}_\mathrm{sbd}$-norm both in $t$ and $s$, such that 
\begin{equation}\label{bics.97}
\| E(t,s)\|_{{\cal L}({\cal H}_\mathrm{sbd},{\cal H}_\mathrm{sbd})}\le
\exp ((t-s){\cal O}(\epsilon ^\infty/\varepsilon  )),\ t\ge s,\ t,s\in I.
\end{equation}
If $v\in C (I; {\cal H}_\mathrm{sbd})$ vanishes for $t$ near $\inf I$,
we can solve $(\varepsilon D_t+P(t))u=v$
on $I$ by
$$
u(t)=\frac{i}{\varepsilon }\int _{\inf I}^tE(t,s)v(s) ds.
$$

\par Now return to (\ref{bics.84})--(\ref{bics.86}) with $\lambda _j$,
$\nu _j$ as in Proposition \ref{bics2} and $r^{(N+1)}$ satisfying
(\ref{bics.91}). We notice that
\begin{equation}\label{bics.98}
\lambda ^{(N)}=\lambda _0+{\cal O}(1)\varepsilon ^{\frac{1}{2}}\frac{\varepsilon
  ^{\frac{1}{2}}h}{\hat{\epsilon }_\vartheta ^2},
\end{equation}
and that this improves to
\begin{equation}\label{bics.99}
\lambda ^{(N)}=\lambda _0+{\cal O}(1)\varepsilon ^{\frac{1}{2}}\left(\frac{\varepsilon
  ^{\frac{1}{2}}h}{\hat{\epsilon }_\vartheta ^2}\right)^3,
\end{equation}
if we take
\begin{equation}\label{bics.100}
\theta _0=1,\ \lambda _1=0
\end{equation}
as in Remark \ref{bics3}. We assume (\ref{bics.100}) in the following.

\par Assume, to fix the ideas, that $0\in I$, and restrict the
attention to $I_+=\{ t\in I;\, t\ge 0 \}$. From (\ref{bics.86}), we
get
\begin{equation}\label{bics.103}(\varepsilon D_t+P(t))u^{(N)}=\rho
  ^{(N+1)},\ t\in I_+,\end{equation}
where
\begin{equation}\label{bics.104}
u^{(N)}=e^{-i\int_0^t \lambda ^{(N)} ds/\varepsilon }\nu ^{(N)},\
\rho ^{(N+1)}=e^{-i\int_0^t \lambda ^{(N)} ds/\varepsilon }r ^{(N+1)}.
\end{equation}
By (\ref{bics.91.2}), (\ref{bics.91}), we have
\begin{equation}\label{bics.105}
\| \rho ^{(N+1)}\|_{{\cal H}_\mathrm{sbd}}=
{\cal O}(1) \varepsilon ^{\frac{1}{2}}\left(\frac{\varepsilon
  ^{\frac{1}{2}}h}{\hat{\epsilon }_\vartheta ^2} \right)^{2N+1}
\| u^{(N)}\|_{{\cal H}_\mathrm{sbd}} .
\end{equation}
Taking the imaginary part of the scalar product in ${\cal
  H}_\mathrm{sbd}$ with $u^{(N)}$, we get with norms and scalar
products in ${\cal H}_\mathrm{sbd}$:
$$
-\frac{1}{2}\varepsilon \partial _t\| u^{(N)}\|^2+(\Im Pu^{(N)}|u^{(N)})=\Im (\rho ^{(N+1)}|u^{(N)}),
$$
\[
\begin{split}
\varepsilon \partial _t \| u^{(N)}\|^2&= 2(\Im
Pu^{(N)}|u^{(N)})-2\Im (\rho ^{(N+1)}|u^{(N)})\\
&\le {\cal O}(\epsilon ^\infty )\| u^{(N)}\|^2 +2\| \rho ^{(N+1)}\|
\|u^{(N)}\|.
\end{split}
\]
Hence, by (\ref{bics.105}) and the assumption (\ref{bics.92.5}), 
$$
\varepsilon \partial _t \| u^{(N)}\|^2\le {\cal O}(1) \varepsilon ^{\frac{1}{2}} \left(\frac{\varepsilon
  ^{\frac{1}{2}}h}{\hat{\epsilon }_\vartheta ^2}\right)^{2N+1}\| u^{(N)}\|^2,
$$
leading to
\begin{equation}\label{bics.105.2}
\| u^{(N)}(t)\|\le e^{{\cal O}(1)t\varepsilon ^{-1/2}(\varepsilon
^{1/2}h/\hat{\epsilon }_\vartheta ^2)^{2N+1}}\| u^{(N)}(0)\|,\
0\le t\in I.
\end{equation}
Assume,
\begin{equation}\label{bics.105.4}
(\sup I)\varepsilon ^{-\frac{1}{2}}\left( \frac{\varepsilon
    ^{\frac{1}{2}}h}{\hat{\epsilon }_\vartheta ^2} \right)^{2N+1}
\le {\cal O}(1).
\end{equation}
Then, for $0\le t\in I$,
\begin{equation}\label{bics.105.6} \begin{split}
  \| u^{(N)}(t)\|_{{\cal H}_\mathrm{sbd}}&\le {\cal O}(1) \| u^{(N)}(0)\|_{{\cal H}_\mathrm{sbd}},\\
  \| \rho ^{(N+1)}(t)\|_{{\cal H}_\mathrm{sbd}}&\le {\cal O}(1) \varepsilon ^{\frac{1}{2}}\left( \frac{\varepsilon
      ^{\frac{1}{2}}h}{\hat{\epsilon }_\vartheta ^2} \right)^{2N+1}\|
  u^{(N)}(0)\|_{{\cal H}_\mathrm{sbd}}.
\end{split}
\end{equation}

\par Using the fundamental matrix $E(t,s)$ to correct the error $\rho
^{(N+1)}$ we have the exact solution $u=u^{(N)}_\mathrm{exact}$,
\begin{equation}\label{bics.106}
u=u^{(N)}-\frac{i}{\varepsilon }\int_0^t E(t,s)\rho ^{(N+1)}(s) ds
\end{equation}
of the equation
$$
(\varepsilon D_t+P(t))u=0\hbox{ on }I_+.
$$

\par From (\ref{bics.105.4}) we get
\begin{equation}\label{bics.109}
\sup I\le \varepsilon ^{-N_0},
\end{equation}
for some fixed finite $N_0$. Then by (\ref{bics.97}), (\ref{bics.92.5}), 
\begin{equation}\label{bics.110}
\| E(t,s)\|_{{\cal L}({\cal H}_\mathrm{sbd},{\cal H}_\mathrm{sbd})}\le
e^{{\cal O}(\varepsilon ^\infty )}=1+{\cal O}(\varepsilon ^\infty ),
\end{equation}
and using this and (\ref{bics.105.6}) in (\ref{bics.106}), we get
\begin{equation}\label{bics.111}
\| u-u^{(N)}\|_{{\cal H}_\mathrm{sbd}}\le {\cal O}(1)\varepsilon
^{-1}(\sup I) \varepsilon ^{\frac{1}{2}}\left( \frac{\varepsilon
      ^{\frac{1}{2}}h}{\hat{\epsilon }_\vartheta ^2} \right)^{2N+1}\|
  u^{(N)}(0)\|_{{\cal H}_\mathrm{sbd}}.
\end{equation}
This estimate is the main result of the present work. Let us recollect
the assumptions and the general context in the following theorem.
\begin{theo}\label{bics5}
  Let $V_t=V(t,x)\in C_b^\infty (I\times {\bf R}^n;{\bf R})$, where
  $n=1$, $0<E_-<E_-'<E_+'<E_+<\infty $,
  $E_0(t)\in C^\infty (I;[E_-',E_+'])$,
  $\ddot{\mathrm{O}}(t)\Subset {\bf R}^n$,
  $U_0(t)\subset \ddot{\mathrm{O}}(t)$ be as in the discussion around and
  including {\rm (\ref{bics.1})}--{\rm (\ref{bics.8})}, {\rm (\ref{bics.9.5})}.
  Let $\mu _0(t)$ be a Dirichlet eigenvalue of $-h^2\Delta +V(t,\cdot )$ on $M_0$ as
  in {\rm (\ref{bics.9})} -- {\rm (\ref{bics.12})}. The operator $P(t)$ has a unique resonance
  $\lambda _0(t)$ in the set $\Omega (t)$ in {\rm (\ref{bics.26})}. It is
  simple and satisfies {\rm (\ref{bics.27})}. Here $\epsilon _\vartheta
  =(\epsilon /h)^\vartheta \epsilon  $, for some $\epsilon \in
  [e^{-1/(Ch)},h/C]$ for some sufficiently large constant $C>0$ and
  $\vartheta >0$ is a fixed small constant. Assume {\rm (\ref{bicsny.54})},
  {\rm (\ref{rest.65.5})}:
\begin{equation}\label{bics.111.5}
e^{-1/(C_1h)}\le \epsilon \le \min (h/C_2,\delta ),\ C_1,\, C_2\gg 1.
\end{equation}

\par Define the spaces ${\cal H}_\mathrm{sbd}=H(\Lambda _{\epsilon G_{\mathrm{sbd}}})$,
${\cal D}_\mathrm{sbd}=H(\Lambda _{\epsilon G_{\mathrm{sbd}}}, \langle \xi \rangle^2)$ as earlier in this section, so that $\lambda
  _0(t)$ is the unique eigenvalue in $\Omega (t)$ of $P(t):\, {\cal H}_\mathrm{sbd}\to
  {\cal H}_\mathrm{sbd}$ with domain ${\cal D}_\mathrm{sbd}$.

  \par Then we have the formal asymptotic series
  $\nu (t,\varepsilon )$, $\lambda (t,\varepsilon )$ in Proposition
  {\rm \ref{bics2}}, where we choose $\nu _0(t)$ in {\rm (\ref{bics.73})} with
  $\theta _0(t)=1$, so that $\lambda _1(t)=0$. For $N\ge 1$, define
  the partial sums $\nu ^{(N)}$, $\lambda ^{(N)}$ as in
  {\rm (\ref{bics.84})}, {\rm (\ref{bics.85})}. Let $\varepsilon $ be small enough
  so that {\rm (\ref{bics.88})} holds (and notice that this would follow
  from {\rm (\ref{bics.89})}, {\rm (\ref{bics.90})}) and bounded from below by some positive power of $\epsilon$ as in
  {\rm (\ref{bics.92.5})}. Assume (to fix the ideas)
  that $0\in I$, and assume {\rm (\ref{bics.105.4})} so that $\sup I\le \varepsilon ^{-N_0}$ for some constant
  $N_0>0$ and put $I_+=I\cap [0,+\infty [$. Let
  $u(t)\in C^1(I_+;{\cal H}_\mathrm{sbd})\cap C^0(I_+;{\cal
    D}_\mathrm{sbd})$ be the solution of
\begin{equation}\label{bics.112}
(\varepsilon D_t+P(t))u=0 \hbox{ on }I_+,\ u(0)=u^{(N)}(0),
\end{equation}
where $u^{(N)}$ is defined in {\rm (\ref{bics.104})}. Then {\rm (\ref{bics.111})}
holds uniformly for $t\in I_+$.
\end{theo}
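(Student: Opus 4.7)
The plan is to follow the Duhamel/parametrix strategy already outlined in the discussion preceding the statement, carefully assembling the ingredients: the formal adiabatic series of Proposition \ref{bics2}, the semiboundedness estimate of Theorem \ref{int2} (specialised in \eqref{bics.92}), and a well-posedness / fundamental solution argument based on Hille--Yosida together with the Kato perturbation theorem.

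First I would fix the normalisation $\theta_0(t)=1$, $\nu_0=e^0$, so that $\lambda_1\equiv 0$ as in Remark \ref{bics3}, and form the partial sums $\nu^{(N)}$, $\lambda^{(N)}$ in \eqref{bics.84}, \eqref{bics.85}. A direct computation from \eqref{bics.86}--\eqref{bics.87} then shows that
$$
(\varepsilon D_t+P(t))u^{(N)}=\rho^{(N+1)},\qquad u^{(N)}(t):= e^{-i\int_0^t\lambda^{(N)}\,ds/\varepsilon}\,\nu^{(N)}(t),
$$
with $\rho^{(N+1)}$ given by \eqref{bics.104}. Using \eqref{bics.68}, \eqref{bics.69} of Proposition \ref{bics2} and the smallness hypothesis \eqref{bicsint.88}, one obtains \eqref{bics.91} and, in view of \eqref{bicsint.91.2}, the bound \eqref{bicsint.105.6} for both $\|u^{(N)}\|_{{\cal H}_{\mathrm{sbd}}}$ and $\|\rho^{(N+1)}\|_{{\cal H}_{\mathrm{sbd}}}$.

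Next I would establish well-posedness of the forward problem \eqref{bicsint.96}. For $t$ frozen, Theorem \ref{int2} with the scaling $\mu=h/\epsilon$ gives the half-estimate \eqref{bics.92}, i.e.\ $-\Im(P(t_0)u|u)_{{\cal H}_{\mathrm{sbd}}}\ge -\mathcal{O}(\epsilon^\infty)\|u\|^2_{{\cal H}_{\mathrm{sbd}}}$, uniformly in $t_0\in I$. Since $P(t_0)-z:\,{\cal D}_{\mathrm{sbd}}\to{\cal H}_{\mathrm{sbd}}$ is Fredholm of index $0$ (Theorem \ref{int4}), this converts by a standard argument into the resolvent bound \eqref{bics.94}, which places $-iP(t_0)/\varepsilon$ in the Hille--Yosida class with growth rate $\mathcal{O}(\epsilon^\infty/\varepsilon)$. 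The $t$-dependent case then follows from \cite[Theorem 6.1, Remark 6.2]{Ka70}, since $t\mapsto P(t)$ is smooth with values in ${\cal L}({\cal D}_{\mathrm{sbd}},{\cal H}_{\mathrm{sbd}})$. This gives the forward propagator $E(t,s)$ satisfying \eqref{bicsint.97}, and by the assumption \eqref{bics.92.5} ($\epsilon\le\mathcal{O}(\varepsilon^{1/N_0})$) combined with \eqref{bicsint.109}, the bound \eqref{bicsint.110}: $\|E(t,s)\|_{{\cal L}({\cal H}_{\mathrm{sbd}})}=1+\mathcal{O}(\varepsilon^\infty)$ uniformly on $I_+$.

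Finally, I would apply Duhamel. Since $u(0)=u^{(N)}(0)$, the difference $v=u-u^{(N)}$ satisfies $(\varepsilon D_t+P(t))v=-\rho^{(N+1)}$ with $v(0)=0$, hence
$$
u(t)-u^{(N)}(t)=-\frac{i}{\varepsilon}\int_0^t E(t,s)\,\rho^{(N+1)}(s)\,ds,
$$
and inserting \eqref{bicsint.105.6} together with \eqref{bicsint.110} and $t\le \sup I$ yields \eqref{bicsint.111}.

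The part that requires most care will be the passage from the pointwise-in-$t$ semigroup generation to a genuine forward evolution system on the variable domain ${\cal D}_{\mathrm{sbd}}$: strictly speaking one must verify the Kato hypotheses (stability of $\{-iP(t)/\varepsilon\}_{t\in I}$ in ${\cal H}_{\mathrm{sbd}}$ with a common stability constant, and smoothness of $t\mapsto P(t)$ into ${\cal L}({\cal D}_{\mathrm{sbd}},{\cal H}_{\mathrm{sbd}})$), and then to propagate the uniform bound \eqref{bicsint.97} over the long interval $I_+$ of length $\varepsilon^{-N_0}$. The stability constant is precisely $\mathcal{O}(\epsilon^\infty)$ from \eqref{bics.92}, and hypothesis \eqref{bics.92.5} is what ensures $\mathcal{O}(\epsilon^\infty)(\sup I)\le\mathcal{O}(\varepsilon^\infty)$, so this is where the choice of parameters \eqref{bicsint.111.5}, \eqref{bicsint.105.4} all tie together. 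All remaining estimates are then straightforward arithmetic manipulations of the exponents $2N+1$, $2j-1+k$, $2j+k$ already recorded in Proposition \ref{bics2}.
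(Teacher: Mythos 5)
Your proposal reproduces the paper's argument faithfully: construct $u^{(N)}=e^{-i\int_0^t\lambda^{(N)}/\varepsilon}\nu^{(N)}$ from Proposition \ref{bics2}, establish the forward propagator $E(t,s)$ via the semiboundedness estimate (\ref{bics.92}), Hille--Yosida and Kato's theorem, and conclude by Duhamel with the assumptions (\ref{bics.88}), (\ref{bics.92.5}), (\ref{bics.105.4}). The only step compressed slightly beyond what your cited estimates alone give is the uniform bound $\|u^{(N)}(t)\|_{{\cal H}_{\mathrm{sbd}}}\le{\cal O}(1)$: (\ref{bicsint.91.2}) controls $\|\nu^{(N)}\|$ but not the modulus of the phase factor $e^{-i\int_0^t\lambda^{(N)}/\varepsilon}$, so one still needs the differential inequality $\varepsilon\partial_t\|u^{(N)}\|^2\le 2\Im(Pu^{(N)}|u^{(N)})+2\|\rho^{(N+1)}\|\,\|u^{(N)}\|$ coming from (\ref{bics.92}) and (\ref{bics.105}), integrated under (\ref{bics.105.4}) --- exactly the Gr\"onwall step the paper records as (\ref{bics.105.2}); and note that ${\cal D}_{\mathrm{sbd}}$ is a fixed, $t$-independent space, which is what makes Kato's theorem applicable directly.
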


We shall finally describe a situation appearing in the mesoscopic
problems studied in \cite{JoPrSj95}, \cite{PrSj96}, where the potential
well is of diameter $\asymp h$. The potential will be result of drilling
a well of width $\sim h$ in a ``filled potential''.  Let us first
describe the filled potential
$\widetilde{V}_t=\widetilde{V}(t,x)$. Assume,
\begin{equation}\label{bics.113}
\widetilde{V}(t,x)\in C_b^\infty (I\times {\bf R}^n;{\bf R})
\end{equation}
still with $n=1$ for the moment.
\begin{equation}\label{bics.114}
\begin{split}
&\widetilde{V}_t\hbox{ has a holomorphic extension (also denoted } \widetilde{V}_t \hbox{)
  to}\\
&\{x\in {\bf C}^n;\, |\Re x|>C,\ |\Im x|<|\Re x|/C \}\\&\hbox{such that
}
\widetilde{V}_t(x)=o(1),\ x\to \infty \hbox{ for some constant }C>0.
\end{split}
\end{equation}

\par We next define $V_t(x)$ by drilling a thin well of $t$-dependent
depth and of diameter $2h$. Fix a point $x_0\in {\bf R}^n$ and assume,
\begin{equation}\label{bics.115}
\widetilde{V}_t(x_0)\ge 1/C,\ t\in I.
\end{equation}

\par Let $I\ni t\mapsto \alpha ^t\in [1/C,C]$ (where $C>0$ is a new constant) be a smooth function
with
\begin{equation}\label{bics.116}
\partial _t^k\alpha ^t={\cal O}_k(1),\ k\in {\bf N}
\end{equation}
and put
\begin{equation}\label{bics.117}
V_t(x)=\widetilde{V}_t(x)-\alpha ^t1_{U_0}(x),\ U_0=B(x_0,h).
\end{equation}
We need a first reference operator. Choose $\widetilde{\widehat{ V}}_t(x)\in
C_b^\infty ({\bf R}^n;{\bf R})$ such that
\begin{equation}\label{bics.118}
\widetilde{\widehat{ V}}_t(x)=\widetilde{V}_t(x)\hbox{ in a
  neighborhood of }x_0,\hbox{ independent of }t,h,
\end{equation}
\begin{equation}\label{bics.119}
\widetilde{\widehat{ V}}_t\ge \widetilde{V}_t(x_0)-\delta _0,
\end{equation}
for some small fixed constant $\delta _0>0$. Put
$\widehat{V}_t=\widetilde{\widehat{ V}}_t-\alpha ^t1_{U_0}$. Then
$\widehat{P}_t:=-h^2\Delta +\widehat{V}_t$ is
a self-adjoint operator (defined by means of Friedrichs extension)
with purely discrete spectrum in $]-\infty ,V(x_0)-\delta _0[$,
bounded from below by $\min (\widetilde{V}_t(x_0)-\delta
_0,\widetilde{V}_t(x_0)-\alpha ^t-{\cal O}(h))$.

\par The eigenvalues in the interval $]-\infty ,V(x_0)-2\delta _0[$
can be obtained  by scaling and simple semi-classical analysis: Let
$$
e_0(\alpha )<e_1(\alpha )<...<e_{k(\alpha )}(\alpha )<0
$$
be the negative eigenvalues of $-\partial ^2-\alpha 1_{]-1,1[}(x)$ on
${\bf R}$. Then for $h<0$ small enough, the eigenvalues of
$\widehat{P}_t$ in $]-\infty ,V(x_0)-2\delta _0[$ are of the form
\begin{equation}\label{bics.120}
\widehat{E}_k(t;h)\sim E_{k,0}(t)+hE_{k,1}(t)+...
\end{equation}
where
\begin{equation}\label{bics.121}
E_{k,0}(t)=\widetilde{V}_t(x_0)+e_k(\alpha ^t)
\end{equation}
belongs to $]-\infty ,\widetilde{V}_t(x_0)-3\delta _0/2[$ (in the limit of small $h$)
and we get all such eigenvalues this way (one for each $k$) when $h>0$
is small enough.

\par Now fix a $k\in {\bf N}$ and assume that we have the well-defined
eigenvalue $\widehat{E}_k(t;h)=:\widehat{E}(t;h)$ of $\widehat{P}_t$
in $]\delta _0,\widetilde{V}_t(x_0)-2\delta _0[$ for all $t\in I$ for
$0<h\ll 1$. (The positivity is required since we look for shape
resonances of $P_t$.) Define the $t$-dependent potential island
\begin{equation}\label{bics.122}
{\ddot{\mathrm{O}}}(t)=\{ x\in {\bf R}^n;\, \widetilde{V}_t(x)>E_0(t)
\} ,\ \ E_0(t):=E_{k,0}(t).
\end{equation}

\par Let $\widetilde{p}_t=\xi ^2+\widetilde{V}_t(x)$ be the
semi-classical principal symbol of $\widetilde{P}_t$. Assume that
\begin{equation}\label{bics.123}
\hbox{The }H_{\widetilde{p}_t}\hbox{-flow is non-trapping on
}{{(\widetilde{p}_t)^{-1}(E_0(t))}_\vert}_{{\bf R}^n\setminus {\ddot{\mathrm{O}}}}.
\end{equation}

\par In $\overline{{\ddot{\mathrm{O}}}(t)}$ we have the Lithner-Agmon distance $d_t$, associated
to the metric $(\widetilde{V}_t(x)-E_0)dx^2$. Let
$S_t:=d_t(x_0,\partial {\ddot{\mathrm{O}}}(t))>0$,
$$M_0(t):=\{ x\in {\ddot{{\mathrm{O}}}}(t);\,
\widetilde{V}_t(x)>E_0(t)+\delta  \},$$
where $\delta >0$ is a small constant. Notice that $M_0(t)$ has
smooth boundary and depends smoothly on $t$. Let $P_0^t$ be the
Dirichlet realization of $P_t$ in $M_0(t)$. Then $P_0^t$ has a unique
eigenvalue $E(t;h)$ such that $E(t;h)=\widehat{E}(t;h)=o(1)$, $h\to 0$
and the two eigenvalues are exponentially close:
\begin{equation}\label{bics.124}
E(t;h)=\widehat{E}(t;h)+{\cal O}(e^{-1/(Ch)}).
\end{equation}
As in the beginning of this section \ref{rest} we know that
$P_t$ has a unique resonance with
$$
\Re \lambda _0(t)-E(t;h)=o(1),\ \Im \lambda _0(t)\ge -Ch\ln (1/h).
$$
Moreover, we have
\begin{equation}\label{bics.125}
\lambda _0(t;h)=E(t;h)+\widetilde{{\cal O}}(e^{-2S_t/h}).
\end{equation}

This means that (apart from the fact that our potential is
$h$-dependent near $U_0$) we can apply Theorem \ref{bics5} with
$\delta (h)\asymp 1$.

\end{document}